\documentclass[11pt]{article}
\usepackage[margin=1in]{geometry}
\usepackage{amsmath,amssymb,amsthm,mathtools,enumitem,booktabs,longtable}
\usepackage{xcolor,graphicx}
\usepackage{xr-hyper}
\usepackage[colorlinks,linkcolor=blue!55!black,citecolor=blue!55!black]{hyperref}

\newtheorem{theorem}{Theorem}[section]
\newtheorem{proposition}[theorem]{Proposition}
\newtheorem{lemma}[theorem]{Lemma}
\newtheorem{corollary}[theorem]{Corollary}
\theoremstyle{definition}
\newtheorem{definition}[theorem]{Definition}
\theoremstyle{remark}
\newtheorem{remark}[theorem]{Remark}
\newtheorem{example}[theorem]{Example}
\theoremstyle{plain}

\DeclareMathOperator{\spec}{Sp}\DeclareMathOperator{\dist}{dist}
\DeclareMathOperator{\Tr}{Tr}\DeclareMathOperator{\tr}{tr}
\DeclareMathOperator{\Ind}{Index}\DeclareMathOperator{\sgn}{sgn}
\DeclareMathOperator{\supp}{supp}\DeclareMathOperator{\Herm}{Herm}
\DeclareMathOperator{\conv}{conv}
\newcommand{\R}{\mathbb{R}}\newcommand{\C}{\mathbb{C}}\newcommand{\Z}{\mathbb{Z}}
\newcommand{\N}{\mathbb{N}}\newcommand{\T}{\mathbb{T}}
\newcommand{\eps}{\varepsilon}
\newcommand{\cG}{\mathcal{G}}\newcommand{\cK}{\mathcal{K}}
\newcommand{\cO}{\mathcal{O}}\newcommand{\cP}{\mathcal{P}}\newcommand{\cT}{\mathcal{T}}
\newcommand{\cS}{\mathcal{S}}\newcommand{\cW}{\mathcal{W}}
\newcommand{\BDI}{\mathrm{BDI}}
\newcommand{\ip}[2]{\langle #1,#2\rangle}
\newcommand{\avg}[1]{\langle #1\rangle}
\newcommand{\dbar}{\bar\partial}
\providecommand{\Rm}{\mathbb R}\providecommand{\Zm}{\mathbb Z}\providecommand{\Sm}{\mathbb S}
\providecommand{\EE}{\mathbb E}\providecommand{\PP}{\mathbb P}
\providecommand{\dint}{\displaystyle\int}
\providecommand{\ind}{\Ind}\providecommand{\one}{\mathbf 1}
\providecommand{\dd}{\mathfrak d}\providecommand{\dloc}{\delta_{\rm loc}}
\providecommand{\cA}{\mathcal A}\providecommand{\cJ}{\mathcal J}\providecommand{\cL}{\mathcal L}
\providecommand{\cR}{\mathcal R}\providecommand{\cU}{\mathcal U}
\title{\bf Homogenization theory of topological Anderson insulators: genuine gaps and pseudo-gaps}
\author{Guillaume Bal\thanks{Departments of Statistics and Mathematics and CCAM, University of
Chicago, Chicago, IL 60637; guillaumebal@uchicago.edu}}
\date{}

\begin{document}\maketitle

\begin{abstract}\noindent
A topological Anderson insulator is an insulator whose non-trivial topological phase is induced by disorder. We develop a homogenization theory of this phenomenon for constant-coefficient elliptic systems perturbed by fluctuations oscillating at scale $\eps$. Periodic fluctuations of every order below that of the operator, with appropriate amplitudes, do not average out: they produce an order-one correction $M$ given by a second-order (Born) formula, and the operator converges in the norm-resolvent sense to its homogenized limit. In particular, electric potentials and magnetic fields of zero mean, usually regarded as topologically trivial perturbations, generate corrections $M$ that change the topological phase. Fluctuations confined to a half-plane create an interface that carries the bulk-difference invariant of the homogenized operator for all $\eps$ sufficiently small, a property that is stable under local defects. For random fluctuations, rare regions typically fill the homogenized gap, which becomes a pseudo-gap, and the standard interface index need no longer be defined. We define an index along cuts deformed so as to avoid the rare regions, and prove that it equals almost surely the bulk-difference invariant of the operator homogenized in the Born approximation, for first-order alloys with vanishing low moments and for a class of second-order systems with scalar disorder at criticality that includes the BHZ model.
\end{abstract}

\noindent\textbf{Keywords:} Topological insulators; topological Anderson insulator; homogenization; operator error estimates; Peierls substitution; bulk--edge correspondence; random alloys; pseudo-gap.

\medskip\noindent\textbf{MSC codes:} 35B27, 35Q41, 47A53, 47B80, 82B44.

\section{Introduction}\label{sec:intro}

This paper considers topological insulators whose topological phase is induced by spatial fluctuations. Topological insulators are materials that are insulating in their bulk but carry robust, asymmetric transport along interfaces separating insulators in different topological phases. See, e.g., \cite{HasanKane10,ProdanSchulzBaldes16}. In two space dimensions, the bulk phases are characterized by Chern-type invariants, and a bulk-edge correspondence identifies their difference with a quantized interface current \cite{Hatsugai93,EG02,ProdanSchulzBaldes16}. For continuous Hamiltonians, this program has been carried out for large classes of elliptic operators \cite{Bal2019,Bal2022,QuinnBal,Drouot21}. The interface current is the index of a Fredholm {\em Toeplitz operator} built from an energy switch \cite{Bal2019}, and it equals the {\em bulk-difference invariant} $\BDI(H_-,H_+)$ of the two bulk operators $H_\pm$, computable from their symbols \cite{Bal2022,QuinnBal}. This guaranteed asymmetric interface transport may be interpreted as an obstruction to Anderson localization \cite{Bal2019CMS,ProdanSchulzBaldes16}.

A {\em topological Anderson insulator} is an insulator whose non-trivial topological phase is induced by disorder. Such phases were discovered numerically \cite{LCJS09} in models of HgTe/CdTe quantum wells described by the Bernevig--Hughes--Zhang (BHZ) Hamiltonian \cite{BHZ06}, and have since been observed in disordered atomic wires and photonic lattices \cite{Meier18,Stutzer18}. A standard explanation \cite{GWATB09} is an effective medium theory. In the self-consistent Born approximation, the disorder renormalizes the mass of the Dirac-type Hamiltonian, whose sign may then flip. The topology is thus produced by a renormalization of the bulk phase. A homogenization model of this mechanism for regularized Dirac operators with periodic fluctuations was proposed in \cite{BalDang24,BalDangCorr}. The aim of this paper is to analyze this effective medium picture for constant-coefficient elliptic systems of any order and for genuinely random fluctuations.

We first analyze periodic fluctuations, and random fluctuations that preserve a gap. We consider $H^\eps=P(D)+\mathcal V^\eps$, with $P$ elliptic of order $k$ with constant coefficients and $\mathcal V^\eps$ a fluctuation of order below $k$ oscillating at scale $\eps$, with amplitudes chosen so that all its terms act at the same scale. Periodic fluctuations produce an order-one correction $M$, given by a second-order (Born) formula, and $H^\eps$ converges in the norm-resolvent sense to $P+M$. For random alloys, $M$ is replaced by a slowly varying field of local effective matrices. Whether fluctuations change the topology depends on $M$. In particular, electric potentials and magnetic fields of zero mean, usually regarded as topologically trivial perturbations, generate corrections $M$ that may change the phase (section~\ref{sec:num}). When fluctuations are confined to a half-plane, the interface between the homogenized phase $P+M$ and the unperturbed phase $P$ carries the invariant $\BDI(P+M,P)$, as norm-resolvent convergence transfers the index.

We next analyze the pseudo-gap regime, restricting our scope to first- and second-order operators. When the law of the fluctuation charges a periodic pattern whose homogenized operator is gapless, rare regions in which the field resembles this pattern carry spectrum in the homogenized gap, which becomes a {\em pseudo-gap}. This is the case for uniform on-site disorder with a distribution containing $0$, as in \cite{GWATB09,LCJS09}: arbitrarily large regions with small couplings behave as puddles of the unperturbed phase, whose boundaries carry states in the gap. Such states may meet a straight cut infinitely often, and the Toeplitz operator is then not expected to be Fredholm. Since such regions are rare, they may be avoided. We define an interface index along {\em deformed cuts}, which coincide with a straight cut up to bounded modifications in any horizontal strip and stay away from the rare regions (Figure~\ref{fig:cuts}). This index is stable against bounded, possibly large, perturbations supported away from the cut, and a percolation argument shows that such cuts exist almost surely (a.s.). Our main results state that the index along deformed cuts is a.s. the bulk-difference invariant of the operator homogenized in the Born approximation: for first-order alloys with vanishing low moments, and for a class of second-order systems with scalar disorder at criticality. The latter regime contains the model of \cite{GWATB09}, whose Born correction diverges logarithmically.

\medskip\noindent{\bf Related work and outline.} Homogenization with large oscillatory potentials is analyzed in \cite[Ch.~1, \S12]{BLP78}. See also \cite{ACPSV04}. Operator-norm estimates were obtained by spectral methods \cite{BirmanSuslina04,BirmanSuslina06} and by shift methods \cite{ZhikovPastukhova05}. See also \cite{KLS12,Shen18}. Our correctors are built, as in \cite{ZhikovPastukhova05,KLS12}, on a solution smoothed at the oscillation scale. A point-free approach based on a reformulation as a first-order system is developed in \cite{DGJ26}, with applications to divergence-form operators and to the regularized Dirac operators of \cite{BalDang24}, for periodic and, in an annealed sense, stationary random potentials. Domain-wall modes go back to \cite{JackiwRebbi76}, and their derivation from periodic structures through effective Dirac operators is treated in \cite{FLW17,DrouotWeinstein20}, where the mass arises at first order from symmetry-breaking perturbations. Here, it arises at second order from mean-zero fluctuations. For random media, we refer to \cite{FOP82,Bal10,BalGu15} for large random potentials, to \cite{AllezChouk15,Labbe19,HairerLabbe15} for the continuum Anderson Hamiltonian and its renormalization at criticality, and to \cite{EGS05,BSS23,ProdanSchulzBaldes16} for invariants in mobility gaps. See also \cite{PRR26} for recent results on dynamical delocalization.

Section \ref{sec:setting} states the main results, and section \ref{sec:num} presents applications to Dirac, regularized Dirac and BHZ operators, with periodic and random fluctuations. The proofs are given in sections \ref{sec:hom} to \ref{sec:crit}, and appendix \ref{app:proba} collects probabilistic estimates. Supplementary material summarizes the setting (section~\ref{S-sec:Ssetting}), the main theoretical tools along with outlines of the proofs (section~\ref{S-sec:Stools}), and contains further examples, numerical illustrations, and extensions in later sections.

\section{Setting and main results}\label{sec:setting}

This section states the main results, first on periodic homogenization, next on random homogenization preserving a gap a.s., and finally on random homogenization in the presence of pseudo-gaps.

\paragraph{Notation.} We write $D=-i\nabla$, $\avg{\xi}=(1+|\xi|^2)^{1/2}$, $\T^d=\R^d/\Z^d$, and $\avg f=\int_{\T^d}f\,dy$. The space of $N\times N$ Hermitian matrices is $\Herm(N)$, with the real inner product $\tr(AB)$. For a $1$-periodic function $f$ we set $\hat f_q=\int_{\T^d}f(y)e^{-2\pi iq\cdot y}dy$, $q\in\Z^d$, and $f^\eps(x)=f(x/\eps)$. Throughout, $x\in\R^d$ is the physical variable and $y\in\T^d$ (or $\R^d$) the fast variable of the profiles, so that $y=x/\eps$. Resolvents are written $R(z)=(H-z)^{-1}$, with superscripts matching the operator. Pauli matrices are $\sigma_1,\sigma_2,\sigma_3$, with $\sigma_0=I$. We write $\one_\Omega$ for the indicator function of a set $\Omega$, $\cU_r=\{x:\dist(x,\cU)\le r\}$ for the $r$-neighborhood of a set $\cU$, and call {\em unit box} a translate $Q=m+[0,1)^2$, $m\in\Z^2$, with center $c_Q$. Constants $C,c$ may change from line to line. An index of notation, which also lists the symbols used with several meanings, is given in section~\ref{S-sec:Snotation} of the supplementary material.

\medskip
\noindent\textbf{(i) Operators and fluctuations.} The unperturbed operator is an elliptic system of order $k$ on $\R^d$, and the fluctuations are oscillatory perturbations of all orders below $k$. Let $P(D)=\sum_{|\alpha|\le k}A_\alpha D^\alpha$ with $A_\alpha\in\C^{N\times N}$, and write $P_k(\xi)=\sum_{|\alpha|=k}A_\alpha\xi^\alpha$ for its principal symbol. We assume:\\[1mm]
 \textup{[P1]} \emph{self-adjointness}: $P(\xi)\in\Herm(N)$ for all $\xi\in\R^d$; and
 \\[1mm]
 \textup{[P2]} \emph{ellipticity}: $|P_k(\xi)v|\ge c_P|\xi|^k|v|$ for all $\xi\in\R^d$ and $v\in\C^N$, for some $c_P>0$.
 \\[1mm]
Lower-order terms, including a constant ``mass'' $\Gamma\in\Herm(N)$, are part of $P$. Under [P1]--[P2], $P(D)$ is self-adjoint on $H^k(\R^d;\C^N)$ \cite[\S VIII.3]{ReedSimon1}. Since $P_k(e)$ is invertible for $e$ on the connected sphere $S^{d-1}$ ($d\ge2$), its signature is constant, and we call $P_k$ \emph{indefinite} or \emph{definite} accordingly. For odd $k$, $P_k(-\xi)=-P_k(\xi)$ forces $P_k$ to be indefinite.
\\[1mm]
 A general model of fluctuations is described as follows. Let $\rho:\R^d\to\R$ be an envelope, with $\rho\in W^{1,\infty}$ if $k=1$ and $\rho\in C_b^\infty$ if $k\ge2$. For a periodic matrix function $c$ and a multi-index $\alpha$, let $\mathrm{Sym}[c\,D^\alpha]:=\frac12\bigl(c\,D^\alpha+D^\alpha c^*\bigr)$. We consider symmetric perturbations of all orders below $k$:
\begin{equation}\label{eq:Heps}
H^\eps=P(D)+\mathcal V^\eps,\qquad \mathcal V^\eps=\sum_{|\alpha|\le k-1}\eps^{|\alpha|-k/2}\,\mathrm{Sym}\bigl[\rho\,c_\alpha(\tfrac x\eps)D^\alpha\bigr]+\sum_{|\alpha|\le k-2}\eps^{|\alpha|}\,\mathrm{Sym}\bigl[\rho^2d_\alpha(\tfrac x\eps)D^\alpha\bigr]+\mathcal R^\eps .
\end{equation}
Here $c_\alpha,d_\alpha\in C^{|\alpha|,\alpha_0}(\T^d;\C^{N\times N})$ for some $\alpha_0\in(0,1)$, and $\avg{c_\alpha}=0$ for all $\alpha$. The remainder $\mathcal R^\eps$ is a finite sum of terms $\eps^{|\alpha|+1}\mathrm{Sym}[\rho\,e(\frac x\eps)D^\alpha]$ and $\eps^{|\alpha|-k/2+1}\mathrm{Sym}[\rho\,f(\frac x\eps)D^\alpha]$, with $|\alpha|\le k-1$, coefficients of the same regularity, and $\avg f=0$. The envelopes $\rho$ and $\rho^2$ appear naturally in the Peierls substitution \eqref{eq:peierls} below, which explains the form of \eqref{eq:Heps}. Independent envelopes $\rho_{\alpha,c}$ and $\rho_{\alpha,d}$ for each term can be treated by the same proofs, with $\rho^2M$ in \eqref{eq:M} replaced by a slowly varying matrix, quadratic in the $\rho_{\alpha,c}$ and linear in $\rho_{0,d}$.

The amplitudes are those for which every term acts at the same scale as the zeroth-order term $\eps^{-k/2}c_0$: in the fast variable, $\eps^{|\alpha|-k/2}D^\alpha=\eps^{-k}\cdot\eps^{k/2}(\eps D)^\alpha$. This scale is dictated by the correction: a potential of size $\eps^{-a}$ produces a correction quadratic in the potential and inverse to the principal part, of order $\eps^{-2a}\cdot\eps^k$, which is $O(1)$ exactly for $a=k/2$. Define
\begin{equation}\label{eq:Weff}
W:=\tfrac12\bigl(c_0+c_0^*\bigr)+\tfrac12\sum_{1\le|\alpha|\le k-1}D_y^\alpha\bigl(c_\alpha^*\bigr),\qquad d^{\rm h}_0:=\tfrac12\bigl(d_0+d_0^*\bigr).
\end{equation}
Thus $W$ is the function obtained by applying $\sum_\alpha\mathrm{Sym}[c_\alpha D_y^\alpha]$ to constant vectors. It has mean zero but need not be Hermitian. We set
\begin{equation}\label{eq:M}
H^0=P(D)+\rho^2(x)M,\qquad M:=\avg{d_0^{\rm h}}-\avg{W^*P_k(D_y)^{-1}W}=\avg{d_0^{\rm h}}+\sum_{q\in\Z^d\setminus0}\hat W_q^*\,G(q)\,\hat W_q,
\end{equation}
with $G(q):=-P_k(2\pi q)^{-1}$. Both $H^\eps$ and $H^0$ are self-adjoint on $H^k$. Here $P_k(D_y)^{-1}$ acts on mean-zero periodic functions.

The basic case is a matrix-valued potential, $\mathcal V^\eps=\eps^{-k/2}\rho(x)W(x/\eps)$ with $W$ Hermitian, for which $M=-\avg{WP_k(D_y)^{-1}W}$. Both sums appear in the Peierls substitution: for $k\le2$,
\begin{equation}\label{eq:peierls}
P\bigl(D+\eps^{k/2-1}\rho A(\tfrac x\eps)\bigr)+\eps^{-k/2}\rho V(\tfrac x\eps)\,I,\qquad A\in C^{1,\alpha_0}(\T^d;\R^d),\ \avg A=0,\ \avg V=0,
\end{equation}
is of the form \eqref{eq:Heps} (Lemma~\ref{lem:peierls}). For $k=2$, its diamagnetic term $\rho^2P_2(A(\tfrac x\eps))$, quadratic in the vector potential, is the second sum. The electric potential $V$ and the vector potential $A$ contribute to $M$ at the same order. For $k=2$, the magnetic field $\mathrm{curl}\,A^\eps\sim\eps^{-1}$ and the electric potential $\eps^{-1}V^\eps$ moreover have the same amplitude. The exponent $k/2$ is the only one producing a non-trivial $O(1)$ limit, unless $M=0$. The correction $M$ is computed with the principal symbol only; lower-order terms of $P$ do not contribute. We set $r_1=\frac12$ and $r_k=1$ for $k\ge2$.

\medskip
\noindent\textbf{(ii) Periodic homogenization and attainable corrections.} For periodic fluctuations, $H^\eps$ converges to the homogenized operator $H^0$ with explicit rates.
\begin{theorem}[Norm-resolvent homogenization]\label{thm:hom}
Assume \textup{[P1]--[P2]} and the regularity of the coefficients in \eqref{eq:Heps}. There is $C$, depending on $P$, $\rho$ and the coefficients of $\mathcal V^\eps$, such that for all $z\in\C\setminus\R$ and $\eps\in(0,1]$,
\[
\big\|(H^\eps-z)^{-1}-(H^0-z)^{-1}\big\|_{L^2\to L^2}\le C\,\eps^{r_k}\,(1+|\Re z|)^2\,\bigl(1+|\Im z|^{-2}\bigr).
\]
\end{theorem}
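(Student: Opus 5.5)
The proof will use a two-scale corrector construction with a solution smoothed at the oscillation scale, as in \cite{ZhikovPastukhova05,KLS12}, followed by a duality argument that turns an $H^{-k}$-type residual into an $L^2$ resolvent bound.

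Fix $f\in L^2$ and $z\in\C\setminus\R$, and set $u_0=(H^0-z)^{-1}f$. Elliptic regularity for $H^0$ gives $\|u_0\|_{H^k}\le C(1+|\Re z|)(1+|\Im z|^{-1})\|f\|$. Let $S^\eps$ be a Steklov-type smoothing at scale $\eps$, for instance a Fourier cutoff at frequencies $|\xi|\lesssim\eps^{-1}$, and write $\tilde u_0=S^\eps u_0$. We then have $\|u_0-\tilde u_0\|_{H^{j}}\le C\eps^{k-j}\|u_0\|_{H^k}$ and $\|\tilde u_0\|_{H^{k+m}}\le C\eps^{-m}\|u_0\|_{H^k}$.

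\textbf{Corrector and ansatz.} Let $\chi$ be the mean-zero periodic solution of the cell problem $P_k(D_y)\chi=-W$ in matrix form; it exists because $\avg W=0$ and $P_k(2\pi q)$ is invertible for $q\ne0$ by [P2]. The ansatz is
\[
u^\eps=\tilde u_0+\eps^{k/2}\rho(x)\,\chi(\tfrac x\eps)\tilde u_0 .
\]

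\textbf{Applying $H^\eps-z$.} The action of $\mathcal V^\eps$ on $\tilde u_0$, with derivatives falling on the slow function, gives at leading order $\eps^{-k/2}\rho W(x/\eps)\tilde u_0$. This is cancelled by $\eps^{-k}P_k(D_y)$ acting on the corrector term, with prefactor $\eps^{k/2}$. The cross term is $\mathcal V^\eps$ applied to $\eps^{k/2}\rho\chi\tilde u_0$. Its leading part is $\rho^2\,\sum_\alpha \mathrm{Sym}[c_\alpha D_y^\alpha](\chi)$ evaluated at $x/\eps$, times $\tilde u_0$. This part is $O(1)$ and oscillatory, and its mean equals $-\avg{W^*P_k(D_y)^{-1}W}$. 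Together with $\rho^2\avg{d_0^{\rm h}}$, the means reproduce $\rho^2M$.

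The remaining terms are of two kinds. The first kind consists of mean-zero oscillatory $O(1)$ functions $g(x/\eps)$ multiplying $\tilde u_0$; these include the fluctuation of the cross term and of $d_0$. The second kind consists of terms carrying positive powers of $\eps$ times derivatives of $\tilde u_0$ or $\rho$. The remainder $\mathcal R^\eps$ also falls in this category, because it carries an extra factor $\eps$.

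\textbf{Second corrector.} Each mean-zero term $g(x/\eps)\tilde u_0$ is removed with a second-order corrector $\eps^k P_k(D_y)^{-1}g$. This costs $O(\eps)$ errors measured in $H^{-k}$-type norms, or equivalently after applying $(H^\eps-\bar z)^{-1}$. The smoothing is what keeps these errors under control: each extra derivative on $\tilde u_0$ costs at most $\eps^{-1}$, which is compensated by the explicit powers of $\eps$. The H\"older regularity $C^{|\alpha|,\alpha_0}$ of $c_\alpha$ ensures that $\chi$ and the products appearing are bounded (multiplier-type) functions, so no loss arises there.

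\textbf{Error bound.} Write
\[
(H^\eps-z)^{-1}f-u_0=(H^\eps-z)^{-1}\bigl[f-(H^\eps-z)u^\eps\bigr]+(u^\eps-u_0).
\]
The last term is $O(\eps^{k/2})$ in $L^2$ from the corrector, plus the smoothing error. For the first term we use duality: the residual has the form $\eps\,(\text{divergence-type or } H^{-k/2}\text{-type terms})$. We therefore need a bound on $(H^\eps-z)^{-1}$ from $H^{-s}$ to $L^2$ that is uniform in $\eps$. This follows from a G\aa rding-type inequality: the singular potential $\eps^{-k/2}\mathcal V$ is form-bounded relative to $P$ in a way compatible with the corrector, giving $\|(H^\eps-z)^{-1}\|_{H^{-k/2}\to H^{k/2}}\lesssim(1+|\Re z|)(1+|\Im z|^{-1})$. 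Combining the two $z$-dependent factors yields the stated $(1+|\Re z|)^2(1+|\Im z|^{-2})$.

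For $k=1$ the envelope $\rho$ is only Lipschitz and the corrector has size $\eps^{1/2}$. Both the corrector itself and the terms $\eps^{1/2}\nabla\rho\,\chi\tilde u_0$ then limit the rate to $\eps^{1/2}=\eps^{r_1}$. For $k\ge2$ the same mechanism, with smooth $\rho$ and two correctors, gives rate $\eps$.

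\textbf{Main obstacle.} I expect the hardest step to be the uniform $\eps$-bound on $(H^\eps-z)^{-1}$ in negative Sobolev norms. The potential is large, of size $\eps^{-k/2}$, and is not relatively small with respect to $P$. I would handle it by conjugation: write $H^\eps=(I+\eps^{k/2}\rho\chi^\eps)^{*}\,\tilde H^\eps\,(I+\eps^{k/2}\rho\chi^\eps)+\text{small}$. This is a near-unitary gauge removal of the leading oscillation, in which $\tilde H^\eps$ has only $O(1)$ and mean-zero $O(1)$ oscillatory perturbations. Such perturbations are relatively bounded for $k\ge1$ only after one integration by parts in $y$, so this step needs care when $k=1$.
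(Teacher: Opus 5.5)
Your argument depends on a uniform-in-$\eps$ bound $\|(H^\eps-z)^{-1}\|_{H^{-k/2}\to H^{k/2}}\lesssim(1+|\Re z|)(1+|\Im z|^{-1})$, or an $H^{-s}\to L^2$ bound. This is a genuine gap: you do not prove it, and the tools you propose cannot deliver it under \textup{[P1]--[P2]} alone.

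\begin{itemize}
\item \emph{G\aa rding.} A G\aa rding-type inequality needs a definite principal symbol. But $P_k$ is indefinite for every odd $k$, and also for the Dirac- and BHZ-type second-order systems the theorem is aimed at.
\item \emph{The case $k=2$.} Here your bound implies a uniform $L^2\to H^1$ bound on the resolvent. For indefinite systems the usual route to such a bound is a multiplier $J$ with $J^*P_2+P_2J\ge c|\xi|^2$. Such a $J$ need not exist: for $P_2(\xi)=-(\xi_1^2-\xi_2^2)\sigma_3-2\xi_1\xi_2\sigma_1$ one has $P_2(\xi^\perp)=-P_2(\xi)$.
\item \emph{Conjugation.} Conjugating by $I+\eps^{k/2}\rho\chi^\eps$ does not rescue this. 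Besides cancelling $\eps^{-k/2}\rho W^\eps$, it produces commutator terms $\eps^{k/2-|\nu|}(A_\alpha D_y^\nu\chi)^\eps D^{\alpha-\nu}(\rho\,\cdot)$ with $1\le|\nu|\le k-1$. For $k=2$ these are $O(1)$ first-order operators with oscillating coefficients; for $k\ge3$ their coefficients have size $\eps^{1-k/2}$. So $\tilde H^\eps$ is not a uniformly controlled perturbation of $P+\rho^2M$.
\end{itemize}

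The same commutator terms show that your inventory of the residual is incomplete. The $P$-terms of the first corrector have order $k/2-|\nu|$, going down to $1-k/2$, and they are mean-zero coefficients multiplying \emph{derivatives} of $\rho\tilde u_0$. They are neither of the form $g(x/\eps)\tilde u_0$ nor small. For $k\ge3$ they are large in $L^2$, and their correctors generate new terms of order $\le\frac12$, so one round of second correctors does not remove them.

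The missing idea is that you need no stability beyond $\|(H^\eps-z)^{-1}\|\le|\Im z|^{-1}$, provided the residual is made small in $L^2$ itself. This is how the theorem is proved.

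For $k=1$, your two correctors already achieve this, even without smoothing. The residual consists of terms such as $\eps^{1/2}\gamma_j\chi^\eps D_j(\rho\psi^0)$, $\eps^{1/2}(\Gamma-z)\rho\chi^\eps\psi^0$ and $\eps^{1/2}\rho^3(W\chi_2)^\eps\psi^0$. All are bounded by $C\eps^{1/2}(1+|z|)\|\psi^0\|_{H^1}$ in $L^2$.

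For $k\ge2$, build the correctors on $v=S_\eps\psi^0$ and iterate:
\begin{itemize}
\item After extracting $\rho^2Mv$, every monomial $\eps^ae^\eps Qv$ of order $a\le\frac12$ has a mean-zero coefficient. It is cancelled by $-\eps^{a+k}(P_k(D_y)^{-1}e)^\eps Qv$, whose $P$-terms have order $\ge a+1$.
\item Starting from order $\ge1-k/2$, the hierarchy therefore stops after at most $\lceil k/2\rceil$ generations beyond the first corrector.
\item In the weight count, each derivative of $v$ beyond the $k$-th costs $\eps^{-1}$ by \eqref{eq:smooth}. This shows that all remaining monomials are bounded by $C\eps(1+|z|)\|\psi^0\|_{H^k}$ in $L^2$.
\item The smoothing errors $\mathcal V^\eps(I-S_\eps)\psi^0$ and $\rho^2M(I-S_\eps)\psi^0$ are $O(\eps^{k/2})\|\psi^0\|_{H^k}$.
\end{itemize}

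Finally, the terms in which $-z$ multiplies a corrector give a factor $1+|z|$ rather than $1+|\Re z|$. For $|\Im z|>1$ you need the identity \eqref{eq:shiftz} with $z_0=\Re z+i\,\sgn(\Im z)$, whose outer factors have norm at most $1$, to obtain $(1+|\Re z|)^2$.
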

\noindent The theorem is proved in section \ref{sec:hom}.
\begin{theorem}[Attainable corrections]\label{thm:univ}
Let $d\ge2$ and $N\ge2$, and consider $M(W)$ over all Hermitian trigonometric polynomials $W$ with $\avg W=0$. \textup{(i)} If $P_k$ is indefinite, $\{M(W)\}=\Herm(N)$; \textup{(ii)} if $P_k$ is positive (resp.\ negative) definite, $\{M(W)\}$ is the cone of negative (resp.\ positive) semidefinite matrices; and \textup{(iii)} if $k$ is odd, the modes along a single primitive lattice direction $q_0$ produce exactly the hyperplane $\{M:\tr(P_k(q_0)^{-1}M)=0\}$.
\end{theorem}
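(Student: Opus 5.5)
Fix a Hermitian trigonometric polynomial $W$ with $\avg W=0$. Hermiticity of $W$ gives $\hat W_{-q}=\hat W_q^*$. Since $P_k(-\xi)=(-1)^kP_k(\xi)$, we also have $G(-q)=(-1)^kG(q)$, so it is natural to pair the modes $\pm q$. Writing $X=\hat W_q$, a pair contributes
\[
X^*G(q)X+XG(-q)X^*=X^*G(q)X+(-1)^kXG(q)X^*
\]
to $M$. The plan is to determine the set of such pair contributions and then take sums. Every $M(W)$ is a finite sum of these, and conversely any finite sum is attained by placing the pieces on distinct $\pm$ pairs. Whether $X$ can be chosen independently on different pairs does not matter, since the sum of attainable pieces is attainable.

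\textbf{Case (ii).} Assume $P_k$ positive definite, so $k$ is even and $G(q)<0$ for $q\neq0$. Each pair contribution is then $\le0$, hence $M\le0$. For the converse, any negative semidefinite matrix is a sum of rank-one matrices $-vv^*$, so it suffices to realize $-vv^*$. Take a single pair $\pm q$ with $X=\lambda ve^*$ for a unit vector $e$. Then
\[
X^*GX=|\lambda|^2(v^*Gv)\,ee^*,\qquad XGX^*=|\lambda|^2(e^*Ge)\,vv^*.
\]
This produces an $ee^*$ term we do not want. To remove it, use $N\ge2$: choose the two pieces $X=\lambda ve^*$ and $X'=\mu e v^*$ on two different pairs. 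The second pair gives $-c_1vv^*$ from its $X'^*GX'$ term and $-c_2ee^*$ from its other term, so both pairs together give $-(a\,vv^*+b\,ee^*)$ with $a,b>0$. Varying $e$ over an orthonormal basis and solving the resulting linear system with positive coefficients should isolate $-vv^*$.

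A cleaner alternative is a variational description. For $k$ even, with $G=-P_k^{-1}$ and $G(q)$ ranging over the negative definite matrices $-P_k(2\pi q)^{-1}$, the set of pair contributions is a convex cone. I would show that this cone contains all $-vv^*$ by a dimension count on rank-one choices. The negative definite case follows by symmetry.

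\textbf{Cases (i) and (iii).} In case (i), $G(q)$ is indefinite. If $k$ is even, the contribution $X^*GX+XGX^*$ has both signs available. With the same rank-one choice, $X^*GX$ gives $(v^*Gv)\,ee^*$ with $v$ free, so both $\pm ee^*$ occur. The cone generated is then closed under negation, so it is a subspace containing all $ee^*$, which is $\Herm(N)$. If $k$ is odd, the pair contribution is $X^*GX-XGX^*$. Taking the trace against $G^{-1}$ needs care. For (iii) I would show that contributions from modes $q=mq_0$ only involve $G(mq_0)=-m^{-k}P_k(2\pi q_0)^{-1}$, which is proportional to $G(q_0)$. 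Then
\[
\tr\bigl(G^{-1}(X^*GX-XGX^*)\bigr)=\tr(G^{-1}X^*GX)-\tr(XX^*).
\]
Writing $Y=G^{-1/2}$ fails because $G$ is indefinite. Instead I would use the identity $\tr(G^{-1}X^*GX)=\tr(XG^{-1}X^*G)$ and look for a commutator structure, expressing the contribution as $[\,\cdot\,]$-type terms whose pairing with $G^{-1}$ vanishes. This identifies a subset of the hyperplane. To get equality with the hyperplane, I would show by explicit rank-one choices that the span has codimension one and that both signs occur. For (i) with $k$ odd, sum over two independent directions $q_0,q_1$, available since $d\ge2$. The two hyperplanes are distinct because $P_k(q_0)^{-1}$ is not proportional to $P_k(q_1)^{-1}$ for generic directions, and together they generate $\Herm(N)$.

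\textbf{Main obstacle.} The hardest step is the odd-$k$ trace identity in (iii) together with the matching surjectivity onto the hyperplane. The second delicate point is showing that $P_k(q_0)^{-1}$ and $P_k(q_1)^{-1}$ are non-proportional for some pair of lattice directions. If they were proportional for all directions, then $P_k(\xi)$ would be a scalar multiple of a fixed matrix, which is incompatible with odd homogeneity and ellipticity for $d\ge2$. I would prove this by a continuity and homogeneity argument.
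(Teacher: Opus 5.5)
Your skeleton matches the paper's: pair $\pm q$, work along one lattice direction where $G(jq_0)=j^{-k}G(q_0)$ so that positive factors are absorbed into $\hat W_q$, add pieces placed on distinct pairs, and for odd $k$ combine two directions with non-proportional $G_1(\hat q)$, whose existence you argue as the paper does. The gap is in the single-direction analysis, which carries the whole theorem.

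\emph{Part (iii).} The hyperplane $\{M:\tr(P_k(q_0)^{-1}M)=0\}$ is $G^\perp=\{M:\tr(GM)=0\}$ with $G=G_1(\hat q_0)=-P_k(\hat q_0)^{-1}$. The normal is therefore $G$, not $G^{-1}$. With $G$, the inclusion is one line: $\tr(GX^*GX)=\tr(GXGX^*)$ by cyclicity. With $G^{-1}$, the identity you are looking for is false, and your expansion $\tr(G^{-1}XGX^*)=\tr(XX^*)$ is not valid either. For instance, take $G=\mathrm{diag}(2,-1)$ and $X=e_1e_2^*$. Then $X^*GX-XGX^*=\mathrm{diag}(1,2)$, whose pairing with $G$ is $0$ but whose pairing with $G^{-1}$ is $-3/2$.

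For the reverse inclusion, ``explicit rank-one choices'' leaves out the key ingredient, the null-cone lemma: for invertible indefinite $G$ and $N\ge2$, $\mathrm{span}_\R\{vv^*:v^*Gv=0\}=G^\perp$. One proves it by reducing to $G=\mathrm{diag}(I_a,-I_b)$ and showing that a Hermitian $H$ with $v^*Hv=0$ on the null cone is a multiple of $G$, testing on null vectors such as $e_i+e^{i\phi}f_j$. Given the lemma, take $v$ null and $u^*Gu=1$. Then $X=uv^*$ produces $vv^*$ and $X=vu^*$ produces $-vv^*$. The sums therefore form a subspace containing every null $vv^*$, i.e., exactly $G^\perp$.

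\emph{Part (i) for even $k$.} The same omission breaks this case. With $X=\lambda ve^*$ the pair contributes $|\lambda|^2\bigl((v^*Gv)ee^*+(e^*Ge)vv^*\bigr)$. You dropped the second term, so $\pm ee^*$ is obtained only for null $e$. These span only $G^\perp$, not $\Herm(N)$, and the cone is not closed under negation beyond that. You additionally need one element on each side of $G^\perp$:
\begin{itemize}
\item $X=I$ gives $M_+=2G$, with $\tr(GM_+)>0$;
\item $X=u_-u_+^*$, with $u_\pm^*Gu_\pm=\pm1$, gives $M_-=u_-u_-^*-u_+u_+^*$, with $\tr(GM_-)=-2$.
\end{itemize}
Then every $H$ equals $\lambda M_\pm+(H-\lambda M_\pm)$ with $\lambda\ge0$ and the second term in $G^\perp$.

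\emph{Part (ii).} The ``linear system with positive coefficients'' cannot isolate $-vv^*$. Every piece is negative semidefinite, so the unwanted $-b\,ee^*$ terms accumulate and never cancel. Instead take $e=v$: $X=t\,vv^*$ with $|v|=1$ gives $2t^2(v^*Gv)vv^*$, an arbitrary nonpositive multiple of $vv^*$. Sums of these give every negative semidefinite matrix.
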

\noindent The correction is thus unrestricted when $P_k$ is indefinite, and sign-constrained when it is definite. In the latter case, the bulks $P+\Gamma$ are necessarily topologically trivial (see below). The theorem is proved in section \ref{sec:univ}. Explicit corrections are computed in Propositions~\ref{prop:EM} and~\ref{prop:regDirac}, and the sharpness of the rates of Theorem~\ref{thm:hom} is discussed in Remark~\ref{S-rem:sharp}.

\medskip
\noindent\textbf{(iii) Interface index and fluctuation-induced topology.} In two space dimensions, fluctuations confined to a half-plane generate an interface whose asymmetric transport is described by the index of a Fredholm operator. Let $d=2$ and let $\rho=\rho(x_2)$ be smooth with $\rho=0$ for $x_2\ge1$ and $\rho=1$ for $x_2\le-1$. The two bulk operators of $H^0$ are then
\[
H_+:=P(D)\qquad\text{and}\qquad H_-:=P(D)+M.
\]
We assume the common-gap condition
\begin{equation}\label{eq:gap}
\textup{[G]}\qquad (-g,g)\cap\bigl(\spec H_+\cup\spec H_-\bigr)=\emptyset\quad\text{for some }g>0.
\end{equation}

Fix a smooth non-decreasing energy switch $\varphi$ with $\varphi=0$ on $(-\infty,-g/4]$ and $\varphi=1$ on $[g/4,\infty)$, where $g$ is the gap parameter of the statement at hand, i.e., of [G], or of (A)--(B), [BG] and Theorem~\ref{thm:mainA} below, and put $U(H)=e^{2\pi i\varphi(H)}=I+S(H)$, where $S\in C_c^\infty((-g,g))$. With $\Pi=\mathbf 1_{\{x_1\ge0\}}$, we define the {\em Toeplitz operator}
\begin{equation}\label{eq:T}
T(H):=\Pi\,U(H)\,\Pi+I-\Pi .
\end{equation}
For elliptic interface operators with two gapped bulks in $(-g,g)$, $T(H)$ is Fredholm \cite{Bal2019,Bal2022}, and its index equals the interface current $2\pi\sigma_I(H)=\Tr\,2\pi i[H,\Pi_s]\varphi'(H)$ computed with a smooth spatial switch $\Pi_s(x_1)$. For $\Gamma_\pm\in\Herm(N)$ with $P+\Gamma_\pm$ gapped at $0$, the \emph{bulk-difference invariant} \cite{Bal2022} is
\begin{equation}\label{eq:ch}
\BDI(P+\Gamma_-,P+\Gamma_+):=c(P+\Gamma_-)-c(P+\Gamma_+),\ \ c(h):=\frac1{2\pi}\int_{\R^2}\Omega_h,\ \Omega_h:=i\,\tr\bigl(\Pi_h\,d\Pi_h\wedge d\Pi_h\bigr),
\end{equation}
with $\Pi_h$ the negative spectral projector of the symbol $h$. Only the difference is an integer.

\begin{proposition}[Bulk-edge correspondence {\cite{Bal2022,QuinnBal}}; see Lemma~\ref{lem:BDIc}]\label{thm:QB}
Let $H=P(D)+V(x)$ with $V\in C_b^\infty(\R^2;\Herm(N))$ equal to constant matrices $V_\pm$ for $\pm x_2\ge1$, and suppose $P+V_\pm$ satisfy \textup{[G]}. Then $T(H)$ is Fredholm, and $\Ind T(H)=\BDI(P+V_-,P+V_+)$. Moreover, $\Ind T(H)$ is invariant under norm-resolvent continuous deformations of $H$ that preserve \textup{[G]} for the bulks.
\end{proposition}
This result shows that the asymmetric transport invariant is captured by the index of the Toeplitz operator $T(H)$.
Homogenization estimates allow us to transfer the interface index from $H^0$ to $H^\eps$, showing that the asymmetric interface transport is observed for all $\eps$ {\em sufficiently small}, i.e., in the presence of sufficiently highly oscillatory perturbations.
\begin{theorem}[Fluctuation-induced interfaces]\label{thm:TAI}
Let $d=2$, and assume \textup{[P1]--[P2]} and \textup{[G]}. There is $\eps_0>0$ such that, for $\eps\in(0,\eps_0]$, $T(H^\eps)$ is Fredholm and $\Ind T(H^\eps)=\BDI(P+M,P)$.
\end{theorem}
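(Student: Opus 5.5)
The plan is to transfer the index from $H^0$ to $H^\eps$ by norm-resolvent convergence (Theorem~\ref{thm:hom}), combined with the bulk-edge correspondence for $H^0$ (Proposition~\ref{thm:QB}). With $\rho=\rho(x_2)$ as above, $H^0=P(D)+\rho^2(x_2)M$ has the form required in Proposition~\ref{thm:QB}, with $V_-=M$ for $x_2\le-1$ and $V_+=0$ for $x_2\ge1$. Under [G], $T(H^0)$ is therefore Fredholm with $\Ind T(H^0)=\BDI(P+M,P)$. It then suffices to show that $T(H^\eps)-T(H^0)=\Pi\bigl(S(H^\eps)-S(H^0)\bigr)\Pi$ has small operator norm for $\eps$ small. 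Stability of the Fredholm index under small norm perturbations then gives that $T(H^\eps)$ is Fredholm with the same index.

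The key step is to pass from resolvent convergence to convergence of $S(H^\eps)$, where $S=e^{2\pi i\varphi}-1\in C_c^\infty((-g,g))$. I would use the Helffer--Sjöstrand formula
\[
S(H)=\frac{1}{\pi}\int_{\C}\dbar\tilde S(z)\,(H-z)^{-1}\,d^2z,
\]
with $\tilde S$ an almost-analytic extension of $S$, compactly supported in $\{|\Re z|\le g,\ |\Im z|\le 1\}$, and satisfying $|\dbar\tilde S(z)|\le C_n|\Im z|^n$ for every $n$. I apply this formula to both $H^\eps$ and $H^0$. Theorem~\ref{thm:hom} gives
\[
\|(H^\eps-z)^{-1}-(H^0-z)^{-1}\|\le C\eps^{r_k}(1+g)^2(1+|\Im z|^{-2})
\]
on the support of $\tilde S$. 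Taking $n=2$ makes the integrand bounded, so $\|S(H^\eps)-S(H^0)\|\le C'\eps^{r_k}$. Since $\|\Pi\|\le1$, this yields $\|T(H^\eps)-T(H^0)\|\le C'\eps^{r_k}$.

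It remains to choose $\eps_0$ so that $C'\eps_0^{r_k}$ is smaller than the stability radius of $T(H^0)$, for instance $\|T(H^0)^{(-1)}\|^{-1}$ for a parametrix/Fredholm inverse modulo compacts. For $\eps\le\eps_0$, $T(H^\eps)$ is then Fredholm and $\Ind T(H^\eps)=\Ind T(H^0)=\BDI(P+M,P)$. Note that $H^\eps$ itself need not have a straight-forward bulk structure, since its fluctuations do not become constant for $x_2\le-1$. This is why I avoid applying Proposition~\ref{thm:QB} to $H^\eps$ directly and rely only on the perturbation argument.

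I expect the main subtlety to be checking that the functional calculus bound is legitimate uniformly in $\eps$. First, Theorem~\ref{thm:hom} is stated for all nonreal $z$, and its constant degenerates only like $|\Im z|^{-2}$, which the almost-analytic extension absorbs. Second, $H^\eps$ is self-adjoint on $H^k$, so the formula applies. Third, the gap condition [G] is used only for $H^0$; no gap for $H^\eps$ is needed, because Fredholmness is inherited by norm perturbation. If one prefers to avoid Helffer--Sjöstrand, an alternative is to approximate $S$ uniformly on $\R$ by polynomials in $(H-i)^{-1}$ and $(H+i)^{-1}$ via Stone--Weierstrass. That route gives convergence, but without a rate, which is enough here.
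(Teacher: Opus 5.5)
Your proposal is correct and follows the paper's proof essentially verbatim. Proposition~\ref{thm:QB} gives $\Ind T(H^0)=\BDI(P+M,P)$, the Helffer--Sj\"ostrand formula combined with Theorem~\ref{thm:hom} gives $\|T(H^\eps)-T(H^0)\|\le C'\eps^{r_k}$, and stability of the index under small norm perturbations concludes. The only cosmetic difference is the stability radius: the paper takes the explicit radius $1$ from Lemma~\ref{lem:fred}(i), because $\mathfrak q(T(H^0))$ is unitary in the Calkin algebra, whereas you use an unspecified radius coming from a Fredholm inverse.
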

\noindent When $P_k$ is indefinite, Theorem~\ref{thm:univ}(i) and Theorem~\ref{thm:TAI} show that every invariant $\BDI(P+M',P)$ with $P$ and $P+M'$ satisfying [G] is the index of $T(H^\eps)$ for some $W$ and all $\eps\le\eps_0$. When $P_k$ is definite, no fluctuation produces a non-trivial invariant: the eigenvalue branches of $P(\xi)+\Gamma$, $\Gamma\in\Herm(N)$, all tend to $+\infty$ (or all to $-\infty$), so $\spec(P+\Gamma)$ is a half-line, a gap at $0$ forces $\Pi_h\equiv0$ (or $I$), and $c(P+\Gamma)=0$. The invariant is evaluated for Dirac, regularized Dirac and BHZ operators in Propositions~\ref{prop:EM}, \ref{prop:regDirac} and~\ref{prop:regIndex} while an interface spectrum is shown in Figure~\ref{S-fig:first}(a). We now show that the fluctuation-induced index is stable against local defects of arbitrary size.
\begin{corollary}[Local defects]\label{cor:compact}
Under the assumptions of Theorem~\ref{thm:TAI}, let $V_c\in L^\infty(\R^2;\Herm(N))$ have compact support. Then, with $\eps_0$ as in Theorem~\ref{thm:TAI}, $T(H^\eps+V_c)$ is Fredholm and $\Ind T(H^\eps+V_c)=\BDI(P+M,P)$ for $\eps\in(0,\eps_0]$.
\end{corollary}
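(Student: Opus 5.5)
The plan is to show that $T(H^\eps+V_c)-T(H^\eps)$ is compact. Fredholmness and the index then follow at once from Theorem~\ref{thm:TAI} and the invariance of the index under compact perturbations. Write
\[
T(H^\eps+V_c)-T(H^\eps)=\Pi\bigl(S(H^\eps+V_c)-S(H^\eps)\bigr)\Pi ,
\]
so it suffices to prove that $S(H^\eps+V_c)-S(H^\eps)$ is compact on $L^2$ for each fixed $\eps\in(0,\eps_0]$. No smallness or uniformity in $\eps$ is needed, which is why the same $\eps_0$ works and the size of $V_c$ is irrelevant.

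\emph{Step 1: resolvent differences are compact.} For $z\in\C\setminus\R$, the resolvent identity gives
\[
R_c(z)-R(z)=-R_c(z)\,V_c\,R(z),
\]
where $R=(H^\eps-z)^{-1}$ and $R_c=(H^\eps+V_c-z)^{-1}$. Since $V_c$ is bounded, $H^\eps+V_c$ is self-adjoint on $H^k$. Because $H^\eps$ is elliptic of order $k$ with bounded lower-order coefficients (for fixed $\eps$), $R(z)$ maps $L^2$ boundedly into $H^k$, with $k\ge1$. The operator $V_c$ has compact support, so $V_c R(z)=V_c\chi R(z)$ with $\chi$ a compactly supported cutoff. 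By Rellich, $\chi:H^k\to L^2$ is compact, hence $V_cR(z)$ is compact and so is $R_c(z)-R(z)$.

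\emph{Step 2: pass to $S$.} Since $S\in C_c^\infty(\R)$, use the Helffer--Sjöstrand formula
\[
S(H)=\frac1\pi\int_\C\dbar\tilde S(z)\,(H-z)^{-1}\,d^2z,
\]
with an almost-analytic extension $\tilde S$ of compact support satisfying $|\dbar\tilde S(z)|\le C_n|\Im z|^n$. Then
\[
S(H^\eps+V_c)-S(H^\eps)=-\frac1\pi\int\dbar\tilde S(z)\,R_c(z)V_cR(z)\,d^2z .
\]
The integrand is compact for every $z\notin\R$. Its norm is bounded by $\|V_c\|_\infty|\Im z|^{-2}|\dbar\tilde S|$, which is integrable, and it is norm-continuous in $z$. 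The integral therefore converges in operator norm, and the result is compact because the compact operators form a closed ideal. An alternative is to approximate $S$ uniformly by polynomials in $(H-i)^{-1}$ and $(H+i)^{-1}$ via Stone--Weierstrass, using that $S$ vanishes at infinity.

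\emph{Step 3: conclude.} Multiplying by $\Pi$ on both sides preserves compactness. Hence $T(H^\eps+V_c)$ is a compact perturbation of the Fredholm operator $T(H^\eps)$, so it is Fredholm with the same index, namely $\BDI(P+M,P)$.

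The only point needing care is Step 1: one must ensure that $R(z)$ gains enough regularity (or local decay) for $V_c R(z)$ to be compact. For fixed $\eps$, the coefficients of $\mathcal V^\eps$ are bounded $C^{|\alpha|,\alpha_0}$ functions of order below $k$. Elliptic regularity therefore gives $\mathcal D(H^\eps)=H^k$ with equivalent graph norm, by relative boundedness with bound less than one after interpolation. This is already implicit in the statement that $H^\eps$ is self-adjoint on $H^k$, so the step should be routine.
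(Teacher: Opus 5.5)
Your proposal is correct and follows essentially the same route as the paper. The paper shows that $V_c$ is relatively compact via ellipticity, applies the resolvent identity, writes $S(H^\eps+V_c)-S(H^\eps)$ as a norm-convergent Helffer--Sj\"ostrand integral of compact operators using $|\dbar\tilde S|\le C|\Im z|^2$, and concludes by invariance of the Fredholm index under compact perturbations, which is why $\eps_0$ does not depend on $V_c$.
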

\noindent The threshold $\eps_0$ does not depend on $V_c$, and the same holds, a.s.\ and for every admissible cut, for the Toeplitz operators $\tilde T(H^\eps+V_c)$ along deformed cuts of Theorems~\ref{thm:mainB} and~\ref{thm:mainA} below. The defect is not homogenized. It may be large compared with the gap, and it may meet the interface and the cut. Theorem \ref{thm:TAI} and Corollary \ref{cor:compact} are proved in section \ref{sec:topo}.

Since the index changes along the family in $\eps$, the bulk gap must close somewhere. For $W\in L^\infty(\R^2;\Herm(N))$ and a continuous amplitude $\eps\mapsto a_\eps>0$ on $(0,\infty)$, let
\[
H^\eps_a=P(D)+a_\eps\rho(x_2)W(x/\eps),\qquad H^\eps_{a,B}=P(D)+a_\eps W(x/\eps)\quad\text{(lower bulk)}.
\]
When $0\notin\spec P\cup\spec H^\eps_{a,B}$, $T(H^\eps_a)$ is built with an energy switch whose transition lies in the common gap of $P$ and $H^\eps_{a,B}$ containing $0$. It is then Fredholm by Lemma~\ref{lem:fred}.
\begin{theorem}[Gap closing along the $\eps$-family]\label{thm:gapclose}
Let $d=2$, assume \textup{[P1]--[P2]} and $0\notin\spec P$, and let $0<\eps'<\eps''$ with $0\notin\spec H^{\eps'}_{a,B}\cup\spec H^{\eps''}_{a,B}$. If $\Ind T(H^{\eps'}_a)\ne\Ind T(H^{\eps''}_a)$, then $0\in\spec H^\eps_{a,B}$ for some $\eps\in(\eps',\eps'')$.
\end{theorem}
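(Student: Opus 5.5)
We argue by contraposition. Assume $0\notin\spec H^\eps_{a,B}$ for every $\eps\in[\eps',\eps'']$; the endpoints are covered by hypothesis. We show that $\eps\mapsto\Ind T(H^\eps_a)$ is locally constant on $[\eps',\eps'']$. Since the interval is connected, the index is then constant, which contradicts $\Ind T(H^{\eps'}_a)\ne\Ind T(H^{\eps''}_a)$. The plan has three steps: a uniform gap, a continuity statement, and stability of the Fredholm index.

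\emph{Step 1: a uniform common gap.} The set $\{\eps:0\notin\spec H^\eps_{a,B}\}$ must be open, with a locally uniform gap. This does not follow from norm continuity of $\eps\mapsto H^\eps_{a,B}-P$, because $W(x/\eps)$ is not norm-continuous in $\eps$. Instead we conjugate by the unitary dilation $(\mathcal D_\eps u)(x)=\eps^{d/2}u(\eps x)$. This turns $H^\eps_{a,B}$ into $\eps^{-k}P_\eps(D)+a_\eps W(x)$, where $P_\eps(\xi)=\eps^kP(\xi/\eps)$ is a polynomial in $\xi$ with coefficients polynomial in $\eps$. The principal part $\eps^{-k}P_k(D)$ has a continuous scalar prefactor, and the lower-order terms are relatively bounded with coefficients continuous in $\eps$. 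Hence the rescaled family is continuous in the norm-resolvent sense.

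Spectra are unitarily invariant, so by norm-resolvent continuity and upper semicontinuity of the spectrum the gap of $H^\eps_{a,B}$ at $0$ persists near each $\eps_1$. Compactness of $[\eps',\eps'']$ then gives a uniform $g'>0$ with $(-g',g')\cap(\spec P\cup\spec H^\eps_{a,B})=\emptyset$ for all $\eps$ in the interval. We then fix a single switch $\varphi$ with transition in $(-g'/4,g'/4)$. The index does not depend on the choice of switch within a common gap, since two such switches are connected by a homotopy. So this $\varphi$ may be used at the endpoints as well.

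\emph{Step 2: continuity of $T(H^\eps_a)$.} For the interface operator, dilation is not compatible with $\Pi=\one_{\{x_1\ge0\}}$ and $\rho(x_2)$ in a convenient way: $\Pi$ is dilation-invariant, but $\rho(x_2)$ becomes $\rho(\eps x_2)$. This is only a slowly varying profile and is harmless. Conjugation by $\mathcal D_\eps$ maps $T(H^\eps_a)$ to $\Pi U(\tilde H^\eps)\Pi+I-\Pi$, with
\[
\tilde H^\eps=\eps^{-k}P_\eps(D)+a_\eps\rho(\eps x_2)W(x).
\]
Both $\eps\mapsto\rho(\eps\,\cdot)$ and $\eps\mapsto a_\eps$ are continuous; the first is continuous in $L^\infty$ norm because $\rho$ is smooth and constant outside $|x_2|\le1$, which gives $\|\rho(\eps x_2)-\rho(\eps_1x_2)\|_\infty\to0$. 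Hence $\tilde H^\eps$ is norm-resolvent continuous. By Helffer--Sjöstrand, $U(\tilde H^\eps)=I+S(\tilde H^\eps)$ with $S\in C_c^\infty$ is then norm continuous in $\eps$. Unitary conjugation preserves the index, so $\Ind T(H^\eps_a)=\Ind\bigl(\Pi U(\tilde H^\eps)\Pi+I-\Pi\bigr)$.

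\emph{Step 3: stability of the index.} Each operator is Fredholm by Lemma~\ref{lem:fred}, since the bulks $P$ and $H^\eps_{a,B}$ share the gap $(-g',g')$. A norm-continuous family of Fredholm operators has locally constant index. Combined with Steps 1 and 2, this shows that $\Ind T(H^\eps_a)$ is constant on $[\eps',\eps'']$, which gives the contradiction.

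The main obstacle is Step 1 together with the Fredholm property under dilation. After rescaling, the unperturbed bulk becomes $\eps^{-k}P_\eps(D)$, so Lemma~\ref{lem:fred} must be applied to a family whose bulks vary with $\eps$. We only need it pointwise in $\eps$, though, and the spectral gaps are unitarily invariant. The remaining check is the $L^\infty$-norm continuity of $\rho(\eps\,\cdot)$. If this failed for less regular $\rho$, we would fall back on strong-resolvent continuity together with a relative-compactness argument.
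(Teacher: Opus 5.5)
Your proposal is correct and follows essentially the paper's route. The common steps are a dilation that restores norm-resolvent continuity (using the same $L^\infty$-continuity of $\rho(\eps\,\cdot)$), compactness to get a uniform common gap, a single fixed switch, and local constancy of the index of the norm-continuous Fredholm family $T(\cdot)$.

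The only difference is a normalization choice. The paper also multiplies the dilated operator by $\eps^k$, which freezes the principal part $P_k(D_y)$, and then uses the rescaling invariance in Lemma~\ref{lem:fred}(iii) to return to $T(H^\eps_a)$. You instead keep the prefactor $\eps^{-k}$ in front of $P_k(D)$. This is equally valid, because multiplying a norm-resolvent continuous family by a continuous positive scalar preserves norm-resolvent continuity. It also has the small advantage that the switch $\varphi$ never needs to be rescaled.
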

\noindent In particular, let $\mathcal V^\eps=\eps^{-k/2}\rho W(x/\eps)$ under the assumptions of Theorem~\ref{thm:TAI}. For $\eps\ge\eps_1$ large, $T(H^\eps)$ is Fredholm with index $0$, since $H^\eps$ is then close to $P$. Hence, if $\BDI(P+M,P)\ne0$, there is, after decreasing $\eps_0$ if necessary, some $\eps\in(\eps_0,\eps_1)$ with $0\in\spec\bigl(P+\eps^{-k/2}W(x/\eps)\bigr)$. For general fluctuations, the amplitudes $\eps^{|\alpha|-k/2}$ with $|\alpha|\geq k/2$ do not vanish as $\eps\to\infty$, and the index for large $\eps$ need not vanish. The same holds, configuration by configuration, for random fields whenever the two indices are defined and differ, e.g., for the gap-preserving laws of Theorem~\ref{thm:phasepurek}. The gap closing is a consequence of the change of index alone. In the pseudo-gap regime of blocks (v)--(vii), however, the hypothesis $0\notin\spec H^{\eps'}_{a,B}$ fails for small $\eps'$, and the theorem does not apply (Remark~\ref{S-rem:gapscope} of the supplementary material).

\smallskip

Finally, magnetic fields introduced by Peierls substitution, which are gauge covariant, cannot create topology in two classes of operators. Let $k\in\{1,2\}$, $g>0$, $J\in\Herm(N)$ a unitary involution and $s\in\{\pm1\}$. We consider the following two settings:
\begin{enumerate}[label=\textup{(\alph*)},leftmargin=*,nosep]
\item \emph{$J$-graded class.} $P(\xi)=P^{\rm o}(\xi)+\Gamma$, with $JP^{\rm o}(\xi)=-P^{\rm o}(\xi)J$ for all $\xi$, $J\Gamma=\Gamma J$ and $sJ\Gamma\ge g$.
\item \emph{$J$-definite class.} $P(\xi)=P^{\rm o}(\xi)+q(\xi)\Upsilon+\Gamma$, with $P^{\rm o}$ as in (a), $q(\xi)=\sum b_{ij}\xi_i\xi_j$ for a real positive definite matrix $(b_{ij})$, and $\Upsilon,\Gamma$ Hermitian matrices commuting with $J$ such that $sJ\Upsilon\ge0$ and $sJ\Gamma\ge g$.
\end{enumerate}
Heuristically, in class (a) the momentum enters only through $J$-odd terms, which mix the two eigenspaces of $J$ without contributing to $\Re\ip{H\psi}{J\psi}$, while the $J$-even mass separates them by at least $g$. In class (b), the additional $J$-even curvature has the same $J$-sign as the mass and reinforces it, so that no topological change occurs. A vector potential enters only through the momentum, and cannot overcome the mass in either class.
For real $a\in W^{1,\infty}(\R^2;\R^2)$, let $P(D^a)$ denote the Peierls substitution $\xi\mapsto D^a:=D+a(x)$, with $q(D^a)=\sum b_{ij}D^a_iD^a_j$ and $P_2(D^a)$ as in Lemma~\ref{lem:peierls}.
\begin{theorem}[Purely magnetic fluctuations]\label{thm:magno}
Let $d=2$ and let $P$ be in class (a) or (b). Then $\spec P(D^a)\cap(-g,g)=\emptyset$ for every real $a\in W^{1,\infty}(\R^2;\R^2)$. In particular, for $a=\eps^{k/2-1}\rho A(\tfrac x\eps)$ ($V=0$ in \eqref{eq:peierls}), $U(P(D^a))=I$ and $\Ind T(P(D^a))=0$ for all $\eps>0$. Under \textup{[P1]--[P2]} and \textup{[G]}, the correction $M$ of this purely magnetic fluctuation satisfies $\BDI(P+M,P)=0$.
\end{theorem}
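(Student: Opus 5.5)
The strategy is a quadratic-form estimate with the grading $J$. For $H=P(D^a)$ and $\psi$ in the domain $H^k$, we bound $\Re\ip{H\psi}{sJ\psi}$ from below by $g\|\psi\|^2$. Since $J$ is unitary, Cauchy--Schwarz then gives $\|H\psi\|\,\|\psi\|\ge|\ip{H\psi}{sJ\psi}|\ge g\|\psi\|^2$, so $\|H\psi\|\ge g\|\psi\|$. As $H$ is self-adjoint, this yields $\spec H\cap(-g,g)=\emptyset$.

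\textbf{Self-adjointness.} First note that $P(D^a)$ is self-adjoint on $H^k$. For $k=1$ the term $P_1(a)$ is a bounded Hermitian multiplication. For $k=2$ the magnetic terms are first-order perturbations, relatively bounded with bound $0$ since $a\in W^{1,\infty}$, so Kato--Rellich applies.

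\textbf{Class (a).} Write $P^{\rm o}(D^a)$ for the symmetrized substitution of Lemma~\ref{lem:peierls}. The key observation is that $J$ is a constant matrix, so $JP^{\rm o}(D^a)=-P^{\rm o}(D^a)J$ holds at the operator level. The substitution replaces $\xi$ by the operators $D^a_j$, which commute with the constant $J$; symmetrization and the diamagnetic $P_2(a)$ terms do not change this. Then
\[
2\Re\ip{P^{\rm o}(D^a)\psi}{J\psi}=\ip{(JP^{\rm o}(D^a)+P^{\rm o}(D^a)J)\psi}{\psi}=0 .
\]
Likewise $\Re\ip{\Gamma\psi}{J\psi}=\ip{J\Gamma\psi}{\psi}$, because $J\Gamma$ is Hermitian. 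Hence $\Re\ip{H\psi}{sJ\psi}=\ip{sJ\Gamma\psi}{\psi}\ge g\|\psi\|^2$.

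\textbf{Class (b).} There is an extra term $\Re\ip{q(D^a)\Upsilon\psi}{sJ\psi}$. Since $\Upsilon$ and $J$ commute with the scalar operators $D^a_j$, this equals
\[
\sum_{ij} b_{ij}\ip{sJ\Upsilon D^a_j\psi}{D^a_i\psi}.
\]
Here $K:=sJ\Upsilon\ge0$ is Hermitian, because $J$ and $\Upsilon$ commute. The form $\sum_{ij}b_{ij}\ip{K u_j}{u_i}$ is a tensor product of two positive semidefinite forms, so it is nonnegative; concretely, diagonalize $(b_{ij})$ by a real orthogonal change of variables. This uses that $q(D^a)$ is taken as $\sum b_{ij}D^a_iD^a_j$ with symmetric real $b$. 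Adding class (a) gives the bound $g\|\psi\|^2$.

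\textbf{Consequences.} With $\spec P(D^a)\cap(-g,g)=\emptyset$ and $\supp S\subset(-g,g)$, we get $S(P(D^a))=0$. Thus $U=I$, $T=I$, and the index is $0$ for every $\eps$.

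For the last claim, apply this with $a=\eps^{k/2-1}\rho(x_2)A(x/\eps)$. By [G], Theorem~\ref{thm:TAI} gives $\Ind T(H^\eps)=\BDI(P+M,P)$ for small $\eps$. Here $H^\eps=P(D^a)$ is of form \eqref{eq:Heps} by Lemma~\ref{lem:peierls}, and the gap parameter of [G] may be smaller than the class gap $g$. This is harmless: $\spec P(D^a)\cap(-g,g)=\emptyset$ still forces $\varphi(H^\eps)$ to be the projector onto the positive spectrum for any switch with transition inside the smaller gap, so $U=I$. Hence $\BDI(P+M,P)=0$.

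\textbf{Main obstacle.} The delicate point is verifying that the precise symmetrized form of $P(D^a)$ in Lemma~\ref{lem:peierls} preserves the $J$-oddness of $P^{\rm o}$ and the positivity structure of $q(D^a)\Upsilon$. For $k=2$, $P^{\rm o}$ may contain first- and second-order odd parts. The second-order odd part, symmetrized as $\frac12\sum(D^a_iD^a_j+D^a_jD^a_i)B_{ij}$ with $JB_{ij}=-B_{ij}J$, is still $J$-odd because $J$ commutes with all scalar operators. I would check this term by term from the definition in Lemma~\ref{lem:peierls}.
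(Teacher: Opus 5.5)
Your proof is correct and is essentially the paper's argument. The paper applies its graded-gap Lemma~\ref{lem:grade}, which is exactly your estimate $\Re\ip{H\psi}{sJ\psi}\ge g\|\psi\|^2$ followed by Cauchy--Schwarz, with $H_{\rm o}=P^{\rm o}(D^a)$, and with $H_{\rm e}=\Gamma$ in class (a) or $H_{\rm e}=q(D^a)\Upsilon+\Gamma$ in class (b), using $q(D^a)=\sum_l(\sum_iL_{li}D^a_i)^2\ge0$ commuting with $sJ\Upsilon\ge0$; it then concludes the last claim through Theorem~\ref{thm:TAI}. Your remark on matching the gap parameter of [G] to the class gap is a detail the paper leaves implicit, and it is settled by shrinking the [G] parameter, which is always allowed.
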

\noindent The gap holds whatever the amplitude or profile of the vector potential. The $J$-graded class contains the Dirac operator with $J=\sigma_3$. For it, purely magnetic fluctuations give $M=0$ (Proposition~\ref{prop:EM}), consistent with the zero modes of Remark~\ref{S-rem:AC}. The $J$-definite class contains the regularized Dirac operator in its trivial phase $m\beta<0$. A quantization result for first-order systems is given in Remark~\ref{S-rem:quant} of the supplementary material.

\medskip
\noindent\textbf{(iv) Random alloys maintaining a genuine gap.} We now consider random fluctuations in $d=2$. An \emph{alloy} is $W_\omega=\sum_{j\in\Z^2}\omega_jw(\cdot-j)$, with real couplings $\omega=(\omega_j)$ satisfying $0\leq \kappa_- \leq |\omega_j|\le\kappa_+$ for all $j$, and a single-site profile $w\in C_c^\infty(\R^2;\Herm(N))$. For an operator of order $k$, we assume the {\em moment condition}
\begin{equation}\label{eq:moments}
\int_{\R^2}y^\alpha w(y)\,dy=0\qquad\text{for all multi-indices }\alpha\text{ with }|\alpha|\le2k-1.
\end{equation}
This condition is not generic. It holds, e.g., for $w=\Delta^kb$ with $b\in C_c^\infty(\R^2;\Herm(N))$. Let $H^\eps_\omega=P+\eps^{-k/2}\rho\,W_\omega(x/\eps)$ with an envelope $\rho\in C_b^\infty(\R^2;\R)$, and let $H^\eps_{\omega,B}$ denote the same operator with $\rho\equiv1$. With $\hat w(p)=\int_{\R^2}w(y)e^{-2\pi ip\cdot y}dy$ and $G$ as in \eqref{eq:M}, let
\begin{equation}\label{eq:Kq}
\mathbb K(q)=\sum_{\ell\in\Z^2,\ q+\ell\ne0}\hat w(q+\ell)^*\,G(q+\ell)\,\hat w(q+\ell),\qquad q\in\T^2 .
\end{equation}
Under \eqref{eq:moments}, $\mathbb K$ is bounded and Hermitian, and for periodic configurations $M(\omega)=\int\mathbb K\,d\mathfrak s_\omega$, where the structure factor $\mathfrak s_\omega$ is a nonnegative measure whose mass is the mean of $\omega_j^2$ (Lemma~\ref{lem:Kset}). Local effective matrices therefore lie in the sets
\[
\cK[\mu_1,\mu_2]:=[\mu_1,\mu_2]\cdot\overline\conv\{\mathbb K(q):q\in\T^2\},\quad 0\le\mu_1\le\mu_2,\qquad \cK:=\cK[\kappa_-^2,\kappa_+^2] .
\]
We set $\delta_1(\eps)=\eps^{1/4}$, $\delta_2(\eps)=\eps^{1/3}\log(1/\eps)$ and $\delta_k(\eps)=\eps^{1/3}$ for $k\ge3$.
\begin{theorem}[Uniform homogenization for alloys]\label{thm:unifhom}
Assume \textup{[P1]--[P2]} and \eqref{eq:moments}. There are $C$ and $\eps_*>0$ such that, for every configuration and every $\eps\in(0,\eps_*]$, there is a smooth field $M^\eps_\omega:\R^2\to\Herm(N)$ with
\[
\big\|(H^\eps_\omega-z)^{-1}-(P+\rho^2M^\eps_\omega-z)^{-1}\big\|\le C\delta_k(\eps)\,(1+|\Re z|)^2(1+|\Im z|^{-2}),\qquad z\notin\R ,
\]
and $\dist(M^\eps_\omega(x),\cK)\le C\delta_k(\eps)$ for all $x\in\R^2$.
\end{theorem}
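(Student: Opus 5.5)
We plan to reduce Theorem~\ref{thm:unifhom} to a localized version of the periodic argument of Theorem~\ref{thm:hom}, using a mesoscopic scale $L=L(\eps)$ with $\eps\ll L\ll1$. Throughout we work in the fast variable $y=x/\eps$. Partition $\R^2$ (physical variable) into boxes $B$ of side $L$; each contains $n=(L/\eps)^2$ lattice sites. On each box we replace $W_\omega$ by its $n$-periodization $W_{\omega,B}$, the periodic alloy with period cell of side $L/\eps$ built from the couplings in $B$. Theorem~\ref{thm:hom} applied on the torus of period $L/\eps$ gives the Born matrix $M_B=\int\mathbb K\,d\mathfrak s_{\omega,B}$. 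By Lemma~\ref{lem:Kset}, $M_B$ is an average of $\mathbb K$ against a nonnegative measure of mass $n^{-1}\sum_{j\in B}\omega_j^2\in[\kappa_-^2,\kappa_+^2]$. Hence $M_B\in\cK$ exactly, for every configuration. We then set $M^\eps_\omega=\sum_B\chi_B M_B$, where $\{\chi_B\}$ is a smooth partition of unity at scale $L$ subordinate to slightly enlarged boxes. Convexity of $\cK$ then gives $M^\eps_\omega(x)\in\cK$ pointwise. The distance bound $C\delta_k$ leaves room for a final mollification or for truncation errors in $\mathbb K$.

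The resolvent estimate follows the corrector scheme of section~\ref{sec:hom}. For $f\in L^2$, let $u_0=(P+\rho^2M^\eps_\omega-z)^{-1}f$, smoothed at scale $\eps$ as in \cite{ZhikovPastukhova05,KLS12}. Build the first corrector locally: on each box,
\[
u_1=-\eps^{k/2}\chi_B\,P_k(D_y)^{-1}\bigl[\rho W_{\omega}\bigr](x/\eps)\,u_0 .
\]
The inverse is well defined because \eqref{eq:moments} makes the Fourier transform of $w$ vanish to order $2k$ at $0$. As a result, $P_k(D_y)^{-1}w$ is a decaying, mean-zero, well-localized profile, with moments killed up to order $k-1$. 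This also makes $\mathbb K$ bounded. A key point is that, thanks to this decay, the corrector can be defined globally with the true $W_\omega$: the superposition $\sum_j\omega_j(P_k^{-1}w)(\cdot-j)$ converges uniformly in $\omega$, since $|\omega_j|\le\kappa_+$.

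The residual $(H^\eps_\omega-z)(u_0+u_1)-f$ then contains three kinds of terms. The first are terms already controlled in Theorem~\ref{thm:hom}, of size $\eps^{r_k}$ times $L$-dependent derivative losses of $\chi_B$. The second is the fluctuating quadratic term $\rho W_\omega\,P_k^{-1}(\rho W_\omega)-\rho^2M^\eps_\omega$, which must be small in a weak (negative Sobolev) norm. The third is a set of boundary/interaction terms between neighbouring boxes, arising from the tails of $P_k^{-1}w$.

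The main obstacle is the second term. Its box average is not deterministic: it is the empirical structure-factor average over $B$. By construction, $M_B$ matches exactly the periodized average, so what remains is a mean-zero oscillation, up to the interaction between sites near $\partial B$ and the periodization mismatch. We plan to write this oscillation as $D_y$ of a bounded field, so that it pairs with $u_0$ to gain a factor $\eps$ per unit of $L^{-1}$, uniformly in $\omega$ rather than in probability. The cost is the boundary layer, a fraction $\eps/L$ of each box (times $\log$ when $k=2$, where the kernel of $P_2^{-1}$ decays only like $|y|^{-2}$ after two moments vanish, giving a logarithmic tail sum). Balancing $\eps/L$ against the $L$-derivative losses, of order $\eps^{r_k}/L^{2}$ after two derivatives fall on $\chi_B$, suggests $L\sim\eps^{1/3}$. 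This gives $\delta_k=\eps^{1/3}$ for $k\ge3$ and the extra $\log(1/\eps)$ for $k=2$. For $k=1$, $r_1=1/2$ gives $\eps^{1/2}/L$ versus $L$, hence $\eps^{1/4}$.

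Finally, an inversion of $H^\eps_\omega-z$ converts the residual bound into the stated resolvent bound with the factor $(1+|\Re z|)^2(1+|\Im z|^{-2})$, exactly as in section~\ref{sec:hom}. The threshold $\eps_*$ is chosen so that $L\le1$ and the partition is admissible.
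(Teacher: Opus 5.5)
Your proposal has a genuine gap, at the step you yourself flag as the main obstacle: the order-zero term $\rho^2(\Xi-M_\omega)^\eps v$, with $\Xi=W\chi$.

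\textbf{The uniform bound on $F$ is false.} The claim that $\Xi-M_B$ is $D_y$ of a field bounded uniformly in $\omega$ fails. Fill the two halves of a box with two admissible coupling patterns having different effective matrices, which is the generic case. Then the averages of $\Xi-M_B$ over the half-boxes are of order one. The half-box has area of order $\ell^2$ and perimeter of order $\ell$, where $\ell:=L/\eps$. The flux of $F$ through its boundary therefore forces $\|F\|_\infty\gtrsim\ell$.

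\textbf{A negative-norm bound does not give a resolvent bound.} The only stability input uniform in $\eps$ and $\omega$ is $\|(H^\eps_\omega-z)^{-1}\|\le|\Im z|^{-1}$, so the residual must be small in $L^2$. An $H^{-1}$ residual would need a uniform $L^2\to H^1$ bound on $(H^\eps_\omega-\bar z)^{-1}$. No such bound is available: the large potential makes the naive bound lose $\eps^{-1/2}$, and $P_k$ may be indefinite.

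\textbf{The needed corrector is too large at your scale.} What you need is a quadratic corrector $\eps^k\chi_2^\eps\rho^2v$ with $P_k(D_y)\chi_2\approx-(\Xi-M_\omega)$. For the configurations above, $\Xi-M_\omega$ has order-one content at the box scale $\ell$, so $\chi_2$ has size $\ell^k$. Its potential term $\eps^{k/2}\rho^3(W\chi_2)^\eps v$ then has size $\eps^{k/2}\ell^k=(L\eps^{-1/2})^k$, which is small only if $L\ll\eps^{1/2}$. With your $L\sim\eps^{1/3}$ (resp.\ $\eps^{1/4}$ for $k=1$) it grows like $\eps^{-k/6}$ (resp.\ $\eps^{-1/4}$).

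The constraint you balance against does not arise once the first corrector is global. $M^\eps_\omega$ enters only as a bounded potential in $P+\rho^2M^\eps_\omega$, whose elliptic estimate is uniform however fast $M^\eps_\omega$ varies. Your rate heuristics are also off in two places. Balancing $\eps/L$ against $\eps^{r_k}/L^2$ does not produce $L\sim\eps^{1/3}$. And for $k=2$ the single-site corrector decays like $\avg y^{-4}$ (Lemma~\ref{lem:latsum}), so the logarithm does not come from its tail.

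\textbf{How the paper handles this.} It takes the mesoscale $\varpi^{-1}$ in the fast variable, with $\varpi=\eps^{1/4}$ for $k=1$ and $\varpi=\eps^{1/3}$ for $k\ge2$. In $x$ this is $\eps^{3/4}$ or $\eps^{2/3}$, well below $\eps^{1/2}$. It replaces box constants by a smooth average against an even kernel, $A=K_\varpi*\Xi$, and sets $M_\omega=\frac12(A+A^*)$.

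Then $\Xi-A=(1-\hat{\tilde K}(D_y/\varpi))\Xi$ vanishes to second order at frequencies below $\varpi$, for every configuration. The corrector $\chi_2=-P_k(D_y)^{-1}(1-\chi_0(D_y/\varpi_0))(\Xi-A)$, with $\varpi_0=\eps^{1/2}$, satisfies the deterministic bounds $\|D^\nu\chi_2\|_\infty\le CB_\nu$ of Lemma~\ref{lem:mult}. The uncorrected remainder $(A-M_\omega)+\chi_0(D_y/\varpi_0)(\Xi-A)$ is $O(\varpi+(\varpi_0/\varpi)^2)$.

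The rate $\delta_k$ is the balance between this windowing error and the corrector sizes $\eps^{k/2}B_0$ and $\eps^{k-|\nu|}B_\nu$. The logarithm at $k=2$ is the factor $1+\log(\varpi/\varpi_0)$ in $B_0$. The bound $\dist(M_\omega,\cK)\le C\varpi$ comes from the local structure-factor representation of Lemma~\ref{lem:meso}.

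Your periodization, giving $M_B\in\cK$ exactly, is a clean route to the distance bound. It does not, however, supply the corrector estimate on which the resolvent bound rests.
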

\noindent The constants depend only on $(P,w,\kappa_+,\rho)$, and the bound is deterministic and uniform over all configurations. It also extends to $\Z^d$-alloys on $\R^d$. See Remark~\ref{S-rem:dimalloy}. The field $M^\eps_\omega$ is constructed in section \ref{sec:random}. It is a local effective matrix: $M^\eps_\omega(x)$ is the Born correction computed with the local structure factor of the couplings within a mesoscopic distance of $x$ (Lemma~\ref{lem:meso}). For i.i.d.\ couplings, $M^\eps_\omega(x)$ concentrates around the deterministic Born matrix $M_{\rm Born}$ of \eqref{eq:Mborn}, with deviations of probability $\exp(-ct^2\varpi^{-2})$ at mesoscale $\varpi$ (Lemma~\ref{lem:mean}), except in rare regions. Theorem~\ref{thm:unifhom} is stated configuration by configuration so as to apply also near these regions. See blocks (v)--(vii).

By Theorem~\ref{thm:unifhom}, for every configuration, $H^\eps_{\omega,B}$ is close to $P+M^\eps_\omega$, a field with values near $\cK$ that varies on the mesoscale. The two conditions below imply that $P+M(\cdot)$ is gapped for every field $M(\cdot)$ with values in $\cK$, so that no rare region fills the gap. The lower bound $\kappa_->0$ allows $\cK$ to avoid $0$ and to lie in a phase different from that of $P$. If $\kappa_-=0$, then $0\in\cK$ and the index in (ii) below vanishes. Laws supported in $\{\kappa_-\le|t|\le\kappa_+\}$ are then called \emph{gap-preserving}. Let $g>0$. The two conditions are:
\begin{enumerate}[label=\textup{(\Alph*)},leftmargin=*,nosep]
\item \emph{$J$-graded class.} $P$ is in the $J$-graded class (a) of block (iii), here for any order $k$, with $s=s_+$, and there is $s_-\in\{\pm1\}$ with $s_-J(\Gamma+M^{\rm e})\ge g$ for every $M\in\cK$, where $M^{\rm e}=\frac12(M+JMJ)$ is the part of $M$ that commutes with $J$.
\item \emph{Small effective set.} There are $\bar M\in\cK$ and $\bar r\ge0$ with $|M-\bar M|\le\bar r$ for every $M\in\cK$, $\dist(0,\spec P)\ge g$, and $\dist(0,\spec(P+\bar M))\ge g+\bar r$.
\end{enumerate}
\begin{theorem}[Gap-preserving fluctuations of any order]\label{thm:phasepurek}
Assume \textup{[P1]--[P2]}, \eqref{eq:moments}, and (A) or (B). There is $\eps_0>0$ such that, for every $\eps\in(0,\eps_0]$ and \emph{every} configuration, (i) $\spec(H^\eps_{\omega,B})\cap(-g/2,g/2)=\emptyset$, and (ii) if $\rho=\rho(x_2)$ is as in block (iii), then $T(H^\eps_\omega)$ is Fredholm with $\Ind T(H^\eps_\omega)=\BDI(P+\bar M,P)$, where $\bar M$ is any element of $\cK$ in case (A) and the matrix of (B) in case (B).
\end{theorem}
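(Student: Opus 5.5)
The plan is to reduce everything to the smooth comparison operator $P+\rho^2M^\eps_\omega$ supplied by Theorem~\ref{thm:unifhom}, prove a uniform gap for that operator by a deterministic argument valid for \emph{every} field with values near $\cK$, and then transfer the gap and the index back to $H^\eps_\omega$ through norm-resolvent closeness. Because the estimate in Theorem~\ref{thm:unifhom} is uniform over configurations, $\eps_0$ will not depend on $\omega$.

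\emph{Step 1: the gap for the comparison operator.} Set $K^\eps_\omega:=P+\rho^2M^\eps_\omega$, where $\dist(M^\eps_\omega(x),\cK)\le C\delta_k(\eps)$ for all $x$.

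In case (B), write $\rho^2M^\eps_\omega=\bar M+(\rho^2M^\eps_\omega-\bar M)$ in the bulk, where $\rho\equiv1$. The remainder has sup-norm at most $\bar r+C\delta_k(\eps)$. Since $\dist(0,\spec(P+\bar M))\ge g+\bar r$, a Neumann-series (Weyl) argument gives $\spec K^\eps_{\omega,B}\cap(-g+C\delta_k,g-C\delta_k)=\emptyset$. This uses only the convexity-free bound $|M-\bar M|\le\bar r$ on $\cK$, extended by $C\delta_k$.

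In case (A), use the quadratic form $\Re\ip{K\psi}{s_-J\psi}$. The $J$-odd part of the symbol of $P$ anticommutes with $J$, so its contribution vanishes: $\Re\ip{P^{\rm o}\psi}{J\psi}=0$ because $JP^{\rm o}(D)$ is anti-self-adjoint. The $J$-odd part of the multiplication operator $M^\eps_\omega(x)$ also drops out pointwise. What remains is $\ip{s_-J(\Gamma+(M^\eps_\omega)^{\rm e})\psi}{\psi}\ge(g-C\delta_k)\|\psi\|^2$, since $M\mapsto M^{\rm e}$ is linear and the condition holds on $\cK$, hence on its $C\delta_k$-neighborhood up to $C\delta_k$. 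This yields $\|K\psi\|\ge(g-C\delta_k)\|\psi\|$, i.e.\ the gap.

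\emph{Step 2: transfer to $H^\eps_{\omega,B}$.} Apply Theorem~\ref{thm:unifhom} with $\rho\equiv1$ at $z=\lambda+i\eta$, with $\lambda\in(-g/2,g/2)$ and $\eta$ small but fixed, say $\eta=g/8$. The resolvent of $K$ is bounded there by $(g/2-C\delta_k)^{-1}$-type constants, and the difference of resolvents is $O(\delta_k(\eps))$. A standard argument then excludes spectrum of $H^\eps_{\omega,B}$ in $(-g/2,g/2)$: compare $f(H)$ for a bump $f$ supported in the gap, via Helffer--Sj\"ostrand or via $\|(H-z)^{-1}\|\le\|(K-z)^{-1}\|+C\delta_k$. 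This gives (i) for $\eps\le\eps_0$.

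\emph{Step 3: the index.} With $\rho=\rho(x_2)$, both bulks of $H^\eps_\omega$ are gapped in $(-g/2,g/2)$: the upper bulk is $P$, and the lower one is treated in Step 2. Hence $T(H^\eps_\omega)$ is Fredholm by Lemma~\ref{lem:fred}, with the switch scaled to $g/2$. Next, $U(H^\eps_\omega)-U(K^\eps_\omega)$ is small in norm by the Helffer--Sj\"ostrand formula and Theorem~\ref{thm:unifhom}. Therefore $\Ind T(H^\eps_\omega)=\Ind T(K^\eps_\omega)$.

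It remains to compute $\Ind T(K^\eps_\omega)$. I will deform $M^\eps_\omega(x)$ to the constant $\bar M$ through $(1-t)M^\eps_\omega+t\bar M$. In case (A), every element of the convex set $\cK$ and its $C\delta_k$-neighborhood satisfies the $J$-condition, so the Step 1 argument keeps the gap along the whole homotopy. In case (B), the same holds because $|(1-t)M+t\bar M-\bar M|\le\bar r+C\delta_k$. Each step of the homotopy is a bounded, norm-continuous perturbation, so the Toeplitz index is constant along it. At $t=1$, Proposition~\ref{thm:QB} applies with $V=\rho^2\bar M$ and gives $\BDI(P+\bar M,P)$. In case (A), any choice of $\bar M\in\cK$ gives the same answer, since $\cK$ is convex and gapped.

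\emph{Main obstacle.} I expect the hardest point to be that the index invariance under the homotopy requires a gap for interface-type operators whose potential is non-constant in the bulk and varies on the mesoscale. Proposition~\ref{thm:QB} only treats constant bulks, so I need a Fredholm and homotopy statement for gapped but spatially varying bulks, presumably Lemma~\ref{lem:fred}. The uniform, configuration-independent gap from Step 1 is what makes this work. A secondary subtlety is the interpolating region $|x_2|<1$ in case (B), where $\rho^2\bar M$ is not close to $\bar M$. This is harmless, because the gap is only needed for the bulks; the transition region affects only the Fredholm property through compactness of the localized commutators.
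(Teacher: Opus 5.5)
Your proposal is correct and follows essentially the same route as the paper. The steps match one for one: a uniform gap for every field with values near $\cK$ (graded Lemma~\ref{lem:grade} in case (A), norm perturbation in case (B)); transfer of the bulk gap through the configuration-uniform resolvent estimate of Theorem~\ref{thm:unifhom} and the spectral comparison of Lemma~\ref{lem:speccomp}; and the index by a convex deformation of $M^\eps_\omega$ to $\bar M$ under Lemma~\ref{lem:fred}, with Proposition~\ref{thm:QB} at the endpoint and a Helffer--Sj\"ostrand transfer to $H^\eps_\omega$. The only cosmetic difference is that you also prove Fredholmness of $T(H^\eps_\omega)$ directly from its gapped bulks. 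The paper gets it, together with the index, from $\|U(H^\eps_\omega)-U(P+\rho^2M^\eps_\omega)\|<1$ and Lemma~\ref{lem:fred}(i).
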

\noindent The theorem is proved in section \ref{sec:random}. In (A), every possible effective mass $\Gamma+M^{\rm e}$ has a definite $J$-sign, while the $J$-anticommuting part of $M$ is unrestricted: like the kinetic part and the vector potentials of Theorem~\ref{thm:magno}, it cannot close the gap (Lemma~\ref{lem:grade}). When $s_-$, the sign after perturbation, differs from $s_+$, the sign before perturbation, the disorder inverts the mass for every configuration. This is the case of the first-order alloy of Corollary~\ref{cor:bernoulli} and the second-order alloy of Proposition~\ref{S-prop:highex}(a). In (B), no structure of $P$ is needed, but the local effective matrices must vary less than the gap of $P+\bar M$. This is the case of the regularized Dirac operator with scalar disorder of Proposition~\ref{S-prop:highex}(b). Both conditions reappear in block (vi): (B) as the Born-gap condition [BG], with $\bar M=M_{\rm Born}$, the same margin $\bar r$, and the bound on $|M-\bar M|$ imposed outside rare regions only, and (A) as the graded criterion that implies [BG]. Common disorder models, such as uniform on-site disorder with a distribution containing $0$, are not gap-preserving. They are the subject of blocks (v)--(vii).

\medskip
\noindent\textbf{(v) Pseudo-gaps and deformed cuts.} For random fluctuations, the gap of the homogenized operator need not be a spectral gap of the bulk operator. Theorem \ref{thm:pseudogap} below shows that a genuine gap, as in block (iv), requires that the law charge no gapless pattern. Let $W(y,\omega)$ be a $\Z^d$-stationary, ergodic, Hermitian random field with a finite range of dependence, and realizations bounded in $C^{0,\alpha_0}$. The bulk operator $H^\eps_B=P(D)+\eps^{-k/2}W(x/\eps,\omega)$ has an almost-sure spectrum $\Sigma_\eps$ \cite{Pastur80}. The law of $W$ \emph{charges} a $1$-periodic pattern $W_*$ if
\[
\mathbb P\Big(\sup_{|y|\le L}|W(y)-W_*(y)|<\delta\Big)>0\qquad\text{for all }L,\delta>0 ,
\]
the support condition that identifies almost-sure spectra \cite{KirschMartinelli82,Kirsch08}. Recall $r_1=\frac12$ and $r_{k\geq2}=1$.
\begin{theorem}[Rare-region spectrum in the homogenized gap]\label{thm:pseudogap}
Assume \textup{[P1]--[P2]}.
\begin{enumerate}[label=\textup{(\roman*)},leftmargin=*,nosep]
\item If the law of $W$ charges a mean-zero pattern $W_*\in C^{0,\alpha_0}(\T^d)$ with $0\in\spec(P+M(W_*))$, there is $C$ such that, for every $\eps\in(0,1]$, a.s., $\dist(0,\spec H^\eps_B)\le C\eps^{r_k}$.
\item In $d=2$, such a pattern exists if the law charges every element of a path $t\mapsto W_t\in C^{0,\alpha_0}(\T^2)$, $t\in[0,1]$, of mean-zero patterns, continuous in $L^\infty$, with $P+M(W_0)$ and $P+M(W_1)$ gapped at $0$ and $\BDI(P+M(W_1),P+M(W_0))\ne0$.
\item Let $P_k$ be indefinite. Let $W=\sum_j\omega_jw(\cdot-j)$ with $w\in C_c^\infty(\R^d;\R)$, $\int w\ne0$, and i.i.d.\ couplings whose law $\mu_0$ satisfies $\supp\mu_0\supset[0,\kappa_0]$ or $\supp\mu_0\supset[-\kappa_0,0]$ for some $\kappa_0>0$. Let $a_\eps\to\infty$ as $\eps\to0$, and $\Sigma_\eps$ be the almost-sure spectrum of $P+a_\eps W(x/\eps)$. For every $R>0$ and $\delta'>0$, there is $\eps_0>0$ such that $\dist(E,\Sigma_\eps)\le\delta'$ for all $|E|\le R$ and $\eps\in(0,\eps_0]$.
\end{enumerate}
\end{theorem}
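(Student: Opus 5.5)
The plan is to combine three ingredients: the Weyl sequence criterion for almost-sure spectra, the norm-resolvent convergence of Theorem~\ref{thm:hom}, and the support condition ``charges''.

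\emph{Part (i).} Since $0\in\spec(P+M(W_*))$, where $P+M(W_*)$ is constant-coefficient with $\rho\equiv1$, there are normalized $\psi_n$ with $\|(P+M(W_*))\psi_n\|\to0$. These can be taken as compactly supported wave packets in a ball $B_L$: take $\psi=\chi(x/L)e^{i\xi_0\cdot x}v$ with $P(\xi_0)+M$ having a zero eigenvector $v$ (the spectrum is the union of the symbol's eigenvalues). Theorem~\ref{thm:hom}, applied to the periodic operator $H^\eps_*=P+\eps^{-k/2}W_*(x/\eps)$ with $z=i\eta$, gives $\dist(0,\spec H^\eps_*)\le C\eps^{r_k}$. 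To see this, use $\|(H-i\eta)^{-1}\|=\dist(i\eta,\spec H)^{-1}$: the bound $\dist(0,\spec H^0)=0$ gives $\|(H^0-i\eta)^{-1}\|=\eta^{-1}$, and the resolvent estimate yields $\dist(i\eta,\spec H^\eps_*)^{-1}\ge\eta^{-1}-C\eps^{r_k}\eta^{-2}$. Choosing $\eta=2C\eps^{r_k}$ gives $\dist(0,\spec H^\eps_*)\le C'\eps^{r_k}$.

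Next, localize an approximate eigenfunction of $H^\eps_*$ with error $\le C'\eps^{r_k}+\delta$ and support in $B_{L}$, with $L$ large. This is done by multiplying a spectral-projector element by a slowly varying cutoff; the commutator $[P,\chi_L]$ is $O(L^{-1})$ relative to the graph norm. By the charging hypothesis, with positive probability $\sup_{B_{L/\eps}}|W-W_*|<\delta$. There, $\|(H^\eps_B-H^\eps_*)\psi\|\le\eps^{-k/2}\delta$, so we pick $\delta=\eps^{k/2+r_k}$. Ergodicity and the finite range of dependence give that a.s.\ such regions occur infinitely often, and translates of $\psi$ then form Weyl sequences; alternatively, invoke the Kirsch--Martinelli characterization that the a.s.\ spectrum contains $\spec H^\eps_*$. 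This yields $\dist(0,\Sigma_\eps)\le C\eps^{r_k}$.

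\emph{Part (ii).} The map $t\mapsto M(W_t)$ is continuous, since $M$ is a quadratic form in $W$ composed with the bounded operator $P_k(D_y)^{-1}$ on mean-zero functions, which is continuous in $L^\infty\subset L^2$. If $P+M(W_t)$ were gapped at $0$ for all $t$, the symbols would give a continuous gapped family, and $\BDI$ would be constant by homotopy invariance (Proposition~\ref{thm:QB}, deformation invariance). This contradicts $\BDI\ne0$. Hence some $t_*$ has $0\in\spec(P+M(W_{t_*}))$, and we take $W_*=W_{t_*}$, which is charged by hypothesis.

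\emph{Part (iii).} Here the plan differs, since the amplitude $a_\eps\to\infty$. Given $|E|\le R$, I would seek a periodic alloy configuration $\omega^*$ with couplings in $[0,\kappa_0]$ (charged, since these lie in the support) such that $E$ lies within $\delta'$ of $\spec(P+a_\eps W_{\omega^*}(x/\eps))$. Write $a_\eps=\lambda\eps^{-k/2}$ with $\lambda=a_\eps\eps^{k/2}$ and constant couplings $\omega_j\equiv s\in[0,\kappa_0]$. Homogenization gives approximately $P+s^2\lambda^2 M_w$ plus a mean shift $s\,a_\eps\int w$, a constant multiple of the identity. That shift sweeps continuously from $0$ to $\pm\infty$ as $s$ ranges over $[0,\kappa_0]$, because $\int w\ne0$ and $a_\eps\to\infty$. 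Hence every $E\in[-R,R]$ is hit up to small error, provided the part of the spectrum of $P$ near the shifted level is controlled. Indefiniteness of $P_k$ gives $\spec P=\R$, so after the shift any $E$ lies in the spectrum once the perturbation is approximately a constant shift. The main obstacle is uniform control here: the homogenization error must be small while $s$ is tuned. I would choose $s$ of order $a_\eps^{-1}$, so that $s\,a_\eps$ stays $O(R)$ and the oscillating part $s\,a_\eps(W_1-\int w)(x/\eps)$ has amplitude $O(1)$. At frequency $\eps^{-1}$ it averages out with correction $O(\eps^k)$, by Theorem~\ref{thm:hom} with the small-amplitude rescaling. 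Finally, pass from the charged periodic pattern to $\Sigma_\eps$ as in part (i).
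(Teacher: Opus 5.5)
Your parts (i) and (ii) follow the paper's proof essentially step for step. In (i), the resolvent comparison at $z=i\eta$ with $\eta\sim\eps^{r_k}$ is what the paper packages as Lemma~\ref{lem:speccomp}; a compactly supported quasimode of $P+\eps^{-k/2}W_*(x/\eps)$ is then translated by an element of $\eps\Z^d$ into a region where $W$ is $\eps^{k/2+r_k}$-close to $W_*$, and such a region exists a.s.\ by stationarity, finite range and Borel--Cantelli. In (ii), you use local constancy of $c(P+\Gamma)$ along the continuous path $t\mapsto M(W_t)$, as the paper does.

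Part (iii) also has the paper's architecture. You take constant couplings $\omega_*=t/(a_\eps|\int w|)$ and split $a_\eps\omega_*\sum_jw(x/\eps-j)=\sigma t+\lambda'W_{\rm per}(x/\eps)$, with $\sigma=\sgn\int w$ and $\lambda'=\sigma t/\int w$ bounded. Theorem~\ref{thm:hom} is applied uniformly in bounded $\lambda'$ to compare $P+\lambda'W_{\rm per}(x/\eps)$ with $P$, and the Kirsch--Martinelli inclusion puts the spectrum of a supported constant configuration in $\Sigma_\eps$.

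The step that closes (iii) is false, however: indefiniteness of $P_k$ does not give $\spec P=\R$. Lower-order terms open gaps. For instance, $D\cdot\sigma+m\sigma_3$ has an indefinite principal symbol but spectrum $\R\setminus(-|m|,|m|)$, and in this paper $P$ is typically gapped at $0$. If $\spec P$ were $\R$, no shift would be needed at all: $t=0$ is a supported configuration with $H_{\omega_*}=P$. What indefiniteness does give is that $\spec P$ is unbounded above and below, since along any direction the eigenvalues of $P(\xi)$ go to $+\infty$ and to $-\infty$.

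Because the admissible couplings lie in $[0,\kappa_0]$, the shift $\sigma t$ with $t\ge0$ has a fixed sign, so it must be tuned to $E$. For $\sigma=1$ fix $E_1\in\spec P$ with $E_1\le-R$; for $\sigma=-1$ fix $E_1\ge R$. Then set $t=\sigma(E-E_1)\in[0,T_0]$ with $T_0=R+|E_1|$, so that $E=\sigma t+E_1$. The uniform resolvent bound for $|\lambda'|\le T_0/|\int w|$, combined with Lemma~\ref{lem:speccomp}, gives $\dist(E_1,\spec(P+\lambda'W_{\rm per}(x/\eps)))\le\delta'$ for $\eps$ small, uniformly in $\lambda'$. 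Hence $\dist(E,\spec H_{\omega_*})\le\delta'$ uniformly in $|E|\le R$.

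This is the paper's Step 4. The paper picks $E_1$ near $E$ using the bounded gaps of $\spec P$; a single spectral point beyond $\mp R$ suffices. The one-sidedness of the shift is exactly where indefiniteness enters: for a definite $P_k$ the spectrum is a half-line, and a shift in the wrong direction cannot reach $E$.
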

\noindent Consequently, under the hypotheses of (i) or of (iii), no deterministic operator $H^0_B$ gapped at $0$ is a norm-resolvent limit of $H^{\eps_n}_B$ along a sequence $\eps_n\to0$, a.s. Indeed, the homogenized gap is filled by spectrum carried by rare regions. Part (ii) is a convenient way to verify the hypothesis of (i), and is used in Corollary~\ref{cor:pseudoB}: if the law can interpolate between two homogenized phases with different invariants, the gap must close somewhere along the interpolation, leading to a pattern $W_*=W_{t_*}$ charged by the law. In (iii), the amplitude may be $a_\eps=\eps^{-k/2}$ or the critical amplitude of block (vii), and rare regions fill every compact energy window. In large regions where the couplings are nearly constant and small, the operator is nearly unperturbed, shifted by an arbitrary bounded constant.
\begin{definition}[Pseudo-gap]\label{def:pseudo}
Let $H^0_B$ be a deterministic self-adjoint operator with $\spec H^0_B\cap(-g,g)=\emptyset$. The interval $(-g,g)$ is a {\em pseudo-gap of $H^\eps_B$ relative to $H^0_B$} if for every $g'\in(0,g)$, $\Sigma_\eps\cap(-g',g')\not=\emptyset$ for all $\eps$ sufficiently small.
\end{definition}
\noindent The definition is used for the homogenized operators $H^0_B$ of Theorems~\ref{thm:mainB} and~\ref{thm:mainA}, which $H^\eps_B$ approximates outside rare regions (Steps~2--3 of the proof of Theorem~\ref{thm:mainB} and Step~2 of that of Theorem~\ref{thm:mainA}). A pseudo-gap in which the spectrum is localized is a {\em mobility gap}. We do not address localization here. In a pseudo-gap, the straight cut $\{x_1\geq0\}$ is replaced by a partition of the plane that avoids the rare regions.
\begin{definition}[Cuts]\label{def:cut}
A {\em cut}, or {\em deformed cut}, is a partition $\Rm^2=\cL\cup\cR$ into measurable sets such that, for every $X>0$, $\cR\cap\{|x_2|\le X\}$ differs from $\{x_1\geq0,\ |x_2|\le X\}$ by a bounded set. We set $\tilde\Pi=\one_\cR$, and call $\tilde T(H)=\tilde\Pi\,U(H)\,\tilde\Pi+I-\tilde\Pi$ the Toeplitz operator along the cut. Given $\cU\subset\Rm^2$ and $\dd>0$, the cut {\em avoids} $\cU$ at distance $\dd$ if $\dist(\cU\cap\cL,\cR)\geq\dd$ and $\dist(\cU\cap\cR,\cL)\geq\dd$.
\end{definition}
\begin{figure}[ht]
\centering
\includegraphics[width=\textwidth]{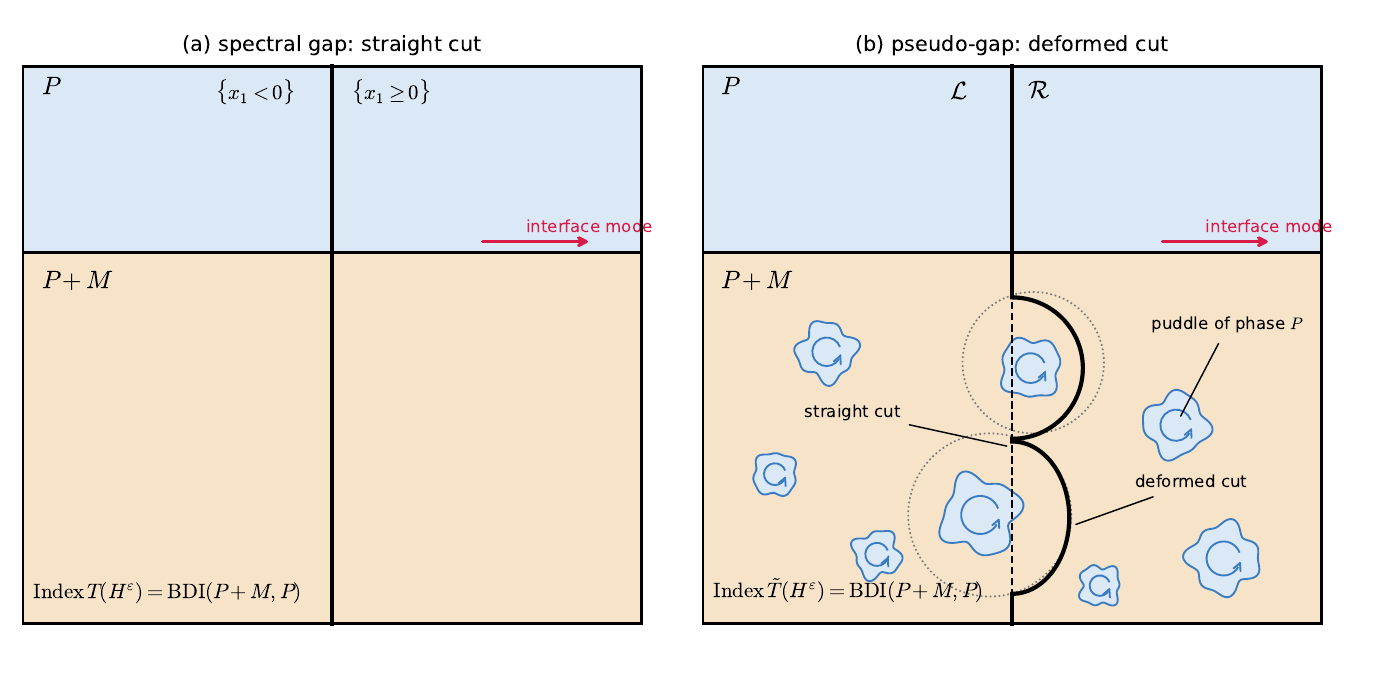}
\caption{(a) Spectral gap: the index is computed along the straight cut. (b) Pseudo-gap: rare regions, such as puddles of the phase $P$ whose boundaries carry states in the gap, may meet the straight cut infinitely often, while the deformed cut detours around them, with index $\BDI(P+M,P)$. See Corollary~\ref{cor:puddles}.}
\label{fig:cuts}
\end{figure}

\noindent The deformation is bounded in any horizontal strip but unbounded globally, so that $\tilde T(H)$ is not a compact perturbation of $T(H)$. Its index is nevertheless stable under bounded perturbations, however large, supported far from the cut, provided the unperturbed operator has gapped bulks and a resolvent with Combes--Thomas decay. This deterministic robustness result, Theorem \ref{thm:robust} of section \ref{sec:pseudo}, is the main basis of Theorems \ref{thm:mainB} and \ref{thm:mainA} below. It also applies to periodic fluctuations, and extends Corollary~\ref{cor:compact} to large defects with unbounded support kept away from a cut (Corollary~\ref{cor:farV}).

\medskip
\noindent\textbf{(vi) First-order alloys in the subcritical regime.} Let $P=\gamma(D)+\Gamma$ with $\gamma(\xi)=\xi_1\gamma_1+\xi_2\gamma_2$, let $\rho=\rho(x_2)$ be as in block (iii), and let $H^\eps=P+\eps^{-1/2}\rho\,W_\omega(x/\eps)$, where $w$ satisfies \eqref{eq:moments} with $k=1$ and the couplings $\omega_j$ are i.i.d., with a law $\mu_0$ supported in $[-\kappa_+,\kappa_+]$, whose support may contain $0$. The effective matrix of the randomly shifted alloy (section \ref{sec:random}) is the Born matrix, the mean-field correction of \cite{FOP82},
\begin{equation}\label{eq:Mborn}
M_{\rm Born}:=\mathrm{Var}(\omega)\int_{\T^2}\mathbb K(q)\,dq+(\EE\omega)^2\,\mathbb K(0).
\end{equation}
We assume the {\em Born gap} condition \textup{[BG]}: there are $g>0$ and $\bar r>0$ with $\dist(0,\spec P)\ge g$ and $\dist(0,\spec(P+M_{\rm Born}))\ge g+\bar r$.
The local effective matrices of Theorem~\ref{thm:unifhom} concentrate around $M_{\rm Born}$ (Lemma~\ref{lem:mean}). The rare regions are the mesoscopic boxes where they deviate from it by more than $\bar r/4$. Let $\cU_\omega$ be their union. The precise definition is given in section \ref{sec:pseudo}. With
\begin{equation}\label{eq:exponents}
r:=\eps^{-\nu_r},\qquad\dd:=\eps^{-\nu_\dd},\qquad 0<\nu_r<\nu_\dd<p_0/2,
\end{equation}
where $p_0$ is the exponent in the probability $O(\eps^{p_0})$ that a unit box meets $\cU_\omega$, here $p_0=1$, a cut is {\em admissible} if it avoids the $3r$-neighborhood $(\cU_\omega)_{3r}$ of the rare regions at distance $\dd/2$. The operator is modified only in $(\cU_\omega)_{3r}$ (section~\ref{sec:pseudo}).
\begin{theorem}[Subcritical first-order alloys]\label{thm:mainB}
Assume [BG]. There is $\eps_0>0$ such that for every $\eps\in(0,\eps_0]$, a.s., admissible cuts exist, and for every admissible cut, $\tilde T(H^\eps)$ is Fredholm with
\[
  \ind\tilde T(H^\eps)=\BDI(P+M_{\rm Born},P).
\]
\end{theorem}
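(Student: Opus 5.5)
We plan to reduce the statement to the deterministic robustness theorem (Theorem~\ref{thm:robust}) applied to a comparison operator that agrees with $H^\eps$ outside $(\cU_\omega)_{3r}$. The proof proceeds in four steps.

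\emph{Step 1 (existence of admissible cuts).} By Lemma~\ref{lem:mean}, a unit box meets $\cU_\omega$ with probability $O(\eps^{p_0})$, $p_0=1$. The events for boxes at distance larger than the mesoscale are independent, because the couplings are i.i.d.\ and $w$ has compact support. After coarse-graining into blocks of side $\sim\dd$, a block is ``bad'' if it meets $(\cU_\omega)_{3r}$. A bad block then has probability at most $C\dd^2 r^2\eps^{p_0}\le C\eps^{p_0-2\nu_\dd-2\nu_r}$. This tends to $0$ after adjusting the exponents; if needed, we sharpen \eqref{eq:exponents} to a finite-range percolation estimate. Subcritical percolation of bad blocks then holds for $\eps\le\eps_0$: a.s.\ every bad cluster is finite, and cluster sizes in any strip $\{|x_2|\le X\}$ are bounded. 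We start from the straight cut $\{x_1\ge0\}$ and redefine $\cR$ near each bad cluster meeting it, by assigning the whole $\dd/2$-thickened cluster to $\cL$ or to $\cR$. Each strip meets finitely many clusters, so the modification in each strip is bounded and the result is a cut in the sense of Definition~\ref{def:cut} that avoids $(\cU_\omega)_{3r}$ at distance $\dd/2$.

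\emph{Step 2 (comparison operator).} Outside $(\cU_\omega)_{3r}$, Theorem~\ref{thm:unifhom} gives local effective matrices with $|M^\eps_\omega(x)-M_{\rm Born}|\le\bar r/4+C\delta_1(\eps)$. We build a field $\tilde M(x)$ equal to $M^\eps_\omega$ away from $(\cU_\omega)_{2r}$ and to $M_{\rm Born}$ on $(\cU_\omega)_{r}$. The interpolation is done on scale $r$, so gradients are $O(1/r)$ and $|\tilde M-M_{\rm Born}|\le\bar r/2$ everywhere. Correspondingly, we define $\tilde H^\eps$ by replacing the alloy inside $(\cU_\omega)_{3r}$ by a periodic or "good" pattern whose homogenized matrix is near $M_{\rm Born}$, using smooth cutoffs. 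The proof of Theorem~\ref{thm:unifhom} is local at mesoscale, and we apply it to this modified alloy to get $\|(\tilde H^\eps-z)^{-1}-(P+\rho^2\tilde M-z)^{-1}\|\le C\delta_1(\eps)$.

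\emph{Step 3 (index of the comparison operator).} By [BG] and a Neumann series on the symbol, the lower bulk of $P+\rho^2\tilde M$ is gapped in $(-g-\bar r/2,g+\bar r/2)$. The reason is that $\tilde M$ is a bounded perturbation of $M_{\rm Born}$ of norm at most $\bar r/2$, and self-adjointness gives the spectral shift bound. The interpolation region in $x_2$ is handled by Proposition~\ref{thm:QB}. Norm-resolvent closeness then transfers the gap, the Fredholm property and the index of $\tilde T$ to $\tilde H^\eps$. For the operator $P+\rho^2\tilde M$ itself, we first deform $\tilde M$ to $M_{\rm Born}$ linearly. This is allowed because the gap is preserved along the path, giving a norm-continuous family. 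The cut is then deformed back to the straight cut, which is possible because $P+\rho^2M_{\rm Born}$ has gapped bulks and Combes--Thomas decay; this is the content of Theorem~\ref{thm:robust} applied with zero perturbation, or a direct argument. Proposition~\ref{thm:QB} then yields $\BDI(P+M_{\rm Born},P)$.

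\emph{Step 4 (from the comparison operator to $H^\eps$).} $H^\eps-\tilde H^\eps$ is bounded, since each of the two is order-one relative to $P$ on scale $\eps$. After conjugation one could also treat it as relatively bounded. It is supported in $(\cU_\omega)_{3r}$, hence at distance at least $\dd/2$ from the cut. Theorem~\ref{thm:robust} then gives that $\tilde T(H^\eps)$ is Fredholm with the same index as $\tilde T(\tilde H^\eps)$.

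The main obstacle is the quantitative bookkeeping in Step 4. Theorem~\ref{thm:robust} needs the perturbation to be supported farther from the cut than the Combes--Thomas length. That length scales like $1/g$ and is $O(1)$, but it must be uniform in $\eps$ despite the $\eps^{-1/2}$ amplitude of the potential. The perturbation $\eps^{-1/2}W_\omega$ is itself large, so the robustness estimate must be of a type that handles arbitrarily large bounded perturbations, with exponentially small commutator terms $e^{-c\dd}$ beating polynomial factors in $\eps^{-1}$. Since $\dd=\eps^{-\nu_\dd}$ grows polynomially, $e^{-c\dd}$ does dominate any power of $\eps$, and we expect this to close the argument.
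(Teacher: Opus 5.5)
Your outline matches the paper's: admissible cuts by percolation, a reference operator built by surgery near $(\cU_\omega)_{3r}$, its index obtained from homogenization, then Theorem~\ref{thm:robust}. However, Step~2 fails as written, and it is the crux of the proof. You prescribe a target field $\tilde M$ that interpolates between $M^\eps_\omega$ and $M_{\rm Born}$. You then build $\tilde H^\eps$ separately, by blending the alloy with another oscillating ``good'' pattern through smooth cutoffs. But Theorem~\ref{thm:unifhom} compares an alloy with $P+\rho^2M^\eps_\omega$ for \emph{its own} local effective matrix, and that matrix is quadratic in the fluctuation. With cutoffs $\theta$ and $1-\theta$, the alloy contributes $\theta^2M^\eps_\omega$ and the pattern $W_g$ contributes $(1-\theta)^2M(W_g)$. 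There are also cross correlations between $W_\omega$ and $W_g$, which nothing controls, whatever amplitudes you choose. Even if the cross terms vanished, at $\theta=\frac12$ the effective matrix would be about $\frac12M_{\rm Born}$, at distance $\frac12|M_{\rm Born}|$ from $M_{\rm Born}$, typically much larger than $\bar r/2$. In the Dirac example after the theorem, $m+\frac12M_{{\rm Born},3}>0>m+M_{{\rm Born},3}$ whenever $m+g\le|M_{{\rm Born},3}|<2m$, so the effective mass changes sign inside the layer. Such layers surround every bad region, and a.s.\ there are infinitely many in the lower half-plane. They act like the puddles of Corollary~\ref{cor:puddles}: the lower bulk of $\tilde H^\eps$ can carry in-gap loop states, so [R2] fails and Theorem~\ref{thm:robust} cannot be applied. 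Replacing couplings instead of blending potentials does not help. Near the boundary of the replaced set, the mesoscale windows see partial windows of the original configuration, which goodness of mesoboxes does not control. You also cannot enlarge $\cU_\omega$, since admissibility is defined with the paper's $\cU_\omega$.

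The missing idea is the surgery \eqref{eq:Hg}. Keep the configuration, put the envelope $\theta\rho$ on the fluctuation, and add the non-oscillating background $(1-\theta^2)\rho^2M_{\rm Born}$. Theorem~\ref{thm:unifhom}, in the form of Remark~\ref{rem:Vs} with background $V_{\rm s}=(1-\theta^2)\rho^2M_{\rm Born}$ whose derivatives are bounded independently of $\cU_\omega$, $r$ and $\eps$, then compares $H_\circ$ with $P+\rho^2M_\circ$, where $M_\circ=\theta^2M^\eps_\omega+(1-\theta^2)M_{\rm Born}$ is a convex combination. Since $\theta=0$ on $(\cU_\omega)_{2r}$, only good mesoboxes enter, and Lemma~\ref{lem:mesobad}(c) gives $|M_\circ-M_{\rm Born}|\le\bar r/2$ everywhere. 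Your Steps~3--4 then go through, with two corrections. In Step~1, finite-range dependence does not come from compact support of $w$: $\chi_w$ decays only polynomially and $K_\varpi$ is not compactly supported. It comes from the truncated matrices $M^{\rm tr}_{Q'}$ of Lemma~\ref{lem:mesobad}, with range $2r$. The number of unit boxes near a coarse box is $C\dd^2$, so no change of \eqref{eq:exponents} is needed. Finiteness of clusters is also not enough: a Peierls tail is needed to show that only finitely many clusters meet both the strip around the cut and a given horizontal window. All of this is Lemma~\ref{lem:cuts}. In Step~4, the robustness error is polynomial, $C_nK_V^3(a\dd/2)^{-n}$, not $e^{-c\dd}$, since the Combes--Thomas rate $a|\Im z|$ degenerates on the real axis. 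The argument closes because $a=a_\gamma$ is independent of $\eps$ and of the potential (Lemma~\ref{lem:CT}(i)), and choosing $n\nu_\dd>3/2$ beats $K_V^3\sim\eps^{-3/2}$.
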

\noindent The index is independent of the admissible cut, of the configuration and of $\bar r$, and the theorem is proved in section \ref{sec:pseudo}. Condition [BG] concerns only the typical local effective matrix: periodic patterns charged by the law whose effective matrix is gapless, or lies in another phase, are allowed, and by Theorem~\ref{thm:pseudogap} they occur in rare regions, which admissible cuts avoid. A simple sufficient condition for [BG] is (A) with $\cK$ replaced by $\{M_{\rm Born}\}$: by Lemma~\ref{lem:grade}, with gradings $s_+J$ and $s_-J$, $P$ and $P+M_{\rm Born}$ are then gapped on $(-g,g)$, so that [BG] holds with $g/2$ in place of $g$ and $\bar r=g/2$. For instance, the single site of Corollary~\ref{cor:bernoulli} with couplings uniform on $[-1,1]$ has $M_{\rm Born}=\frac23\tau\|v\|_{L^2}^2\sigma_3$. If $\tau<0$, $0<g\le m$ and $2|\tau|\|v\|_{L^2}^2\ge3(m+g)$, then [BG] holds, $(-g/2,g/2)$ is a pseudo-gap, and the index along admissible cuts is $-1$. The formula for $M_{\rm Born}$ follows from the computation of $\mathbb K$ in the proof of Corollary~\ref{cor:bernoulli}, whose in-plane part is odd in $q$ (see also Proposition~\ref{S-prop:uniform} of the supplementary material).

\medskip
\noindent\textbf{(vii) Scalar alloys in the critical regime.} We now consider operators of order two in dimension two, with scalar disorder, satisfying the following:
\begin{enumerate}[label=\textup{[C\arabic*]},leftmargin=*,nosep]
\item \emph{operator}: $P=A\,q(D)+P_1(D)+P_0$ with $A\in\Herm(N)$ invertible, possibly indefinite, $q(\xi)=\xi\cdot B\xi$ with $B$ real symmetric positive definite, $P_1(\xi)=\sum_jc_j\xi_j$ with $c_j\in\Herm(N)$, and $P_0\in\Herm(N)$;
\item \emph{fluctuation}: $W=\sum_{j}\omega_jw(\cdot-j)$ with $w\in C^\infty_c(\Rm^2;\Rm)$, and i.i.d.\ centered couplings with a law $\mu_0$ supported in $[-\kappa_+,\kappa_+]$ and $\EE\omega^2>0$;
\item \emph{criticality}: $s_0:=\EE\omega^2(\int w)^2>0$, and the amplitude is renormalized as
\begin{equation}\label{eq:Hcrit}
  H^\eps=P+\rho(x)\tilde V(x),\qquad\tilde V(x)=\lambda_\eps\eps^{-1}W(x/\eps),\qquad\lambda_\eps^2=\frac\ell{\log(1/\eps)},
\end{equation}
for an envelope $\rho\in C^\infty_b(\Rm^2;\Rm)$ and a disorder strength $\ell>0$.
\end{enumerate}
The BHZ operator $\xi\cdot\sigma+(m+b|\xi|^2)\sigma_3+c|\xi|^2$ with $|b|\not=|c|$ satisfies [C1] with $q=|\xi|^2$ and $A=b\sigma_3+c$. Without renormalization, the Born correction diverges logarithmically (section \ref{sec:random}): every dyadic shell of frequencies between $1$ and $1/\eps$ contributes equally to it. With $c_q:=(2\pi)^{-2}\int_{\Sm^1}q(e)^{-1}de$, the renormalized correction is $M_{\rm crit}:=-\,\ell\,s_0\,c_q\,A^{-1}$. For the BHZ model, this is $M_{\rm crit}=\frac{\ell s_0}{2\pi}\,\frac{c\,\sigma_0-b\,\sigma_3}{b^2-c^2}$.
These are the coefficients of the Born approximation of \cite{GWATB09}, with their logarithm absorbed in $\lambda_\eps$.

Homogenization holds for configurations that are small in the following sense. Write $\tilde V=\tilde V_{\rm low}+\tilde V_{\rm high}$ as the sum of its low- and high-frequency parts, and let $Y_0=-q(D)^{-1}\tilde V_{\rm high}$. The {\em local size} $\delta_Q$ of a configuration on a unit box $Q$, defined in \eqref{eq:deltaQ}, measures $Y_0$ and $\tilde V_{\rm low}$ in $L^\infty$ near $Q$, and $\nabla Y_0$ and the fluctuation $\cW$ of the corrector energy $|\nabla Y_0|_B^2=\nabla Y_0\cdot B\nabla Y_0$ about its mean in a local H\"older--Zygmund norm of negative order. When $\delta_Q$ is small on every unit box, $H^\eps$ is close to $P+\rho^2M_{\rm crit}$ in norm-resolvent sense (Theorem~\ref{thm:det}), and, for every fixed threshold $t\in(0,1]$, $\delta_Q<t$ with probability $1-O(\eps^{ct^2/\ell})$ (Lemma~\ref{lem:prob}). On $\R^2$, however, $\sup_Q\delta_Q\ge1$ a.s.\ for $\eps$ small (Lemma~\ref{S-lem:globalsize} of the supplementary material), so that rare regions must be removed. Let $\cU_\omega$ be the union of the {\em bad} unit boxes, on which $\delta_Q$ is not small or $\|\nabla Y_0\|_{L^\infty(Q^*)}$ exceeds a multiple of $\sqrt{\log(1/\eps)}$, defined precisely in Step 1 of the proof of Theorem~\ref{thm:mainA}. Admissible cuts are defined as in block (vi), with $r$ and $\dd$ as in \eqref{eq:exponents} and $p_0>0$ given by Lemma \ref{lem:probboxA}. The thresholds defining bad boxes and the exponents $\nu_r,\nu_\dd$ depend only on $(A,B,c_j,P_0,\rho,g,\varphi)$ and on the law of the couplings.
\begin{theorem}[Critical alloys]\label{thm:mainA}
Assume [C1]-[C3], let $H^\eps$ be the operator \eqref{eq:Hcrit} with renormalized amplitude $\lambda_\eps$ and $\rho=\rho(x_2)$ as in block \textup{(iii)}, and assume that $P$ and $P+M_{\rm crit}$ are gapped on $(-g,g)$. Then there is $\eps_0>0$ such that for every $\eps\in(0,\eps_0]$, a.s., admissible cuts exist, and for every admissible cut, $\tilde T(H^\eps)$ is Fredholm with $\ind\tilde T(H^\eps)=\BDI(P+M_{\rm crit},P)$.
\end{theorem}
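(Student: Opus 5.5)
The plan follows the proof of Theorem~\ref{thm:mainB} and replaces the subcritical homogenization input with the critical one. The deterministic backbone is the robustness result, Theorem~\ref{thm:robust}. Suppose a reference operator $H^\sharp$ has bulks gapped on $(-g,g)$ and a resolvent with Combes--Thomas decay. Then the index of $\tilde T$ along any cut equals $\ind T(H^\sharp)$, and it is unchanged by bounded perturbations, however large, supported at distance $\dd/2$ from the cut. The proof is then organized in four steps.

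\emph{Step 1, bad boxes and percolation.} Fix a threshold $t$ small enough for Theorem~\ref{thm:det} to give a norm-resolvent error below a fixed fraction of $g$. Call a unit box $Q$ bad if $\delta_Q\ge t$, or if $\|\nabla Y_0\|_{L^\infty(Q^*)}>K\sqrt{\log(1/\eps)}$. By Lemma~\ref{lem:prob} and a Gaussian-type tail for $\nabla Y_0$, the second being a sum of independent bounded contributions over dyadic shells, a box is bad with probability $O(\eps^{p_0})$, as in Lemma~\ref{lem:probboxA}. Badness depends only on couplings within a finite, possibly $\log$-growing, distance of $Q$, so events for well-separated boxes are independent. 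The choice $\nu_r<\nu_\dd<p_0/2$ then lets the percolation argument of the proof of Theorem~\ref{thm:mainB} go through. In each horizontal strip of height $\eps^{-\nu}$, the $(3r)$-neighborhoods of bad boxes form a.s.\ finite clusters of diameter $o(\dd)$. Detouring a straight cut around these clusters gives an admissible cut, and by Borel--Cantelli over strips this holds a.s.

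\emph{Step 2, modified operator.} Inside $(\cU_\omega)_{3r}$, replace $\tilde V$ by a smooth cutoff version $\chi\tilde V$ with $\chi=0$ on $(\cU_\omega)_{2r}$. This defines $H^\eps_\sharp$, all of whose boxes are good; one must check that the local sizes $\delta_Q$ remain small after the cutoff at scale $r\gg1$. Theorem~\ref{thm:det} then gives
\[
\|(H^\eps_\sharp-z)^{-1}-(P+\rho^2\chi^2M_{\rm crit}-z)^{-1}\|\le \eta(\eps)\to0 .
\]
The limit is a smooth interface operator. Away from $\cU$ its lower bulk is $P+M_{\rm crit}$, and near $\cU$ it interpolates through the values $\theta M_{\rm crit}$, $\theta\in[0,1]$. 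I expect the main obstacle to be exactly here: the puddles where $\chi<1$ need not be gapped, since $P+\theta M_{\rm crit}$ may close its gap. I would handle this with Theorem~\ref{thm:robust}, treating the difference between $H^\eps_\sharp$ and $H^\eps_{\rm good}$ as a perturbation supported in $(\cU_\omega)_{3r}$. Here $H^\eps_{\rm good}$ is a reference operator in which $\chi^2M_{\rm crit}$ is replaced by $M_{\rm crit}$ in the homogenized picture. That perturbation stays away from the cut by $\dd/2$.

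\emph{Step 3, index transfer.} Using the argument of Theorem~\ref{thm:TAI} with Proposition~\ref{thm:QB}, the reference operator $P+\rho^2M_{\rm crit}$ has $\ind T=\BDI(P+M_{\rm crit},P)$. Combes--Thomas decay holds uniformly, because the bulks are gapped by $g$ and the norm-resolvent closeness, combined with the commutator bounds $[H,e^{\phi}]$, carries decay to $H^\eps$. This commutator step uses the $L^\infty$ bound $\sqrt{\log(1/\eps)}$ on $\nabla Y_0$ only through an $\eps^{1/2}$-small factor. By Theorem~\ref{thm:robust}, $\ind\tilde T$ is independent of the admissible cut and equals this value for the good operator. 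Adding back the original $\tilde V$ on $(\cU_\omega)_{3r}$ is a bounded perturbation supported at distance $\ge\dd/2$ from the cut, although of size $\lambda_\eps\eps^{-1}$. Theorem~\ref{thm:robust} allows it, provided the decay length stays fixed while $\dd=\eps^{-\nu_\dd}\to\infty$ beats the polynomial size $\eps^{-1}$. This gives Fredholmness of $\tilde T(H^\eps)$ and the claimed index.
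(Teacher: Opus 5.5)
Your architecture is the paper's: bad unit boxes defined by the local size and by $\|\nabla Y_0\|_{L^\infty(Q^*)}$, cuts from Lemma~\ref{lem:cuts}, a surgery reference operator controlled by Theorem~\ref{thm:det}, and Theorem~\ref{thm:robust} to absorb a defect of size $K_V=C\lambda_\eps\eps^{-1}$.

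\textbf{The gap: condition [R1] for the reference operator.} You claim its Combes--Thomas decay length stays fixed, by norm-resolvent closeness to $P+\rho^2M_{\rm crit}$ plus commutator bounds, with $\nabla Y_0$ entering only through an $\eps^{1/2}$-small factor. This argument fails for two reasons.
\begin{itemize}
\item Norm-resolvent closeness gives no off-diagonal decay. The resolvent difference is $O(\delta_*+1/\log(1/\eps))$ uniformly in the distance between $\Omega$ and $\Omega'$.
\item For a second-order operator, the conjugation $e^{b\phi}He^{-b\phi}=H+bB_b$ produces a first-order $B_b$. By Lemma~\ref{lem:CT}(ii) the rate is therefore $a=c/\Theta$, where $\Theta$ bounds the $L^2\to H^1$ norm of the resolvent of the reference operator itself. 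That operator carries a potential of size $\lambda_\eps\eps^{-1}$. Interpolation only gives $\Theta\lesssim(\lambda_\eps\eps^{-1})^{1/2}$. Then $a\dd\to0$, since $\nu_\dd<p_0/2\le\frac12$, and the robustness error $C_nK_V^3(a\dd/2)^{-n}$ blows up.
\end{itemize}
The missing ingredient is the second conclusion of Theorem~\ref{thm:det}. There $(H_\theta-z_0)^{-1}=e^X\tilde Re^{-X}$, with $\tilde R:L^2\to H^{2-2\kappa}$ bounded and $\|\nabla X\|_\infty\le C(\delta_\theta+N_\theta)$. Hence $\|(H_\theta-z_0)^{-1}\|_{L^2\to H^1}\le C(1+N_\theta)$. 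Since $\theta$ vanishes near bad boxes, $N_\theta\le C_1\sqrt{\log(1/\eps)}$.

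There is no small prefactor. This is exactly why the gradient criterion enters the definition of bad boxes. The rate is then $a=c/\sqrt{\log(1/\eps)}$. It is not fixed, but it degenerates slowly enough that $\vartheta_1\le C_n\eps^{-3}\bigl(\eps^{\nu_\dd}\sqrt{\log(1/\eps)}\bigr)^n<\frac1{16}$ once $n\nu_\dd>3$.

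\textbf{Smaller points.}
\begin{itemize}
\item Your fix for the puddles should be made concrete as the paper's surgery \eqref{eq:Hg}. Take the reference operator to be $H_\circ=P+\rho\theta\tilde V+(1-\theta^2)\rho^2M_\eps$ from the start. The background commutes with $X=A^{-1}\theta\rho Y_0$ because $M_\eps=-c_\eps A^{-1}$. After the exponential transform it combines with the term $\theta^2\rho^2M_\eps$ coming from $-A^{-1}|\nabla Y|_B^2$ into $\rho^2M_\eps$. The homogenized limit is then $P+\rho^2M_{\rm crit}$ everywhere, with no puddles. A single application of Theorem~\ref{thm:robust} to $V=H^\eps-H_\circ$ then suffices.
\item Your intermediate $H^\eps_\sharp$, without this background, cannot itself serve as a reference operator: its lower bulk is not gapped, so [R2] fails.
\item $\delta_Q$ and $\|\nabla Y_0\|_{L^\infty(Q^*)}$ depend on all couplings. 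The finite-range independence required by Lemma~\ref{lem:cuts} therefore needs the truncated variables $\delta_Q^{\rm tr}$, $N_Q^{\rm tr}$ of Lemma~\ref{lem:probboxA}, with range $r=\eps^{-\nu_r}$.
\item The percolation argument does not give clusters of diameter $o(\dd)$, and it does not need to. It only needs the clusters to be finite, with finitely many meeting each horizontal window.
\end{itemize}
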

\noindent When $A$ is indefinite and $\supp\mu_0$ contains $[0,\kappa_0]$ or $[-\kappa_0,0]$, Theorem~\ref{thm:pseudogap}(iii), with $a_\eps=\lambda_\eps\eps^{-1}$ and since $\int w\ne0$ by [C3], shows that the common gap $(-g,g)$ is a pseudo-gap of $H^\eps_B=P+\tilde V$ relative to $P+M_{\rm crit}$. Thus, the interface index survives a gap that is filled a.s. Theorem \ref{thm:mainA} is proved in section \ref{sec:crit}. It applies to the BHZ model of \cite{GWATB09} (Corollary~\ref{cor:BHZ}) and, with the extension of Remark~\ref{S-rem:Adisorder}, to disordered $p$-wave superconductors (Corollary~\ref{S-cor:pwave}).

\section{Applications}\label{sec:num}

This section examines which fluctuations change the topology. The invariant changes only when $M$ moves the bulk across a phase boundary, so that most corrections, even non-vanishing ones, leave the topology unchanged. Electric potentials and magnetic fields of zero mean, usually regarded as topologically trivial perturbations, nevertheless generate corrections $M$ that can cross this boundary. For Dirac-type operators, this boundary is a change of sign of the mass. For superconductors, it is a change of sign of the chemical potential (section~\ref{S-sec:Sexamples} of the supplementary material). We address the following questions. For periodic fluctuations: which components of the fluctuation produce a mass, and with which sign, for first-order (Proposition~\ref{prop:EM}) and second-order (Propositions~\ref{prop:regDirac} and~\ref{prop:regIndex}) Dirac-type operators. For random fluctuations: how an alloy can preserve a gap while inverting the mass (Corollary~\ref{cor:bernoulli}), why deformed cuts are needed (Corollary~\ref{cor:puddles}), that pseudo-gaps occur for natural laws (Corollary~\ref{cor:pseudoB}), and what the theory gives for the model of \cite{GWATB09} (Corollary~\ref{cor:BHZ}). In the periodic examples, the profiles depend on the fast variable $y_1$ only, so that the fluctuations depend on $x_1$ only, and $v$ denotes a fixed real mean-zero periodic profile with $-\Delta\phi_v=v$. Index maps, interface spectra, convergence rates and further examples are given in section~\ref{S-sec:Sexamples} of the supplementary material.

\paragraph{First-order Dirac operators.} For mean-zero, periodic $A:\T^2\to\R^2$, the Helmholtz decomposition $A=\nabla^\perp\psi+\nabla\zeta$, with $\nabla^\perp=(-\partial_2,\partial_1)$ and $\psi,\zeta$ mean-zero periodic, defines the \emph{stream function} $\psi$ of $A$. Then $\partial_1A_2-\partial_2A_1=\Delta\psi$ is the magnetic field, with Fourier coefficients $-4\pi^2|q|^2\hat\psi_q$. We write $e^\perp=(-e_2,e_1)$ for $e\in\R^2$. Let $P=\gamma(D)+\Gamma$ be first-order, with $d=2$, and let $W=V\,I+A_1\gamma_1+A_2\gamma_2$, with $V,A_j$ real, periodic and mean-zero. By Lemma~\ref{lem:peierls}, this is the electromagnetic perturbation \eqref{eq:peierls} for $k=1$. Let $\psi$ be the stream function of $A=(A_1,A_2)$, and $\mathcal Q(e):=i[\gamma(e)^{-1},\gamma(e^\perp)]$ for $|e|=1$, where we recall that $\gamma(e)=e_1\gamma_1+e_2\gamma_2$, invertible by \textup{[P2]}.
\begin{proposition}[Electric and magnetic fluctuations]\label{prop:EM} \textup{(i)} $M=-\sum_{q\ne0}\Re\bigl(\overline{\hat V_q}\,\hat\psi_q\bigr)\,\mathcal Q(\hat q)$; \textup{(ii)} for Clifford systems ($\gamma_i\gamma_j+\gamma_j\gamma_i=2\delta_{ij}$), $\mathcal Q\equiv2i\gamma_1\gamma_2$. For the Dirac operator ($\gamma=\sigma$), $M=2\avg{V\psi}\,\sigma_3$; and \textup{(iii)} if $P=D\cdot\sigma+m\sigma_3$ with $m(m+2\avg{V\psi})\ne0$, then $\Ind T(H^\eps)=\frac12\bigl(\sgn(m+2\avg{V\psi})-\sgn m\bigr)$ for $\eps\le\eps_0$.
\end{proposition}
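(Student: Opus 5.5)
The plan is a direct computation from the Fourier formula \eqref{eq:M} with $k=1$, $d_0=0$ and $G(q)=-\gamma(2\pi q)^{-1}=-(2\pi|q|)^{-1}\gamma(\hat q)^{-1}$. The key step is to group the modes $q$ and $-q$.

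\emph{Pairing $q$ with $-q$.} Since $V$ and $A$ are real, $\hat W_{-q}=\hat W_q^*$. Since $G$ is odd, $G(-q)=-G(q)$. The contribution of the pair $\{q,-q\}$ is therefore
\[
\hat W_q^*G(q)\hat W_q-\hat W_qG(q)\hat W_q^*, \qquad \hat W_q=\hat V_q I+\gamma(\hat A_q).
\]
From the Helmholtz decomposition, $\hat A_q=2\pi i\bigl(q^\perp\hat\psi_q+q\,\hat\zeta_q\bigr)$, so $\gamma(\hat A_q)=\alpha\,\gamma(e^\perp)+\beta\,\gamma(e)$ with $e=\hat q$ and $\alpha=2\pi i|q|\hat\psi_q$. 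Expanding the pair contribution bilinearly gives three groups of terms.
\begin{itemize}
\item The $|\hat V_q|^2$ terms cancel identically.
\item In the $AA$ terms, the diagonal pieces cancel by the antisymmetric structure. The mixed pieces are proportional to $\gamma(e^\perp)\gamma(e)^{-1}\gamma(e)-\gamma(e)\gamma(e)^{-1}\gamma(e^\perp)=0$, or to the analogous expression with the roles exchanged, which also vanishes.
\item In the cross terms $V$--$A$, the gradient part drops out because $G(q)\gamma(e)$ is a multiple of $I$. The stream-function part leaves a multiple of $[\gamma(e)^{-1},\gamma(e^\perp)]$.
\end{itemize}
Tracking the factors of $2\pi|q|$, $i$ and complex conjugation, the stream-function part should give $-\Re(\overline{\hat V_q}\hat\psi_q)\,\mathcal Q(\hat q)$ after summing over $q$ and undoing the pairing. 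This proves (i). I expect the main source of error to be the bookkeeping of signs and of the factor $\tfrac12$ from pairing each $q$ with $-q$. I would verify it first on a single mode of the Dirac operator, where the answer can be cross-checked against $2\avg{V\psi}\sigma_3$.

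\emph{Clifford case (ii).} For Clifford systems, $\gamma(e)^2=|e|^2=1$, so $\gamma(e)^{-1}=\gamma(e)$. Anticommutation then gives
\[
[\gamma(e),\gamma(e^\perp)]=2\gamma(e)\gamma(e^\perp)=2(e_1^2+e_2^2)\gamma_1\gamma_2=2\gamma_1\gamma_2,
\]
so $\mathcal Q\equiv 2i\gamma_1\gamma_2$. For $\gamma=\sigma$ we have $i\sigma_1\sigma_2=-\sigma_3$. Hence $M=2\sum_{q\neq0}\Re(\overline{\hat V_q}\hat\psi_q)\,\sigma_3=2\avg{V\psi}\,\sigma_3$ by Parseval, since $V$ and $\psi$ are real.

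\emph{Index (iii).} With $M=\mu\sigma_3$, where $\mu=2\avg{V\psi}$, the two bulks are the Dirac operators $D\cdot\sigma+m_\pm\sigma_3$ with $m_+=m$ and $m_-=m+\mu$. Their symbol spectra are $\pm\sqrt{|\xi|^2+m_\pm^2}$, so [G] holds with $g=\min(|m|,|m+\mu|)>0$. Theorem~\ref{thm:TAI} then gives $\Ind T(H^\eps)=\BDI(P+\mu\sigma_3,P)$ for $\eps\le\eps_0$. It remains to evaluate \eqref{eq:ch} for the Dirac symbol. The standard computation gives $c(\xi\cdot\sigma+m\sigma_3)=-\tfrac12\sgn m$ up to a fixed orientation convention; the integral of the Berry curvature over $\R^2$ equals half the degree of the map $\xi\mapsto(\xi,m)/|(\xi,m)|$ onto a hemisphere. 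Hence the difference is $\tfrac12(\sgn(m+\mu)-\sgn m)$. I would fix the sign convention by consistency with Proposition~\ref{thm:QB}, using the known domain-wall (Jackiw--Rebbi) mode direction.
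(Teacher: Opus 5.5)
Your treatment of (i) and (ii) is the paper's: pair $q$ with $-q$ through \eqref{eq:pair}, write $\hat W_q=\hat V_q+s\gamma(\hat q)+t\gamma(\hat q^\perp)$, and note that only the $(I,\gamma(\hat q^\perp))$ cross term survives. The bookkeeping you leave open does close. That term equals $2i\Im(\overline{\hat V_q}t)\,[G(q),\gamma(\hat q^\perp)]$, with $t=2\pi i|q|\hat\psi_q$ and $G(q)=-(2\pi|q|)^{-1}\gamma(\hat q)^{-1}$, so each pair contributes $-2\Re(\overline{\hat V_q}\hat\psi_q)\,\mathcal Q(\hat q)$. Since $\mathcal Q(-e)=\mathcal Q(e)$ and $\Re(\overline{\hat V_q}\hat\psi_q)$ is even in $q$, unfolding to all $q\ne0$ gives (i). Your Clifford and Parseval steps agree with the paper.

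The genuine problem is in (iii), whose only nontrivial input beyond Theorem~\ref{thm:TAI} is the value of $c$ for the Dirac symbol. You assert $c(\xi\cdot\sigma+m\sigma_3)=-\tfrac12\sgn m$ and then write that the difference is $\tfrac12(\sgn(m+\mu)-\sgn m)$. These two claims are inconsistent. Since $\BDI(P+M,P)=c(P+M)-c(P)$ by \eqref{eq:ch}, your value of $c$ gives $-\tfrac12(\sgn(m+\mu)-\sgn m)$, the opposite of the claimed index.

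There is also no orientation convention left to choose. Definition \eqref{eq:ch} fixes $\Pi_h$ as the negative spectral projector, sets $\Omega_h=i\tr(\Pi_h\,d\Pi_h\wedge d\Pi_h)$, and uses the standard orientation of $\R^2_\xi$, so the sign is determined by a computation. Take $\Pi_h=\tfrac12(1-n\cdot\sigma)$ with $n=(\xi,m)/|(\xi,m)|$. Then $[\partial_1\Pi_h,\partial_2\Pi_h]=\tfrac i2(\partial_1n\times\partial_2n)\cdot\sigma$, hence $i\tr(\Pi_h[\partial_1\Pi_h,\partial_2\Pi_h])=\tfrac12\,n\cdot(\partial_1n\times\partial_2n)$. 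The map $n$ covers the hemisphere $\{\sgn(m)\,n_3>0\}$, of area $2\pi$, exactly once. Its orientation is $\sgn m$: at $\xi=0$ one has $\partial_1n\times\partial_2n=m^{-2}e_3$ while $n=\sgn(m)e_3$. Hence $c=\tfrac12\sgn m$, as in Lemma~\ref{lem:BDIc}(ii), and the index follows.

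Your fallback, fixing the sign by consistency with Proposition~\ref{thm:QB} and the Jackiw--Rebbi mode, is not an argument as it stands. Proposition~\ref{thm:QB} is the identity $\Ind T=\BDI$, so it can fix the sign of $c$ only after you compute $\Ind T$ independently for a domain wall. That would require the propagation direction of the mode and the sign convention of the Toeplitz index with $\Pi=\one_{\{x_1\ge0\}}$ and $U=e^{2\pi i\varphi(H)}$. You do not carry out that computation, and the direct evaluation above makes it unnecessary.
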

\noindent Here $\avg{V\psi}=\int_{\T^2}V\psi\,dy$, and in (iii) $\rho=\rho(x_2)$ is as in block (iii) of section \ref{sec:setting}. By (i), $M=0$ if $W$ is purely electric ($A=0$) or purely magnetic ($V=0$), and the gradient part $\nabla\zeta$ of $A$ never contributes.

\begin{proof}[Proof of Proposition~\ref{prop:EM}]
(i) Fix $q$. Since $\hat A_q=2\pi i\bigl(|q|\hat\psi_q\,\hat q^\perp+\hat\zeta_q\,q\bigr)$, we can write $\hat A_q=s\hat q+t\hat q^\perp$ with $s=2\pi i|q|\hat\zeta_q$ and $t=2\pi i|q|\hat\psi_q$. Write $\gamma_q=\gamma(\hat q)$, $\gamma_\perp=\gamma(\hat q^\perp)$ and $v=\hat V_q$. By \eqref{eq:pair}, with the notation of section~\ref{sec:univ} and $G_1(\hat q)=-\gamma_q^{-1}$, the pair $\pm q$ contributes $-(2\pi|q|)^{-1}(X^*\gamma_q^{-1}X-X\gamma_q^{-1}X^*)$ with $X=v+s\gamma_q+t\gamma_\perp$. For $X=\sum_ax_aA_a$ with Hermitian $A_a$ and any Hermitian $G$,
\[
X^*GX-XGX^*=2i\sum_{a<b}\Im(\bar x_ax_b)\,(A_aGA_b-A_bGA_a).
\]
With $(A_a)=(I,\gamma_q,\gamma_\perp)$ and $G=\gamma_q^{-1}$, the pairs $(I,\gamma_q)$ and $(\gamma_q,\gamma_\perp)$ give $0$, and $(I,\gamma_\perp)$ gives $[\gamma_q^{-1},\gamma_\perp]$, with coefficient $2i\Im(\bar vt)=4\pi i|q|\Re(\bar v\hat\psi_q)$. In particular $s$, which comes from the gauge part $\nabla\zeta$ of $A$, drops out. Hence the pair contributes $-2\Re(\bar v\hat\psi_q)\mathcal Q(\hat q)$. The commutator is even in $q$, so unfolding the sum over $\Lambda_+$ into a sum over all $q\ne0$ gives the formula.

(ii) For Clifford systems, $\gamma(e)^{-1}=\gamma(e)$ and $\gamma(e)\gamma(e^\perp)=\gamma_1\gamma_2=-\gamma(e^\perp)\gamma(e)$, so $\mathcal Q\equiv2i\gamma_1\gamma_2$. For $\gamma=\sigma$ this gives $\mathcal Q=-2\sigma_3$. Then $M=2\sum_q\Re(\overline{\hat V_q}\hat\psi_q)\sigma_3=2\avg{V\psi}\sigma_3$ by Parseval.

(iii) The bulks $D\cdot\sigma+m\sigma_3$ and $D\cdot\sigma+(m+2\avg{V\psi})\sigma_3$ are gapped, and Theorem~\ref{thm:TAI} applies, with the Dirac value $c(m\sigma_3)=\frac12\sgn m$ of Lemma~\ref{lem:BDIc}.
\end{proof}

\noindent As an example, for $m>0$, $V(y)=a\cos(2\pi y_1)$ and $A(y)=(0,b\cos(2\pi y_1+\theta_0))$, we have $\psi=b\sin(2\pi y_1+\theta_0)/(2\pi)$ and $M=(ab\sin\theta_0/2\pi)\,\sigma_3$. Electric and magnetic fluctuations in phase ($\theta_0=0,\pi$) produce no mass, however large their amplitudes. The effect is maximal for $\theta_0=\pm\pi/2$, and the topology changes when $ab\sin\theta_0<-2\pi m$. Since $M$ is odd under time reversal, which maps $A$ to $-A$, changing the topology requires breaking time-reversal symmetry. As in the Haldane model \cite{Haldane88}, the magnetic field $\partial_1A_2-\partial_2A_1=\Delta\psi$ has zero mean, so the topology changes without any net flux. Pure mass fluctuations $V_3\sigma_3$ also give $M=0$ (Remark~\ref{rem:scalar}). Index maps and interface spectra for this example are shown in section~\ref{S-sec:Sexamples} of the supplementary material.

The same fluctuations give a gap-preserving alloy (Theorem~\ref{thm:phasepurek}, case (A)).
\begin{corollary}[A gap-preserving Dirac alloy]\label{cor:bernoulli}
Let $P=D\cdot\sigma+m\sigma_3$ with $m>0$, and let $W_\omega=\sum_j\omega_jw(\cdot-j)$ with $w=v\sigma_0+\tau\nabla^\perp v\cdot\sigma$, $v=\Delta b$, $\supp b\subset\{|y|<\frac12\}$ and $\omega_j=\pm1$. If $\tau<0$ and $2|\tau|\|v\|_{L^2}^2\ge m+g$ with $0<g\le m$, then (A) holds with $J=\sigma_3$, $s_+=1$, $s_-=-1$, and $\Ind T(H^\eps_\omega)=-1$ for every configuration and $\eps\le\eps_0$.
\end{corollary}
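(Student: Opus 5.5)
The plan is to verify condition (A) of Theorem~\ref{thm:phasepurek} and then read off the index from its part (ii). Three things need checking. First, $P$ lies in the $J$-graded class with $J=\sigma_3$ and $s_+=1$. Second, the moment condition \eqref{eq:moments} holds for $w$ with $k=1$. Third, every $M\in\cK$ satisfies $-\sigma_3(m\sigma_3+M^{\rm e})\ge g$.

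The first two checks are immediate. Since $\sigma_3$ anticommutes with $\sigma_1,\sigma_2$ and $\sigma_3(m\sigma_3)=m\ge g$, $P$ is $J$-graded. For the moments, $v=\Delta b$ and $\nabla^\perp v=\nabla^\perp\Delta b$ are derivatives of compactly supported functions, so $\int y^\alpha w=0$ for $|\alpha|\le1$ by integration by parts. The order-one moments of $\nabla^\perp v$ involve three derivatives of $b$ and vanish.

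Next I compute $\mathbb K(q)$ in \eqref{eq:Kq}, following the proof of Proposition~\ref{prop:EM}. With $G(p)=-(2\pi p\cdot\sigma)^{-1}=-\frac{p\cdot\sigma}{2\pi|p|^2}$, each mode $p$ of the single site has $\hat w(p)=\hat v(p)\bigl(I+2\pi i\tau\,p^\perp\cdot\sigma\bigr)$. Expanding $\hat w^*G\hat w$ with Pauli algebra, the cross terms give a $\sigma_3$ component, while the remaining terms are odd in $p$ and proportional to $p\cdot\sigma$. The cross term comes from $(I,\gamma_\perp)$, as in Proposition~\ref{prop:EM}, and equals $2\tau|\hat v(p)|^2\sigma_3$ up to sign. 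The sign is fixed by the Dirac example: there $\psi$ plays the role of $\tau v$ and $M=2\avg{V\psi}\sigma_3$, which gives $+2\tau|\hat v|^2\sigma_3$. Thus
\[
\mathbb K(q)=2\tau\sum_\ell|\hat v(q+\ell)|^2\,\sigma_3+(\text{in-plane terms}).
\]
The $\sigma_3$-part commutes with $J$. The in-plane part anticommutes with $J$, so it drops out of $M^{\rm e}$.

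Because $\omega_j=\pm1$, we have $\kappa_-=\kappa_+=1$, and $\cK=\overline\conv\{\mathbb K(q)\}$. Hence $M^{\rm e}=\mu\,\sigma_3$ with $\mu$ in the closed convex hull of $\{2\tau\sum_\ell|\hat v(q+\ell)|^2:q\in\T^2\}$.

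The hard step is a uniform lower bound on $\sum_\ell|\hat v(q+\ell)|^2$ over $q\in\T^2$, since Parseval only gives its average, $\int_{\T^2}\sum_\ell|\hat v(q+\ell)|^2\,dq=\|v\|_{L^2}^2$. Here $\supp v\subset\{|y|<\frac12\}$ is the key. By Poisson summation,
\[
\sum_\ell|\hat v(q+\ell)|^2=\sum_{n\in\Z^2}e^{2\pi i q\cdot n}(v*\tilde v)(n),\qquad \tilde v(y)=v(-y).
\]
The autocorrelation $v*\tilde v$ is supported in $\{|y|<1\}$, so only $n=0$ survives. The sum therefore equals $\|v\|_{L^2}^2$ for every $q$.

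It follows that $M^{\rm e}=2\tau\|v\|_{L^2}^2\sigma_3$ for all $M\in\cK$, and
\[
-\sigma_3(m\sigma_3+M^{\rm e})=-m+2|\tau|\|v\|_{L^2}^2\ge g,
\]
which is (A) with $s_-=-1$.

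Finally, Theorem~\ref{thm:phasepurek}(ii) gives $\Ind T(H^\eps_\omega)=\BDI(P+\bar M,P)$ for any $\bar M\in\cK$, uniformly in the configuration. I would compute this invariant by deforming $\bar M$ within the gapped graded class to its $\sigma_3$-part; Lemma~\ref{lem:grade} and the invariance in Proposition~\ref{thm:QB} allow this, because the anticommuting part cannot close the gap. Lemma~\ref{lem:BDIc} then gives
\[
\tfrac12\bigl(\sgn(m-2|\tau|\|v\|^2)-\sgn m\bigr)=-1.
\]
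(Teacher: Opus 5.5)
Your proposal is correct and follows the paper's proof. It checks the moments, splits $\hat w(p)^*G(p)\hat w(p)$ into $2\tau|\hat v(p)|^2\sigma_3$ plus an in-plane part odd in $p$, uses disjoint supports and Poisson summation to get $\sum_\ell|\hat v(q+\ell)|^2=\|v\|_{L^2}^2$ for every $q$, and concludes with Theorem~\ref{thm:phasepurek} and Lemma~\ref{lem:BDIc}. The differences are minor: the paper fixes the sign of the $\sigma_3$ term by direct Pauli algebra, via $(p\cdot\sigma)(p^\perp\cdot\sigma)=i|p|^2\sigma_3$, rather than by comparison with Proposition~\ref{prop:EM}, and it evaluates the invariant by taking $\bar M=2\tau\|v\|_{L^2}^2\sigma_3$, which lies in $\cK$ because the in-plane part of $\mathbb K$ is odd in $q$, instead of deforming the in-plane part away with Lemma~\ref{lem:grade}.
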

\begin{proof}
The single site satisfies \eqref{eq:moments} with $k=1$, since $\int v=0$, $\int\nabla^\perp v=0$ and $\int y_i\partial_jv=-\delta_{ij}\int v=0$. With $\hat w(p)=\hat v(p)(I+2\pi i\tau\,p^\perp\cdot\sigma)$, $G(p)=-(2\pi)^{-1}(p\cdot\sigma)/|p|^2$ and $(p\cdot\sigma)(p^\perp\cdot\sigma)=i|p|^2\sigma_3$, a direct computation gives
\[
\hat w(p)^*G(p)\hat w(p)=2\tau|\hat v(p)|^2\sigma_3-|\hat v(p)|^2\bigl(1-4\pi^2\tau^2|p|^2\bigr)\frac{p\cdot\sigma}{2\pi|p|^2}.
\]
Since $\supp v\subset\{|y|<\frac12\}$, the integer translates of $v$ have disjoint supports, and Poisson summation gives $\sum_\ell|\hat v(q+\ell)|^2=\|v\|_{L^2}^2$ for every $q$. Hence $\mathbb K(q)$ is $2\tau\|v\|_{L^2}^2\sigma_3$ plus an in-plane part, odd in $q$ since $|\hat v|$ is even, and, as $\kappa_\pm=1$, so is every element of $\cK$. The in-plane part anticommutes with $J=\sigma_3$ (see also Proposition~\ref{S-prop:bernoulli}). Hence (A) holds, and Theorem~\ref{thm:phasepurek} with $\bar M=2\tau\|v\|_{L^2}^2\sigma_3\in\cK$ and Lemma~\ref{lem:BDIc} give the index $\frac12(\sgn(m+2\tau\|v\|_{L^2}^2)-\sgn m)=-1$.
\end{proof}

\paragraph{Regularized Dirac and BHZ operators.} For second-order operators, electric and magnetic fluctuations decouple. We consider the regularized Dirac operator $P=D\cdot\sigma+(m+\beta\Delta)\sigma_3$, $\beta\ne0$, and the BHZ-type operator with symbol $\xi\cdot\sigma+(m+b|\xi|^2)\sigma_3+c|\xi|^2$, $|b|\ne|c|$. Since the principal parts are Laplacians, the natural quantities for $W=\sum_{a=0}^3V_a\sigma_a$ are the gradient correlations of the mean-zero periodic solutions $\phi_a$ of $-\Delta\phi_a=V_a$,
\[
\mathcal C_{ab}=\int_{\T^2}\nabla\phi_a\cdot\nabla\phi_b\,dy=\int_{\T^2}V_a\,\phi_b\,dy .
\]
\begin{proposition}[Regularized Dirac and BHZ operators]\label{prop:regDirac} \textup{(i)} For the regularized Dirac operator, $M=\beta^{-1}\big[(\mathcal C_{00}+\mathcal C_{33}-\mathcal C_{11}-\mathcal C_{22})\sigma_3+2\mathcal C_{03}\sigma_0+2\mathcal C_{31}\sigma_1+2\mathcal C_{32}\sigma_2\big]$; \textup{(ii)} for the BHZ-type operator and $W=V_0\sigma_0$, $M=\mathcal C_{00}\,(c\,\sigma_0-b\,\sigma_3)/(b^2-c^2)$; and \textup{(iii)} for the electromagnetic fluctuation \eqref{eq:peierls} with $k=2$ and $A=\nabla^\perp\psi$, and with $\mathcal C_{00}$ computed from $V_0=V$,
\[
M=\Bigl(\frac{\mathcal C_{00}}\beta-\beta\int_{\T^2}|\nabla\psi|^2\Bigr)\sigma_3\quad\text{and}\quad M=\mathcal C_{00}\,\frac{c\,\sigma_0-b\,\sigma_3}{b^2-c^2}+\Bigl(\int_{\T^2}|\nabla\psi|^2\Bigr)(b\,\sigma_3+c)
\]
for the regularized Dirac and the BHZ-type operators, respectively.
\end{proposition}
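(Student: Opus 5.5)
The plan is to compute $M$ directly from the Fourier formula \eqref{eq:M}, $M=\avg{d_0^{\rm h}}+\sum_{q\ne0}\hat W_q^*G(q)\hat W_q$ with $G(q)=-P_2(2\pi q)^{-1}$, once the principal symbol of each operator has been identified. For the regularized Dirac operator, $P_2(\xi)=\beta|\xi|^2\sigma_3$, so $G(q)=-(4\pi^2\beta|q|^2)^{-1}\sigma_3$. For the BHZ-type operator, $P_2(\xi)=|\xi|^2(b\sigma_3+c)$, and since $(b\sigma_3+c)^{-1}=(c-b\sigma_3)/(c^2-b^2)$ we get $G(q)=(4\pi^2|q|^2)^{-1}(c-b\sigma_3)/(b^2-c^2)$. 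In both cases $G(q)$ is a scalar multiple of $|q|^{-2}$ times a fixed matrix, and the scalar factor is exactly what produces the gradient correlations. Indeed, $\hat\phi_{a,q}=\hat V_{a,q}/(4\pi^2|q|^2)$, so Parseval gives $\sum_q\overline{\hat V_{a,q}}\hat V_{b,q}/(4\pi^2|q|^2)=\int V_a\phi_b=\mathcal C_{ab}$. This quantity is real and symmetric in $(a,b)$ because the $V_a$ are real and the inverse Laplacian is symmetric.

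For (i), write $W=\sum_aV_a\sigma_a$, so that $\hat W_q=\sum_a\hat V_{a,q}\sigma_a$. Then $M=-\beta^{-1}\sum_{a,b}\mathcal C_{ab}\,\sigma_a\sigma_3\sigma_b$, where $\mathcal C_{ab}$ comes from the symmetrized sum over $\pm q$. Only the Hermitian, $(a,b)$-symmetric part survives, namely $\frac12(\sigma_a\sigma_3\sigma_b+\sigma_b\sigma_3\sigma_a)$. The Pauli identities give:
\begin{itemize}[nosep]
\item $\sigma_0\sigma_3\sigma_0=\sigma_3\sigma_3\sigma_3=\sigma_3$;
\item $\sigma_j\sigma_3\sigma_j=-\sigma_3$ for $j=1,2$;
\item the symmetrized $(0,3)$ term equals $\sigma_0$, and the symmetrized $(3,j)$ term equals $\sigma_j$;
\item the $(0,j)$ and $(1,2)$ cross terms cancel, since they anticommute.
\end{itemize}
Collecting terms yields a bracket of the stated form. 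The step I expect to be the main obstacle is fixing the overall sign and the factor $2$ on the off-diagonal terms. I would settle this by carefully checking one example, say $W=V_0\sigma_0$ with $V_0=\cos(2\pi y_1)$, against the sign convention of \eqref{eq:M}, and then propagating that convention to the remaining terms.

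Part (ii) is the same computation with $W=V_0\sigma_0$, which gives $M=\mathcal C_{00}(c-b\sigma_3)/(b^2-c^2)$ immediately.

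For (iii), I would use Lemma~\ref{lem:peierls} to write the Peierls operator \eqref{eq:peierls} in the form \eqref{eq:Heps} with $k=2$. Its first-order term is $\mathrm{Sym}[c_1D]$, where $c_1$ is built from $A$ and $\partial_\xi P_2$, and its diamagnetic term is $d_0=P_2(A)$, so that $\avg{d_0^{\rm h}}=\avg{P_2(A)}$. For the regularized Dirac operator this is $\beta\avg{|A|^2}\sigma_3$, and for BHZ it is $\avg{|A|^2}(b\sigma_3+c)$. With $A=\nabla^\perp\psi$ we have $\avg{|A|^2}=\int|\nabla\psi|^2$. Next I would compute $W$ from \eqref{eq:Weff}: the magnetic part is $\frac12D_y\cdot c_1^*$, which is proportional to $\operatorname{div}A$ times $P_2$'s matrix and therefore vanishes for $A=\nabla^\perp\psi$. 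Hence $W=V\,I$, and the second-order Born term reduces to the electric term already computed in (i)--(ii) with only $V_0=V$. The first-order Dirac part $D\cdot\sigma$ contributes an $O(\eps^{1/2})$ term $\eps^{1/2}A\cdot\sigma$, which lies in the remainder $\mathcal R^\eps$.

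It remains to check the sign of the diamagnetic contribution relative to the paper's statement, which has $-\beta\int|\nabla\psi|^2$ for regularized Dirac. The discrepancy must come from the cross term between $c_1$ and itself in the Born sum. Since $c_1\propto A$ is not divergence free as a multiplier, the Born term $\hat W^*G\hat W$ has to include the full first-order symbol $c_1(y)\cdot2\pi q$ contributions. I would therefore redo the Born formula keeping $c_1$ acting on the corrector, where $W$ should be understood as the full symbol, and verify that the resulting $|\hat A_q|^2$ term equals $-2\beta\avg{|A|^2}\sigma_3$. Together with the diamagnetic $+\beta\avg{|A|^2}\sigma_3$, this would give the stated $-\beta\int|\nabla\psi|^2$. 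This bookkeeping through Lemma~\ref{lem:peierls} is the delicate step.
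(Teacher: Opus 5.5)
Your overall route is the paper's: Parseval applied to \eqref{eq:M}, Pauli identities, and Lemma~\ref{lem:peierls} with $W=V$ for (iii). However, you start from the wrong principal symbol for the regularized Dirac operator. Since $\Delta$ has symbol $-|\xi|^2$, the symbol of $D\cdot\sigma+(m+\beta\Delta)\sigma_3$ is $\xi\cdot\sigma+(m-\beta|\xi|^2)\sigma_3$. Hence $P_2(\xi)=-\beta|\xi|^2\sigma_3$ and $G(q)=+\sigma_3/(4\pi^2\beta|q|^2)$, which gives $M=+\beta^{-1}\sum_{a,b}\mathcal C_{ab}\,\sigma_a\sigma_3\sigma_b$. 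Your Pauli bookkeeping is correct, including the factor $2$, which simply comes from the two orderings $(a,b)$ and $(b,a)$, and it then yields exactly (i). With your $G$ you obtain the negative of (i). The check you propose, testing one example against the convention of \eqref{eq:M}, would only reproduce the mismatch, because the convention is fine and the symbol is not. Your own (ii), which is correct, already fixes the sign: the BHZ operator with $b=-\beta$, $c=0$ is the regularized Dirac operator, and (ii) then gives $+\mathcal C_{00}\sigma_3/\beta$.

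The same slip propagates to (iii). There $a_{ij}=-\beta\delta_{ij}\sigma_3$, so the diamagnetic term is $\avg{d_0}=-\beta\int_{\T^2}|\nabla\psi|^2\,\sigma_3$, not $+\beta\int_{\T^2}|\nabla\psi|^2\,\sigma_3$. Together with the electric Born term $\mathcal C_{00}\sigma_3/\beta$, this is already the stated formula, so your first description was right. Your proposed repair, an extra $|\hat A_q|^2$ Born contribution equal to $-2\beta\avg{|A|^2}\sigma_3$ from the first-order coefficient acting on the corrector, does not exist and would give a wrong answer. Formula \eqref{eq:M}, with $W$ defined by \eqref{eq:Weff}, already accounts for the first-order coefficients acting on the corrector. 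In the proof of Theorem~\ref{thm:hom}, the mean of the potential term of $c_1$ is $\avg{\mathcal T_y\chi}=\avg{(\mathcal T_y[I])^*\chi}=\avg{W^*\chi}$, by symmetry of $\mathcal T_y$. Here $c_{e_i}=2aA_i$, so $\tfrac12\sum_iD_{y_i}(c_{e_i}^*)=a\,D_y\cdot A=0$ for $A=\nabla^\perp\psi$. Thus $W=V$ exactly, and the magnetic first-order term contributes nothing to $M$. An extra $-2a\avg{|A|^2}$ term would also contradict the BHZ half of (iii), which is precisely the electric Born term plus the bare diamagnetic term $(b\sigma_3+c)\int_{\T^2}|\nabla\psi|^2$. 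A minor point: $P_1(D^a)-P_1(D)=\rho A^\eps\cdot\sigma$ has amplitude $1$, not $\eps^{1/2}$. It is a remainder term because it has the form $\eps^{|\alpha|-k/2+1}\mathrm{Sym}[\rho f^\eps D^\alpha]$ with $|\alpha|=0$ and $\avg f=0$.
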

\noindent The coefficients $b/(b^2-c^2)$ and $c/(b^2-c^2)$ in (ii), and their signs, are those of the Born approximation \cite[Eq.~(5)]{GWATB09}, whose parameters $(\beta,\gamma)$ in \cite[Eq.~(1)]{GWATB09} are our $(b,c)$. Their condition $\beta>\gamma$ is our $|c|<|b|$. In (iii), the electric and magnetic contributions decouple. For the regularized Dirac operator, and for the BHZ-type operator with $|c|<|b|$, they shift the mass in opposite directions. For the profiles $V$ and $A$ of the first-order example, (iii) gives $m+M_3=m+a^2/(8\pi^2\beta)-\beta b^2/2$ for the regularized Dirac operator: the phase $\theta_0$ plays no role, and, for $\beta>0>m$, the index is $1$ if and only if $a^2/(8\pi^2)>\beta|m|+\beta^2b^2/2$, so that magnetic fluctuations raise the electric threshold.

\begin{proof}[Proof of Proposition~\ref{prop:regDirac}]
(i) Here $P_2(\xi)=-\beta|\xi|^2\sigma_3$, so $M=\beta^{-1}\sum_{a,b}\mathcal C_{ab}\,\sigma_a\sigma_3\sigma_b$ with $\mathcal C$ real symmetric. Symmetrizing in $(a,b)$, we have $\sigma_a\sigma_3\sigma_a=\pm\sigma_3$, with $+$ for $a\in\{0,3\}$ and $-$ for $a\in\{1,2\}$; the pair $(0,3)$ gives $2\sigma_0$; the pairs $(3,j)$, $j=1,2$, give $2\sigma_j$; and the pairs $(0,j)$ and $(1,2)$ give $0$.

(ii) We have $P_2(\xi)=|\xi|^2(b\sigma_3+c)$, so $G(q)=-(b\sigma_3+c)^{-1}/(4\pi^2|q|^2)$, and $(b\sigma_3+c)^{-1}=(c-b\sigma_3)/(c^2-b^2)$. Substituting into $M=\sum_q|\hat V_q|^2G(q)$, with $V=V_0$, and using $\sum_q|\hat V_q|^2/(4\pi^2|q|^2)=\int|\nabla\phi_0|^2=\mathcal C_{00}$, gives the formula. With $\mathcal E:=\mathcal C_{00}$, the mass shift is therefore $-\mathcal E\,b/(b^2-c^2)$ and the energy shift is $\mathcal E\,c/(b^2-c^2)$.

(iii) In both cases $a_{ij}=\delta_{ij}a$ with $a=-\beta\sigma_3$, respectively $a=b\sigma_3+c$. Since $\mathrm{div}\,A=0$, Lemma~\ref{lem:peierls} gives $W=V$ and $\avg{d_0}=a\avg{|A|^2}=a\int|\nabla\psi|^2$. The Born term is that of the scalar potential $V$, computed in (i) and (ii).
\end{proof}

\begin{proposition}[Index for regularized Dirac operators]\label{prop:regIndex}
Let $P$ be the regularized Dirac operator with $m\beta\ne0$, and $M=M_0\sigma_0+M_\perp\cdot\sigma+M_3\sigma_3$ as in Proposition~\ref{prop:regDirac}(i). If $|M_0|<\inf_\xi|d(\xi)|$, then \textup{[G]} holds and
\[
\BDI(P+M,P)=\tfrac12\bigl(\sgn m_{\rm eff}-\sgn m\bigr),\qquad m_{\rm eff}:=m+M_3-\beta|M_\perp|^2\ne0 .
\]
\end{proposition}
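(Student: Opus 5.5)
The plan is to reduce $P+M$ to a two-band symbol $d_M(\xi)\cdot\sigma$ plus a scalar, and then to compute the invariant as a degree. Since $D\to\xi$ and $\Delta\to-|\xi|^2$, the symbol of $P$ is $h_0(\xi)=\xi\cdot\sigma+(m-\beta|\xi|^2)\sigma_3$. The symbol of $P+M$ is
\[
h_M(\xi)=M_0\,\sigma_0+d(\xi)\cdot\sigma,\qquad d(\xi):=\bigl(\xi_1+M_1,\ \xi_2+M_2,\ m+M_3-\beta|\xi|^2\bigr),
\]
where $M_\perp=(M_1,M_2)$. I read $d$ in the hypothesis as this vector, i.e.\ as attached to $P+M$. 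The eigenvalues of $h_M(\xi)$ are $M_0\pm|d(\xi)|$.

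\textbf{Step 1: the gap [G].} The hypothesis $|M_0|<\inf_\xi|d(\xi)|$ immediately gives a gap for $P+M$ around $0$. Since $|d(\xi)|\to\infty$ at infinity, the infimum is attained, so the gap is uniform. The first two components of $d$ vanish only at $\xi=-M_\perp$, where
\[
d_3(-M_\perp)=m+M_3-\beta|M_\perp|^2=m_{\rm eff}.
\]
Hence $\inf|d|>0$ already forces $m_{\rm eff}\ne0$, which is the claimed nonvanishing. For $P$ the same argument applies with $M=0$. The components of $d_0(\xi)=(\xi,m-\beta|\xi|^2)$ vanish simultaneously only if $\xi=0$ and $m=0$, which is excluded by $m\beta\ne0$. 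This gives [G] for both bulks after shrinking $g$.

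\textbf{Step 2: remove $M_0$.} I deform $M_0\to tM_0$ for $t\in[0,1]$. The gap condition $|tM_0|<\inf|d|$ persists along the deformation, and the negative spectral projector $\Pi_{h}$ stays that of $d\cdot\sigma$. Indeed, the negative eigenspace of $tM_0+d\cdot\sigma$ is the $-|d|$ eigenspace of $d\cdot\sigma$, independently of $t$. Thus $c(P+M)=c(d\cdot\sigma)$ with no homotopy argument needed.

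\textbf{Step 3: compute $c$ as a degree.} For a two-band symbol, $\Omega_h$ is the pullback of the area form on $S^2$ under $\hat d=d/|d|$, up to the normalization fixed in Lemma~\ref{lem:BDIc}. So $c(d\cdot\sigma)$ is $\pm\tfrac12$ times the signed count needed to make the map from the compactified plane well defined. Since $\beta\ne0$, $\hat d(\xi)\to-\sgn(\beta)\,e_3$ as $|\xi|\to\infty$, so $\hat d$ extends to a map $S^2\to S^2$ whose degree counts preimages of the regular value $\sgn(\beta)\,e_3$. Such a preimage requires $d_1=d_2=0$, i.e.\ $\xi=-M_\perp$, and it exists iff $\sgn m_{\rm eff}=\sgn\beta$. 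Near that point the in-plane part is a translation of the identity, so the local degree is $+1$. Equivalently, and more safely for normalization, I would apply Lemma~\ref{lem:BDIc}'s formula for the regularized Dirac symbol and check that it depends only on $(\sgn m_{\rm eff},\sgn\beta)$. This gives
\[
c(P+M)=\tfrac12\bigl(\sgn m_{\rm eff}+\sgn\beta\bigr)\cdot\kappa,
\]
with a universal sign $\kappa$, and the same formula for $c(P)$ with $m$ in place of $m_{\rm eff}$. Subtracting cancels the $\sgn\beta$ terms and yields $\tfrac12(\sgn m_{\rm eff}-\sgn m)$ once $\kappa$ is matched.

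\textbf{Main obstacle.} The translation $\xi\mapsto\xi+M_\perp$ does not reduce $d$ to a regularized Dirac vector, because it creates the cross term $2\beta\,\xi\cdot M_\perp$ in $d_3$. I will handle this via the degree count above, which only uses the zero set of $(d_1,d_2)$ and the behavior at infinity. Alternatively, one can homotope the cross term away: $d_3=m_{\rm eff}-\beta|\xi'|^2+2s\beta\,\xi'\cdot M_\perp$ for $s\in[0,1]$ keeps $d\ne0$, since at $\xi'=0$ we have $d_3=m_{\rm eff}\ne0$. Matching the sign convention $\kappa$ with Lemma~\ref{lem:BDIc}'s Dirac value $c(m\sigma_3)=\tfrac12\sgn m$ is the remaining bookkeeping point.
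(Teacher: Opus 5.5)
Your argument follows the paper's proof. The gap comes from the single zero $\xi=-M_\perp$ of $(d_1,d_2)$, where $d_3=m_{\rm eff}$. The scalar $M_0$ does not change $\Pi_h$. The invariant is then computed from the zero set of $(d_1,d_2)$ and the behavior at infinity, which is exactly \eqref{eq:degree} of Lemma~\ref{lem:BDIc} with $\det\partial_\xi(d_1,d_2)=1$ and $\sgn d_3(\infty)=-\sgn\beta$. The paper applies that formula directly and gets $c(P+M)=\tfrac12(\sgn m_{\rm eff}+\sgn\beta)$ and $c(P)=\tfrac12(\sgn m+\sgn\beta)$. So no undetermined sign $\kappa$ and no homotopy of the cross term $2\beta\,\xi'\cdot M_\perp$ are needed.

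\textbf{A sign error in your degree count.} As written, your count gives the wrong answer for $\beta<0$. The local degree at the preimage $\xi_0=-M_\perp$ of $\sgn(\beta)e_3$ is not always $+1$.
\begin{itemize}
\item Near $\xi_0$ you have $(n_1,n_2)\approx(\xi-\xi_0)/|m_{\rm eff}|$, which preserves orientation in the chart $(n_1,n_2)$.
\item With the orientation $n\cdot(u\times v)$ used in $c$, that chart is positively oriented at the north pole and negatively oriented at the south pole. Compare the Dirac value $c(m\sigma_3)=\tfrac12\sgn m$.
\item So the local degree is $\sgn m_{\rm eff}=\sgn\beta$. The total degree is $\sgn\beta$ when $\sgn m_{\rm eff}=\sgn\beta$ and $0$ otherwise, which is $\tfrac12(\sgn m_{\rm eff}+\sgn\beta)$.
\end{itemize}
With ``$+1$'' you would instead get $c=\tfrac12(1+\sgn\beta\,\sgn m_{\rm eff})$. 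After subtraction this gives $\tfrac12\sgn\beta\,(\sgn m_{\rm eff}-\sgn m)$, which has the wrong sign when $\beta<0$.

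Also, $c$ equals this degree exactly, not $\pm\tfrac12$ times a count. This is because $n$ extends continuously to $\R^2\cup\{\infty\}$ here, as stated in Lemma~\ref{lem:BDIc}(ii).
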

\noindent Here $d(\xi)=(\xi+M_\perp,\,m+M_3-\beta|\xi|^2)$, so that the lower bulk has eigenvalues $M_0\pm|d(\xi)|$.

\begin{proof}[Proof of Proposition~\ref{prop:regIndex}]
The lower bulk is $H_-=(D+M_\perp)\cdot\sigma+(m+M_3+\beta\Delta)\sigma_3+M_0$, with symbol $d(\xi)\cdot\sigma+M_0$ and eigenvalues $M_0\pm|d(\xi)|$. The components $d_1,d_2$ vanish only at $\xi=-M_\perp$, where $d_3=m_{\rm eff}$. Hence $\inf|d|>0$ if and only if $m_{\rm eff}\ne0$, and $|M_0|<\inf|d|$ gives $m_{\rm eff}\ne0$ and a gap of $H_-$ at $0$. Since $P$ is gapped ($m\ne0$), \textup{[G]} holds. By definition, $\BDI(P+M,P)=c(h_-)-c(h_+)$. Since $(d_1,d_2)$ vanishes only at $-M_\perp$, with $\det\partial_\xi(d_1,d_2)=1$, and $\sgn d_3(\infty)=-\sgn\beta$, \eqref{eq:degree} gives $c(h_-)=\frac12(\sgn m_{\rm eff}+\sgn\beta)$ and $c(h_+)=\frac12(\sgn m+\sgn\beta)$, which proves the claim.
\end{proof}

\begin{example}\label{ex:counter}
Take $m=\beta=1$, $V_0=V_2=0$ and $V_1=V_3=\lambda v$ for a fixed mean-zero trigonometric polynomial $v$, and let $-\Delta\phi_v=v$. Then $\mathcal C_{11}=\mathcal C_{33}$, so $m+M_3=m$, while $m_{\rm eff}=m-4\lambda^4\bigl(\int|\nabla\phi_v|^2\bigr)^2$. For $\lambda$ large, $m_{\rm eff}<0$ and $\BDI=-1$. The topology changes although $m+M_3=m$, through the vector part $M_\perp$ alone.
\end{example}
\noindent The vector part $M_\perp$ arises only from correlations of $V_3$ with $(V_1,V_2)$. In-plane fluctuations $V_1\sigma_1+V_2\sigma_2$ are matrix-valued potentials, not gauge-covariant magnetic fields. The latter enter through (iii) of Proposition~\ref{prop:regDirac}.

\paragraph{Mass puddles and deformed cuts.} Consider the Dirac operator $H_{\circ}=D\cdot\sigma+m(x_2)\sigma_3$, with $m$ smooth, $m=m_0>0$ for $x_2\geq1$ and $m=-m_0$ for $x_2\le-1$. The bulk operators $D\cdot\sigma\pm m_0\sigma_3$ are gapped on $(-m_0,m_0)$ and the interface $x_2=0$ carries an asymmetric transport with $\ind T(H_{\circ})=\frac12(\sgn(-m_0)-\sgn m_0)=-1$ by Lemma~\ref{lem:BDIc}. We consider fluctuations with {\em puddles}, bounded disjoint open sets $\cP_n$ in the lower half-plane $\{x_2<-1\}$ in which the mass takes the value $m_0$ of the upper phase:
\[
  H=H_{\circ}+V,\qquad V=2m_0\,\one_{\cP}\,\sigma_3,\qquad \cP=\cup_n\cP_n .
\]
\begin{corollary}\label{cor:puddles}
Let $\supp\varphi'\subset(-m_0,m_0)$. There is $\dd_0>0$, depending only on $m_0$ and $\varphi$, such that for every cut $(\cL,\cR)$ avoiding $\cP$ at distance $\dd\geq\dd_0$, the operator $\tilde T(H)$ is Fredholm and $\ind\tilde T(H)=-1$.
\end{corollary}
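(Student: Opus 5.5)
The plan is to reduce Corollary~\ref{cor:puddles} to the deterministic robustness result, Theorem~\ref{thm:robust} of section~\ref{sec:pseudo}. That theorem gives stability of the index along a deformed cut under bounded, possibly large, perturbations supported at distance $\dd$ from the cut, provided the unperturbed operator has gapped bulks and Combes--Thomas decay.

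We take the unperturbed operator to be $H_\circ=D\cdot\sigma+m(x_2)\sigma_3$, whose two bulks $D\cdot\sigma\pm m_0\sigma_3$ are gapped on $(-m_0,m_0)\supset\supp\varphi'$. The perturbation is $V=2m_0\one_\cP\sigma_3$. It is bounded by $2m_0$, independently of the number, size and shape of the puddles, and since the cut avoids $\cP$ at distance $\dd$, it is supported at distance at least $\dd$ from the cut. The constant $\dd_0$ will be the one produced by Theorem~\ref{thm:robust} for $H_\circ$, $\varphi$ and $\|V\|_\infty=2m_0$, so it depends only on $m_0$ and $\varphi$.

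The key steps, in order, are as follows.

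\textbf{Step 1: straight cut.} Proposition~\ref{thm:QB} and Lemma~\ref{lem:BDIc} give that $T(H_\circ)$ is Fredholm with index $-1$.

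\textbf{Step 2: deformed cut, no puddles.} We show $\ind\tilde T(H_\circ)=\ind T(H_\circ)$ for any cut. The point is that $[U(H_\circ),\Pi]$ and $[U(H_\circ),\tilde\Pi]$ are both localized near $x_2=0$, because away from the interface $H_\circ$ is a bulk operator and $S(H_\circ)$ decays there. Within the strip $|x_2|\le X$ the two projections differ by a bounded set, so the difference of the Toeplitz operators is compact. If Theorem~\ref{thm:robust} already includes this comparison (index along any cut equals the straight-cut index when $V=0$), I would simply cite it.

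\textbf{Step 3: add the puddles.} We apply the robustness statement to $H=H_\circ+V$. Heuristically, write $\tilde T(H)-\tilde T(H_\circ)=\tilde\Pi\bigl(S(H)-S(H_\circ)\bigr)\tilde\Pi$. Near the cut, $H$ agrees with $H_\circ$ on a $\dd$-neighborhood. Helffer--Sjöstrand and Combes--Thomas estimates then show that this difference is small in norm, of size $e^{-c\dd}$, on the region where $\tilde\Pi$ meets the cut. Away from the cut, the operator $\tilde\Pi U(H)\tilde\Pi$ behaves like $U(H)$ restricted to a region, and no index is created there. The index is then preserved by a norm-small perturbation after a compact one.

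The main obstacle is Step 3. The puddle boundaries carry states in $(-m_0,m_0)$, so $H$ has no gap, and Combes--Thomas decay is not available for $H$ itself. The way around this is that Theorem~\ref{thm:robust} requires the decay only for the unperturbed $H_\circ$, together with boundedness of $V$ and separation $\dd$. So the work is verifying the hypotheses for $H_\circ$ with constants uniform in $\cP$, which holds because they involve only $H_\circ$, $\varphi$ and $\|V\|_\infty$. If the theorem instead requires a gap for the perturbed bulks away from the puddles, note that the lower bulk $D\cdot\sigma-m_0\sigma_3$ is unchanged outside $\cP$, so the hypothesis holds there.
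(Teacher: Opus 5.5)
Your plan matches the paper's proof: $H_\circ=D\cdot\sigma+m(x_2)\sigma_3$ is a reference operator, with [R1] from Lemma~\ref{lem:CT}(i) with $a=\frac12$, uniformly in the potential, and [R2] from Lemma~\ref{lem:fred}; Theorem~\ref{thm:robust} is then applied with $K_V=\max(2m_0,1)$ (the theorem requires $K_V\ge1$) and $\dd_0$ chosen so that $C_1K_V^3(\dd_0/2)^{-1}<\frac1{16}$, and your Step~2 is already Step~1 of that theorem's proof. Your Step~3 heuristic is inaccurate: $\tilde\Pi\bigl(S(H)-S(H_\circ)\bigr)\tilde\Pi$ is not small in norm when puddles lie in $\cR$, only the commutator $[S(H)-S(H_\circ),\tilde\Pi]$ is, and only polynomially, $(a\dd)^{-n}$, not $e^{-c\dd}$ (this is why the theorem factors $\tilde T(H)=\tilde T(H_\circ)\cJ+\dots$ with $\ind\cJ=0$), but since you cite the theorem rather than reprove it, this does not affect the argument.
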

\begin{proof}
The operator $H_{\circ}$ is a reference operator in the sense of section~\ref{sec:pseudo} (conditions [R1]--[R2] before Theorem~\ref{thm:robust}): [R1] holds with $a=\frac12$ by the Combes--Thomas estimate of Lemma \ref{lem:CT}(i), and [R2] holds by Lemma \ref{lem:fred}. The perturbation $V$ is bounded by $K_V=\max(2m_0,1)$ and supported in $\cP$, at distance at least $\dd$ from the other side of the cut. Choose $\dd_0$ with $C_1K_V^3(\dd_0/2)^{-1}<\frac1{16}$. Theorem \ref{thm:robust} gives $\ind\tilde T(H)=\ind T(H_{\circ})=-1$.
\end{proof}
\noindent The puddles may be arbitrarily large and arbitrarily close to the interface away from the cut, and they may fill the bulk gap with spectrum: the boundary of a puddle is a bounded interface loop between two phases and carries states circulating along it, and large puddles carry such states at energies arbitrarily close to $0$. The straight cut may, however, fail. Let $E_{\frac12}$ be such that $\varphi(E_{\frac12})=\frac12$, and consider puddles centered on the line $x_1=0$ whose radius is chosen so that one of their states $\psi$ has energy $E_{\frac12}$ and is the only puddle state with energy in $\supp\varphi'$. By symmetry $\|\Pi\psi\|^2\approx\frac12$ and, heuristically, $U(H)\approx I-2\psi\otimes\psi^*$ near the puddle, so that $\Pi U(H)\Pi(\Pi\psi)\approx(1-2\|\Pi\psi\|^2)\Pi\psi\approx0$. Infinitely many such puddles along the straight cut would provide an approximate singular sequence for $T(H)$, which is then not expected to be Fredholm. The deformed cut avoids exactly this obstruction.

\paragraph{Pseudo-gaps for alloys with small couplings.} Consider the setting of block (vi) of section \ref{sec:setting}, with a law of couplings whose support contains small values, as for couplings uniformly distributed on an interval containing $0$.
\begin{corollary}\label{cor:pseudoB}
Assume [BG] and $\supp\mu_0\supset[0,\kappa_0]$, and that $P+\kappa_0^2\mathbb K(0)$ is gapped at $0$ with $\BDI(P+\kappa_0^2\mathbb K(0),P)\not=0$. Then, for every $g'>0$ and $\eps$ small, $\spec(P+\eps^{-\frac12}W_\omega(x/\eps))\cap(-g',g')\not=\emptyset$ a.s.
\end{corollary}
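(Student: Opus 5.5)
The plan is to apply Theorem~\ref{thm:pseudogap}(ii) and then (i) to the bulk alloy field $W=W_\omega$, which is $\Z^2$-stationary, ergodic, of finite range (since $w\in C_c^\infty$ and the couplings are i.i.d.), and bounded in $C^{0,\alpha_0}$. We need a continuous path of mean-zero periodic patterns $W_t$, each charged by the law, whose endpoints are gapped homogenized phases with different invariants. The natural candidate is the constant-coupling path $W_t=t\kappa_0\sum_j w(\cdot-j)$, $t\in[0,1]$. Every coupling lies in $[0,\kappa_0]\subset\supp\mu_0$, so for any $L,\delta>0$ the event $\sup_{|y|\le L}|W_\omega-W_t|<\delta$ only constrains finitely many couplings, each to a neighborhood of $t\kappa_0$ of positive $\mu_0$-measure. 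By independence the probability is positive, and the law charges $W_t$. The mean of $W_t$ over $\T^2$ is $t\kappa_0\int w$, which vanishes by the moment condition \eqref{eq:moments}. Continuity of $t\mapsto W_t$ in $L^\infty$ is clear.

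Next I would identify the endpoint corrections. The structure factor of a constant configuration $\omega_j\equiv c$ is $c^2\delta_0$, so Lemma~\ref{lem:Kset} gives $M(W_t)=t^2\kappa_0^2\mathbb K(0)$. Alternatively, this follows directly from \eqref{eq:M}: the Fourier coefficients of the periodization are $\hat w(q)$ for $q\in\Z^2$. At $t=0$ we get $M=0$, and $P$ is gapped by [BG]. At $t=1$ we get $P+\kappa_0^2\mathbb K(0)$, which is gapped at $0$ by hypothesis. Moreover $\BDI(P+M(W_1),P+M(W_0))=\BDI(P+\kappa_0^2\mathbb K(0),P)\ne0$. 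Theorem~\ref{thm:pseudogap}(ii) then yields some $t_*$ with $0\in\spec(P+M(W_{t_*}))$, and $W_{t_*}$ is charged by the law. This is the gap-closing argument: if every $P+M(W_t)$ were gapped, norm-resolvent continuity of $t\mapsto P+M(W_t)$ together with Proposition~\ref{thm:QB} would keep the invariant constant.

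Finally, Theorem~\ref{thm:pseudogap}(i) with $k=1$ gives, a.s., $\dist(0,\spec H^\eps_B)\le C\eps^{1/2}$. Here $H^\eps_B=P+\eps^{-1/2}W_\omega(x/\eps)$. Choosing $\eps$ with $C\eps^{1/2}<g'$ gives a point of the spectrum in $(-g',g')$ a.s. Note that [BG] serves only to guarantee that $P$ is gapped, and that $(-g,g)$ is then a pseudo-gap relative to $P+M_{\rm Born}$.

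The main obstacle is a regularity mismatch. Theorem~\ref{thm:pseudogap}(i),(ii) is stated for patterns in $C^{0,\alpha_0}(\T^2)$ and mean-zero $W_*$, which is satisfied here, but the homogenization rate behind (i) requires the regularity of Theorem~\ref{thm:hom} for $k=1$, which is fine for smooth $w$. The delicate point is checking that $M(W_t)$ computed via \eqref{eq:M} agrees with $t^2\kappa_0^2\mathbb K(0)$, including the exclusion of $q=0$. This holds because $\hat w(0)=0$ by the moments, so the $\ell=0$ term of \eqref{eq:Kq} at $q=0$ is absent in both expressions.
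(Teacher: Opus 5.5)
Your proposal is correct and matches the paper's proof. It uses the constant-coupling path $\omega_j\equiv t\kappa_0$, which the law charges since $[0,\kappa_0]\subset\supp\mu_0$, with structure factor $t^2\kappa_0^2\delta_0$ so that $M(W_t)=t^2\kappa_0^2\mathbb K(0)$, and then applies Theorem~\ref{thm:pseudogap}(ii) followed by (i) with $k=1$.

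One minor correction: the exclusion of the $q+\ell=0$ term is built into the definitions \eqref{eq:M} and \eqref{eq:Kq}, so it does not rely on $\hat w(0)=0$. That fact is what you need instead for $W_t$ to have mean zero.
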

\noindent The spectral gap $(-g,g)$ of $P+M_{\rm Born}$ is thus a pseudo-gap of the bulk operator relative to $P+M_{\rm Born}$, while the index of Theorem \ref{thm:mainB} is defined, and non-trivial when $\BDI(P+M_{\rm Born},P)\ne0$. The two invariants $\BDI(P+\kappa_0^2\mathbb K(0),P)$ and $\BDI(P+M_{\rm Born},P)$ coincide when $\kappa_0^2\mathbb K(0)$ and $M_{\rm Born}$ can be joined by a path of matrices $M$ with $P+M$ gapped at $0$, e.g., under the graded criterion stated after Theorem \ref{thm:mainB}, applied along the segment between them.
\begin{proof}
Let $W_t$, $t\in[0,1]$, be the constant-coupling alloys with $\omega_j\equiv t\kappa_0$. They are mean-zero periodic patterns since $\int w=0$, and the law charges all of them since $[0,\kappa_0]\subset\supp\mu_0$. The structure factor of $W_t$ is $t^2\kappa_0^2\delta_0$, so that $M(W_t)=t^2\kappa_0^2\mathbb K(0)$. Hence $P+M(W_0)=P$ is gapped at $0$ by [BG], and $P+M(W_1)=P+\kappa_0^2\mathbb K(0)$ is gapped at $0$ with $\BDI(P+M(W_1),P)\not=0$ by assumption. The path $t\mapsto W_t$ is charged by the law, and Theorem \ref{thm:pseudogap}(i)--(ii) with $k=1$ gives $\dist(0,\spec(P+\eps^{-\frac12}W_\omega(x/\eps)))\le C\eps^{\frac12}$ a.s.
\end{proof}
\noindent The rare regions are here puddles of the unperturbed phase: regions where all couplings are small, so that the local effective matrix nearly vanishes. A unit box meets a mesoscopic region where the local effective matrix deviates from $M_{\rm Born}$ with probability at most $C\eps^{-3/2}\exp(-c\eps^{-1/2})$ (Lemma~\ref{lem:mesobad}), and puddles large enough to carry spectrum near $0$ are much rarer. Admissible cuts therefore exist a.s. (Lemma~\ref{lem:cuts}).

\paragraph{The BHZ model with on-site disorder.} The topological Anderson insulator of \cite{GWATB09,LCJS09} is modeled by the BHZ operator $P=\xi\cdot\sigma+(m+b|\xi|^2)\sigma_3+c|\xi|^2$ with uniform on-site disorder. In the setting of block (vii), we define the renormalized mass and energy shift
\[
  \bar m=m-\frac{\ell s_0b}{2\pi(b^2-c^2)},\qquad e_0=\frac{\ell s_0c}{2\pi(b^2-c^2)} .
\]
\begin{corollary}\label{cor:BHZ}
Let $|c|<|b|$, $m\not=0$ and $\bar m\not=0$, and assume that $P$ and $P+M_{\rm crit}$ are gapped on $(-g,g)$. Then the conclusions of Theorem \ref{thm:mainA} hold with $\ind\tilde T(H^\eps)=\frac12(\sgn\bar m-\sgn m)$. If moreover $\supp\mu_0\supset[0,\kappa_0]$ or $[-\kappa_0,0]$ for some $\kappa_0>0$, as for uniform on-site disorder, then $(-g,g)$ is a pseudo-gap relative to $P+M_{\rm crit}$.
\end{corollary}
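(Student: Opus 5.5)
The plan is to apply Theorem~\ref{thm:mainA} directly to the BHZ symbol, evaluate the bulk-difference invariant through the Dirac-type degree formula, and obtain the pseudo-gap statement from Theorem~\ref{thm:pseudogap}(iii).

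\emph{Step 1: hypotheses of Theorem~\ref{thm:mainA}.} Write $P=\xi\cdot\sigma+(m+b|\xi|^2)\sigma_3+c|\xi|^2$ in the form [C1]:
\[
q(\xi)=|\xi|^2,\qquad A=b\sigma_3+c,\qquad P_1(\xi)=\xi\cdot\sigma,\qquad P_0=m\sigma_3 .
\]
The eigenvalues of $A$ are $c\pm b$, so $A$ is invertible because $|b|\ne|c|$. When $|c|<|b|$ these eigenvalues have opposite signs, so $A$ is indefinite. Conditions [C2]--[C3] are assumptions on the disorder and are taken as given. Since $c_q=(2\pi)^{-2}\cdot 2\pi=1/(2\pi)$, the formula stated in block (vii) gives
\[
M_{\rm crit}=\frac{\ell s_0}{2\pi}\,\frac{c\,\sigma_0-b\,\sigma_3}{b^2-c^2}=e_0\,\sigma_0+(\bar m-m)\,\sigma_3 .
\]
It follows that $P+M_{\rm crit}=\xi\cdot\sigma+(\bar m+b|\xi|^2)\sigma_3+(c|\xi|^2+e_0)$. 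The gap hypotheses on $P$ and $P+M_{\rm crit}$ are assumed, so Theorem~\ref{thm:mainA} applies and yields $\ind\tilde T(H^\eps)=\BDI(P+M_{\rm crit},P)$.

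\emph{Step 2: computing the invariant.} For $h=d(\xi)\cdot\sigma+e(\xi)$ with $d=(\xi_1,\xi_2,\mu+b|\xi|^2)$ and $e$ scalar, a scalar shift does not change $\Pi_h$ as long as the gap at $0$ persists. The scalar part $c|\xi|^2$ is dominated by $|d|\sim|b||\xi|^2$ at infinity because $|c|<|b|$. The Chern integrand therefore depends only on $d/|d|$, provided the negative projector is the Dirac one. For this we need that no band crosses $0$ in a way that changes the rank, which is exactly the gap hypothesis together with the large-$\xi$ band structure. This is the point to check most carefully, and I would handle it by deforming $e$ to $0$. The deformation is justified by the norm-resolvent invariance in Proposition~\ref{thm:QB}, applied to the path $c\mapsto tc$, $e_0\mapsto te_0$, with the gap preserved along the path. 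If needed, one argues instead directly via the degree formula \eqref{eq:degree}. With $e=0$, $(d_1,d_2)$ vanishes only at $\xi=0$ with Jacobian $1$, and $\sgn d_3(\infty)=\sgn b$. As in the proof of Proposition~\ref{prop:regIndex}, this gives $c(h)=\frac12(\sgn\mu-\sgn b)$. Taking the difference of the values for $\mu=\bar m$ and $\mu=m$ gives $\frac12(\sgn\bar m-\sgn m)$.

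\emph{Step 3: the pseudo-gap.} Since $A$ is indefinite, $P_2=|\xi|^2A$ is indefinite. Under [C3], $\int w\ne0$. Theorem~\ref{thm:pseudogap}(iii) with $a_\eps=\lambda_\eps\eps^{-1}\to\infty$ then shows that $\Sigma_\eps$ meets every interval $(-g',g')$ for small $\eps$. Meanwhile $P+M_{\rm crit}$ is gapped on $(-g,g)$, so Definition~\ref{def:pseudo} holds, as already remarked after Theorem~\ref{thm:mainA}.

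The main obstacle is justifying in Step 2 that the scalar terms $c|\xi|^2+e_0$ do not change the invariant. This is where $|c|<|b|$ is used, and I would make the homotopy argument precise there.
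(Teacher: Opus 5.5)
Your proposal is correct and follows the paper's proof. The ingredients are the same: Theorem~\ref{thm:mainA} with [C1] for $A=b\sigma_3+c$, then the degree formula \eqref{eq:degree} for each bulk, then Theorem~\ref{thm:pseudogap}(iii) with $a_\eps=\lambda_\eps\eps^{-1}$ for the pseudo-gap.

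The step you flag as the main obstacle needs no homotopy. Since $|c|<|b|$, the eigenvalues $e(\xi)\pm|d(\xi)|$ of each bulk symbol tend to $\pm\infty$. The gap hypothesis and connectedness of $\R^2$ then force $e-|d|<0<e+|d|$, that is $|e|<|d|$, for every $\xi$. Lemma~\ref{lem:BDIc}(ii) and \eqref{eq:degree} then apply directly, independently of the scalar part.

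This same inequality is what keeps your path $t\mapsto te$ gapped, so the deformation becomes redundant. Using the inequality directly also avoids deforming the principal symbol through $c\mapsto tc$, which Proposition~\ref{thm:QB}, stated for fixed $P$, does not literally cover.
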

\begin{proof}
The operator $P+M_{\rm crit}$ is the BHZ operator with $m$ replaced by $\bar m$ and an energy shift $e_0\sigma_0$. Both bulk symbols are of the form $h(\xi)=d(\xi)\cdot\sigma+f(\xi)\sigma_0$ with $d=(\xi_1,\xi_2,m'+b|\xi|^2)$ and $m'\in\{m,\bar m\}$. Since $|c|<|b|$, the two eigenvalues $f\pm|d|$ tend to $\pm\infty$, and by the gap assumption exactly one of them is negative for every $\xi$. Since $(d_1,d_2)=(\xi_1,\xi_2)$ vanishes only at $0$, where $d_3=m'$, and $\sgn d_3(\infty)=\sgn b$, \eqref{eq:degree} gives $c(h)=\frac12(\sgn m'-\sgn b)$ \cite{bernevig2013topological}. Hence $\BDI(P+M_{\rm crit},P)=\frac12(\sgn\bar m-\sgn m)$. Since $A=b\sigma_3+c$ is indefinite and $\int w\ne0$ by [C3], the last statement follows from Theorem \ref{thm:pseudogap}(iii) with $a_\eps=\lambda_\eps\eps^{-1}$.
\end{proof}
\noindent A trivial phase becomes topological when the renormalization inverts the mass, $\sgn\bar m\not=\sgn m$. The two phase boundaries are the mass inversion $\bar m=0$ and the closing of the common gap, which for the homogenized operator is a band-edge crossing. These are the curves A and B of \cite[Fig.~1]{GWATB09}, now with an index defined on $\Rm^2$ despite the pseudo-gap.

\noindent For $c=0$ and $b=-\beta$, $P$ is the regularized Dirac operator with scalar disorder, the setting of \cite{BalDang24}, and Corollary~\ref{cor:BHZ} applies with $e_0=0$ and $\bar m=m+\ell s_0/(2\pi\beta)$, both bulks being gapped when $m\bar m\ne0$. The index $\frac12(\sgn\bar m-\sgn m)$ is non-trivial exactly when $m\beta<0$ and $\ell s_0>2\pi|m\beta|$: scalar fluctuations generate a topological phase only when the mass and the regularization have opposite signs and the disorder is strong enough. This is the behavior predicted for electric fluctuations in \cite{BalDang24}, now established in the critical regime and in the presence of the pseudo-gap.

\noindent More generally, scalar disorder at criticality shifts the zeroth-order part of $P$ by $M_{\rm crit}=-\ell s_0c_qA^{-1}$, a direction fixed by the principal symbol alone, and the topology changes when $\BDI(P+M_{\rm crit},P)\not=0$. This applies beyond Dirac-type models. Disordered $p$-wave superconductors are treated in section~\ref{S-sec:Sexamples} of the supplementary material.

\section{Norm-resolvent homogenization}\label{sec:hom}

This section proves Theorem~\ref{thm:hom}. We use the notation \eqref{eq:Heps}--\eqref{eq:M}. Self-adjointness of $H^\eps$ on $H^k$ follows from the Kato--Rellich theorem \cite[Thm.~V.4.3]{Kato95}, \cite[Thm.~X.12]{ReedSimon2}, since $\mathcal V^\eps$ has order $<k$ and bounded coefficients. The only stability input is the obvious bound
\begin{equation}\label{eq:trivial}
\|(H^\eps-z)^{-1}\|\le|\Im z|^{-1},
\end{equation}
valid for any self-adjoint operator. Everything else is a two-scale construction. The corrector hierarchy is the classical multiscale expansion of periodic homogenization \cite[Ch.~1]{BLP78}, \cite[Ch.~1]{JKO94}. Its version for potentials of size $\eps^{-1}$ appears in \cite[Ch.~1, \S12]{BLP78} and \cite{ACPSV04}. To obtain \emph{operator-norm} estimates without extra regularity, we build the correctors on a smoothed version of the homogenized solution. This is a variant of the shift and Steklov-smoothing method of \cite{ZhikovPastukhova05}. We build an approximate solution $\psi_{\rm app}\in H^k$ such that, for $\psi^0=(H^0-z)^{-1}f$,
\[
\|(H^\eps-z)\psi_{\rm app}-f\|\le C\eps^{r_k}(1+|z|)\|\psi^0\|_{H^k}\qquad\text{and}\qquad\|\psi_{\rm app}-\psi^0\|\le C\eps^{r_k}\|\psi^0\|_{H^k}.
\]
These bounds, \eqref{eq:trivial} and the elliptic estimate $\|\psi^0\|_{H^k}\le C(1+|z|)(1+|\Im z|^{-1})\|f\|$ give the bound of Theorem~\ref{thm:hom} with $(1+|z|)^2$ in place of $(1+|\Re z|)^2$. The elliptic estimate holds because $H^0$ is a bounded perturbation of $P$, and $\|u\|_{H^k}\le C(\|P(D)u\|+\|u\|)$ follows from [P2] and Plancherel. This suffices for $|\Im z|\le1$, where $1+|z|\le2(1+|\Re z|)$. For $|\Im z|>1$, let $z_0=\Re z+i\,\sgn(\Im z)$. Then
\begin{equation}\label{eq:shiftz}
R^\eps(z)-R^0(z)=\bigl(1+(z-z_0)R^\eps(z)\bigr)\bigl(R^\eps(z_0)-R^0(z_0)\bigr)\bigl(1+(z-z_0)R^0(z)\bigr),
\end{equation}
where the outer factors have norm $\le1$, since $|\lambda-z_0|\le|\lambda-z|$ for $\lambda\in\R$, and the bound at $z_0$, where $1+|z_0|\le2(1+|\Re z|)$, gives the bound at $z$. Nothing depends on the dimension $d$.

\begin{lemma}[Cell problems]\label{lem:cell}
For every mean-zero $e\in C^{0,\alpha_0}(\T^d;\C^{N\times N})$, the equation $P_k(D_y)\phi=e$ has a unique mean-zero periodic solution, and $\phi\in C^{k,\alpha_0}$.
\end{lemma}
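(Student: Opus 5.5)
The plan is to solve the cell problem by Fourier series and then upgrade the regularity with a Schauder estimate on the torus. Write $e=\sum_{q\ne0}\hat e_q e^{2\pi iq\cdot y}$; there is no $q=0$ term since $\avg e=0$. The symbol of $P_k(D_y)$ on the mode $e^{2\pi iq\cdot y}$ is $P_k(2\pi q)$. By [P2] this matrix is invertible for $q\neq 0$, with $|P_k(2\pi q)^{-1}|\le c_P^{-1}(2\pi|q|)^{-k}$. We therefore set
\[
\phi=\sum_{q\ne0}P_k(2\pi q)^{-1}\hat e_q\,e^{2\pi iq\cdot y}=-\sum_{q\neq0}G(q)\hat e_q e^{2\pi i q\cdot y}.
\]
Since $e\in L^2(\T^d)$, this series converges in $H^k(\T^d)$, it has mean zero, and it solves $P_k(D_y)\phi=e$ in the distributional sense. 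Uniqueness among mean-zero distributions follows because a solution of the homogeneous equation has $P_k(2\pi q)\hat\phi_q=0$, hence $\hat\phi_q=0$ for all $q\ne0$. If $e$ is matrix valued, we apply the argument column by column (or with the operator acting on the left).

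The substantive step is the Hölder regularity $\phi\in C^{k,\alpha_0}$. The plan is to treat $P_k(D)$ as a constant-coefficient elliptic system, elliptic in the sense of Petrovskii (indeed of Agmon--Douglis--Nirenberg) by [P2], and to invoke interior Schauder estimates. Lift $\phi$ and $e$ to periodic functions on $\R^d$. We already know $\phi\in H^k_{\rm loc}$. For constant-coefficient elliptic systems, the Schauder theory (ADN, or a direct argument with the fundamental solution) gives
\[
\|\phi\|_{C^{k,\alpha_0}(B_1)}\le C\bigl(\|e\|_{C^{0,\alpha_0}(B_2)}+\|\phi\|_{L^2(B_2)}\bigr),
\]
and periodicity turns this local estimate into a global one. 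An alternative self-contained route is a Fourier-multiplier argument: for each $|\beta|=k$, the map $e\mapsto D^\beta\phi$ is the periodic multiplier $m_\beta(q)=(2\pi q)^\beta P_k(2\pi q)^{-1}$. This multiplier is the restriction to $2\pi\Z^d\setminus0$ of a function smooth on $\R^d\setminus0$ and homogeneous of degree $0$. Transference of Hörmander--Mikhlin multipliers to the torus, together with the standard fact that degree-zero homogeneous smooth symbols are bounded on $C^{0,\alpha}$ modulo constants (Calderón--Zygmund on Hölder/Zygmund spaces, via a Littlewood--Paley characterization of $C^{0,\alpha}$ with $0<\alpha<1$), gives $D^\beta\phi\in C^{0,\alpha_0}$. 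Lower derivatives of $\phi$ are controlled since $\phi\in H^k$ and by interpolation.

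The main obstacle I expect is exactly this Hölder bound for the degree-zero multiplier on the torus. Boundedness of such multipliers fails on $L^\infty$, so one must really use $0<\alpha_0<1$. I would first try the Littlewood--Paley route: on each dyadic block $|q|\sim2^j$ the multiplier acts by convolution with a kernel of uniformly bounded $L^1$ norm, obtained by scaling the smooth homogeneous symbol cut off to an annulus. This yields $\|\Delta_j D^\beta\phi\|_\infty\le C\|\Delta_j e\|_\infty\le C2^{-j\alpha_0}\|e\|_{C^{0,\alpha_0}}$, which characterizes $C^{0,\alpha_0}$. If that becomes technical, I would fall back on citing the ADN Schauder estimates directly.
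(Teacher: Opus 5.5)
Your proposal is correct and follows essentially the paper's route: solve by Fourier series, then get Hölder regularity because $P_k(\xi)^{-1}$, cut off near $0$ (which does not change its values on $2\pi(\Z^d\setminus0)$), is a Fourier multiplier of order $-k$ on Hölder--Zygmund spaces via Littlewood--Paley blocks, with $C^{\alpha_0}_*=C^{0,\alpha_0}$ because $\alpha_0\notin\Z$. The paper applies the order $-k$ multiplier directly to the periodic function $e$, viewed as a bounded function on $\R^d$, mapping $C^{\alpha_0}_*$ into $C^{k+\alpha_0}_*$; this also covers the lower-order derivatives that you handle separately, and for those your vague ``interpolation'' step can be replaced by noting that each $D^j\phi$ is periodic with mean zero, so it is bounded by the sup norm of the next derivative.
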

\begin{proof}
In Fourier, $\hat\phi_q=P_k(2\pi q)^{-1}\hat e_q$ for $q\ne0$. By [P2], $P_k(\xi)^{-1}$ is smooth and homogeneous of degree $-k$ on $\R^d\setminus0$. Multiplied by a smooth cutoff vanishing near $0$, which does not change its values on $2\pi(\Z^d\setminus0)$, it is a Fourier multiplier of order $-k$, which maps the H\"older--Zygmund space $C^{\alpha_0}_*$ into $C^{k+\alpha_0}_*$ \cite[Ch.~2]{BCD11}. Periodic functions are bounded functions on $\R^d$, to which this applies. Since $\alpha_0\notin\Z$, these spaces are $C^{0,\alpha_0}$ and $C^{k,\alpha_0}$.
\end{proof}

Let $S_\eps$ denote the Fourier multiplier by $\mathbf 1_{\{|\xi|\le1/\eps\}}$. By Plancherel, for $u\in H^k$, $t\ge0$ and $\eps\le1$,
\begin{equation}\label{eq:smooth}
\|(I-S_\eps)u\|_{H^s}\le\eps^{k-s}\|u\|_{H^k}\quad(0\le s\le k),\qquad\|S_\eps u\|_{H^{t}}\le C_t\eps^{-(t-k)_+}\|u\|_{H^k}.
\end{equation}
Smoothing at the oscillation scale is standard in operator estimates \cite{ZhikovPastukhova05,BirmanSuslina06,KLS12,Shen18}. Here it is a sharp Fourier truncation, which suffices in $L^2$.

For periodic matrix-valued $b$ and vector-valued $h$, the Leibniz rule gives
\begin{equation}\label{eq:leib}
P(D)[b^\eps h]=\sum_{|\alpha|\le k}\ \sum_{\nu\le\alpha}\tbinom\alpha\nu\,\eps^{-|\nu|}\bigl(A_\alpha D_y^\nu b\bigr)^{\eps}D^{\alpha-\nu}h .
\end{equation}
Here $\nu$ is a multi-index, and the matrices $A_\alpha$ act on the left of $D_y^\nu b$. The term with $|\nu|=k$ is $\eps^{-k}(P_k(D_y)b)^\eps h$. Every term with $|\nu|\ge1$ has a coefficient that is a $y$-derivative of a periodic function, and hence has mean zero.

\begin{proof}[Proof of Theorem~\ref{thm:hom} when $k=1$]
Let $k=1$, so $P=\gamma(D)+\Gamma$ with $\gamma(\xi)=\sum_j\xi_j\gamma_j$, and write $L=\gamma(D_y)$ on mean-zero periodic functions. Also, \eqref{eq:Heps} reads $\mathcal V^\eps=\eps^{-1/2}\rho W(x/\eps)+\mathcal R^\eps$, with $W$ Hermitian and $\|\mathcal R^\eps\|\le C\eps^{1/2}$. The remainder adds at most $C\eps^{1/2}\|\psi_{\rm app}\|$ to the residual below, so we may assume $\mathcal R^\eps=0$. The correctors are those of the formal two-scale expansion in powers of $\eps^{1/2}$. Let $\chi=-L^{-1}W$ and $\chi_2=-L^{-1}(W\chi-M)$. Both are in $C^{1,\alpha_0}$ by Lemma~\ref{lem:cell}. By \eqref{eq:M}, $\avg{W\chi}=M$, so $\chi_2$ is well defined. Set
\[
\psi_{\rm app}=\psi^0+\eps^{1/2}\rho\,\chi^\eps\psi^0+\eps\rho^2\chi_2^\eps\psi^0\in H^1 .
\]
Using $D_j[\chi^\eps]=\eps^{-1}(D_{y_j}\chi)^\eps$, $L\chi=-W$ and $L\chi_2=-(W\chi-M)$, a direct computation gives
\begin{align*}
(H^\eps-z)\psi_{\rm app}-f&=\eps^{1/2}\Big[\gamma_j\chi^\eps D_j(\rho\psi^0)+(\Gamma-z)\rho\chi^\eps\psi^0+\rho^3(W\chi_2)^\eps\psi^0\Big]\\
&\quad+\eps\Big[\gamma_j\chi_2^\eps D_j(\rho^2\psi^0)+(\Gamma-z)\rho^2\chi_2^\eps\psi^0\Big].
\end{align*}
Here the $O(\eps^{-1/2})$ terms cancel because $L\chi=-W$. The $O(1)$ terms combine to $\rho^2(W\chi-M+L\chi_2)^\eps\psi^0=0$, using $(H^0-z)\psi^0=f$.

The right-hand side is bounded by $C\eps^{1/2}(1+|z|)\|\psi^0\|_{H^1}$, with $C$ depending on $\|\chi\|_{C^1}$, $\|\chi_2\|_{C^1}$, $\|W\|_\infty$ and $\|\rho\|_{W^{1,\infty}}$. Moreover $\|\psi_{\rm app}-\psi^0\|\le C\eps^{1/2}\|\psi^0\|$. Since $\|\psi^0\|_{H^1}\le C(1+|z|)(1+|\Im z|^{-1})\|f\|$ holds from $\|\nabla u\|\le c_P^{-1}\|\gamma(D)u\|$, the claim follows as seen above Lemma \ref{lem:cell}.
\end{proof}

\begin{proof}[Proof of Theorem~\ref{thm:hom} for $k\geq2$]
No smoothing is needed at $k=1$ because the second corrector costs only one derivative of $\psi^0$. For $k\ge2$, the correctors must be built on a smoothed solution. We build $\psi_{\rm app}=\psi^0+\sum_{\mathfrak m}c_{\mathfrak m}$, where each corrector $c_{\mathfrak m}$ cancels one term of order $\le\frac12$ produced by $H^\eps-z$, and show that the procedure stops after finitely many steps, leaving only terms of order $\ge1$.

Let $v=S_\eps\psi^0$. A \emph{monomial} is an expression $\mathfrak m=\eps^ab^\eps Qv$, with $a\in\tfrac12\Z$, $b\in C^{0,\alpha_0}(\T^d)$ matrix-valued, and $Q$ a differential operator of order $q$ with $C_b^\infty$ coefficients. Its \emph{order} is $a$ and its \emph{weight} is $w(\mathfrak m)=a-q$. The order counts powers of $\eps$. The weight discounts the derivatives falling on $v$: beyond the $k$ derivatives controlled by $\psi^0\in H^k$, each costs a factor $\eps^{-1}$ thanks to the smoothing $S_\eps$. By \eqref{eq:smooth},
\begin{equation}\label{eq:mono}
\|\mathfrak m\|\le C\|b\|_\infty\,\eps^{\min(a,\,w+k)}\|\psi^0\|_{H^k}.
\end{equation}

\emph{Action of $H^\eps-z$.} Let $c=\eps^ab^\eps Qv$ with $b\in C^{k,\alpha_0}$. By \eqref{eq:leib}, $P(D)c$ consists of the \emph{leading} term $\eps^{a-k}(P_k(D_y)b)^\eps Qv$ and of \emph{$P$-terms} of order $a-|\nu|\ge a-k+1$ and weight $\ge w(c)-k$, with mean-zero coefficients when $|\nu|\ge1$. By the analogue of \eqref{eq:leib} for $\mathrm{Sym}[\rho\,c_\alpha^\eps D^\alpha]$, the remaining terms of $(H^\eps-z)c$, called \emph{potential terms}, have order $\ge a-k/2$ and weight $\ge w(c)-k/2$. Their part of order exactly $a-k/2$ is $\eps^{a-k/2}\rho\,(\mathcal T_yb)^\eps Qv$, with $\mathcal T_y:=\sum_\alpha\mathrm{Sym}[c_\alpha D_y^\alpha]$, where all derivatives of the first sum in \eqref{eq:Heps} fall on the fast variable. All other potential terms have order $\ge a+1-k/2$: this holds for the first sum when fewer derivatives fall on the fast variable and, since $k\ge2$, for the second sum, for $\mathcal R^\eps$ and for $-zc$.
\emph{The hierarchy.} Write $\mathcal V^\eps\psi^0=\mathcal V^\eps v+\mathcal V^\eps(I-S_\eps)\psi^0$. By \eqref{eq:smooth}, $\|\mathcal V^\eps(I-S_\eps)\psi^0\|\le C\eps^{k/2}\|\psi^0\|_{H^k}$, since a term of amplitude $\eps^{|\alpha|-k/2}$ with $|\mu|$ fast derivatives is bounded by $\eps^{|\alpha|-k/2-|\mu|}\|(I-S_\eps)\psi^0\|_{H^{|\alpha|-|\mu|}}$ (the second sum and $\mathcal R^\eps$ give $\eps^k$ and $\eps^{k/2+1}$). Applying the previous paragraph to $v$ (with $b=1$), the term of order $-k/2$ in $\mathcal V^\eps v$ is $\eps^{-k/2}\rho W^\eps v$, with $W=\mathcal T_y[I]$ given by \eqref{eq:Weff}. The first corrector is
\[
c_1=\eps^{k/2}\chi^\eps\rho v,\qquad\chi=-P_k(D_y)^{-1}W,
\]
of order and weight $k/2$. Its leading term cancels $\eps^{-k/2}\rho W^\eps v$. Two terms of order $0$ have nonzero mean: the potential term $\rho^2(\mathcal T_y\chi)^\eps v$ of $c_1$, and the potential term $\rho^2\bigl(\sum_\alpha\mathrm{Sym}[d_\alpha D^\alpha_y][I]\bigr)^\eps v$ of $v$ from the second sum. Since $\mathcal T_y$ is symmetric on $L^2(\T^d)$, $\avg{\mathcal T_y\chi}=\avg{(\mathcal T_y[I])^*\chi}=\avg{W^*\chi}$. Moreover $\avg{\mathrm{Sym}[d_\alpha D^\alpha_y][I]}=\avg{d_0^{\rm h}}$, since derivatives have mean zero. Hence the two mean terms sum to $\rho^2Mv$ with $M$ as in \eqref{eq:M}, and $(P-z)\psi^0+\rho^2Mv=f-\rho^2M(I-S_\eps)\psi^0$. Subtracting $\rho^2Mv$ leaves monomials of order $0$ with mean-zero coefficients.

Call a monomial \emph{to be corrected} if its order is $\le\frac12$ and it is not one of the terms just accounted for. Each monomial to be corrected, $\mathfrak m=\eps^ae^\eps Qv$, is cancelled by $c_{\mathfrak m}=-\eps^{a+k}(P_k(D_y)^{-1}e)^\eps Qv$, of order $a+k$ and weight $w(\mathfrak m)+k$. We verify in three steps that this procedure terminates with a small residual.

\emph{(a) Bookkeeping.} Monomials to be corrected are of three kinds: potential terms of $v$ other than $\eps^{-k/2}\rho W^\eps v$, of order $\ge1-k/2$ and weight $\ge-k/2$; the two order-$0$ terms above after subtraction of their means, of weight $\ge0$; and $P$-terms of correctors. By induction on the generation, all of them have order $\ge1-k/2$ and weight $\ge-k/2$, and all correctors have order and weight $\ge k/2$: the $P$-terms of a corrector of order $a_c\ge k/2$ and weight $\ge k/2$ have order $\ge a_c-k+1\ge1-k/2$ and weight $\ge-k/2$. The potential terms of correctors never need correction: those of $c_1$, of order $k/2$, have order $\ge1$ except $\rho^2(\mathcal T_y\chi)^\eps v$, already accounted for, and every other corrector has order $\ge1+k/2$, so that its potential terms have order $\ge1$. Finally, a corrector generated by a term of order $a$ has order $a+k$, and its $P$-terms have order $\ge a+1$. Orders of terms to be corrected thus increase by at least $1$ at each generation, starting from at least $1-k/2$ and remaining at most $\frac12$. The hierarchy therefore stops after at most $\lceil k/2\rceil$ generations of correctors beyond $c_1$, i.e., $1+\lceil k/2\rceil$ generations in total. The bound is attained in general, since the $P$-terms with $|\nu|=k-1$ raise the order by exactly $1$. For $k=2$, there is one generation beyond $c_1$, made of correctors of order $2$.

\emph{(b) Mean zero.} Every monomial to be corrected has a mean-zero coefficient in $C^{0,\alpha_0}$. A $P$-term with $\nu=0$ has the order $a_c\ge k/2\ge1$ of its corrector, so the $P$-terms to be corrected have $|\nu|\ge1$; potential terms of $v$ have coefficients $c_\alpha$, $f$, or their derivatives; and the order-$0$ terms have had their means removed. By Lemma~\ref{lem:cell}, all corrector coefficients lie in $C^{k,\alpha_0}$, and $\psi_{\rm app}\in H^k$.

\emph{(c) Residual.} The residual consists of $\mathcal V^\eps(I-S_\eps)\psi^0$, $-\rho^2M(I-S_\eps)\psi^0$, and monomials of order $a>\frac12$ and weight $\ge-k/2$. Orders lie in $\tfrac12\Z$, so $a\ge1$. Moreover $w+k\ge k/2\ge1$. By \eqref{eq:mono} and \eqref{eq:smooth}, the residual is $\le C\eps(1+|z|)\|\psi^0\|_{H^k}$. Since correctors have order and weight $\ge k/2$, $\|\psi_{\rm app}-\psi^0\|\le C\eps^{k/2}\|\psi^0\|_{H^k}\le C\eps\|\psi^0\|_{H^k}$. These are the bounds required before Lemma~\ref{lem:cell}.
\end{proof}

\section{Attainable corrections and Peierls substitution}\label{sec:univ}

This section proves Theorem \ref{thm:univ} and analyzes the form of the fluctuation obtained by Peierls substitution, which is used in the proof of Theorem \ref{thm:magno}.

\paragraph{Attainable sets.}

Recall $G(q)=-P_k(2\pi q)^{-1}$ from \eqref{eq:M}. For a Hermitian potential $W$, grouping $q$ and $-q$ in \eqref{eq:M}, and using $G(-q)=(-1)^kG(q)$ and $G(q)=(2\pi|q|)^{-k}G_1(\hat q)$, where $\hat q=q/|q|$ and $G_1(e):=-P_k(e)^{-1}$ for $|e|=1$, we obtain
\begin{equation}\label{eq:pair}
M=\sum_{q\in\Lambda_+}(2\pi|q|)^{-k}\begin{cases}\hat W_q^*G_1(\hat q)\hat W_q-\hat W_qG_1(\hat q)\hat W_q^*,&k\text{ odd},\\[2pt] \hat W_q^*G_1(\hat q)\hat W_q+\hat W_qG_1(\hat q)\hat W_q^*,&k\text{ even},\end{cases}
\end{equation}
where $\Lambda_+$ contains one element of each pair $\{q,-q\}$. For a Hermitian $G$, write $G^\perp=\{H\in\Herm(N):\tr(GH)=0\}$.

\begin{lemma}[Null cone]\label{lem:null}
Let $G\in\Herm(N)$ be invertible and indefinite, with $N\ge2$. Then $\mathrm{span}_\R\{vv^*:v^*Gv=0\}=G^\perp$.
\end{lemma}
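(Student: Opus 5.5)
The plan is to prove the two inclusions separately, the first being immediate. If $v^*Gv=0$, then $\tr(G\,vv^*)=v^*Gv=0$, so $vv^*\in G^\perp$. Since $G^\perp$ is a real subspace, the real span $S:=\mathrm{span}_\R\{vv^*:v^*Gv=0\}$ is contained in $G^\perp$.

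For the reverse inclusion I will use duality with respect to the real inner product $\tr(AB)$ on $\Herm(N)$. Since $G^\perp$ is the orthogonal complement of $\R G$, it suffices to show $S^\perp\subset\R G$. So let $H\in\Herm(N)$ satisfy $v^*Hv=0$ for every $v$ with $v^*Gv=0$. I want to conclude $H=\lambda G$.

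\textbf{Step 1: reduction to a diagonal $G$.} Conjugating by an invertible $T$ (Sylvester's law of inertia) maps null vectors of $G$ to null vectors of $T^*GT$ and replaces $H$ by $T^*HT$. We may therefore assume $G=\mathrm{diag}(I_p,-I_{n})$ with $p,n\ge1$, where $p+n=N$; both blocks are nonempty because $G$ is indefinite and invertible.

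\textbf{Step 2: testing on explicit null vectors.} Take $v=a e_i+b f_j$ with $|a|=|b|$, where $e_i$ lies in the positive block and $f_j$ in the negative block. Then $v$ is null, and the condition reads
\[
H_{ii}+H_{jj}+2\Re(\bar a b\,H_{ij})=0 \quad\text{for all phases of } \bar a b .
\]
This gives $H_{ij}=0$ across the two blocks and $H_{ii}=-H_{jj}$. Since this holds for every such pair $(i,j)$, the diagonal of $H$ equals $\lambda$ on the positive block and $-\lambda$ on the negative block.

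\textbf{Step 3: off-diagonal entries within a block.} Within the positive block (when $p\ge2$), use $v=(\alpha e_i+\beta e_{i'})+ c f_j$ with $|c|^2=|\alpha|^2+|\beta|^2$. Expanding $v^*Hv$ with the facts already obtained leaves
\[
2\Re(\bar\alpha\beta H_{ii'})=0 \quad\text{for all } \alpha,\beta ,
\]
hence $H_{ii'}=0$. The negative block is treated symmetrically. Therefore $H=\lambda G$.

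The main obstacle is only the bookkeeping in Step 3: one must check that the cross terms with $f_j$ vanish, which follows from the block-orthogonality established in Step 2. Alternatively, one can argue that the null cone $\{v^*Gv=0\}$ is an irreducible real quadric when $p,n\ge1$ and $N\ge2$ (for $N=2$ it is $|v_1|=|v_2|$, still of real dimension $3$ and not contained in any other real quadric). On that view, a real quadratic form $v^*Hv$ vanishing on it must be a multiple of $v^*Gv$. The explicit-vector argument avoids appealing to this algebraic-geometry fact.
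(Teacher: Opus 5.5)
Your proof is correct and is essentially the paper's argument: reduce to $G=\mathrm{diag}(I_a,-I_b)$ by Sylvester's law, test $v^*Hv=0$ on the null vectors $e_i+e^{i\phi}f_j$ to kill the cross-block entries and force $H_{e_ie_i}=-H_{f_jf_j}$, then test on $(e_i+e^{i\phi}e_{i'})/\sqrt2+f_j$ to kill the off-diagonal entries within each block. Your explicit duality step, that $S^\perp\subset\R G$ gives $G^\perp\subset S$, is the reduction the paper states in one line.
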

\begin{proof}
The inclusion $\subseteq$ is clear, since $\tr(Gvv^*)=v^*Gv$. For the converse, it suffices to show that if $H$ is Hermitian with $v^*Hv=0$ on the null cone of $G$, then $H\in\R G$. By Sylvester's law of inertia we may assume $G=\mathrm{diag}(I_a,-I_b)$ with $a,b\ge1$. Denote the basis vectors by $e_i$ ($i\le a$) and $f_j$ ($j\le b$).
For all $\phi$, the vectors $e_i+e^{i\phi}f_j$ are null. This gives $H_{e_if_j}=0$ and $H_{e_ie_i}=-H_{f_jf_j}=:c$ for all $i,j$. For $i\ne i'$, the vectors $(e_i+e^{i\phi}e_{i'})/\sqrt2+f_j$ are also null, so that $\Re(e^{i\phi}H_{e_ie_{i'}})=0$ for all $\phi$, hence $H_{e_ie_{i'}}=0$. The negative block is treated similarly, and $H=cG$.
\end{proof}

\begin{lemma}[Single-direction cones]\label{lem:cones}
Let $G$ be invertible and indefinite, and let $\mathcal S^\mp_G=\{X^*GX\mp XGX^*:X\in\C^{N\times N}\}$. Then: (i) the finite sums of elements of $\mathcal S^-_G$ form $G^\perp$; and (ii) the finite sums of elements of $\mathcal S^+_G$ form $\Herm(N)$. If instead $G$ is definite, the finite sums of elements of $\mathcal S^+_G$ form the cone of semidefinite matrices with the sign of $G$.
\end{lemma}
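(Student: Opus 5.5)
The proof uses the rank-one matrices $X=uv^*$, together with Lemma~\ref{lem:null} and the cyclicity of the trace. For $X=uv^*$ we have
\[
X^*GX\mp XGX^*=(u^*Gu)\,vv^*\mp(v^*Gv)\,uu^* .
\]
So choosing $u$ or $v$ on the null cone of $G$ isolates a single rank-one term whose coefficient we control. In every case the finite sums form a convex cone, since $\mathcal S^\mp_G$ is invariant under $X\mapsto tX$ with $t>0$. It therefore suffices to show that the cone contains the required generators and is contained in the stated set.

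\emph{(i)} For the inclusion in $G^\perp$, cyclicity gives $\tr(GX^*GX)=\tr(GXGX^*)$, so $\tr\bigl(G(X^*GX-XGX^*)\bigr)=0$. For the converse, take $u$ with $u^*Gu=0$ and $v$ with $v^*Gv=\mp1$. Such $v$ exist because $G$ is indefinite. The formula above then produces $\pm uu^*$. The cone therefore contains the real span of $\{uu^*:u^*Gu=0\}$, which equals $G^\perp$ by Lemma~\ref{lem:null}.

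\emph{(ii)} The same choice with the sign $+$ shows that the cone of sums of elements of $\mathcal S^+_G$ contains $G^\perp$. It remains to find elements on both sides of $G^\perp$, measured by the pairing $H\mapsto\tr(GH)$. The choice $X=I$ gives $2G$, with $\tr(G\cdot2G)=2\tr G^2>0$. For the other side, take $u$ with $u^*Gu=1$ and $v$ with $v^*Gv=-1$. Then $H=(u^*Gu)vv^*+(v^*Gv)uu^*$ satisfies $\tr(GH)=2(u^*Gu)(v^*Gv)=-2<0$. A convex cone containing the hyperplane $G^\perp$ and vectors on both sides of it is all of $\Herm(N)$.

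\emph{Definite case.} Say $G>0$; the case $G<0$ follows by the sign change $G\mapsto-G$. Then $X^*GX$ and $XGX^*$ are both positive semidefinite, so their sum is too. Conversely, $X=uu^*$ gives $2(u^*Gu)\,uu^*$, a positive multiple of $uu^*$. Every positive semidefinite matrix is a finite sum of such rank-one terms by the spectral theorem.

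The only step requiring any care is the converse in (i) and (ii), namely that the null directions span $G^\perp$. Lemma~\ref{lem:null} already provides this, so no real obstacle remains.
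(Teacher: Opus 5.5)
Your proof is correct and follows essentially the same route as the paper. Like the paper, you use rank-one choices $X=uv^*$ with a null vector, combined with Lemma~\ref{lem:null}, to obtain $G^\perp$; the elements $2G$ (from $X=I$) and $vv^*-uu^*$ on either side of $G^\perp$ for (ii); and rank-one $X$ plus the spectral theorem in the definite case. The only difference is cosmetic: in (i) you vary the sign of $v^*Gv$, whereas the paper swaps $X=uv^*$ and $X=vu^*$.
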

\begin{proof}
(i) Since $\tr(GX^*GX)=\tr(GXGX^*)$, we have $\mathcal S^-_G\subset G^\perp$. Let $v$ be null and $u^*Gu=1$. Then $X=uv^*$ gives $vv^*$, and $X=vu^*$ gives $-vv^*$. We conclude (i) with Lemma~\ref{lem:null}.

(ii) For $u$ null, $X=uv^*$ gives $(v^*Gv)uu^*$, and $v^*Gv$ may take either sign. By Lemma~\ref{lem:null}, finite sums therefore contain $G^\perp$. The choice $X=I$ gives $M_+=2G$, with $\tr(GM_+)>0$. With $u_\pm^*Gu_\pm=\pm1$, the choice $X=u_-u_+^*$ gives $M_-$ with $\tr(GM_-)=-2$. Any $H$ can be written $\lambda M_\pm+(H-\lambda M_\pm)$ with $\lambda\ge0$ and the second term in $G^\perp$.

Finally, if $G<0$ (the case $G>0$ is symmetric), every element of $\mathcal S^+_G$ is $\le0$, and $X=tvv^*$ with $|v|=1$ gives $2t^2(v^*Gv)vv^*$, an arbitrary nonpositive multiple of $vv^*$.
\end{proof}

\begin{proof}[Proof of Theorem~\ref{thm:univ}]
\emph{(iii) and (i) for odd $k$.} The modes $q=jq_0$, $j\ge1$, all have $\hat q=\hat q_0$, and the positive factors $(2\pi j|q_0|)^{-k}$ can be absorbed into $\hat W_q$. Lemma~\ref{lem:cones}(i) gives (iii): every element of $G_1(\hat q_0)^\perp$, a space of dimension $N^2-1$, is a sum of at most $N^2-1$ elements of $\mathcal S^-_{G_1(\hat q_0)}$, each produced by one mode $q=jq_0$. For (i), take two lattice directions with non-proportional $G_1(\hat q)$. These exist as otherwise continuity would give $P_k(e)=p(e)A$ with $p$ scalar, odd and non-vanishing on the connected sphere $S^{d-1}$, which is impossible. Then $G_1(\hat q_1)^\perp+G_1(\hat q_2)^\perp=\Herm(N)$.

\emph{(i) for even $k$.} This follows from Lemma~\ref{lem:cones}(ii) with a single direction.

\emph{(ii).} This follows from the last statement of Lemma~\ref{lem:cones}.
\end{proof}

\begin{lemma}[Peierls substitution]\label{lem:peierls}
Let $k\in\{1,2\}$, and write $P_2(\xi)=\sum_{i,j}a_{ij}\xi_i\xi_j$ with $a_{ij}=a_{ji}$ if $k=2$. Then \eqref{eq:peierls}, with $D^a=D+a$, $a=\eps^{k/2-1}\rho A^\eps$, and $P_2(D^a):=\sum a_{ij}D^a_iD^a_j$, is of the form \eqref{eq:Heps}: for $k=1$ and $P=\gamma(D)+\Gamma$, with $c_0=V+A\cdot\gamma$ and no other terms; for $k=2$, with $c_0=V$, $c_{e_i}=2\sum_ja_{ij}A_j$, $d_0=\sum_{i,j}a_{ij}A_iA_j$, and the remainder term $\rho P_1(A^\eps)$ contributed by the first-order part $P_1(D^a)$. In particular, for $k=2$,
\[
W=V-i\sum_{i,j}a_{ij}\,\partial_{y_i}A_j,\qquad M=\Bigl\langle\sum_{i,j}a_{ij}A_iA_j\Bigr\rangle-\avg{W^*P_2(D_y)^{-1}W}.
\]
\end{lemma}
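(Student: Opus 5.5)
The plan is a direct expansion of the Peierls-substituted operator, sorting the terms by order and $\eps$-power and matching them with the three sums of \eqref{eq:Heps}; then $W$ and $M$ follow from \eqref{eq:Weff}--\eqref{eq:M}. Throughout, $A^\eps(x)=A(x/\eps)$ and $a=\eps^{k/2-1}\rho A^\eps$ is real. Derivatives of the slow envelope $\rho$ produce lower powers of $\eps$, and these go into $\mathcal R^\eps$.

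\emph{Case $k=1$.} Here $P(D^a)=\gamma(D)+\gamma(a)+\Gamma$ and $a=\eps^{-1/2}\rho A^\eps$. So $P(D^a)+\eps^{-1/2}\rho V^\eps=P(D)+\eps^{-1/2}\rho\,(V+A\cdot\gamma)^\eps$. This is a Hermitian zeroth-order term, so $\mathrm{Sym}[\rho c_0^\eps]=\rho c_0^\eps$ with $c_0=V+A\cdot\gamma$ and $\avg{c_0}=0$. There are no other terms.

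\emph{Case $k=2$.} Here $a=\rho A^\eps$, and I write $P(D)=P_2(D)+P_1(D)+P_0$ with $P_1(D)=\sum_jc_jD_j$.
\begin{itemize}
\item The first-order part gives $P_1(D^a)=P_1(D)+\rho P_1(A^\eps)$. The term $\rho P_1(A^\eps)$ has the form $\eps^{|\alpha|-k/2+1}\mathrm{Sym}[\rho f^\eps]$ with $\alpha=0$ and $\avg f=0$, so it belongs to $\mathcal R^\eps$.
\item For the principal part, I expand $D_i^aD_j^a=D_iD_j+D_i a_j+a_iD_j+a_ia_j$. After summing against the symmetric $a_{ij}$, the mixed terms become $\sum_{ij}a_{ij}(D_ia_j+a_jD_i)=\sum_i\bigl(D_i b_i+b_i D_i\bigr)$ with $b_i=\sum_ja_{ij}a_j$. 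Since each $a_{ij}$ is Hermitian and $a_j$ is real, $b_i^*=b_i$, so this equals $2\,\mathrm{Sym}[b_iD_i]$.
\item With $b_i=\rho(\sum_ja_{ij}A_j)^\eps$, this is $\mathrm{Sym}[\rho\,c_{e_i}^\eps D_i]$ with $c_{e_i}=2\sum_ja_{ij}A_j$. This matches the amplitude $\eps^{|\alpha|-1}=\eps^0$ of \eqref{eq:Heps}. Moreover $\avg{c_{e_i}}=0$ and $c_{e_i}\in C^{1,\alpha_0}$.
\item The quadratic term $\sum a_{ij}a_ia_j=\rho^2(\sum a_{ij}A_iA_j)^\eps$ is the diamagnetic $d_0$ term with amplitude $\eps^0$.
\item The potential $\eps^{-1}\rho V^\eps$ gives $c_0=V$.
\end{itemize}

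\emph{Formulas for $W$ and $M$.} I apply \eqref{eq:Weff}: $W=\tfrac12(c_0+c_0^*)+\tfrac12\sum_iD_{y_i}(c_{e_i}^*)=V+\sum_{ij}a_{ij}D_{y_i}A_j$. Since $D_{y_i}=-i\partial_{y_i}$, this gives the stated expression. Also $d_0^{\rm h}=d_0$, since $d_0$ is Hermitian ($a_{ij}=a_{ji}$, real $A$). Then \eqref{eq:M} gives $M$.

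The only delicate step is the symmetrization: writing $D_ia_j$ as an operator rather than as $a_jD_i$. I will therefore check that no extra zeroth-order term $-i\partial_i a_j$ is hidden. Using $\mathrm{Sym}$ absorbs it exactly, because $D_i b_i=\bigl(\mathrm{Sym}[b_iD_i]\bigr)$-type pairing already contains it. This explains why it reappears in $W$ through the $D_y^\alpha(c_\alpha^*)$ term rather than in $c_0$.
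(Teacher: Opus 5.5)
Your proposal is correct and follows essentially the same route as the paper. You expand $D^a_iD^a_j$ and use that the symmetric coefficients $a_{ij}$ are Hermitian (by [P1] and polarization) and $A$ is real, so the cross terms become $\sum_i\mathrm{Sym}\bigl[2\rho\sum_ja_{ij}A^\eps_jD_i\bigr]$; you then treat $\rho P_1(A^\eps)$ as a mean-zero remainder of amplitude $\eps^0$ and read off $W$ from \eqref{eq:Weff}. Your opening remark about derivatives of $\rho$ going into $\mathcal R^\eps$ is never actually used, since the $\mathrm{Sym}[\rho\,c\,D^\alpha]$ form already contains them.
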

\begin{proof}
For $k=1$, $\gamma(D^a)=\gamma(D)+\eps^{-1/2}\rho A^\eps\cdot\gamma$. For $k=2$, expand $D^a_iD^a_j=D_iD_j+D_i\rho A^\eps_j+\rho A^\eps_iD_j+\rho^2A^\eps_iA^\eps_j$. Since $a_{ij}=a_{ji}$ is Hermitian and $A$ is real, $\sum a_{ij}(D_i\rho A_j^\eps+\rho A^\eps_iD_j)=\sum_i\mathrm{Sym}\bigl[2\rho\sum_ja_{ij}A^\eps_jD_i\bigr]$. The term $P_1(D^a)-P_1(D)=\rho P_1(A^\eps)$ has order $0$, amplitude $1=\eps^{0-1+1}$ and mean zero, so it is a remainder term. The formula for $W$ follows from \eqref{eq:Weff}.
\end{proof}

\begin{remark}\label{rem:scalar}
For purely scalar $W$, the correction $M$ vanishes for every odd-order operator, by \eqref{eq:pair}. More generally, it vanishes for odd $k$ whenever $W=f(y)B$ with $f$ real and scalar and $B$ a fixed Hermitian matrix, since then $\hat W_q=\hat f_qB$ and both terms in \eqref{eq:pair} equal $|\hat f_q|^2BG_1(\hat q)B$. This covers pure mass fluctuations $V_3\sigma_3$. When $M=0$, the amplitude $\eps^{-k/2}$ of \eqref{eq:Heps} produces no correction, and larger amplitudes become relevant. The cubic term in $W$ is of order one at amplitude $\eps^{-2k/3}$ (Remark~\ref{S-rem:cubic} of the supplementary material).
\end{remark}

\section{Topological consequences}\label{sec:topo}

This section proves Theorems \ref{thm:TAI}, \ref{thm:gapclose} and \ref{thm:magno}, and Corollary \ref{cor:compact}. We use the interface setting of block (iii) of section \ref{sec:setting}. The proofs rely on the Fredholm property of the Toeplitz operator \eqref{eq:T}, on its stability under norm-resolvent deformations, and on homogenization estimates of Theorem \ref{thm:hom}.

\paragraph{The Fredholm property.}

We use the Helffer--Sj\"ostrand formula \cite{HelfferSjostrand89}. For $f\in C_c^\infty(\R)$, let $\tilde f$ be an almost-analytic extension supported in $\{|\Im z|\le1\}$ with $|\dbar\tilde f(z)|\le C_n|\Im z|^n$. Then
\begin{equation}\label{eq:HS}
f(H)=\frac1\pi\int_\C\dbar\tilde f(z)\,(H-z)^{-1}\,d^2z .
\end{equation}

\begin{lemma}\label{lem:HSdecay}
Let $S\in C_c^\infty(\R)$ and $n\ge0$. If $F(z)$ satisfies $\|F(z)\|\le C|\Im z|^{-m}e^{-c|\Im z|\Lambda}$ for $\Lambda\ge0$, then
\[
\Big\|\int\dbar\tilde S(z)F(z)\,d^2z\Big\|\le C_{n,m}(1+\Lambda)^{-n}.
\]
\end{lemma}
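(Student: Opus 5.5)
The plan is to estimate the integrand pointwise and then integrate, splitting the $z$-plane according to whether $|\Im z|$ is small or large compared with $\Lambda^{-1}$. The only properties of $\tilde S$ needed are that it is supported in the compact set $K=\{\Re z\in\supp\tilde S,\ |\Im z|\le1\}$, of bounded measure, and that $|\dbar\tilde S(z)|\le C_N|\Im z|^N$ for every $N$. Since $\Lambda\ge0$ is arbitrary, the case $\Lambda\le1$ is trivial: choose $N\ge m$ and bound the integrand by $C_N|\Im z|^{N-m}\le C_N$ on $K$. This gives an $O(1)$ bound, which is $\le C(1+\Lambda)^{-n}$ for $\Lambda\le1$. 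Hence we may assume $\Lambda\ge1$.

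For $\Lambda\ge1$, we bound the norm of the integral by
\[
\int_K C_N|\Im z|^{N-m}e^{-c|\Im z|\Lambda}\,d^2z\le C\int_0^1 t^{N-m}e^{-ct\Lambda}\,dt .
\]
The change of variables $s=t\Lambda$ turns the right-hand side into
\[
\Lambda^{-(N-m+1)}\int_0^\infty s^{N-m}e^{-cs}\,ds=\Gamma(N-m+1)\,c^{-(N-m+1)}\,\Lambda^{-(N-m+1)} .
\]
Choosing $N=n+m$ then gives $C_{n,m}\Lambda^{-n-1}\le C_{n,m}(1+\Lambda)^{-n}$.

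Equivalently, one can split at $|\Im z|=\Lambda^{-1/2}$. On the thin strip $|\Im z|\le\Lambda^{-1/2}$, the almost-analytic vanishing of order $N$ beats $|\Im z|^{-m}$ and gives $\Lambda^{-(N-m)/2}$. On the remaining region, the exponential factor is $\le e^{-c\Lambda^{1/2}}$. Either route works, and I would use the direct scaling computation above.

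There is no real obstacle. The one point to handle is that the almost-analytic extension $\tilde S$ is fixed once and for all and satisfies the estimate $|\dbar\tilde S|\le C_N|\Im z|^N$ for every $N$ simultaneously. The standard construction, with a cutoff in $\Im z$ and Taylor sums at scale $\langle\Re z\rangle$, provides this; alternatively, one takes $\tilde S$ depending on $n$, which only affects the constant $C_{n,m}$. With that in hand, the constants depend only on $S$, $n$, $m$, and $c$ and the constant in the hypothesis on $F$, as the statement requires.
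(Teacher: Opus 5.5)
Your proof is correct and is essentially the paper's argument: the paper also takes $\tilde S$ with $|\dbar\tilde S|\le C|\Im z|^{n+m}$ and bounds the integral by $C\int_0^1 y^{n}e^{-cy\Lambda}\,dy\le C(1+\Lambda)^{-n-1}$, which your split at $\Lambda=1$ and scaling computation reproduce. Of your two options for $\tilde S$, the paper uses the second, letting $\tilde S$ depend on $n$; note that the finite Taylor-sum construction you cite for the first option only gives vanishing to one fixed order, and vanishing to all orders at once would need a Borel-type construction.
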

\begin{proof}
Take $\tilde S$ with $|\dbar\tilde S|\le C|\Im z|^{n+m}$, and use $\int_0^1y^{n}e^{-cy\Lambda}dy\le C(1+\Lambda)^{-n-1}$.
\end{proof}

\noindent The following comparison is used repeatedly to transfer gaps through resolvent estimates.
\begin{lemma}[Spectral comparison]\label{lem:speccomp}
Let $H_1,H_2$ be self-adjoint, $E\in\spec H_2$ and $\eta,d>0$. If $\|(H_1-z)^{-1}-(H_2-z)^{-1}\|<\eta^{-1}-(d^2+\eta^2)^{-1/2}$ at $z=E+i\eta$, then $\dist(E,\spec H_1)<d$.
\end{lemma}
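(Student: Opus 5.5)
We argue by contradiction: we assume $\dist(E,\spec H_1)\ge d$ and show that the resolvent difference at $z=E+i\eta$ is then at least $\eta^{-1}-(d^2+\eta^2)^{-1/2}$. The idea is to test both resolvents on a single approximate eigenvector of $H_2$ at energy $E$.

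\emph{Step 1 (test vector).} Since $E\in\spec H_2$, Weyl's criterion, or directly the spectral theorem, gives for every $\delta>0$ a unit vector $u$ in the range of the spectral projector $\one_{[E-\delta,E+\delta]}(H_2)$. By the spectral theorem, $(H_2-z)^{-1}u=\int(\lambda-z)^{-1}d\mu_u(\lambda)$, where $\mu_u$ is supported in $[E-\delta,E+\delta]$. For $\lambda$ in this interval, $|(\lambda-z)^{-1}-(i\eta)^{-1}^{-1}|$ is not the right expression; instead we use $\|(\lambda-z)^{-1}-(-i\eta)^{-1}\|\le C\delta\eta^{-2}$. Hence
\[
\|(H_2-z)^{-1}u-(-i\eta)^{-1}u\|\le C\delta\eta^{-2}.
\]
In particular, $\|(H_2-z)^{-1}u\|\ge\eta^{-1}-C\delta\eta^{-2}$.

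\emph{Step 2 (upper bound for $H_1$).} Under the contradiction hypothesis, every $\lambda\in\spec H_1$ satisfies $|\lambda-E|\ge d$. Therefore $|\lambda-z|=\bigl((\lambda-E)^2+\eta^2\bigr)^{1/2}\ge(d^2+\eta^2)^{1/2}$, and the spectral theorem gives
\[
\|(H_1-z)^{-1}\|\le(d^2+\eta^2)^{-1/2}.
\]

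\emph{Step 3 (conclusion).} Combining the two steps with the triangle inequality,
\[
\|(H_1-z)^{-1}-(H_2-z)^{-1}\|\ge\|(H_2-z)^{-1}u\|-\|(H_1-z)^{-1}u\|\ge\eta^{-1}-C\delta\eta^{-2}-(d^2+\eta^2)^{-1/2}.
\]
Letting $\delta\to0$ shows that the difference is at least $\eta^{-1}-(d^2+\eta^2)^{-1/2}$. This contradicts the hypothesis, which is a strict inequality. Hence $\dist(E,\spec H_1)<d$.

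The only point requiring care is the lower bound in Step 1, namely that $\|(H_2-z)^{-1}\|=\eta^{-1}$ is attained in the limit by vectors localized spectrally near $E$. This is immediate from the spectral theorem, and the remaining steps are direct consequences of it as well.
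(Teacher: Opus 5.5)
Your argument is correct and is essentially the paper's. Both assume $\dist(E,\spec H_1)\ge d$, bound $\|(H_1-z)^{-1}\|\le(d^2+\eta^2)^{-1/2}$, use $\|(H_2-z)^{-1}\|=\eta^{-1}$, and conclude by the reverse triangle inequality. The paper simply quotes $\|(H-z)^{-1}\|=\dist(z,\spec H)^{-1}$, whereas you re-derive the lower bound with a spectrally localized test vector. In that step $C=1$ works, since $|(\lambda-z)^{-1}-(-i\eta)^{-1}|=|\lambda-E|/(|\lambda-z|\,\eta)\le\delta\eta^{-2}$. You should delete the garbled sentence in Step 1 about the expression being ``not the right expression''.
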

\begin{proof}
For self-adjoint $H$, $\|(H-z)^{-1}\|=\dist(z,\spec H)^{-1}$. Thus $\|(H_2-z)^{-1}\|=\eta^{-1}$, while $\dist(E,\spec H_1)\ge d$ would give $\|(H_1-z)^{-1}\|\le(d^2+\eta^2)^{-1/2}$.
\end{proof}

\begin{lemma}[Fredholm property]\label{lem:fred}
Let $d=2$ and $P$ satisfy \textup{[P1]--[P2]}. Let $H=P(D)+\mathcal V$, where $\mathcal V$ is a symmetric differential operator of order at most $k-1$ with bounded coefficients. Assume that the coefficients of $\mathcal V$ coincide with those of $\mathcal V_+$ for $x_2\ge X_0$ and with those of $\mathcal V_-$ for $x_2\le-X_0$, where $\mathcal V_\pm$ are of the same type (possibly with $x$-dependent coefficients), and that $H_\pm=P+\mathcal V_\pm$ satisfy $\spec H_\pm\cap J=\emptyset$ for an open interval $J\supset\supp\varphi'$. Then:
\\[1mm] (i) $[S(H),\Pi]$ is compact, $\mathbf 1_{A_0}f(H)$ is compact for every bounded measurable set $A_0\subset\R^2$ and every $f\in C_c(\R)$, $\|\mathbf 1_{\{|x_2|\ge X\}}S(H)\|+\|S(H)\mathbf 1_{\{|x_2|\ge X\}}\|\to0$ as $X\to\infty$, and $T(H)$ is Fredholm with $\mathfrak q(T(H))$ unitary in the Calkin algebra, where $\mathfrak q$ denotes the quotient map from bounded operators onto the Calkin algebra of bounded modulo compact operators. Consequently, any bounded $T'$ with $\|T'-T(H)\|<1$ is Fredholm with the same index;
\\[1mm]
(ii) $\Ind T(H)$ does not depend on the choice of $\varphi$ with $\supp\varphi'\subset J$;
\\[1mm]
(iii) $\Ind T(H)$ is invariant under norm-resolvent continuous deformations of $H$ within this class with $J$ fixed, under conjugation by unitaries commuting with $\Pi$, and under $H\mapsto\lambda H$, $\lambda>0$, with $\varphi$ replaced by $\varphi(\cdot/\lambda)$.
\end{lemma}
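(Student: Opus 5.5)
The plan is the standard route of \cite{Bal2019,Bal2022}: get compactness of localized spectral functions from ellipticity, and decay in $x_2$ from the bulk gaps; the Fredholm property then follows from $U$ being unitary.

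\emph{Local compactness.} By [P2] and the Kato--Rellich argument of section~\ref{sec:hom}, $H$ is self-adjoint on $H^k$ and $(H-z)^{-1}$ maps $L^2$ boundedly into $H^k$. By Rellich, $\mathbf 1_{A_0}(H-z)^{-1}$ is then compact for bounded $A_0$. The Helffer--Sj\"ostrand formula \eqref{eq:HS} extends this to $\mathbf 1_{A_0}f(H)$ for $f\in C_c^\infty$, with a norm-convergent integral. Uniform approximation gives $f\in C_c$.

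\emph{Decay away from the interface.} Take $\chi_X$ smooth, equal to $1$ on $\{x_2\ge X\}$ and supported in $\{x_2\ge X/2\}$, with $X/2\ge X_0$. On its support $H=H_+$, so
\[
\chi_X(H-z)^{-1}-\chi_X(H_+-z)^{-1}=\chi_X(H-z)^{-1}\bigl(H_+-H\bigr)(H_+-z)^{-1}.
\]
Move $\chi_X$ through $(H-z)^{-1}$ by commuting: this produces $[H,\chi_X]$, which has order $k-1$, coefficients $O(X^{-1})$, and support in $\{X/2\le x_2\le X\}$. The remaining product contains $(1-\chi_{X'})(H_+-H)=0$ up to such commutator terms. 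Iterating this, or using a Combes--Thomas bound for $(H_+-z)^{-1}$, one gets
\[
\|\chi_X(S(H)-S(H_+))\|\le C_n(1+X)^{-1}
\]
after integrating against $\dbar\tilde S$ as in Lemma~\ref{lem:HSdecay}. Since $\spec H_+\cap J=\emptyset$ and $\supp S\subset J$, we have $S(H_+)=0$. Hence $\|\mathbf 1_{\{x_2\ge X\}}S(H)\|\to0$; the region $x_2\le -X$ is identical with $H_-$, and adjoints give the other side. I expect this step to be the main obstacle, because the bulk coefficients $\mathcal V_\pm$ are only assumed bounded and possibly $x$-dependent. The fix I would try first is to use only commutator estimates with slowly varying cutoffs of width $X$, which need no translation invariance.

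\emph{Compact commutator and Fredholmness.} Split $[S(H),\Pi]$ into $\mathbf 1_{\{|x_2|\le X\}}[S(H),\Pi]$ plus a remainder of norm $o(1)$ as $X\to\infty$. On the strip, write $[S(H),\Pi]=[S(H),\Pi-\Pi_s]+[S(H),\Pi_s]$ with a smooth switch $\Pi_s(x_1)$. The term $[S(H),\Pi_s]$, localized to $|x_1|\le R$, is compact by the local compactness step. For $|x_1|\ge R$, Combes--Thomas off-diagonal decay of $(H-z)^{-1}$ between the two half-lines, fed into Lemma~\ref{lem:HSdecay}, makes the remainder small in norm. A norm limit of compacts is compact. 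Then $T^*T-I$ and $TT^*-I$ equal $\Pi[U,\Pi]^*[U,\Pi]\Pi$-type expressions, which are compact since $U=I+S$ is unitary. So $\mathfrak q(T)$ is unitary and $T$ is Fredholm. Any bounded $T'$ with $\|T'-T\|<1$ satisfies $\|\mathfrak q(T')-\mathfrak q(T)\|<1$, so $\mathfrak q(T')$ is invertible with the same index by homotopy.

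\emph{(ii) and (iii).} For (ii), two switches $\varphi_0,\varphi_1$ with $\supp\varphi_i'\subset J$ are joined by the path $\varphi_t=(1-t)\varphi_0+t\varphi_1$. Each $S_t$ lies in $C_c^\infty(J)$, and $t\mapsto U_t(H)$ is norm continuous, so the index is constant. For (iii), norm-resolvent continuity gives norm continuity of $S(H_t)$ via \eqref{eq:HS}, and the index is locally constant by (i). If $V$ is unitary with $[V,\Pi]=0$, then $T(VHV^*)=VT(H)V^*$, which has the same index. Finally, $U_{\varphi(\cdot/\lambda)}(\lambda H)=U_\varphi(H)$.
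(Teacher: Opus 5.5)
Your proposal is correct, but it localizes in $x_2$ by a different route from the paper. The paper first proves a Combes--Thomas bound $\|\mathbf 1_A(H-z)^{-1}\mathbf 1_B\|\le 2|\Im z|^{-1}e^{-c|\Im z|\dist(A,B)/(1+|z|)}$. Through Helffer--Sj\"ostrand, this gives the off-diagonal decay of $S(H)$ in $x_1$. The paper then gets the decay in $x_2$ from $R-R_-=-R(\mathcal V-\mathcal V_-)R_-$ together with a weighted Combes--Thomas estimate for $H_-$ from $L^2$ to $H^k$, which yields $C_n(X-X_0)^{-n}$.

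You use a slowly varying cutoff instead, and your step closes after a single commutation. The identity you need is $\chi_X(H_+-H)=0$, not $(1-\chi_{X'})(H_+-H)=0$ as written. With it,
\[
\chi_X\bigl((H-z)^{-1}-(H_+-z)^{-1}\bigr)=(H-z)^{-1}[H,\chi_X](H-z)^{-1}(H_+-H)(H_+-z)^{-1}.
\]
The unweighted elliptic $L^2\to H^k$ bounds for $H$ and $H_+$ then give norm $O(X^{-1}|\Im z|^{-3})$ on $\supp\dbar\tilde S$, hence $\|\chi_X S(H)\|=O(X^{-1})$. This needs no iteration, no exponential weights and no translation invariance, so the obstacle you anticipated for $x$-dependent $\mathcal V_\pm$ does not arise.

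\textbf{Comparison.} Your route is more elementary, since only the unweighted elliptic estimate is used. The paper's route gives an arbitrary polynomial rate, and a single estimate serves both directions. Your proposal still invokes Combes--Thomas, without proof, for the decay across $x_1$ in the compactness step. The same one-commutator argument with a cutoff in $x_1$ would give $\|\mathbf 1_{\{x_1\le -Y\}}S(H)\mathbf 1_{\{x_1\ge0\}}\|=O(Y^{-1})$, which suffices, so you could dispense with exponential weights altogether.

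\textbf{Minor differences.}
\begin{itemize}
\item Your Rellich-plus-approximation argument for $\mathbf 1_{A_0}f(H)$ replaces the paper's factorization $\mathbf 1_{A_0}(H-i)^{-1}\cdot(H-i)f(H)$, which handles $f\in C_c$ directly.
\item Your detour through a smooth switch $\Pi_s$ works, but it is less direct than the paper's splitting of $(1-\Pi)S(H)\Pi$ into a bounded box, a far off-diagonal piece, and the region $|x_2|\ge X$.
\item In $T^*T-I=-\Pi[U,\Pi]^*[U,\Pi]\Pi$, compactness comes from $[U,\Pi]=[S(H),\Pi]$; unitarity of $U$ only supplies the identity.
\end{itemize}
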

Lemma~\ref{lem:fred} is a variant of the Fredholm statements of \cite{Bal2019,Bal2022,QuinnBal}, with kernel estimates as in \cite{GerminetKlein03}. Above, $\mathcal V$ could be, e.g., a potential $V\in L^\infty(\R^2;\Herm(N))$, or $\mathcal V=P(D+a)-P(D)+V$ with $a\in W^{1,\infty}$ real and $k\le2$.

\begin{proof}
\emph{Combes--Thomas \cite{CombesThomas73}.} Let $\phi$ be a smoothed distance function with bounded derivatives and $|\nabla\phi|\le1$. Then $e^{b\phi}He^{-b\phi}=H+bB_b$, where $B_b$ is a differential operator of order $k-1$ whose coefficients are bounded uniformly for $|b|\le1$. Since $\mathcal V$ is of order at most $k-1$, the Kato--Rellich theorem gives $\mathrm{dom}(H)=H^k$ and the elliptic estimate $\|u\|_{H^k}\le C(\|Hu\|+\|u\|)$, and we deduce $\|B_b(H-z)^{-1}\|\le C(1+|z|)|\Im z|^{-1}$. Hence, $\|\mathbf 1_A(H-z)^{-1}\mathbf 1_B\|\le2|\Im z|^{-1}e^{-c|\Im z|\dist(A,B)/(1+|z|)}$ choosing $b=c|\Im z|/(1+|z|)$, and likewise for $H_\pm$. On $\supp\dbar\tilde S$, $|z|$ is bounded, so the Helffer--Sj\"ostrand formula and Lemma~\ref{lem:HSdecay} give, for all measurable $A,B\subset\R^2$,
\begin{equation}\label{eq:Sdecay}
\|\mathbf 1_AS(H)\mathbf 1_B\|\le C_n\bigl(1+\dist(A,B)\bigr)^{-n}.
\end{equation}

\emph{Localization in $x_2$.} We have $R-R_-=-R(\mathcal V-\mathcal V_-)R_-$, where $\mathcal V-\mathcal V_-$ has coefficients supported in $\{x_2\ge-X_0\}$ and is bounded from $H^k$ to $L^2$. For a potential, the Combes--Thomas bound for $H_-$ suffices. In general, it is needed from $L^2$ to $H^k$. Since $e^{b\phi}(H_--z)^{-1}e^{-b\phi}=(H_-+bB_b-z)^{-1}$, the elliptic estimate for the conjugated operator gives $\|\chi_A(H_--z)^{-1}\mathbf 1_B\|_{L^2\to H^k}\le C(1+|z|)|\Im z|^{-1}e^{-c|\Im z|\dist(A,B)/(1+|z|)}$ for smooth cutoffs $\chi_A$ supported in $A$ with bounded derivatives, and we take $\chi_A=1$ on $\{x_2\ge-X_0-\frac12\}$, supported in $A=\{x_2\ge-X_0-1\}$, so that $(\mathcal V-\mathcal V_-)\chi_A=\mathcal V-\mathcal V_-$. Hence, by Lemma~\ref{lem:HSdecay}, as in the proof of \eqref{eq:Sdecay}, $\|(S(H)-S(H_-))\mathbf 1_{\{x_2\le-X\}}\|\le C_n(X-X_0)^{-n}$, and $S(H_-)=0$ since $\spec H_-\cap\supp S=\emptyset$. The upper half-plane is treated with $H_+$, and the row versions follow by taking adjoints.

\emph{Compactness and Fredholm property.} For a bounded measurable set $A_0$ and $f\in C_c(\R)$, $\mathbf 1_{A_0}f(H)=\mathbf 1_{A_0}(H-i)^{-1}\cdot(H-i)f(H)$ is compact, since $\mathbf 1_{A_0}(H-i)^{-1}$ is compact by ellipticity and $(H-i)f(H)$ is bounded. For $X,Y>0$, let $Q=(-Y,0)\times(-X,X)$. Then
\[
(1-\Pi)S(H)\Pi=\mathbf 1_QS(H)\Pi+\mathbf 1_{\{x_1\le-Y\}}S(H)\mathbf 1_{\{x_1\ge0\}}+\mathbf 1_{\{-Y<x_1<0,\,|x_2|\ge X\}}S(H)\Pi .
\]
The first term is compact, the second has norm $\le C_nY^{-n}$ by \eqref{eq:Sdecay}, and the third has norm $\le\|\mathbf 1_{\{|x_2|\ge X\}}S(H)\|\to0$ as $X\to\infty$ by the localization in $x_2$. Hence $(1-\Pi)S(H)\Pi$ is a norm limit of compact operators, and so is $\Pi S(H)(1-\Pi)$. Thus $[S(H),\Pi]=(1-\Pi)S(H)\Pi-\Pi S(H)(1-\Pi)$ is compact. Since $(1-\Pi)U\Pi=(1-\Pi)[S(H),\Pi]$ is compact, both $T^*T-1=-\Pi U^*(1-\Pi)U\Pi$ and $TT^*-1$ are compact, so that $\mathfrak q(T)$ is unitary in the Calkin algebra and $T$ is Fredholm by Atkinson's theorem \cite[Chap.~5]{Douglas98}. If $\|T'-T\|<1$, then $\|\mathfrak q(T')-\mathfrak q(T)\|<1$, so that $\mathfrak q(T'_s)$, $T'_s=T+s(T'-T)$, is invertible for $s\in[0,1]$. Hence $T'$ is Fredholm, and its index equals that of $T$, since Fredholm operators form an open set on which the index is locally constant \cite[Ch.~IV, \S5]{Kato95}. This proves (i).

\emph{(ii) and (iii):} Along a deformation of $\varphi$ or a norm-resolvent continuous deformation of $H$, $U(H)$ moves norm-continuously \cite[Thm.~VIII.20]{ReedSimon1}, so that $T(H)$ stays Fredholm by (i) and its index is constant. Conjugation and rescaling leave $T$ unchanged up to unitary equivalence.
\end{proof}

\begin{lemma}\label{lem:BDIc}
Let $P$ satisfy \textup{[P1]--[P2]} with $d=2$ and any $k\ge1$, and let $H=P(D)+V(x)$ be as in Proposition~\ref{thm:QB}. Then, with $c$ as in \eqref{eq:ch}:
\\[1mm]
(i) $\Ind T(H)=c(P+V_-)-c(P+V_+)=\BDI(P+V_-,P+V_+)$, and $c(P+\Gamma)$ is locally constant along gapped deformations of $\Gamma\in\Herm(N)$;
\\[1mm]
(ii) for a two-band symbol $h=d\cdot\sigma+f\sigma_0$ with $|f|<|d|$, $c(h)=\frac1{4\pi}\int n\cdot(\partial_1n\times\partial_2n)$ with $n=d/|d|$, which is the degree of $n$ when $n$ extends continuously to $\R^2\cup\{\infty\}$. If moreover $(d_1,d_2)$ vanishes at a single point $\xi_0$, with $\det\partial_\xi(d_1,d_2)(\xi_0)>0$, then
\begin{equation}\label{eq:degree}
c(h)=\tfrac12\bigl(\sgn d_3(\xi_0)-\sgn d_3(\infty)\bigr).
\end{equation}
For the Dirac operator, $c(m\sigma_3):=c(D\cdot\sigma+m\sigma_3)=\tfrac12\sgn m$.
\end{lemma}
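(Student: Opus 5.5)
The plan is to reduce (i) to the bulk-edge correspondence of \cite{Bal2022,QuinnBal}, quoted as Proposition~\ref{thm:QB}. Part (ii) is then a standard computation of a Chern integral for two-band symbols.

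\emph{Part (i).} First, $T(H)$ is Fredholm by Lemma~\ref{lem:fred}(i), with $\mathcal V=V$ and $\mathcal V_\pm=V_\pm$. Next, I would deform $V$ to a convenient profile. Consider the homotopy $V_s=\chi(x_2/s)V_-+(1-\chi(x_2/s))V_+$, together with the straight-line homotopy from $V$ to $V_1$. Both are norm-continuous in $L^\infty$, hence norm-resolvent continuous, and both leave the bulks unchanged. By Lemma~\ref{lem:fred}(iii), the index is therefore unchanged, and it suffices to treat one interface profile.

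For that profile, I would invoke the identification $\Ind T(H)=2\pi\sigma_I(H)=\BDI(P+V_-,P+V_+)$ of \cite{Bal2022,QuinnBal}. That reference computes the interface conductivity through a symbol expansion and a Fedosov-type formula, which is where the integral $c(h)=\frac1{2\pi}\int\Omega_h$ arises. The main obstacle is to check that the hypotheses of that reference cover arbitrary order $k$ and indefinite $P_k$. The key point is that $\Pi_h(\xi)$ converges at infinity to the negative spectral projector of $P_k(\xi)$, at rate $O(|\xi|^{-1})$, and this limit is the same for both bulks. Hence $\Omega_{h_-}-\Omega_{h_+}$ is integrable, and $c(P+V_-)-c(P+V_+)$ is finite even where the individual integrals only converge conditionally.

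Local constancy of $c(P+\Gamma)$ follows from two facts. First, along a gapped path $\Gamma_t$ the projector $\Pi_{h_t}$ depends smoothly on $t$. Second, $\partial_t\Omega$ is exact, equal to $d\,\tr(\Pi[\partial_t\Pi,d\Pi])$ up to a constant. The boundary term at infinity vanishes, since $\partial_t\Pi=O(|\xi|^{-k-1})$ decays faster than $d\Pi$ grows.

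\emph{Part (ii).} For $h=d\cdot\sigma+f\sigma_0$ with $|f|<|d|$, the negative eigenvalue is $f-|d|$ and the projector is $\Pi_h=\frac12(I-n\cdot\sigma)$. The Pauli identity $\tr(\sigma_a\sigma_b\sigma_c)=2i\epsilon_{abc}$ gives $i\tr(\Pi\,d\Pi\wedge d\Pi)=\frac12 n\cdot(\partial_1n\times\partial_2n)\,d\xi_1\wedge d\xi_2$, up to the orientation sign. This yields the stated $\frac1{4\pi}\int$, which is the pullback of the normalized area form of $S^2$. When $n$ extends continuously to infinity, this integral is the degree.

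To obtain \eqref{eq:degree}, I would count preimages of a regular value near a pole. Assume $n(\infty)$ is a pole $\pm e_3$; otherwise I would first perturb. The preimages of the pole opposite to $n(\infty)$ lie where $(d_1,d_2)=0$, that is, only at $\xi_0$. If $\sgn d_3(\xi_0)=\sgn d_3(\infty)$, the map $n$ misses the opposite pole, so the degree is $0$. Otherwise there is exactly one preimage, counted with the sign of $\det\partial_\xi(d_1,d_2)(\xi_0)>0$ and oriented according to which pole is involved. This gives $\frac12(\sgn d_3(\xi_0)-\sgn d_3(\infty))$.

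If $n(\infty)$ is not a pole, then $P_k$ has a limit direction with $d_3$ not dominant. In that case I would use the same formula with a half-integer value, since $c$ alone is not an integer. For example, the Dirac operator $d=(\xi,m)$ has $n\to$ the equator. A direct integration in polar coordinates then gives $\frac12\sgn m$, and this computation also covers the final claim.
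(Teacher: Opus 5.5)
Your outline has the same skeleton as the paper's proof: deform to an $x_2$-profile with the bulks fixed via Lemma~\ref{lem:fred}(iii), rely on \cite{Bal2022,QuinnBal}, use $\Pi_h=\frac12(1-n\cdot\sigma)$, count preimages of the poles, and integrate directly for Dirac. However, part (i) is not actually proved.

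Quoting Proposition~\ref{thm:QB} does not help, because the paper establishes that proposition through this very lemma. The point you single out, integrability of $\Omega_{h_-}-\Omega_{h_+}$, only shows the right-hand side is finite, not that it equals the index. In fact each $\Omega_{h_\pm}$ is absolutely integrable: $\partial_r\Pi_h=O(|\xi|^{-2})$ and $\partial_\theta\Pi_h=O(1)$ give $\Omega_h=O(\avg\xi^{-3})$, so nothing converges only conditionally.

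What is missing is the passage from the formula available in \cite[Thm.~2.3]{QuinnBal} to the Berry curvature. That formula applies under its hypothesis (H1), which the paper checks for the order-$k$ symbol of $H$. It gives $2\pi\sigma_I(H)$ as the limit as $t\to\infty$ of $\frac{i}{8\pi^2}\int_{(-t,t)^2}\int_\R(F_+-F_-)\,d\omega\,d\xi$, where $F_\pm=\tr\bigl([G_\pm\partial_{\xi_1}h_\pm,G_\pm\partial_{\xi_2}h_\pm]\,G_\pm\bigr)$ and $G_\pm=(i\omega-h_\pm)^{-1}$. Two steps then produce $c$:
\begin{enumerate}
\item The identity $G_+-G_-=G_+(\Gamma_+-\Gamma_-)G_-$ gives $|F_+-F_-|\le C(|\omega|+\avg\xi^k)^{-4}\avg\xi^{2k-2}$. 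This is integrable on $\R^3$, so the truncation can be removed and the $\omega$-integral done first.
\item Expand $G=\sum_j(i\omega-\lambda_j)^{-1}\mathsf E_j$ and integrate in $\omega$ by residues. Only pairs of eigenvalues on opposite sides of $0$ contribute, and pointwise in $\xi$ one gets $\int_\R F\,d\omega=4\pi i\cdot i\,\tr\bigl(\Pi_h[\partial_{\xi_1}\Pi_h,\partial_{\xi_2}\Pi_h]\bigr)$.
\end{enumerate}
This residue computation, not a Fedosov-type formula by itself, is where $\Omega_h$ and the normalization in \eqref{eq:ch} come from. Combined with $\Ind T(H)=2\pi\sigma_I(H)$ from \cite{Bal2019,Bal2022}, it yields $c(h_-)-c(h_+)$.

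The rest of your proposal is sound, with two corrections.
\begin{itemize}
\item Your local-constancy argument is a valid alternative route. The transgression identity $\partial_t\tr(\Pi\,d\Pi\wedge d\Pi)=d\,\tr(\Pi[\partial_t\Pi,d\Pi])$ holds. However, $\partial_t\Pi$ is only $O(\avg\xi^{-k})$, not $O(|\xi|^{-k-1})$: $\partial_t\Gamma$ is $O(1)$ and the gap between negative and positive eigenvalues grows like $|\xi|^k$. The boundary term on $|\xi|=R$ is still $O(R^{-k})\to0$. Your argument is independent of (i); the paper instead gets local constancy in one line from (i), since $c(P+\Gamma)-c(P+\Gamma_0)$ is an index continuous in $\Gamma$.
\item In (ii), discard the suggestion to ``use the same formula with a half-integer value'' when $n(\infty)$ is not a pole. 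For Dirac, $d_3\equiv m$, and \eqref{eq:degree} would give $0$. The formula is only claimed when $n$ extends continuously. In that case a single nondegenerate zero of $(d_1,d_2)$ forces $n(\infty)=\pm e_3$; otherwise both poles would be regular values with different signed preimage counts. So no perturbation is needed, and perturbing the behaviour at infinity would be unsafe anyway, since $c$ need not be an integer.
\end{itemize}
Your direct polar integration for the Dirac case is what the paper does.
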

\begin{proof}
\emph{(i)} We follow \cite{Bal2022,QuinnBal}. A linear deformation fixing the bulks and Lemma~\ref{lem:fred}(iii) reduce to $V=\Gamma(x_2)$, with $\Gamma=\Gamma_\pm:=V_\pm$ for $\pm x_2\ge1$. The symbol of $H$ is an elliptic Hermitian symbol of order $k$ satisfying hypothesis (H1) of \cite{QuinnBal}, with bulk symbols $h_\pm(\xi)=P(\xi)+\Gamma_\pm$ gapped at $0$. For $z=i\omega$, let $G_\pm=(z-h_\pm)^{-1}$ and
\[
F_\pm(\omega,\xi)=\tr\bigl([G_\pm\partial_{\xi_1}h_\pm,\,G_\pm\partial_{\xi_2}h_\pm]\,G_\pm\bigr).
\]
By \cite[Thm.~2.3 and proof of Cor.~2.4]{QuinnBal}, $2\pi\sigma_I(H)$ is the limit as $t\to\infty$ of $\frac{i}{8\pi^2}\int_{(-t,t)^2}\int_\R(F_+-F_-)\,d\omega\,d\xi$. Since $|G_\pm|\le C(|\omega|+\avg\xi^k)^{-1}$, $G_+-G_-=G_+(\Gamma_+-\Gamma_-)G_-$ and $|\partial h_\pm|\le C\avg\xi^{k-1}$, we have $|F_+-F_-|\le C(|\omega|+\avg\xi^k)^{-4}\avg\xi^{2k-2}$, which is integrable on $\R^3$. By dominated convergence, the limit is the integral over $\R^3$, and we may integrate in $\omega$ first. Moreover $\Omega_{h_\pm}=O(\avg\xi^{-3})$, since $\Pi_{h_\pm}$ converges to its angular limit with radial derivative $O(\avg\xi^{-2})$.

Fix $\xi$, and write $h(\xi)=\sum_j\lambda_j\mathsf E_j$ for either bulk symbol, with spectral projectors $\mathsf E_j$. Expanding $G=\sum_j(z-\lambda_j)^{-1}\mathsf E_j$ and integrating in $\omega$ by residues as in \cite[Lemma~3.2]{Bal2022} and \cite[Prop.~2.6]{QuinnBal}, only pairs of eigenvalues on opposite sides of $0$ contribute. For such pairs $(j,j')$, $\mathsf E_j(\partial_lh)\mathsf E_{j'}=(\lambda_{j'}-\lambda_j)\mathsf E_j(\partial_l\mathsf E_{j'})\mathsf E_{j'}$ and $\mathsf E_j(\partial_l\mathsf E_{j'})\mathsf E_{j'}=\pm\mathsf E_j(\partial_l\Pi_h)\mathsf E_{j'}$, with $\Pi_h$ the negative spectral projector. Collecting terms gives the pointwise identity
\[
\int_\R F(\omega,\xi)\,d\omega=4\pi i\cdot i\,\tr\bigl(\Pi_h[\partial_{\xi_1}\Pi_h,\partial_{\xi_2}\Pi_h]\bigr)(\xi),
\]
whose right side is $4\pi i$ times the density of $\Omega_h$. Hence
\[
2\pi\sigma_I(H)=\frac{i}{8\pi^2}\cdot4\pi i\int_{\R^2}(\Omega_{h_+}-\Omega_{h_-})=c(h_-)-c(h_+),
\]
and $\Ind T(H)=2\pi\sigma_I(H)$ \cite{Bal2019,Bal2022}. Local constancy follows, since $c(P+\Gamma)-c(P+\Gamma_0)$ is an index, continuous in $\Gamma$.

\emph{(ii)} For $|f|<|d|$, the negative spectral projector is $\Pi_h=\frac12(1-n\cdot\sigma)$, independently of $f$, and $i\tr(\Pi_h[\partial_1\Pi_h,\partial_2\Pi_h])=\frac12n\cdot(\partial_1n\times\partial_2n)$, and we obtain the integral formula. Under the additional assumption, the poles $n=(0,0,\pm1)$ have preimages only at $\xi_0$ and at infinity, and \eqref{eq:degree} counts them with orientation. For the Dirac operator, $n=(\xi,m)/|(\xi,m)|$ covers the hemisphere $\{\sgn(m)\,n_3>0\}$ once, with orientation $\sgn m$, so that \eqref{eq:degree} is not needed. In particular, the Dirac domain wall with masses $m_\pm$ has index $\tfrac12(\sgn m_--\sgn m_+)$.
\end{proof}

\begin{proof}[Proof of Theorem~\ref{thm:TAI}]
Since $\rho\in C_b^\infty$, the operator $H^0=P+\rho^2M$ satisfies the hypotheses of Proposition~\ref{thm:QB}, so $T(H^0)$ is Fredholm with index $\BDI(P+M,P)$. By Theorem~\ref{thm:hom} and the Helffer--Sj\"ostrand formula applied to $S=e^{2\pi i\varphi}-1$,
\[
\|U(H^\eps)-U(H^0)\|\le\frac1\pi\int|\dbar\tilde S|\,C\eps^{r_k}(1+|\Re z|)^2(1+|\Im z|^{-2})\,d^2z\le C'\eps^{r_k}.
\]
Hence $\|T(H^\eps)-T(H^0)\|\le C'\eps^{r_k}$, and the claim follows from Lemma~\ref{lem:fred}(i) as soon as $C'\eps^{r_k}<1$.
\end{proof}

\begin{proof}[Proof of Corollary~\ref{cor:compact}]
Let $H=P(D)+\mathcal V$, with $\mathcal V$ a symmetric differential operator of order at most $k-1$ with bounded coefficients, be such that $T(H)$ is Fredholm. We show that $T(H+V_c)$ is Fredholm with $\Ind T(H+V_c)=\Ind T(H)$. The corollary follows by taking $H=H^\eps$, for which $T(H^\eps)$ is Fredholm with index $\BDI(P+M,P)$ for $\eps\le\eps_0$ by Theorem~\ref{thm:TAI}, and similarly for $\tilde T$, using Theorems~\ref{thm:mainB} and~\ref{thm:mainA}.

By the Kato--Rellich theorem, $\mathrm{dom}(H)=H^k$ with equivalent norms. The perturbation $V_c$ is compact relative to $H$. Indeed, $V_c(H-z)^{-1}=V_c\one_{\supp V_c}(H-z)^{-1}$, and $\one_{\supp V_c}(H-z)^{-1}$ is compact by ellipticity. By the resolvent identity, $(H+V_c-z)^{-1}-(H-z)^{-1}=-(H+V_c-z)^{-1}V_c(H-z)^{-1}$ is compact, with norm at most $\|V_c\|_\infty|\Im z|^{-2}$. By \eqref{eq:HS}, $U(H+V_c)-U(H)=S(H+V_c)-S(H)$ is a norm-convergent integral of compact operators, since $|\dbar\tilde S(z)|\le C|\Im z|^2$, and is therefore compact. Hence $T(H+V_c)-T(H)=\Pi\big(U(H+V_c)-U(H)\big)\Pi$ is compact, and the Fredholm property and the index are preserved under compact perturbations \cite[Ch.~IV, \S5]{Kato95}. The same argument applies with $\tilde\Pi$ in place of $\Pi$ for the deformed cuts of block (v) of section \ref{sec:setting}.
\end{proof}

\begin{proof}[Proof of Theorem~\ref{thm:gapclose} and of the statement following it]
The family $\eps\mapsto H^\eps_a$ is not norm-resolvent continuous, since $W(x/\eps)$ changes on all scales. It becomes so after the dilation $y=x/\eps$, and the index is then locally constant as long as the bulk is gapped. Let $(\mathcal U_\eps u)(y)=\eps u(\eps y)$, a unitary operator on $L^2(\R^2)$ commuting with $\Pi$, and write $P=\sum_{j\le k}P_j$ with $P_j$ homogeneous of degree $j$. Then
\[
\tilde H(\eps):=\eps^k\,\mathcal U_\eps H^\eps_a\,\mathcal U_\eps^{-1}=\sum_{j\le k}\eps^{k-j}P_j(D_y)+\eps^ka_\eps\,\rho(\eps y_2)W(y),
\]
with bulks $\eps^k\mathcal U_\eps P\,\mathcal U_\eps^{-1}$ and $\tilde H_B(\eps)=\eps^k\mathcal U_\eps H^\eps_{a,B}\mathcal U_\eps^{-1}$. Both $\tilde H$ and $\tilde H_B$ are norm-resolvent continuous on $(0,\infty)$: the coefficients of the differential part and $\eps^ka_\eps$ are continuous in $\eps$, and $\sup_y|\rho(\eps y_2)-\rho(\eps_\circ y_2)|\le\|\rho'\|_\infty|\eps-\eps_\circ|/\min(\eps,\eps_\circ)$ since $\rho$ is constant outside $[-1,1]$. If $0\notin\spec H^\eps_{a,B}$ for all $\eps\in[\eps',\eps'']$, continuity and compactness give $g_*>0$ such that both bulks of $\tilde H(\eps)$ are gapped on $(-g_*,g_*)$ for all these $\eps$ \cite[Thm.~VIII.23]{ReedSimon1}. With a fixed switch whose transition lies in $(-g_*/2,g_*/2)$, $T(\tilde H(\eps))$ is then Fredholm and norm-continuous in $\eps$ by Lemma~\ref{lem:fred}, so its index is constant on $[\eps',\eps'']$. By Lemma~\ref{lem:fred}(ii)--(iii), it equals $\Ind T(H^\eps_a)$ at $\eps=\eps'$ and $\eps=\eps''$, a contradiction.

For the statement following the theorem, let $a_\eps=\eps^{-k/2}$, so that $H^\eps_a=H^\eps$ and $H^\eps_{a,B}=H^\eps_B$. The second resolvent identity gives $\|(H^\eps-z)^{-1}-(P-z)^{-1}\|\le|\Im z|^{-2}\eps^{-k/2}\|\rho W\|_\infty$, hence $\|U(H^\eps)-U(P)\|\le C\eps^{-k/2}$ by \eqref{eq:HS}. Since $U(P)=I$ by \textup{[G]}, Lemma~\ref{lem:fred}(i) gives $\Ind T(H^{\eps_1})=0$ for $\eps_1$ large, while $\Ind T(H^{\eps_0})=\BDI(P+M,P)\ne0$ by Theorem~\ref{thm:TAI}. The bulk $H^{\eps_1}_B$ is gapped at $0$ since $\eps_1^{-k/2}W$ is small, and so is $H^{\eps_0}_B$ after decreasing $\eps_0$, which preserves the conclusion of Theorem~\ref{thm:TAI}, by Theorem~\ref{thm:hom} with $\rho\equiv1$ and Lemma~\ref{lem:speccomp} with $H_2=H^{\eps_0}_B$ and $H_1=P+M$. The hypotheses of Theorem~\ref{thm:gapclose} thus hold with $(\eps',\eps'')=(\eps_0,\eps_1)$, and $0\in\spec H^\eps_B$ for some $\eps\in(\eps_0,\eps_1)$, as claimed.
\end{proof}

\paragraph{Magnetic fields.} The gaps are obtained from the grading $J$ by the following lemma.
\begin{lemma}[Graded gap]\label{lem:grade}
Let $H$ be self-adjoint, and let $J$ be a unitary involution with $J\,\mathrm{dom}(H)\subset\mathrm{dom}(H)$. Suppose $H=H_{\rm o}+H_{\rm e}$ on $\mathrm{dom}(H)$, with $H_{\rm o},H_{\rm e}$ symmetric on $\mathrm{dom}(H)$, $JH_{\rm o}=-H_{\rm o}J$ and $JH_{\rm e}=H_{\rm e}J$. If $\ip{JH_{\rm e}\psi}{\psi}\ge g\|\psi\|^2$ for all $\psi\in\mathrm{dom}(H)$, then $\spec H\cap(-g,g)=\emptyset$.
\end{lemma}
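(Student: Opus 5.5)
The plan is to show directly that $H-E$ is bounded below for every $E\in(-g,g)$, by pairing $(H-E)\psi$ with $J\psi$. Since $H$ is self-adjoint, a lower bound $\|(H-E)\psi\|\ge c\|\psi\|$ on $\mathrm{dom}(H)$ with $c>0$ gives that $H-E$ is injective with closed range, and its range is dense because its orthogonal complement is $\ker(H-E)=\{0\}$. Hence $E\notin\spec H$.

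Fix $\psi\in\mathrm{dom}(H)$; then $J\psi\in\mathrm{dom}(H)$ by assumption. First, the odd part contributes nothing to $\Re\ip{H\psi}{J\psi}$. Using $J=J^*=J^{-1}$, the anticommutation and symmetry of $H_{\rm o}$ on $\mathrm{dom}(H)$, we get
\[
\ip{H_{\rm o}\psi}{J\psi}=\ip{JH_{\rm o}\psi}{\psi}=-\ip{H_{\rm o}J\psi}{\psi}=-\ip{J\psi}{H_{\rm o}\psi}=-\overline{\ip{H_{\rm o}\psi}{J\psi}},
\]
so this quantity is purely imaginary. Second, for the even part, $JH_{\rm e}$ is symmetric on $\mathrm{dom}(H)$, since $J$ commutes with $H_{\rm e}$ there. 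Therefore $\ip{H_{\rm e}\psi}{J\psi}=\ip{JH_{\rm e}\psi}{\psi}$ is real and, by hypothesis, at least $g\|\psi\|^2$. Third, for $E\in\R$, the quantity $\ip{E\psi}{J\psi}=E\ip{\psi}{J\psi}$ is real, and $|\ip{\psi}{J\psi}|\le\|\psi\|^2$ because $J$ is unitary.

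Combining these three facts,
\[
\Re\ip{(H-E)\psi}{J\psi}\ge g\|\psi\|^2-|E|\|\psi\|^2=(g-|E|)\|\psi\|^2 .
\]
By Cauchy--Schwarz and $\|J\psi\|=\|\psi\|$, the left side is at most $\|(H-E)\psi\|\,\|\psi\|$. This gives $\|(H-E)\psi\|\ge(g-|E|)\|\psi\|$, which is the lower bound needed in the first paragraph, with $c=g-|E|>0$.

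There is no real obstacle. The only care needed is in the domain bookkeeping: every inner product must be taken with vectors in $\mathrm{dom}(H)$, where the splitting $H=H_{\rm o}+H_{\rm e}$ and the symmetry of each piece hold. The hypothesis $J\,\mathrm{dom}(H)\subset\mathrm{dom}(H)$ is exactly what makes $J\psi$ admissible in these pairings.
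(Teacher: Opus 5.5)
Your proposal is correct and follows the paper's proof: you pair $(H-E)\psi$ with $J\psi$, the odd part contributes only an imaginary term, the even part contributes at least $g\|\psi\|^2$, and Cauchy--Schwarz with self-adjointness of $H$ then excludes $E$ from the spectrum. You also spell out two steps the paper leaves implicit: how the lower bound yields invertibility of $H-E$, and how the domain hypothesis $J\,\mathrm{dom}(H)\subset\mathrm{dom}(H)$ enters.
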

\begin{proof}
For $\psi\in\mathrm{dom}(H)$, symmetry and anticommutation give $\ip{H_{\rm o}\psi}{J\psi}=-\overline{\ip{H_{\rm o}\psi}{J\psi}}$, which is purely imaginary. Since $J$ is self-adjoint and commutes with $H_{\rm e}$, $\ip{H_{\rm e}\psi}{J\psi}=\ip{JH_{\rm e}\psi}{\psi}\ge g\|\psi\|^2$. Hence, for $|E|<g$,
$
\|(H-E)\psi\|\,\|\psi\|\ge\Re\ip{(H-E)\psi}{J\psi}\ge(g-|E|)\|\psi\|^2,
$
and $E\notin\spec H$ since $H$ is self-adjoint.
\end{proof}
\noindent For instance, for $H=(D+a)\cdot\sigma+\mu(x)\sigma_3+V(x)$ with $a,\mu,V$ real, bounded and measurable, Lemma~\ref{lem:grade} with $J=\pm\sigma_3$, $H_{\rm o}=(D+a)\cdot\sigma$ and $H_{\rm e}=\mu\sigma_3+V$ shows that $\spec H\cap(-g,g)=\emptyset$ whenever $\pm\mu-|V|\ge g$ almost everywhere.

\begin{proof}[Proof of Theorem~\ref{thm:magno}]
The operators $D^a_j$ are scalar, so $P^{\rm o}(D^a)$ anticommutes with $J$, and $q(D^a)\Upsilon$ and $\Gamma$ commute with it. In case (a), apply Lemma~\ref{lem:grade} with grading $sJ$, $H_{\rm o}=P^{\rm o}(D^a)$ and $H_{\rm e}=\Gamma$. In case (b), take $H_{\rm e}=q(D^a)\Upsilon+\Gamma$. Write $(b_{ij})=L^TL$. Then $q(D^a)=\sum_l\bigl(\sum_iL_{li}D^a_i\bigr)^2\ge0$. Since $q(D^a)$ commutes with the constant matrix $sJ\Upsilon\ge0$, we get $sJH_{\rm e}\ge sJ\Gamma\ge g$. This gives the gap for every $a$. Since $\supp S\subset(-g,g)$, it follows that $S(P(D^a))=0$, so that $U(P(D^a))=I$, $T(P(D^a))=I$ and $\Ind T(P(D^a))=0$. For a purely magnetic fluctuation, $H^\eps=P(D^a)$ with $a=\eps^{k/2-1}\rho A^\eps$, so that $\BDI(P+M,P)=\Ind T(H^\eps)=0$ for $\eps\le\eps_0$ by Theorem~\ref{thm:TAI}.
\end{proof}

\section{Random homogenization with a genuine gap}\label{sec:random}

This section proves Theorems \ref{thm:unifhom} and \ref{thm:phasepurek}. We first identify the effective matrix of stationary random fields. The uniform estimate for alloys is obtained by building the correctors of section \ref{sec:hom} from mesoscale averages, and the gap-preserving result follows by combining it with Lemma~\ref{lem:grade}.

\paragraph{The effective matrix and the critical dimension.}

Let $W$ be $\R^d$-stationary with finite range of dependence (e.g., a $\Z^d$-stationary field made $\R^d$-stationary by a uniform random shift in the unit cell). Its spectral density is the array $s(q)=(s_{ba,cd}(q))_{a,b,c,d\le N}$ defined by
\[
\mathbb E\bigl[\overline{W_{ba}(y)}\,W_{cd}(y')\bigr]=\int_{\R^d}e^{2\pi iq\cdot(y'-y)}s_{ba,cd}(q)\,dq ,
\]
which is real-analytic since the covariance has compact support. For a matrix $G\in\C^{N\times N}$ we define the contraction $G\diamond s\in\C^{N\times N}$ by $(G\diamond s)_{ad}=\sum_{b,c}G_{bc}\,s_{ba,cd}$. For a randomly shifted periodic pattern, $s=\sum_q\overline{(\hat W_q)_{ba}}(\hat W_q)_{cd}\,\delta_q$, with $\delta_q$ the Dirac mass at $q$, and $G\diamond s$ reproduces $\sum_q\hat W_q^*G(q)\hat W_q$. For scalar $W=V\,I$ with spectral density $s_V$, $G\diamond s=s_VG$. Applying \eqref{eq:M}, in its $L$-periodic form with $q\in L^{-1}\Z^d$, to $L\Z^d$-periodizations and letting $L\to\infty$ gives, formally, the random effective matrix for alloys
\[
M=\int_{\R^d}G(q)\diamond s(q)\,dq.
\]
Lemma~\ref{lem:mean}(i) gives a rigorous version.
Since $s$ is smooth and $|G(q)|\sim|q|^{-k}$, the integral is finite if and only if $\int_{|q|<1}|s(q)||q|^{-k}dq<\infty$. Three regimes follow. If $s(0)\ne0$, this holds for $d>k$, diverges logarithmically for $d=k$, and diverges like a power for $d<k$.
If $W=D_y^\nu\Psi$ with $|\nu|=j_0$ and $\Psi$ stationary (divergence-type disorder, e.g., an alloy with mean-zero single site, $j_0=1$), then $s(q)=O(|q|^{2j_0})$ and the condition becomes $k<d+2j_0$. In two dimensions, first-order systems are thus subcritical, while regularized Dirac and BHZ operators are critical unless $s(0)=0$.
Theorems \ref{thm:unifhom} and \ref{thm:mainB} concern the subcritical regime, in which the fluctuation part of $\eps^{-k/2}W(x/\eps)$, whose power spectrum in the macroscopic variable is $\eps^{d-k}s(\eps\xi)$, vanishes and the deterministic Born term $M$ survives. At criticality, $d=k=2$, this power spectrum is a mollified white noise and the Born term grows like $s(0)\log(1/\eps)$, which is why the amplitude is renormalized in \eqref{eq:Hcrit}. Theorems \ref{thm:det} and \ref{thm:mainA} treat this regime for scalar disorder. In the supercritical regime $d<k$, homogenization fails \cite{Bal10,BalGu15}, the limit is stochastic, and we do not consider this setting.

\begin{lemma}[Local effective matrices]\label{lem:Kset}
The function $\mathbb K$ of \eqref{eq:Kq} is bounded and Hermitian. For an $L$-periodic configuration with $L\in\N$, $M(\omega)=\int_{\T^2}\mathbb K\,d\mathfrak s_\omega$, where
\[
\hat\omega_q=L^{-2}\sum_{j\in(\Z/L\Z)^2}\omega_je^{-2\pi iq\cdot j},\qquad \mathfrak s_\omega=\sum_{q\in L^{-1}\Z^2/\Z^2}|\hat\omega_q|^2\delta_q ,
\]
with $\delta_q$ the Dirac mass at $q\in\T^2$. The structure factor $\mathfrak s_\omega$ is a nonnegative measure of total mass equal to the mean of $\omega_j^2$ over a period. Consequently $M(\omega)\in\cK$ if $\kappa_-\le|\omega_j|\le\kappa_+$.\end{lemma}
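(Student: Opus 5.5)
The proof is a direct Fourier computation in three steps: boundedness and Hermiticity of $\mathbb K$, the identity $M(\omega)=\int\mathbb K\,d\mathfrak s_\omega$, and the properties of $\mathfrak s_\omega$ that give membership in $\cK$.

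\emph{Step 1: $\mathbb K$ is bounded and Hermitian.} Each summand $\hat w(p)^*G(p)\hat w(p)$ is Hermitian because $G(p)=-P_k(2\pi p)^{-1}$ is Hermitian by [P1]. For boundedness, split the sum over $\ell$ into $|q+\ell|\ge1$ and $|q+\ell|<1$.
\begin{itemize}
\item Since $w\in C_c^\infty$, $\hat w$ is Schwartz. Combined with $|G(p)|\le C|p|^{-k}$ from [P2], the tail over $|q+\ell|\ge1$ converges uniformly in $q$.
\item At most finitely many (boundedly many) $\ell$ have $|q+\ell|<1$. For these, the moment condition \eqref{eq:moments} states that all derivatives of $\hat w$ of order $\le 2k-1$ vanish at $0$, so $|\hat w(p)|\le C|p|^{2k}$ by Taylor's theorem. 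Hence $|\hat w^*G\hat w|\le C|p|^{4k-k}\le C$.
\end{itemize}
Only $|\hat w(p)|\le C|p|^{k/2}$ is actually needed here; the stronger moment condition is used elsewhere.

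\emph{Step 2: the identity for $L$-periodic configurations.} Write $W_\omega=\sum_j\omega_jw(\cdot-j)$. Apply \eqref{eq:M} in its $L$-periodic form, with Fourier modes $p\in L^{-1}\Z^2$ and $G(p)=-P_k(2\pi p)^{-1}$; rescaling $y\mapsto Ly$ shows that \eqref{eq:M} holds with this lattice. Compute the $L\Z^2$-periodic Fourier coefficients by Poisson summation over the period cell. Writing $j=j_0+Lm$ and using that $e^{-2\pi ip\cdot j}$ depends only on $j$ mod $L$ when $p\in L^{-1}\Z^2$, I get
\[
\widehat{(W_\omega)}_p=L^{-2}\sum_{j\in(\Z/L\Z)^2}\omega_je^{-2\pi ip\cdot j}\,\hat w(p)=\hat\omega_{[p]}\,\hat w(p),
\]
where $[p]$ is the class of $p$ in $L^{-1}\Z^2/\Z^2$. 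Then decompose each $p\ne0$ uniquely as $p=q+\ell$ with $q\in L^{-1}\Z^2/\Z^2$ (a fixed representative) and $\ell\in\Z^2$. The sum $\sum_{p\ne0}\hat W_p^*G(p)\hat W_p$ becomes $\sum_q|\hat\omega_q|^2\mathbb K(q)$. Since $\omega_j$ is real, $|\hat\omega_q|^2$ factors out as a scalar. The mean-zero condition $\avg{W_\omega}=0$ holds automatically, because $\hat w(0)=0$ by \eqref{eq:moments}; this is also why the mode $p=0$ is excluded in \eqref{eq:Kq}. There is no $d_0$ term here.

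\emph{Step 3: properties of $\mathfrak s_\omega$ and conclusion.} Nonnegativity of $\mathfrak s_\omega$ is obvious. Its total mass is $\sum_q|\hat\omega_q|^2=L^{-2}\sum_j\omega_j^2$ by Parseval on $(\Z/L\Z)^2$, i.e. the mean of $\omega_j^2$ over a period. Hence $M(\omega)=\mu\cdot\int\mathbb K\,d\nu$ with $\nu$ a probability measure and $\mu\in[\kappa_-^2,\kappa_+^2]$. Here $\int\mathbb K\,d\nu$ is a finite convex combination of values $\mathbb K(q)$, so it lies in $\conv\{\mathbb K(q)\}$. Therefore $M(\omega)\in\cK$.

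The only delicate points are bookkeeping: getting the normalization of the $L$-periodic Fourier coefficients consistent with $\hat w$, and the low-frequency bound near $p=0$. Neither is a real obstacle.
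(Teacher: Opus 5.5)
Your proposal is correct and follows the paper's proof essentially step by step. The steps match: Hermiticity from that of $G$; boundedness from $|\hat w(p)|^2|G(p)|\le C|p|^{3k}$ near $p=0$; the rescaled $L$-periodic form of \eqref{eq:M} with supercell coefficients $\hat W_{q+\ell}=\hat\omega_q\hat w(q+\ell)$ and $\hat w(0)=0$; Parseval for the mass of $\mathfrak s_\omega$; and the convex-hull conclusion. The only thing left implicit is the degenerate case $\sum_j\omega_j^2=0$, where $\mathfrak s_\omega$ cannot be normalized to a probability measure, and it is trivial: then $\kappa_-=0$ and $M(\omega)=0\in\cK$.
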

\begin{proof}
Boundedness of $\mathbb K$ follows from $|\hat w(p)|^2|G(p)|\le C|p|^{3k}$ near $p=0$, a consequence of \eqref{eq:moments}, and its Hermiticity from that of $G$. For $L$-periodic $W$, \eqref{eq:M} holds with $q\in L^{-1}\Z^2\setminus0$ and Fourier coefficients taken as averages over the supercell. This follows by rescaling to a $1$-periodic pattern. On the supercell, $\hat W_{q+\ell}=\hat\omega_q\hat w(q+\ell)$ for $q\in L^{-1}\Z^2/\Z^2$ and $\ell\in\Z^2$, and $\hat w(0)=0$. Substituting into \eqref{eq:M} gives $M(\omega)=\sum_q|\hat\omega_q|^2\,\mathbb K(q)=\int_{\T^2}\mathbb K\,d\mathfrak s_\omega$. By Parseval, the mass of $\mathfrak s_\omega$ is $\sum_q|\hat\omega_q|^2=L^{-2}\sum_j\omega_j^2$, the sum running over one period. It lies in $[\kappa_-^2,\kappa_+^2]$ if $\kappa_-\le|\omega_j|\le\kappa_+$. Then $M(\omega)$ is this mass times the average of $\mathbb K$ against a probability measure, which lies in $\overline\conv\{\mathbb K(q)\}$, so that $M(\omega)\in\cK$.
\end{proof}

We now prove Theorem~\ref{thm:unifhom}. The proof of Theorem~\ref{thm:hom} used periodicity twice: to solve the cell problems, and to define the mean $M$. For alloys, cell problems are solved by explicit lattice sums, and the mean is replaced by a mesoscale average $M_\omega$. Throughout, $d=2$, $k\ge1$, $P$ satisfies [P1]--[P2], $w$ satisfies \eqref{eq:moments}, and $|\omega_j|\le\kappa_+$. All constants depend only on $(P,w,\kappa_+,\rho)$. We write $\hat a=\mathcal Fa$ for the Fourier transform on $\R^2$, $\hat a(p)=\int a(y)e^{-2\pi ip\cdot y}dy$, and $\mathcal F^{-1}$ for its inverse. For $s>0$, let $\mathcal S^s$ be the class of $a\in C^\infty(\R^2\setminus0;\C^{N\times N})$ with
\[
|\partial^\alpha a(q)|\le C_\alpha|q|^{s-|\alpha|}\ (0<|q|\le1),\qquad |\partial^\alpha a(q)|\le C_{\alpha,n}\avg q^{-n}\ (|q|\ge1),
\]
for all $\alpha$ and $n$. By \eqref{eq:moments} and Taylor's formula, $\hat w\in\mathcal S^{2k}$. Multiplication by $(2\pi q)^\nu$ maps $\mathcal S^s$ into $\mathcal S^{s+|\nu|}$, multiplication by a constant matrix preserves $\mathcal S^s$, and multiplication by $P_k(2\pi q)^{-1}$ maps $\mathcal S^s$ into $\mathcal S^{s-k}$ when $s>k$. We fix a radial $\chi_0\in C_c^\infty(\R^2)$ equal to $1$ near $0$.
\begin{lemma}[Lattice sums]\label{lem:latsum}
If $a\in\mathcal S^s$ with $s>0$, then $|\partial^\alpha\mathcal F^{-1}a(y)|\le C_\alpha\avg y^{-2-s}$. For bounded $(\omega_j)$, then $\sum_j\omega_j\mathcal F^{-1}a(\cdot-j)$ is smooth with $\|\partial^\alpha\sum_j\omega_j\mathcal F^{-1}a(\cdot-j)\|_\infty\le C_\alpha\sup_j|\omega_j|$.
\end{lemma}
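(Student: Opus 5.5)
The plan has two parts: a pointwise decay bound for $\mathcal F^{-1}a$ obtained by a Littlewood--Paley decomposition of the singularity of $a$ at $q=0$, followed by a direct summation over the lattice. Write $a=\chi_0a+(1-\chi_0)a$. The second piece is smooth with all derivatives rapidly decreasing, so it is a Schwartz function. Its inverse Fourier transform is then Schwartz and satisfies the bound with any power. For the first piece, take a dyadic partition $\chi_0(q)=\sum_{m\ge0}\beta(2^mq)$ on $q\ne0$, with $\beta$ supported in the annulus $\{\frac12\le|q|\le2\}$. Set $a_m=\beta(2^m\cdot)a$. The symbol estimates give $|\partial^\alpha a_m|\le C_\alpha 2^{-m(s-|\alpha|)}$ on an annulus of area $\sim2^{-2m}$. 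Including the factor $(2\pi iq)^{\alpha'}$ produced by a $y$-derivative only improves these bounds, since it adds positive powers of $|q|\le1$. So it suffices to treat $\alpha'=0$; the general case follows from $\partial^{\alpha'}\mathcal F^{-1}a=\mathcal F^{-1}[(2\pi iq)^{\alpha'}a]$ with $(2\pi i q)^{\alpha'}a\in\mathcal S^{s+|\alpha'|}$.

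The key estimate is the one on each dyadic piece. Integrating by parts $n$ times in $q$ and using the support area gives
\[
|\mathcal F^{-1}a_m(y)|\le C_n2^{-m(s+2)}\bigl(1+2^{-m}|y|\bigr)^{-n}.
\]
Summing over $m$, split at $2^m\approx|y|$. The terms with $2^m\ge|y|$ contribute $\sum 2^{-m(s+2)}\lesssim|y|^{-2-s}$. For the terms with $2^m<|y|$, take $n>s+2$; they give $|y|^{-n}\sum 2^{m(n-s-2)}\lesssim|y|^{-2-s}$. For $|y|\le1$, the bound $|\mathcal F^{-1}a|\le\|a\|_{L^1}$ suffices, because $a$ is integrable: $s>0$ makes it bounded near $0$, and it decays rapidly at infinity. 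Together these give $|\mathcal F^{-1}a(y)|\le C\avg y^{-2-s}$. The main point to watch is that $s>0$ is not an integer in general, so a single integration by parts over the whole ball would not reach the exact exponent; the dyadic splitting is what delivers the sharp power $2+s$.

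For the lattice sum, each derivative satisfies $|\partial^\alpha\sum_j\omega_j\mathcal F^{-1}a(y-j)|\le\sup|\omega_j|\,C_\alpha\sum_{j\in\Z^2}\avg{y-j}^{-2-s}$. The last sum is bounded uniformly in $y$ because $2+s>2$ is summable in two dimensions. This uniform bound gives uniform convergence of every differentiated series, hence smoothness of the sum and the stated $L^\infty$ bound.
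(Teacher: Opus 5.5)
Your proposal is correct and follows the paper's proof essentially step by step. Both split $a$ into a Schwartz part and dyadic annular pieces near $q=0$, bound each piece by rescaled integration by parts as $C_n2^{-m(2+s)}(1+2^{-m}|y|)^{-n}$ and sum over scales, handle derivatives through $q^{\alpha}a\in\mathcal S^{s+|\alpha|}$, and conclude from the summability of $\avg{y-j}^{-2-s}$ over $\Z^2$. One small point: the differentiated lattice series converge uniformly on compact sets rather than on all of $\R^2$, which is all that smoothness requires.
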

\begin{proof}
Split $a=a\,(1-\chi_0)+\sum_{l\ge1}a\,\phi(2^l\cdot)$, with $\phi$ supported in an annulus. The first part is Schwartz. Rescaling the $l$-th piece to the unit annulus and integrating by parts gives $|\mathcal F^{-1}[a\phi(2^l\cdot)](y)|\le C_n2^{-l(2+s)}(1+2^{-l}|y|)^{-n}$. Summing over $l$ gives $C\avg y^{-2-s}$, and $q^\alpha a\in\mathcal S^{s+|\alpha|}$ handles derivatives. Finally, $\sum_j\avg{y-j}^{-2-s}$ is bounded since $s>0$.
\end{proof}

\medskip\noindent\textbf{Mesoscale averages.} Let $\tilde K\ge0$ be a Schwartz function with $\hat{\tilde K}\in C_c^\infty$ even, $\hat{\tilde K}(0)=1$, and $\sqrt{\tilde K}$ Lipschitz with rapid decay. For instance, $\tilde K=|\mathcal F^{-1}\psi|^2$ with $\psi\in C_c^\infty$ even, suitably normalized. Then $|1-\hat{\tilde K}(p)|\le C|p|^2$. Put $K_{\varpi}(y)=\varpi^2\tilde K(\varpi y)$. For $\varpi$ small, the Poisson summation formula gives $\sum_jK_{\varpi}(y-j)=1$ for all $y$. Let $\chi_w=\mathcal F^{-1}(G\hat w)$ be the corrector of a single site, so that $\hat\chi_w\in\mathcal S^k$ and $\chi:=\sum_j\omega_j\chi_w(\cdot-j)$ solves $P_k(D_y)\chi=-W$. By Lemma~\ref{lem:latsum}, applied to $\hat\chi_w$ and to $2\pi iq\,\hat\chi_w\in\mathcal S^{k+1}$, $|\chi_w(y)|\le C\avg y^{-2-k}$ and $|\nabla \chi_w(y)|\le C\avg y^{-3-k}$, and $\chi$ is bounded with all its derivatives. We define the correction density $\Xi$, its mesoscale average $A$, and the local effective matrix $M_\omega$ by
\begin{equation}\label{eq:Mloc}
\Xi=W\chi,\qquad A=K_{\varpi}*\Xi,\qquad M_\omega=\tfrac12(A+A^*).
\end{equation}
For periodic configurations, the average of $\Xi$ over a period is $M(\omega)$.

\begin{lemma}[Mesoscale effective matrix]\label{lem:meso}
For $\varpi$ small enough: (i) $\|A-M_\omega\|_\infty\le C\varpi$; and (ii) for each $y$, $\dist\bigl(M_\omega(y),\,\mu(y)\cdot\overline\conv\{\mathbb K(q)\}\bigr)\le C\varpi$, with $\mu(y)=\sum_jK_{\varpi}(y-j)\omega_j^2\in[\inf_j\omega_j^2,\sup_j\omega_j^2]$.
\end{lemma}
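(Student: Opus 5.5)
The plan is to reduce $A(y)$, up to an error $O(\varpi)$, to a quadratic form in the weighted couplings $a_j(y):=\omega_j\sqrt{K_\varpi(y-j)}$ whose kernel has Fourier series $\mathbb K$. Both (i) and (ii) then follow from the structure of that form, exactly as in the proof of Lemma~\ref{lem:Kset}. Expanding the lattice sums,
\[
A(y)=\sum_{j,j'}\omega_j\omega_{j'}\int_{\R^2}K_\varpi(y-x)\,w(x-j)\,\chi_w(x-j')\,dx .
\]
Set $c(m):=\int w(x)\chi_w(x-m)\,dx$ for $m\in\Z^2$. The first step is a decay bound for $c$. Since $\hat c$ restricted to the lattice is governed by $\hat w(p)^*G(p)\hat w(p)\in\mathcal S^{3k}$, I will apply Lemma~\ref{lem:latsum} to $\mathcal F^{-1}(\hat w^*G\hat w)$, which gives $|c(m)|\le C\avg m^{-2-3k}$. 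Equivalently, Taylor-expand $\chi_w(x-m)$ in $x$ and use the moment condition \eqref{eq:moments} to kill the first $2k$ orders. The same argument, with weights $|x-j|^n$ inserted, bounds the localized versions of $c$.

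The second step, which I expect to be the main obstacle, replaces $K_\varpi(y-x)$ by $\sqrt{K_\varpi(y-j)}\sqrt{K_\varpi(y-j')}$. Here one must check that the error is $O(\varpi)$ uniformly in $y$ and in the configuration. Since $\sqrt{\tilde K}$ is Lipschitz with rapid decay, for $x\in\supp w(\cdot-j)$ the error is bounded by $C\varpi^3(1+|j-j'|)\,\Theta(\varpi(y-j))$ for a rapidly decaying $\Theta$. The pieces must then be summed carefully. First, the far tail $|x-j'|\gg1$ of $\chi_w$ is handled by the decay $\avg y^{-2-k}$. Second, the difference $K_\varpi(y-x)-K_\varpi(y-j)$ for $x$ near $j$ is handled by Lipschitz bounds. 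Third, each pair term carries the factor $|c(j'-j)|$ or its localized variants. Summing over $j'$ with weight $(1+|m|)\avg m^{-2-3k}$ is finite, and summing over $j$ with $\varpi^2\Theta(\varpi(y-j))$ is bounded, so one factor $\varpi$ survives. If the Lipschitz error turns out to be too weak near $|m|\sim1/\varpi$, I would instead split $|j-j'|\le\varpi^{-1}$ versus larger, using the decay of $c$ on the large part. This yields
\[
\Bigl|A(y)-\sum_{j,j'}a_j\,c(j'-j)\,a_{j'}\Bigr|\le C\varpi .
\]

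The third step identifies the quadratic form. By Poisson summation, $\sum_m c(m)e^{-2\pi iq\cdot m}=\sum_{\ell}\hat w(q+\ell)^*G(q+\ell)\hat w(q+\ell)=\mathbb K(q)$, using $\hat w(0)=0$ and $\hat w(-p)=\hat w(p)^*$ for Hermitian $w$. Since $(a_j)\in\ell^2$ for $\varpi>0$, Parseval gives $\sum_{j,j'}a_jc(j'-j)a_{j'}=\int_{\T^2}|\hat a(q)|^2\,\mathbb K(q)\,dq$ with $\hat a(q)=\sum_ja_je^{-2\pi iq\cdot j}$. This expression is Hermitian because $\mathbb K$ is (Lemma~\ref{lem:Kset}). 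Hence $A-A^*=O(\varpi)$, which gives (i).

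Finally, $|\hat a|^2dq$ is a nonnegative measure of mass $\sum_ja_j^2=\sum_jK_\varpi(y-j)\omega_j^2=\mu(y)$. This mass lies between $\inf_j\omega_j^2$ and $\sup_j\omega_j^2$ because $\sum_jK_\varpi(y-j)=1$. The quadratic form therefore lies in $\mu(y)\cdot\overline\conv\{\mathbb K(q)\}$. Combining this with step two and (i) gives (ii).
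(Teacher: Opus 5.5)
Your proposal is correct and is essentially the paper's argument. You reduce $A$ to the lattice form with kernel $k_0(m)=\int w(z)\chi_w(z-m)\,dz$, pass to the symmetric weights $\sqrt{K_\varpi(y-j)K_\varpi(y-j')}$ at cost $O(\varpi)$, and identify the result by Poisson summation with $\int_{\T^2}|\hat a|^2\mathbb K\,dq$, a form of mass $\mu(y)$. The differences are minor: you get the decay of $k_0$ from Lemma~\ref{lem:latsum} rather than from $\int w=0$, and you read off (i) from the exact Hermiticity of this symmetric form instead of the paper's separate reindexing estimate for $B-B^*$, which is a small streamlining.

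One point needs care when you write out Step 2. Keep the $x$-dependent error $K_\varpi(y-x)-K_\varpi(y-j)$ separate from the $x$-independent error $\sqrt{K_\varpi(y-j)}\bigl(\sqrt{K_\varpi(y-j)}-\sqrt{K_\varpi(y-j')}\bigr)$. The first carries no factor $|j-j'|$, so the crude bound $\int|w(\cdot-j)|\,|\chi_w(\cdot-j')|\le C\avg{j-j'}^{-2-k}$ suffices there. The second multiplies exactly $k_0(j'-j)$, and it is the only place where the factor $|j-j'|$ must be absorbed by the cancellation-improved decay of $k_0$. If you paired $1+|j-j'|$ with the crude bound instead, the sum over $j'-j$ would diverge logarithmically for $k=1$.
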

\begin{proof}
\emph{Step 1: reduction to a lattice sum.} We write $\Xi=\sum_{j,l}\omega_j\omega_l\,w(\cdot-j)\chi_w(\cdot-l)$ and $k_0(m)=\int w(z)\chi_w(z-m)\,dz$. Since $\int w=0$, then $k_0(m)=\int w(z)\bigl(\chi_w(z-m)-\chi_w(-m)\bigr)dz$,
so that $|k_0(m)|\le C\avg m^{-3-k}$. Since $w$ has compact support, $|K_{\varpi}(y-y')-K_{\varpi}(y-j)|\le C\varpi^3(1+\varpi|y-j|)^{-n}$ for $y'\in\supp w(\cdot-j)$ and any fixed $n>2$. Hence $A(y)=B(y)+E_1(y)$ with $ B(y)=\sum_{j,m}K_{\varpi}(y-j)\,\omega_j\omega_{j+m}\,k_0(m)$ where $\|E_1\|_\infty\le C\varpi^3\sum_j(1+\varpi|y-j|)^{-n}\sum_m\avg m^{-2-k}\le C\varpi$.

\emph{Step 2: almost-Hermiticity.} The kernel of $P_k(D)^{-1}$ satisfies $\mathsf G(y)^*=\mathsf G(-y)$, because its symbol is Hermitian. Hence $k_0(-m)=k_0(m)^*$. Reindexing $(j,m)\mapsto(j+m,-m)$ in $B^*$ gives
\[
B-B^*=\sum_{j,m}\bigl(K_{\varpi}(y-j)-K_{\varpi}(y-j-m)\bigr)\omega_j\omega_{j+m}k_0(m).
\]
Since $\sum_j|K_{\varpi}(y-j)-K_{\varpi}(y-j-m)|\le C\varpi|m|$ and $\sum_m|m||k_0(m)|<\infty$, we get $\|B-B^*\|_\infty\le C\varpi$. This proves (i).

\emph{Step 3: structure-factor representation.} Let $a_j=\sqrt{K_{\varpi}(y-j)}\,\omega_j$. The same estimate, with $\sqrt{K_{\varpi}}$ in place of $K_{\varpi}$, shows that $B$ differs by $O(\varpi)$ from
\[
\sum_{j,m}a_ja_{j+m}k_0(m)=\int_{\T^2}|\hat a(q)|^2\,\mathbb K(q)\,dq,\qquad\hat a(q)=\sum_ja_je^{-2\pi iq\cdot j}.
\]
Here we used the autocorrelation identity $\sum_ja_ja_{j+m}=\int|\hat a|^2e^{2\pi iq\cdot m}dq$ and the Poisson summation formula $\sum_mk_0(m)e^{2\pi iq\cdot m}=\mathbb K(q)$. The latter follows from $\widehat{k_0}(p)=\hat w(p)\hat\chi_w(-p)$ and $\hat w(-p)=\hat w(p)^*$. The measure $|\hat a|^2dq$ is nonnegative with mass $\sum_ja_j^2=\mu(y)$. This proves (ii).
\end{proof}

\begin{lemma}[Quadratic corrector]\label{lem:mult}
Let $0<\varpi_0\le\varpi\le1$ and $\nu$a multi-index with $|\nu|\le k-1$. Define
\[
m_\nu(q)=(2\pi q)^\nu P_k(2\pi q)^{-1}\bigl(1-\chi_0(q/\varpi_0)\bigr)\bigl(1-\hat{\tilde K}(q/\varpi)\bigr).
\]
For the quadratic corrector $\chi_2$ of the proof of Theorem~\ref{thm:unifhom}, $D_y^\nu\chi_2=-m_\nu(D_y)\Xi$.
The kernel of this multiplier has $L^1$ norm at most $CB_\nu$, where
\begin{gather*}
B_\nu=\varpi^{-1}\ \ (|\nu|=k-1),\qquad B_\nu=\varpi^{-2}\bigl(1+\log(\varpi/\varpi_0)\bigr)\ \ (|\nu|=k-2),\\
B_\nu=\varpi_0^{|\nu|-k+2}\varpi^{-2}\ \ (|\nu|\le k-3).
\end{gather*}
The kernel of the multiplier $\chi_0(q/\varpi_0)(1-\hat{\tilde K}(q/\varpi))$ has $L^1$ norm at most $C(\varpi_0/\varpi)^2$.
\end{lemma}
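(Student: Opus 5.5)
The identity $D_y^\nu\chi_2=-m_\nu(D_y)\Xi$ comes from the definition of $\chi_2$, which is not yet given in the text. The natural choice, modeled on the periodic proof of Theorem~\ref{thm:hom}, is
\[
\chi_2=-P_k(D_y)^{-1}\bigl(1-\chi_0(D_y/\varpi_0)\bigr)\bigl(\Xi-K_\varpi*\Xi\bigr).
\]
This solves the cell problem for the fluctuation of $\Xi$ about its mesoscale average $A$. The low frequencies $|q|\lesssim\varpi_0$ are cut off, which makes $P_k^{-1}$ harmless, and the error is absorbed elsewhere. Since convolution with $K_\varpi$ is the multiplier $\hat{\tilde K}(q/\varpi)$, applying $D_y^\nu$ gives the stated formula for $m_\nu$ at once. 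The remaining content is the $L^1$ bounds on the kernels. I would obtain them by a dyadic Littlewood--Paley decomposition in $q$, exactly as in the proof of Lemma~\ref{lem:latsum}.

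\emph{Kernel of $m_\nu$.} Write $m_\nu=\sum_l m_\nu\phi(2^{-l}\cdot)$ with annular $\phi$. Then use the standard fact: if $a$ is supported in $\{|q|\sim\lambda\}$ and $|\partial^\alpha a|\le C_\alpha h\lambda^{-|\alpha|}$, then $\|\mathcal F^{-1}a\|_{L^1}\le Ch$. This follows by rescaling to the unit annulus and integrating by parts, which gives a kernel bounded by $h\lambda^2(1+\lambda|y|)^{-3}$. The symbol $(2\pi q)^\nu P_k(2\pi q)^{-1}$ is homogeneous of degree $|\nu|-k$. The factor $1-\hat{\tilde K}(q/\varpi)$ is smooth, bounded by $\min(1,C|q|^2/\varpi^2)$ since $\hat{\tilde K}$ is even with $\hat{\tilde K}(0)=1$, and each derivative costs $|q|^{-1}$ on the shells. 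The cutoff $1-\chi_0(q/\varpi_0)$ restricts to $|q|\gtrsim\varpi_0$.

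\emph{Summing the shells.} Shell $\lambda$ contributes
\[
h(\lambda)\approx\lambda^{|\nu|-k}\min\bigl(1,(\lambda/\varpi)^2\bigr),\qquad \lambda\gtrsim\varpi_0.
\]
For $\lambda\ge\varpi$ the sum is $\sum\lambda^{|\nu|-k}$, which is $O(1)$ whenever $|\nu|\le k-1$. For $\varpi_0\le\lambda\le\varpi$ the sum is $\varpi^{-2}\sum\lambda^{|\nu|-k+2}$, and there are three cases.
\begin{itemize}
\item If $|\nu|=k-1$, the exponent is $1$ and the sum is $\varpi^{-2}\cdot\varpi=\varpi^{-1}$.
\item If $|\nu|=k-2$, the exponent is $0$ and the sum is $\varpi^{-2}\log(\varpi/\varpi_0)$ plus the $O(1)$ tail, giving $\varpi^{-2}\bigl(1+\log(\varpi/\varpi_0)\bigr)$.
\item If $|\nu|\le k-3$, the exponent is negative, so the smallest shell dominates and the sum is $\varpi^{-2}\varpi_0^{|\nu|-k+2}$.
\end{itemize}
These match the $B_\nu$, up to the $O(1)$ large-$\lambda$ contribution, which is dominated since $\varpi\le1$.

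\emph{Last multiplier.} For $\chi_0(q/\varpi_0)(1-\hat{\tilde K}(q/\varpi))$, set $q=\varpi_0 p$. The function becomes $\chi_0(p)(1-\hat{\tilde K}(\varpi_0p/\varpi))$, and $L^1$ norms of kernels are dilation invariant. Expand $1-\hat{\tilde K}(tp)$ with $t=\varpi_0/\varpi\le1$. Because $\hat{\tilde K}$ is even and $\hat{\tilde K}(0)=1$, this expansion is $t^2$ times a smooth function of $p$ on $\supp\chi_0$, with all derivatives bounded uniformly in $t$. The kernel of a fixed compactly supported smooth symbol is Schwartz, which gives the bound $Ct^2=C(\varpi_0/\varpi)^2$.

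The main obstacle is only bookkeeping. One must check that the derivative bounds on each shell hold uniformly in $\varpi,\varpi_0$, including at the transition shells $\lambda\sim\varpi_0$ and $\lambda\sim\varpi$, where the cutoff derivatives are $O(\lambda^{-|\alpha|})$ because they are dilations of fixed functions. The logarithm in the case $|\nu|=k-2$ must also be tracked carefully.
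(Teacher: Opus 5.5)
Your proposal is correct and follows the paper's proof essentially step for step. Your guessed $\chi_2=-P_k(D_y)^{-1}\bigl(1-\chi_0(D_y/\varpi_0)\bigr)(\Xi-K_\varpi*\Xi)$ is exactly the paper's definition, the per-shell bound $\lambda^{|\nu|-k}\min(1,\lambda^2\varpi^{-2})$ is the paper's dyadic estimate, and the rescaling $q=\varpi_0p$ is how the paper handles the last multiplier.

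There is one slip in the bookkeeping: the shells $\lambda\ge\varpi$ contribute $\sum_{\lambda\ge\varpi}\lambda^{|\nu|-k}\approx\varpi^{|\nu|-k}$, not $O(1)$. This does not affect the stated bounds, since $\varpi^{|\nu|-k}\le CB_\nu$ in all three cases: it equals $B_\nu$ when $|\nu|=k-1$, is at most $B_\nu$ when $|\nu|=k-2$, and satisfies $\varpi^{|\nu|-k}\le\varpi_0^{|\nu|-k+2}\varpi^{-2}$ when $|\nu|\le k-3$.
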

\begin{proof}
Decompose dyadically into pieces supported where $|q|\sim2^l$. Each factor satisfies symbol estimates at scale $2^l$: $|\partial^\alpha f|\le C_\alpha2^{-l|\alpha|}\sup_{|q|\sim2^l}|f|$. For $1-\hat{\tilde K}(q/\varpi)$ with $2^l\le\varpi$ this uses $|\partial^\alpha(1-\hat{\tilde K}(q/\varpi))|\le C\varpi^{-2}|q|^{2-|\alpha|}$. Hence each piece has kernel $L^1$ norm $\le C\sup_{|q|\sim2^l}|m_\nu|\le C2^{l(|\nu|-k)}\min(1,2^{2l}\varpi^{-2})$. Summing over $2^l\gtrsim\varpi_0$ gives $B_\nu$. The sum converges at high frequencies because $|\nu|<k$. The second multiplier is $\chi_0(p)(1-\hat{\tilde K}(p\varpi_0/\varpi))$ after rescaling $q=\varpi_0 p$, and all its $p$-derivatives are $O((\varpi_0/\varpi)^2)$.
\end{proof}

\begin{proof}[Proof of Theorem~\ref{thm:unifhom}]
Set $\varpi_0=\eps^{1/2}$, and $\varpi=\eps^{1/4}$ if $k=1$, $\varpi=\eps^{1/3}$ if $k\ge2$. Put $M^\eps_\omega(x)=M_\omega(x/\eps)$, $\psi^0=(P+\rho^2M^\eps_\omega-z)^{-1}f$, and $v=S_\eps\psi^0$, with $S_\eps$ the smoothing operator of \eqref{eq:smooth}. Since $\|M_\omega\|_\infty\le\|\Xi\|_\infty\le C$, the elliptic estimate $\|\psi^0\|_{H^k}\le C(1+|z|)(1+|\Im z|^{-1})\|f\|$ is uniform in $\omega$.

\emph{Linear correctors.} We run the hierarchy of the proof of Theorem~\ref{thm:hom} for $k\ge2$, correcting only monomials of order $\le0$ instead of $\le\frac12$, starting from $c_1=\eps^{k/2}\chi^\eps\rho v$. For $k=1$ no further linear corrector arises. Every corrector coefficient produced, except one, is linear in $\omega$, as are the coefficients of their $P$-terms. These coefficients have the form $\sum_j\omega_j\mathcal F^{-1}a(\cdot-j)$ with $a\in\mathcal S^s$ and we call $s$ its degree. A $P$-term in which $|\nu|$ derivatives fall on a coefficient of degree $s$ and order $a$ has degree $s+|\nu|$ and order $a-|\nu|$, and a linear monomial of degree $s$ to be corrected is cancelled by the corrector with coefficient $\sum_j\omega_j\mathcal F^{-1}b(\cdot-j)$, $b=-P_k(2\pi\cdot)^{-1}a$, of degree $s-k$ and order increased by $k$. Hence degree plus order is invariant along the hierarchy, and equals $k+\frac k2$, its value for $c_1$. A corrector generated from a monomial of order $a\le0$ to be corrected therefore has degree $\frac k2-a\ge\frac k2>0$. By Lemma~\ref{lem:latsum}, all linear coefficients are bounded with all derivatives, uniformly in the configuration.

Steps (a) and (c) of that proof involve only orders, weights and the bound \eqref{eq:mono}, which uses $\|b\|_\infty$. Step (b) used the mean-zero property only to solve the cell problems, and these are solved here explicitly. Hence the hierarchy terminates. Its residual consists of $\eps^{-k/2}\rho W^\eps(I-S_\eps)\psi^0$, $\rho^2M^\eps_\omega(I-S_\eps)\psi^0$, monomials of positive order with bounded coefficients, and the terms treated next. The first three have total norm $\le C\eps^{1/2}(1+|z|)\|\psi^0\|_{H^k}$. As in Theorem~\ref{thm:hom}, here $(P-z)\psi^0+\rho^2M^\eps_\omega v=f-\rho^2M^\eps_\omega(I-S_\eps)\psi^0$.

\emph{The quadratic corrector.} The only non-linear monomial to be corrected is $\rho^2(\Xi-M_\omega)^\eps v$, which remains from the potential term $\rho^2\,\Xi^\eps v$ of $c_1$. Let $\chi_2=-P_k(D_y)^{-1}(1-\chi_0(D_y/\varpi_0))(\Xi-A)$, as a Fourier multiplier, and $c_{\chi_2}=\eps^k\chi_2^\eps\rho^2v$. The leading term of $c_{\chi_2}$ cancels $\rho^2(\Xi-M_\omega)^\eps v$ up to
\[
\rho^2\bigl[(A-M_\omega)+\chi_0(D_y/\varpi_0)(\Xi-A)\bigr]^\eps v,
\]
of norm $\le C(\varpi+(\varpi_0/\varpi)^2)\|\psi^0\|$ by Lemmas~\ref{lem:meso}(i) and~\ref{lem:mult}. Since $\Xi-A=(1-\hat{\tilde K}(D/\varpi))\Xi$ and $\|\Xi\|_\infty\le C$, Lemma~\ref{lem:mult} also gives $\|D^\nu\chi_2\|_\infty\le CB_\nu$ for $|\nu|\le k-1$. For $|\nu|=k$, write $D^\nu=D^{\nu'}D_{y_i}$ with $|\nu'|=k-1$. Then $D^\nu\chi_2$ is, up to sign, the multiplier of Lemma~\ref{lem:mult} with index $\nu'$ applied to the bounded function $D_{y_i}\Xi$. Hence $\chi_2\in C^k_b$ for each $\eps$, and $c_{\chi_2}\in H^k$. The other terms of $(H^\eps-z)c_{\chi_2}$ are the $P$-terms, bounded by $C\eps^{k-|\nu|}B_\nu\|\psi^0\|_{H^k}$ since at most $k-|\nu|$ derivatives fall on $\rho^2v$, the potential term $\eps^{k/2}\rho^3(W\chi_2)^\eps v$, bounded by $C\eps^{k/2}B_0\|\psi^0\|$, and the $z$-term, bounded by $C\eps^k|z|B_0\|\psi^0\|$. With the above choice of $\varpi_0$ and $\varpi$, every one of these bounds is $\le C\delta_k(\eps)$. For instance, $\eps^{k/2}B_0=\eps^{1/4}$ for $k=1$, $\eps^{k/2}B_0=\eps\varpi^{-2}(1+\log(\varpi/\varpi_0))$ for $k=2$, and $\eps^{k/2}B_0=\eps^{1/3}$ for $k\ge3$. The windowing errors are $\varpi+(\varpi_0/\varpi)^2\le C\delta_k(\eps)$. Also $\|\psi_{\rm app}-\psi^0\|\le C(\eps^{k/2}+\eps^kB_0)\|\psi^0\|_{H^k}$.

\emph{Conclusion.} The resolvent bound follows as in Theorem~\ref{thm:hom}, including the reduction \eqref{eq:shiftz}. If $\kappa_-\le|\omega_j|\le\kappa_+$, then $\mu(y)=\sum_jK_{\varpi}(y-j)\omega_j^2\in[\kappa_-^2,\kappa_+^2]$, so $\dist(M_\omega(y),\cK)\le C\varpi$. The envelope enters the hierarchy only through the coefficients of the operators $Q$, which involve finitely many derivatives of $\rho$, so any $\rho\in C_b^\infty$ is allowed. Finally, a potential $V_{\rm s}\in C_b^\infty$ added to both operators enters the equation for $\psi^0=(P+V_{\rm s}+\rho^2M^\eps_\omega-z)^{-1}f$, for which the elliptic estimate is unchanged. Applied to a corrector, it produces monomials with coefficients $V_{\rm s}\,b^\eps$ and order at least $\frac k2>0$. Such monomials require no further corrector, and they enter the residual directly, with norm $O(\eps^{k/2})\le C\delta_k(\eps)$.
\end{proof}

\begin{remark}[Rates and background potentials]\label{rem:Vs}
The choices $\varpi_0=\eps^{1/2}$ and $\varpi=\eps^{1/4}$ ($k=1$), $\varpi=\eps^{1/3}$ ($k\ge2$) balance the windowing error $\varpi+(\varpi_0/\varpi)^2$ against the size $\eps^{k/2}B_0$ of the quadratic corrector. For periodic configurations, $M_\omega$ is constant once $\varpi$ is small, and Theorem~\ref{thm:hom} gives the better rate $\eps^{r_k}$. The last paragraph of the proof shows that Theorem~\ref{thm:unifhom} holds, with the same rate, if a potential $V_{\rm s}\in C_b^\infty(\R^2;\Herm(N))$ is added to both $H^\eps_\omega$ and $P+\rho^2M^\eps_\omega$, with $C$ depending also on finitely many derivatives of $V_{\rm s}$. This version is in fact used in section \ref{sec:pseudo}.
\end{remark}

\paragraph{Gap-preserving laws.} For alloys, the spectrum of $H^\eps_B$ is a.s.\ deterministic \cite{Pastur80}, and equals the closure of the union of the spectra over all supported configurations, those with all couplings in the support of their law \cite{KirschMartinelli82}. The argument uses only Weyl sequences and independence, and so applies to the present operators. A spectral gap of uniform size for every supported configuration is therefore equivalent to an almost-sure gap.

\begin{proof}[Proof of Theorem~\ref{thm:phasepurek}]
Let $\delta_\eps=C\delta_k(\eps)$ be the distance bound of Theorem~\ref{thm:unifhom}. The first step, which treats (A) and (B) separately, constructs a convex set $\mathcal M_\eps$ of measurable fields $M(x)$ containing $M^\eps_\omega$ and $\bar M$, such that, for every $M\in\mathcal M_\eps$, $P+M$ is gapped on $(-g+\delta_\eps,g-\delta_\eps)$, and $P$ is gapped on $(-g,g)$. The rest of the proof uses only these properties.

\emph{Gaps in case (A).} Let $\mathcal M_\eps$ be the set of fields with values within $\delta_\eps$ of $\cK$, which is convex since $\cK$ is. For $M\in\mathcal M_\eps$, write $M=M^{\rm o}+M^{\rm e}$, and apply Lemma~\ref{lem:grade} to $P+M(x)$ with grading $s_-J$, $H_{\rm o}=P^{\rm o}(D)+M^{\rm o}(x)$ and $H_{\rm e}=\Gamma+M^{\rm e}(x)$. Here $J$ preserves $H^k$, and $P^{\rm o}(\xi)$ is Hermitian, as the $J$-anticommuting part of $P(\xi)$. The map $M\mapsto M^{\rm e}$ has norm $\le1$, so $s_-J(\Gamma+M^{\rm e}(x))\ge g-\delta_\eps$, and $P+M$ is gapped on $(-g+\delta_\eps,g-\delta_\eps)$. The bulk $P$ is gapped on $(-g,g)$ by Lemma~\ref{lem:grade} with grading $s_+J$.

\emph{Gaps in case (B).} Let $\mathcal M_\eps$ be the set of fields with $|M(x)-\bar M|\le\bar r+\delta_\eps$, a convex set. It contains every field with values within $\delta_\eps$ of $\cK$, since $|M-\bar M|\le\bar r$ on $\cK$. For $M\in\mathcal M_\eps$, $P+M$ is gapped on $(-g+\delta_\eps,g-\delta_\eps)$ by norm perturbation from $\dist(0,\spec(P+\bar M))\ge g+\bar r$, and $P$ is gapped on $(-g,g)$ by assumption.

\emph{(i)} We apply this with $M=M^\eps_\omega$, the field of Theorem~\ref{thm:unifhom} with $\rho\equiv1$. If $E\in\spec H^\eps_{\omega,B}$ with $|E|<g/2$, then $\dist(E,\spec(P+M^\eps_\omega))\ge g/2-\delta_\eps\ge g/4$ for $\eps$ small. Lemma~\ref{lem:speccomp} at $z=E+ig/8$, with $d=g/4$, then contradicts the resolvent estimate of Theorem~\ref{thm:unifhom}, uniformly in $\omega$, since $\eta^{-1}-(d^2+\eta^2)^{-1/2}=8(1-5^{-1/2})/g$.

\emph{(ii)} Let $H^0_\omega=P+\rho^2M^\eps_\omega$, and deform along $M_s=(1-s)M^\eps_\omega+s\bar M\in\mathcal M_\eps$, $s\in[0,1]$. The lower bulk $P+M_s$ stays gapped on $(-g/2,g/2)$ for $\eps$ small, and Lemma~\ref{lem:fred} shows that $T(P+\rho^2M_s)$ is Fredholm with constant index. At $s=1$, the index is $\BDI(P+\bar M,P)$ by Proposition~\ref{thm:QB}. Finally, Theorem~\ref{thm:unifhom} and the Helffer--Sj\"ostrand formula give $\|U(H^\eps_\omega)-U(H^0_\omega)\|<1$ for $\eps$ small, uniformly in $\omega$. Lemma~\ref{lem:fred}(i) then transfers the conclusion to $T(H^\eps_\omega)$. In case (A), every $\bar M\in\cK$ is allowed, and $\BDI(P+\bar M,P)$ therefore does not depend on $\bar M$.
\end{proof}

\section{Pseudo-gaps and deformed cuts}\label{sec:pseudo}

This section proves Theorems \ref{thm:pseudogap}, \ref{thm:robust} and \ref{thm:mainB}. We obtain spectrum near $0$ from approximate eigenfunctions (quasimodes) of gapless periodic patterns, placed in rare regions where the random field is close to the pattern, and spectrum in every compact energy window from constant configurations with small couplings. We then prove the robustness of the index along deformed cuts, construct admissible cuts by a percolation argument, introduce the surgery producing reference operators, and combine these ingredients with Theorem \ref{thm:unifhom} to prove Theorem \ref{thm:mainB}.

\paragraph{Rare-region spectrum and pseudo-gaps.}

\begin{proof}[Proof of Theorem~\ref{thm:pseudogap}(i)--(ii) and of the consequence stated after the theorem]
\emph{(ii) reduces to (i).} The map $t\mapsto M(W_t)$ is continuous. If $P+M(W_t)$ were gapped at $0$ for every $t\in[0,1]$, then $c(P+M(W_t))$ would be constant by Lemma~\ref{lem:BDIc}(i), and $\BDI(P+M(W_1),P+M(W_0))$ would vanish. Hence $0\in\spec(P+M(W_{t_*}))$ for some $t_*$, and the law charges $W_*=W_{t_*}$.

\emph{Proof of (i).} Let $H^\eps_*=P+\eps^{-k/2}W_*(x/\eps)$. By Theorem~\ref{thm:hom}, $\|R^\eps_*(i\eta)-R^0_*(i\eta)\|\le C\eps^{r_k}\eta^{-2}$ for $\eta\in(0,1]$. Since $0\in\spec(P+M(W_*))$, Lemma~\ref{lem:speccomp} with $E=0$, $d=2\eta$ and $\eta=2C\eps^{r_k}/c$, $c=1-5^{-1/2}$, gives $d_0:=\dist(0,\spec H^\eps_*)<2\eta$ when $\eta\le1$. For the remaining $\eps\le1$, $d_0\le\dist(0,\spec P)+\eps^{-k/2}\|W_*\|_\infty$ is bounded, so that $d_0\le C'\eps^{r_k}$ for all $\eps\in(0,1]$. Choose $E_\eps\in\spec H^\eps_*$ with $|E_\eps|\le C'\eps^{r_k}$, and $\psi\in C_c^\infty(B_{R})$ with $\|(H^\eps_*-E_\eps)\psi\|\le\eps^{r_k}\|\psi\|$. This is possible by the spectral theorem \cite[\S VIII.3]{ReedSimon1}, since $C_c^\infty$ is a core. By stationarity, independence at distance and the second Borel--Cantelli lemma, a.s.\ some $j\in\Z^d$ satisfies $\sup_{|y-j|\le R/\eps+1}|W(y)-W_*(y-j)|<\eps^{k/2+r_k}$. Then $\psi_j=\psi(\cdot-\eps j)$ satisfies $\|(H^\eps_B-E_\eps)\psi_j\|\le2\eps^{r_k}\|\psi\|$, so that $\dist(0,\spec H^\eps_B)\le(C'+2)\eps^{r_k}$.

\emph{Consequence.} If $H^{\eps_n}_B$ converged in norm-resolvent sense to an operator gapped at $0$, Lemma~\ref{lem:speccomp} would keep $\spec H^{\eps_n}_B$ a fixed distance away from $0$, which contradicts (i) and (iii).
\end{proof}

\begin{proof}[Proof of Theorem~\ref{thm:pseudogap}(iii)]
\emph{Step 1: a comparison estimate.} Let $W_{\rm per}:=\sum_{j\in\Z^d}w(\cdot-j)-\int w$, the periodic part of the lattice sum, a smooth, $1$-periodic, mean-zero function, and for $\lambda'\in\R$ let $H_{\lambda'}^\eps=P+\lambda'W_{\rm per}(x/\eps)$. We claim that for every $B_0,R_1>0$ and $\eta\in(0,1]$,
\begin{equation}\label{eq:smallosc}
\sup_{|\lambda'|\le B_0,\ |\Re z|\le R_1,\ |\Im z|=\eta}\big\|(H^\eps_{\lambda'}-z)^{-1}-(P-z)^{-1}\big\|\le C\eps^{r_k},
\end{equation}
with $C$ depending on $(P,w,B_0,R_1,\eta)$. Indeed, $H^\eps_{\lambda'}$ is of the form \eqref{eq:Heps} with $\rho\equiv1$ and the matrix-valued profile $\eps^{k/2}\lambda'W_{\rm per}$, whose correction $\eps^k\lambda'^2M(W_{\rm per})$ changes the resolvent of $P$ by $O(\eps^k\eta^{-2})$. Theorem~\ref{thm:hom} gives \eqref{eq:smallosc}, uniformly in $|\lambda'|\le B_0$, since its constants are uniform for coefficients in bounded sets. This corresponds to alloys with constant $\omega_j=\omega_*$.

\emph{Step 2: spectrum of $H^\eps_{\lambda'}$.} Let $E_1\in\spec P$ with $|E_1|\le R_1-1$, and let $\delta'\in(0,1]$. We claim that $\dist(E_1,\spec H^\eps_{\lambda'})\le\delta'$ for $\eps\le\eps_1(\delta',B_0,R_1)$, uniformly in $|\lambda'|\le B_0$. Otherwise Lemma~\ref{lem:speccomp}, with $H_2=P$, $E=E_1$ and $\eta=d=\delta'$, shows that the left-hand side of \eqref{eq:smallosc} with $\eta=\delta'$ is at least $(1-2^{-1/2})/\delta'$, which is impossible for $\eps$ small.

\emph{Step 3: constant configurations.} Suppose $\supp\mu_0\supset[0,\kappa_0]$. The case $[-\kappa_0,0]$ is symmetric. Let $\sigma=\sgn\int w$. For $t\ge0$, the constant configuration $\omega_j\equiv\omega_*:=t/(a_\eps|\int w|)$ is supported, i.e., has all couplings in $\supp\mu_0$, as soon as $\omega_*\le\kappa_0$, which holds for $t\le T_0$, with $T_0=T_0(R)$ fixed in Step 4, and $\eps$ small. For it,
\[
a_\eps\omega_*\sum_jw(x/\eps-j)=\sigma t+\lambda'W_{\rm per}(x/\eps),\qquad\lambda'=\frac{\sigma t}{\int w},
\]
so that $H_{\omega_*}:=P+a_\eps\omega_*\sum_jw(x/\eps-j)=\sigma t+H^\eps_{\lambda'}$.

\emph{Step 4: conclusion.} Since $P_k$ is indefinite, the eigenvalues of the symbol of $P$ tend to $+\infty$ along some branches and to $-\infty$ along others, so $\spec P$ is unbounded above and below and has bounded gaps. Hence there is $T_0=T_0(R)$ such that for every $|E|\le R$ there are $t\in[0,T_0]$ and $E_1\in\spec P$ with $E=\sigma t+E_1$ and $|E_1|\le R+T_0$. Apply Step 2 with $B_0=T_0/|\int w|$ and $R_1=R+T_0+1$: $\dist(E,\spec H_{\omega_*})=\dist(E_1,\spec H^\eps_{\lambda'})\le\delta'$ for $\eps$ small, uniformly in $|E|\le R$. Since the constant configuration $\omega_*$ is supported, $\spec H_{\omega_*}\subset\Sigma_\eps$ \cite{KirschMartinelli82}.
\end{proof}

\paragraph{Robustness of the index along deformed cuts.} Let now $d=2$ and $H_{\circ}=P(D)+V_{\circ}$ with $P$ of order $k\in\{1,2\}$ and $V_{\circ}\in L^\infty(\Rm^2;\Herm(N))$. We call $H_{\circ}$ a {\em reference operator} if:
\\[1mm]
 \textup{[R1]} \emph{decay}: there are $C_0\geq1$, $a\in(0,1]$ and $R_0>0$ with $\supp\varphi'\subset(-R_0/2,R_0/2)$ such that for all measurable $\Omega,\Omega'$ and $0<|\Im z|\le1$, $|\Re z|\le R_0$, $\|\one_\Omega(H_{\circ}-z)^{-1}\one_{\Omega'}\|\le C_0|\Im z|^{-1}e^{-a|\Im z|\dist(\Omega,\Omega')}$; and
 \\[1mm]
 \textup{[R2]} \emph{interface}: the conclusions of Lemma \ref{lem:fred}(i) hold for $H_{\circ}$. By Lemma~\ref{lem:fred}, this is the case when $V_{\circ}=V_\pm$ for $\pm x_2\ge X_0$ and $P+V_\pm$ are gapped on an open interval containing $\supp\varphi'$.

 We recall that deformed cuts are defined in Definition \ref{def:cut} and that $\tilde T(H)=\tilde\Pi\,U(H)\,\tilde\Pi+I-\tilde\Pi$ is the Toeplitz operator defined along the deformed cut.

\begin{theorem}[Robustness]\label{thm:robust}
Let $H_{\circ}$ be a reference operator and let $(\cL,\cR)$ be a cut. Let $V\in L^\infty(\Rm^2;\Herm(N))$ with $\|V\|_\infty\le K_V$, $K_V\geq1$, be such that the cut avoids $\supp V$ at distance $\dd$. For every $n\geq1$, there is $C_n$ depending only on $(C_0,R_0,\varphi,n)$ such that, if $\vartheta_1:=C_nK_V^3(a\dd)^{-n}<\frac1{16}$, then $\tilde T(H_{\circ}+V)$ is Fredholm and $\ind\tilde T(H_{\circ}+V)=\ind T(H_{\circ})$.
\end{theorem}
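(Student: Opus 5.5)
The proof has two stages. First, we pass from the straight cut to the deformed cut for the reference operator $H_\circ$ alone. Second, we switch on $V$ along the linear path $H_s=H_\circ+sV$, $s\in[0,1]$, and show that $\tilde T(H_s)$ remains Fredholm. The second stage works by showing that $\mathfrak q(\tilde T(H_s))$ stays within distance $<1$ of a unitary in the Calkin algebra. Since $s\mapsto U(H_s)$ is norm continuous (bounded perturbation, Helffer--Sj\"ostrand formula \eqref{eq:HS}), the index is then constant in $s$ by the argument at the end of Lemma~\ref{lem:fred}(i).

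\emph{Stage 1: $\tilde T(H_\circ)-T(H_\circ)$ is compact.} We have
\[
\tilde T(H_\circ)-T(H_\circ)=\tilde\Pi S\tilde\Pi-\Pi S\Pi,\qquad S=S(H_\circ).
\]
We write this difference as a sum of terms containing $\tilde\Pi-\Pi$ next to $S$. For $X>0$, split each such term with $\one_{\{|x_2|\le X\}}$ and $\one_{\{|x_2|>X\}}$.
\begin{itemize}
\item On the strip $\{|x_2|\le X\}$, Definition~\ref{def:cut} says that $\tilde\Pi-\Pi$ is supported in a bounded set. Hence $(\tilde\Pi-\Pi)\one_{\{|x_2|\le X\}}S$ is compact by the compactness statement in [R2].
\item The remaining piece has norm at most $\|\one_{\{|x_2|>X\}}S\|+\|S\one_{\{|x_2|>X\}}\|$, which tends to $0$ as $X\to\infty$ by [R2].
\end{itemize}
The difference is therefore a norm limit of compact operators, hence compact. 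Thus $\tilde T(H_\circ)$ is Fredholm with $\ind\tilde T(H_\circ)=\ind T(H_\circ)$. The same splitting shows that $(1-\tilde\Pi)S\tilde\Pi$ is compact, so $\mathfrak q(\tilde T(H_\circ))$ is unitary.

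\emph{Stage 2: approximate unitarity along the path.} By the computation in Lemma~\ref{lem:fred}(i), $\tilde T^*\tilde T-1$ and $\tilde T\tilde T^*-1$ are built from $(1-\tilde\Pi)U\tilde\Pi$ and $\tilde\Pi U(1-\tilde\Pi)$. It therefore suffices to show
\[
\|(1-\tilde\Pi)\bigl(S(H_s)-S(H_\circ)\bigr)\tilde\Pi\|\le\vartheta_1,
\]
and the same with the two projections exchanged. Indeed, $\|U\|\le1$ gives $\|\mathfrak q(\tilde T_s^*\tilde T_s)-1\|\le 2\vartheta_1+\vartheta_1^2<1$, and $\vartheta_1<\frac1{16}$ leaves room for all such products.

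To get this bound, write $V=V_\cL+V_\cR$ with $V_\cL=\one_\cL V$ and $V_\cR=\one_\cR V$. Let $R_s=(H_s-z)^{-1}$ and $R_\circ=(H_\circ-z)^{-1}$.
\begin{itemize}
\item For the $V_\cL$ part, use $R_s-R_\circ=-R_s\,sV_\cL\,R_\circ$. Then $R_\circ$ connects $\supp V_\cL\subset\cL$ to $\cR$, which are at distance $\ge\dd$. By [R1] this costs $C_0|\Im z|^{-1}e^{-a|\Im z|\dd}$, while $\|R_s\|\le|\Im z|^{-1}$.
\item For the $V_\cR$ part, use the other resolvent identity $R_s-R_\circ=-R_\circ\,sV_\cR\,R_s$. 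Now the factor $(1-\tilde\Pi)R_\circ\one_{\supp V_\cR}$ connects $\cL$ to $\supp V_\cR\subset\cR$, again at distance $\ge\dd$.
\end{itemize}
The cross terms, where $R_s$ must itself be localized, are handled by iterating the identity once or twice, $R_s=R_\circ-R_\circ sVR_s$. Each iteration places a Combes--Thomas factor between $\cL$ and $\cR$ and costs one factor $K_V|\Im z|^{-1}$; this is the source of the power $K_V^3$. The resulting resolvent bounds are of the form $CK_V^3|\Im z|^{-m}e^{-a|\Im z|\dd/2}$ on $\supp\dbar\tilde S$, where $|\Re z|\le R_0$. Lemma~\ref{lem:HSdecay} with $\Lambda=a\dd$ then gives $C_nK_V^3(a\dd)^{-n}=\vartheta_1$.

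The main obstacle is exactly these cross terms. $H_s$ has no Combes--Thomas decay, since $V$ may close the gap, so every path from $\cL$ to $\cR$ must be routed through a factor of $R_\circ$. The resolvent expansion has to be arranged so that at least one $R_\circ$ always straddles the cut, while only boundedly many factors $K_V$ appear. If the first attempt fails, I would split $V$ into pieces at distance $\ge\dd/2$ from the cut and use the second-order expansion
\[
R_s=R_\circ-R_\circ VR_\circ+R_\circ VR_sVR_\circ
\]
(with $sV$ in place of $V$). In the last term, the two outer factors $R_\circ$ decay across the cut whichever side the supports lie on.
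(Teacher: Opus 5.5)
Your Stage~1 is correct. Your Stage~2 framework would also work: show that $\mathfrak q(\tilde T(H_\circ+sV))$ lies within distance $<1$ of a unitary for every $s\in[0,1]$, then use norm continuity in $s$. It is a slightly leaner substitute for the paper's factorization $\tilde T(H)=\tilde T(H_\circ)\cJ+\tilde\Pi U_\circ(1-\tilde\Pi)[U_{\rm rel},\tilde\Pi]$ with $\Ind\cJ=0$, \emph{provided} you establish $\|[S(H_s)-S(H_\circ),\tilde\Pi]\|\le\vartheta_1$. That estimate is the whole content of the theorem, and your argument for it fails.

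The two resolvent identities you use for the $V_\cL$ and $V_\cR$ parts cannot both hold for the same full resolvent $R_s$. In fact
\[
R_s-R_\circ=-sR_sV_\cL R_\circ-sR_\circ V_\cR R_s+s^2\bigl(R_\circ V_\cL R_sV_\cR R_\circ-R_\circ V_\cR R_sV_\cL R_\circ\bigr).
\]
Consider the term $\one_\cL R_\circ V_\cL R_sV_\cR R_\circ\one_\cR$. It contains no factor $R_\circ$ joining separated sets: $\one_\cL R_\circ V_\cL$ and $V_\cR R_\circ\one_\cR$ each connect a side to itself. The transmission $V_\cL R_sV_\cR$ passes through the resolvent that has no Combes--Thomas bound.

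Iterating $R_s=R_\circ-sR_\circ VR_s$ does not help. It gives $V_\cL R_sV_\cR=V_\cL R_\circ V_\cR-sV_\cL R_\circ V_\cL R_sV_\cR-sV_\cL R_\circ V_\cR R_sV_\cR$, and the middle term has exactly the original form. The series cannot be summed either, since $sV_\cL R_\circ$ has norm up to $K_V|\Im z|^{-1}$, which is not small. Your fallback fails for the same reason: in $R_\circ VR_sVR_\circ$ the outer factors do \emph{not} decay when the left $V$ is $V_\cL$ and the right one is $V_\cR$.

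The missing idea is to insert the one-sided operators $H_\cL=H_\circ+V_\cL$ and $H_\cR=H_\circ+V_\cR$, with resolvents $R_\cL,R_\cR$. These absorb all same-side multiple scattering exactly. With $R=(H_\circ+V-z)^{-1}$, four resolvent identities give the \emph{finite} identity
\[
R-R_\cL-R_\cR+R_\circ=RV_\cR R_\circ V_\cL R_\cL+RV_\cL R_\circ V_\cR R_\circ-RV_\cL R_\circ V_\cR R_\cR V_\cR R_\circ .
\]
Every term contains $V_\cL R_\circ V_\cR$ or $V_\cR R_\circ V_\cL$ exactly once. These factors are small by [R1], since $\dist(\supp V_\cL,\supp V_\cR)\ge\dd$. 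The factors $R,R_\cL,R_\cR$ need only the trivial bound $|\Im z|^{-1}$, and the third term is where $K_V^3$ comes from.

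For the one-sided pieces, now legitimately, $R_\cL-R_\circ=-R_\cL V_\cL R_\circ=-R_\circ V_\cL R_\cL$. Choose the form according to the side on which $\tilde\Pi$ acts. This gives $\|(S(H_\cL)-S_\circ)\tilde\Pi\|+\|\tilde\Pi(S(H_\cL)-S_\circ)\|\le C_nK_V(a\dd)^{-n}$, and symmetrically for $\cR$ with $1-\tilde\Pi$.

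Together these yield $\|[S(H)-S_\circ,\tilde\Pi]\|\le\vartheta_1$, uniformly along $H_\circ+sV$. That is exactly the input your Stage~2 needs.
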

\noindent The perturbation $V$ need not be small, and its support may cover most of the plane, including parts of the interface away from the cut. Condition [R1] holds for every bounded potential when $k=1$, and follows from an $H^1$ resolvent bound when $k=2$ (Lemma~\ref{lem:CT} below).

\begin{proof}[Proof of Theorem~\ref{thm:robust}]
Let $H=H_{\circ}+V$, $U_{\circ}=U(H_{\circ})$, the relative unitary $U_{\rm rel}:=U_{\circ}^*U(H)$ and $\cJ:=\tilde\Pi U_{\rm rel}\tilde\Pi+1-\tilde\Pi$. Then
\begin{equation}\label{eq:factor}
\tilde T(H)=\tilde T(H_{\circ})\,\cJ+\tilde\Pi U_{\circ}(1-\tilde\Pi)[U_{\rm rel},\tilde\Pi].
\end{equation}
Step 1 shows that $\mathfrak q(\tilde T(H_{\circ}))$ is invertible with the index of $T(H_{\circ})$, Steps 2--3 that $[U_{\rm rel},\tilde\Pi]$ is compact plus small, and Step 4 that $\cJ$ is Fredholm of index $0$. Only the decay [R1] of the resolvent of $H_{\circ}$ is used. This matters for $k=2$, since the large defect $V$ destroys $H^1$ bounds behind Lemma~\ref{lem:CT}(ii).

All Helffer--Sj\"ostrand integrals are over $\supp\dbar\tilde S\subset\{|\Re z|\le R_0,\ |\Im z|\le1\}$, where [R1] applies. By \eqref{eq:HS} and Lemma~\ref{lem:HSdecay}, if $\|F(z)\|\le CK|\Im z|^{-m}e^{-a|\Im z|\Lambda}$ on this set, then
\begin{equation}\label{eq:HSa}
\Big\|\int\dbar\tilde S(z)F(z)\,d^2z\Big\|\le C_nK(a\Lambda)^{-n}.
\end{equation}
Let $\cO_\cL=\supp V\cap\cL$ and $\cO_\cR=\supp V\cap\cR$, so that $\dist(\cO_\cL,\cR)\ge\dd$ and $\dist(\cO_\cR,\cL)\ge\dd$, and let $V_\cL=V\one_{\cO_\cL}$, $V_\cR=V\one_{\cO_\cR}$, $H_\cL=H_{\circ}+V_\cL$ and $H_\cR=H_{\circ}+V_\cR$, with resolvents $R,R_{\circ},R_\cL,R_\cR$ at $z$, all of norm $\le|\Im z|^{-1}$. Write $S_{\circ}=S(H_{\circ})$, so that $U_{\circ}=1+S_{\circ}$.

\emph{Step 1: the deformed cut for $H_{\circ}$.} Let $\dloc(X):=\|\one_{\{|x_2|>X\}}S_{\circ}\|+\|S_{\circ}\one_{\{|x_2|>X\}}\|$, which tends to $0$ as $X\to\infty$ by [R2], and fix $X$ with $4\|S\|_\infty\dloc(X)<\frac1{16}$. By Definition~\ref{def:cut}, $\tilde\Pi-\Pi=\one_{A_+}-\one_{A_-}+\one_{\{|x_2|>X\}}(\tilde\Pi-\Pi)$ with $A_\pm$ bounded. Let $B\in\{S_{\circ},S_{\circ}^*\}$. The terms of $B(\tilde\Pi-\Pi)$ and $(\tilde\Pi-\Pi)B$ containing $\one_{A_\pm}$ are compact by [R2], applied to $S$ and $\bar S$, and the remaining ones have norm $\le\dloc(X)$. Since $[S_{\circ},\Pi]$ is compact by [R2], we obtain: \textup{(a)} $[B,\tilde\Pi]=[B,\Pi]+[B,\tilde\Pi-\Pi]$ is compact plus an operator of norm $\le2\dloc(X)$; and \textup{(b)} $\tilde T(H_{\circ})-T(H_{\circ})=(\tilde\Pi-\Pi)S_{\circ}\tilde\Pi+\Pi S_{\circ}(\tilde\Pi-\Pi)$ is compact plus an operator of norm $\le2\dloc(X)<\frac12$. Since $\mathfrak q(T(H_{\circ}))$ is unitary by [R2], $\mathfrak q(\tilde T(H_{\circ}))$ is invertible with inverse of norm $\le2$, and $\Ind\tilde T(H_{\circ})=\Ind T(H_{\circ})$.

\emph{Step 2: decomposition of $S(H)-S_{\circ}$.} The resolvent identities $R-R_\cL=-RV_\cR R_\cL$, $R_\cL=R_{\circ}-R_{\circ}V_\cL R_\cL$, $R-R_\cR=-RV_\cL R_\cR$ and $R_\cR=R_{\circ}-R_{\circ}V_\cR R_\cR$ give
\[
R-R_\cL-R_\cR+R_{\circ}=RV_\cR R_{\circ}V_\cL R_\cL+RV_\cL R_{\circ}V_\cR R_{\circ}-RV_\cL R_{\circ}V_\cR R_\cR V_\cR R_{\circ} .
\]
Since $\dist(\cO_\cL,\cO_\cR)\ge\dd$, [R1] gives $\|V_\cL R_{\circ}V_\cR\|+\|V_\cR R_{\circ}V_\cL\|\le2C_0K_V^2|\Im z|^{-1}e^{-a|\Im z|\dd}$. By \eqref{eq:HSa}, $S(H)-S_{\circ}=\Sigma_\cL+\Sigma_\cR+E$ with $\Sigma_\bullet=S(H_\bullet)-S_{\circ}$ and $\|E\|\le C_nK_V^3(a\dd)^{-n}$.

\emph{Step 3: almost commutation with $\tilde\Pi$.} By \eqref{eq:HS}, $\Sigma_\cL=-\frac1\pi\int\dbar\tilde S\,R_\cL V_\cL R_{\circ}=-\frac1\pi\int\dbar\tilde S\,R_{\circ}V_\cL R_\cL$. The first form shows that $\Sigma_\cL\tilde\Pi$ involves $V_\cL R_{\circ}\one_\cR$, the second that $\tilde\Pi\Sigma_\cL$ involves $\one_\cR R_{\circ}V_\cL$. Since $\dist(\cO_\cL,\cR)\ge\dd$, [R1] and \eqref{eq:HSa} give $\|\Sigma_\cL\tilde\Pi\|+\|\tilde\Pi\Sigma_\cL\|\le C_nK_V(a\dd)^{-n}$, and symmetrically $\|\Sigma_\cR(1-\tilde\Pi)\|+\|(1-\tilde\Pi)\Sigma_\cR\|\le C_nK_V(a\dd)^{-n}$. Since $[\Sigma,\tilde\Pi]=(1-\tilde\Pi)\Sigma\tilde\Pi-\tilde\Pi\Sigma(1-\tilde\Pi)$ for every bounded $\Sigma$, and $\|[E,\tilde\Pi]\|\le2\|E\|$,
\begin{equation}\label{eq:commsmall}
\big\|[S(H)-S_{\circ},\tilde\Pi]\big\|\le\vartheta_1=C_nK_V^3(a\dd)^{-n}.
\end{equation}

\emph{Step 4: factorization.} Since $U_{\rm rel}=1+U_{\circ}^*(S(H)-S_{\circ})$,
$
[U_{\rm rel},\tilde\Pi]=[S_{\circ}^*,\tilde\Pi]\,(S(H)-S_{\circ})+U_{\circ}^*\,[S(H)-S_{\circ},\tilde\Pi]
$,
so that, by Step 1(a), \eqref{eq:commsmall} and $\|S(H)-S_{\circ}\|\le2\|S\|_\infty$, $[U_{\rm rel},\tilde\Pi]$ is compact plus an operator of norm $\le\vartheta:=\vartheta_1+4\|S\|_\infty\dloc(X)<\frac18$. The identity \eqref{eq:factor} follows from $\tilde T(H)=\tilde\Pi U_{\circ}U_{\rm rel}\tilde\Pi+1-\tilde\Pi$ and $(1-\tilde\Pi)U_{\rm rel}\tilde\Pi=(1-\tilde\Pi)[U_{\rm rel},\tilde\Pi]$, and its last term has norm $\le\vartheta$ in the Calkin algebra. Moreover $\cJ\cJ^*-1=-\tilde\Pi[U_{\rm rel},\tilde\Pi](1-\tilde\Pi)[U_{\rm rel},\tilde\Pi]^*\tilde\Pi$, and similarly for $\cJ^*\cJ-1$, so that $\mathfrak q(\cJ)$ is invertible with $\|\mathfrak q(\cJ)^{-1}\|\le2$. Its index is $0$: the bounds of Steps~2--4 hold uniformly along $s\mapsto H_{\circ}+sV$, $s\in[0,1]$, and the corresponding operators $\cJ_s$ are Fredholm, depend norm-continuously on $s$ by \eqref{eq:HS}, and satisfy $\cJ_0=1$.

\emph{Conclusion.} By Steps~1(b) and~4, $\mathfrak q(\tilde T(H))=\mathfrak q(\tilde T(H_{\circ}))\mathfrak q(\cJ)+\beta$ with $\|\beta\|\le\vartheta$ and $\|(\mathfrak q(\tilde T(H_{\circ}))\mathfrak q(\cJ))^{-1}\|\le4$. Since $4\vartheta<1$, $\tilde T(H)$ is Fredholm, and, the index being locally constant and additive under products \cite[Ch.~IV, \S5]{Kato95},
$
\Ind\tilde T(H)=\Ind\tilde T(H_{\circ})+\Ind\cJ=\Ind T(H_{\circ})
$.
\end{proof}
We verify [R1] by Combes--Thomas estimates. Let $\phi$ be a smoothed and truncated version of $\dist(\cdot,\Omega')$, with $|\nabla\phi|\le1$, bounded second derivatives, $\phi\ge\dist(\Omega,\Omega')-2$ on $\Omega$ and $\phi\le2$ on $\Omega'$, the truncation being at a level larger than $\dist(\Omega,\Omega')$. For $0<b\le1$ and any operator $Z$,
\begin{equation}\label{eq:CTreduce}
\|\one_\Omega Z\one_{\Omega'}\|\le e^{4}e^{-b\dist(\Omega,\Omega')}\|e^{b\phi}Ze^{-b\phi}\| .
\end{equation}
Since $\phi$ is bounded and Lipschitz, $e^{\pm b\phi}$ are bounded and preserve $H^1$ and $H^2$. If $e^{b\phi}He^{-b\phi}=H+bB$ with $\|bB(H-z)^{-1}\|\le\frac12$, then $e^{b\phi}(H-z)e^{-b\phi}=(1+bB(H-z)^{-1})(H-z)$ on $\mathrm{dom}(H)$, and a Neumann series gives
\begin{equation}\label{eq:CTN}
\|e^{b\phi}(H-z)^{-1}e^{-b\phi}\|\le2|\Im z|^{-1}.
\end{equation}
\begin{lemma}[Decay of resolvents]\label{lem:CT}
Let $H=P(D)+V$ with $V\in L^\infty(\R^2;\Herm(N))$.
\\[1mm]
(i) If $P=\gamma(D)$, with $\gamma(\xi)=\xi_1\gamma_1+\xi_2\gamma_2$, $\gamma_i\in\Herm(N)$ and $\|\gamma\|=\sup_{|e|=1}\|\gamma(e)\|$, let $a_\gamma=1/(2\max(\|\gamma\|,1))$. Then for all $z$ with $0<|\Im z|\le1$ and measurable $\Omega,\Omega'$,
\[
\|\one_\Omega(H-z)^{-1}\one_{\Omega'}\|\le2e^4|\Im z|^{-1}e^{-a_\gamma|\Im z|\dist(\Omega,\Omega')} ,
\]
so that \textup{[R1]} holds with $C_0=2e^4$ and $a=a_\gamma$, independently of $V$.
\\[1mm]
(ii) If $P$ is of order two as in \textup{[C1]}, let $\eta\in(0,1]$, $R_0>0$ and $\Theta\ge1$, and suppose that $\|(H-z_0)^{-1}\|_{L^2\to H^1}\le\Theta$ for all $z_0$ with $|\Im z_0|=\eta$ and $|\Re z_0|\le R_0$. Then \textup{[R1]} holds with $C_0=2e^4$ and $a=c/\Theta$, where $c>0$ depends only on $P$.
\end{lemma}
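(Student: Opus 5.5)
Both parts follow the scheme already set up in \eqref{eq:CTreduce}--\eqref{eq:CTN}. We conjugate $H$ by $e^{b\phi}$ with the smoothed, truncated distance function $\phi$. We write $e^{b\phi}He^{-b\phi}=H+bB$, and choose $b$ proportional to $|\Im z|$ so that $\|bB(H-z)^{-1}\|\le\frac12$. The Neumann series \eqref{eq:CTN} then gives $\|e^{b\phi}(H-z)^{-1}e^{-b\phi}\|\le2|\Im z|^{-1}$, and \eqref{eq:CTreduce} gives the stated bound with $C_0=2e^4$ and decay rate $b\,\dist(\Omega,\Omega')$. The potential $V$ is a multiplication operator, so it commutes with $e^{\pm b\phi}$ and never enters $B$.

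\emph{Part (i).} For $P=\gamma(D)$ we have $e^{b\phi}D_je^{-b\phi}=D_j+ib\,\partial_j\phi$, hence $B=i\gamma(\nabla\phi)$. This is a bounded multiplication operator with $\|B\|\le\|\gamma\|$, since $|\nabla\phi|\le1$. Thus $\|bB(H-z)^{-1}\|\le b\|\gamma\||\Im z|^{-1}$. Choosing $b=a_\gamma|\Im z|$ with $a_\gamma=1/(2\max(\|\gamma\|,1))$ makes this at most $\frac12$, and $b\le1$ because $|\Im z|\le1$. The bound is then independent of $V$, of $\Re z$, and of any ellipticity, which gives (i).

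\emph{Part (ii).} Now $P=Aq(D)+P_1(D)+P_0$. Conjugation gives $e^{b\phi}D_je^{-b\phi}=D_j+ib\partial_j\phi$, so
\[
bB=ib\bigl(A\textstyle\sum_{i,j}b_{ij}(\partial_i\phi\,D_j+D_i\,\partial_j\phi)+P_1(\nabla\phi)\bigr)-b^2A\,q(\nabla\phi).
\]
Expanding $D_i\partial_j\phi=\partial_j\phi D_i-i\partial_i\partial_j\phi$ and using the bounded second derivatives of $\phi$, we get $\|bBu\|\le C_Pb(\|u\|_{H^1}+\|u\|)$ for $b\le1$, with $C_P$ depending only on $P$.

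The main difficulty is that the hypothesis controls $(H-z_0)^{-1}$ from $L^2$ to $H^1$ only on the lines $|\Im z_0|=\eta$, while we need $z$ with $0<|\Im z|\le1$ and $|\Re z|\le R_0$. The cases $|\Im z|\le\eta$ and $|\Im z|>\eta$ are handled separately.
\begin{itemize}
\item For $|\Im z|\le\eta$, we set $z_0=\Re z+i\eta\,\sgn\Im z$ and use the resolvent identity $(H-z)^{-1}=(H-z_0)^{-1}\bigl(1+(z-z_0)(H-z)^{-1}\bigr)$. This gives $\|(H-z)^{-1}\|_{L^2\to H^1}\le\Theta(1+2|\Im z|^{-1})\le3\Theta|\Im z|^{-1}$.
\item For $\eta<|\Im z|\le1$, the same identity, with $|z-z_0|\le1$, gives a bound of the form $\Theta(1+\eta^{-1})\le2\Theta\eta^{-1}$.
\end{itemize}
In both cases we take $b=c|\Im z|/\Theta$ with $c$ small, depending only on $P$ (absorbing the factor $\eta^{-1}$ in the second case into $c$ if needed). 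This gives $\|bB(H-z)^{-1}\|\le C_Pb\cdot C\Theta|\Im z|^{-1}\le\frac12$, and \eqref{eq:CTN}--\eqref{eq:CTreduce} yield [R1] with $a=c/\Theta$.

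We also check that conjugation preserves $\mathrm{dom}(H)=H^2$, so that the Neumann factorization holds on the domain. This is true because $\phi$ is Lipschitz with bounded second derivatives, so $e^{\pm b\phi}$ preserve $H^1$ and $H^2$.
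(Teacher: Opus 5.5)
Your proof is the paper's proof. You use the same conjugation, with $B=i\gamma(\nabla\phi)$ and $b=a_\gamma|\Im z|$ in (i). In (ii) you combine the bound $\|Bu\|\le C_P\|u\|_{H^1}$ with the resolvent identity through $z_0=\Re z+i\eta\,\sgn\Im z$.

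There is one bookkeeping slip in (ii). For $\eta<|\Im z|\le1$ you weaken the estimate to $2\Theta\eta^{-1}$ and propose absorbing $\eta^{-1}$ into $c$. That makes $c$, and hence $a$, depend on $\eta$, which contradicts the statement that $c$ depends only on $P$. Moreover, $2\Theta\eta^{-1}$ is not of the form $C\Theta|\Im z|^{-1}$ that your final inequality uses, since $\eta^{-1}>|\Im z|^{-1}$ in that range.

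The fix is that no case split is needed. Since $|z-z_0|=\bigl||\Im z|-\eta\bigr|\le1$ and $\|(H-z)^{-1}\|\le|\Im z|^{-1}$, the identity gives $\|(H-z)^{-1}\|_{L^2\to H^1}\le\Theta(1+|\Im z|^{-1})\le2\Theta|\Im z|^{-1}$ for every $0<|\Im z|\le1$. Then $b=|\Im z|/(4C_P\Theta)$, which is $\le\frac14$ because $C_P,\Theta\ge1$, gives $\|bB(H-z)^{-1}\|\le\frac12$. This yields $a=1/(4C_P\Theta)$ with $c=1/(4C_P)$ depending only on $P$. In the paper's application $\eta=1$, so your second case is empty there, but the lemma is stated for all $\eta\in(0,1]$.
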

\begin{proof}
In both cases $e^{b\phi}He^{-b\phi}=H+bB_b$ for $0<b\le1$, since the potential commutes with $e^{b\phi}$, and it suffices to choose $b=a|\Im z|$ with $\|bB_b(H-z)^{-1}\|\le\frac12$, and to conclude with \eqref{eq:CTN} and \eqref{eq:CTreduce}.
\\[1mm]
\emph{(i)} Here $B_b=i\gamma(\nabla\phi)$, and $\|bB_b(H-z)^{-1}\|\le b\|\gamma\||\Im z|^{-1}\le\frac12$ for $b=a_\gamma|\Im z|\le1$.
\\[1mm]
\emph{(ii)} Here $B_b$ is a first-order operator whose coefficients are bounded in terms of $(A,B,c_j)$, uniformly in $b\le1$, so that $\|B_bu\|\le C_P\|u\|_{H^1}$ with $C_P\ge1$. Let $0<|\Im z|\le1$ and $|\Re z|\le R_0$, and $z_0=\Re z+i\eta\,\sgn\Im z$. From $(H-z)^{-1}=(H-z_0)^{-1}\bigl(1+(z-z_0)(H-z)^{-1}\bigr)$ and $|z-z_0|\le1$, $\|(H-z)^{-1}\|_{L^2\to H^1}\le\Theta(1+|\Im z|^{-1})\le2\Theta|\Im z|^{-1}$. Hence $b=|\Im z|/(4C_P\Theta)$ works.
\end{proof}

\paragraph{Existence of admissible cuts.}

We use a Peierls bound \cite{Grimmett99}: the number of $*$-connected sets (connected for the adjacency $0<\|x-y\|_\infty\le1$) of $n$ sites of $\Z^2$ containing a given site is at most $64^n$, since such a set has a spanning tree, which is traversed by a walk of length $2(n-1)$, crossing each edge twice, in a graph of degree $8$. Suppose that each site of $\Z^2$ is bad with probability at most $p'$, and that the events that sites at mutual $\infty$-distance larger than $7$ are bad are independent. Any $n$ sites contain at least $n/225$ sites at mutual distance larger than $7$, chosen greedily since each choice excludes at most $15^2$ sites. Hence a given set of $n$ sites is entirely bad with probability at most $(p')^{n/225}$, and the $*$-cluster of bad sites containing a given site has at least $m$ sites with probability at most $(p'')^m$, $p'':=64(p')^{1/225}$. In particular, all such clusters are finite a.s.\ if $p''<1$.

\begin{lemma}\label{lem:cuts}
Let the couplings $\omega_m$ be independent, let $r=\eps^{-\nu_r}$ and $\dd=\eps^{-\nu_\dd}$ with $0<\nu_r<\nu_\dd<p_0/2$ for some $p_0\in(0,1]$, and let $\cU\subset\R^2$ be a random set. If, for every unit box $Q$, the event that $Q$ meets $\cU$ has probability at most $C\eps^{p_0}$ and depends only on the couplings $\omega_m$ with $|\eps m-c_Q|\le2r$, then there is $\eps_0>0$ such that, for every $\eps\le\eps_0$, a.s., some cut avoids $\cU_{3r}=\{x:\dist(x,\cU)\le3r\}$ at distance $\dd/2$.
\end{lemma}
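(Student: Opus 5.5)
The plan is to coarse-grain at a mesoscopic scale and reduce to the Peierls bound for $*$-clusters stated just above. Tile $\R^2$ by boxes of side $L:=\lceil\dd\rceil$ (in the macroscopic variable $x$, with unit boxes $Q$ as in the statement), indexed by $\Z^2$. Call a coarse site $s$ \emph{bad} if the enlarged box $B_s^\sharp$, the union of $s$'s box and its $8$ neighbours, meets $\cU_{3r+\dd}$. This holds only if some unit box within distance $3r+\dd+2L\le 4\dd$ of $s$'s box meets $\cU$. A union bound over the $O(\dd^2)$ such unit boxes gives
\[
p'\le C\dd^2\eps^{p_0}=C\eps^{p_0-2\nu_\dd}\to0 .
\]
Each such event depends only on couplings with $|\eps m-c_Q|\le2r\le L$. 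Hence badness of $s$ depends on couplings attached to a region within $O(L)$ of $s$'s box. For coarse sites at $\infty$-distance larger than $7$ these regions are disjoint, so the events are independent by independence of the $\omega_m$. Here the margin $3r+2r\ll\dd$ uses $\nu_r<\nu_\dd$, and one adjusts the constant $7$ by taking $L$ a fixed multiple of $\dd$ if needed. For $\eps\le\eps_0$ we get $p''=64(p')^{1/225}<1$, so a.s.\ every $*$-cluster of bad coarse sites is finite.

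Next I build the cut. Take $\cR_0$ as the union of coarse boxes with first index $\ge0$. For each finite $*$-cluster $\mathcal C$ of bad sites, take its hull $\hat{\mathcal C}$, namely $\mathcal C$ together with the bounded components of its complement. Its outer boundary is a nearest-neighbour circuit of good sites, since the exterior boundary of a finite $*$-connected set is $\Z^2$-connected. Modify $\cR_0$ as follows: if $\hat{\mathcal C}$ meets the column $\{s_1=0\}$, put all of $\hat{\mathcal C}$ into $\cR$, and otherwise leave it on its own side. Clusters are disjoint, and their hulls are either nested or disjoint, so only outermost hulls need be treated.

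I then verify Definition~\ref{def:cut}. Within $\{|x_2|\le X\}$ only finitely many clusters meet the column. The Borel--Cantelli and Peierls tail $(p'')^m$ shows that the clusters meeting the strip near $x_1=0$ have bounded size, so $\cR$ differs from the half-plane by a bounded set. For the avoidance property, every bad site lies in the interior of its hull. The boundary between $\cL$ and $\cR$ therefore runs only between good coarse boxes. A point $x\in\cU_{3r}$ lies in some bad box $b$, and every box within coarse distance $1$ of the interface is good, so $b$ is not adjacent to it. Then $\dist(x,\text{other side})\ge L\ge\dd/2$.

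The main obstacle I expect is this last geometric step. One must ensure that the interface never passes adjacent to a bad box. That is why badness is defined via the enlarged box $B^\sharp_s$: it makes every box next to a bad region's true location bad too. One must also check that merging hulls into $\cR$ keeps the interface inside good boxes, which uses the outer-boundary circuit lemma. A further point is that clusters far from the column still create no issue: they lie entirely on one side, at distance $\ge L$ from the interface.
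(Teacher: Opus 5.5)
Your architecture is the paper's: coarse boxes of side about $\dd$, a badness notion that buries every bad region inside its cluster, the Peierls bound, relocation of the clusters that meet the central column, and a Borel--Cantelli count for windows. The hulls and the circuit lemma are not needed, since the paper simply removes from $\{x_1\ge0\}$ the clusters meeting $\{|x_1|\le\dd\}$. Two steps, however, fail as written.

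\emph{The avoidance step.} Your claim that the interface runs only between good boxes is false. A cluster contained in $\{s_1\le-1\}$ that touches column $-1$ but not column $0$ stays in $\cL$. Its boxes in column $-1$ then border good boxes of column $0$, which lie in $\cR$. What is true, and suffices, concerns only boxes containing points of $\cU_{3r}$. If $x\in\cU_{3r}$ lies in the box $b$, then $b$ and its eight neighbours are bad, since each of their enlarged boxes contains $x$, so all nine belong to the cluster $\mathcal C(b)$. Distinct clusters are not $*$-adjacent, so $\mathcal C(b)$ lies in a single nearest-neighbour component of the complement of any other cluster. Hence every hull containing one of the nine boxes contains all of them. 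If no hull meeting column $0$ contains them, then $\hat{\mathcal C}(b)$ misses column $0$ and lies on one side of it. Either way the nine boxes are assigned to the same side, and $\dist(x,\text{other side})\ge L$.

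Relatedly, define badness through unit boxes, e.g.\ ``some unit box within distance $3r+\dd$ of $B_s^\sharp$ meets $\cU$'', rather than through $\cU_{3r+\dd}$ itself. The hypothesis makes only the events $\{Q\cap\cU\ne\emptyset\}$ local. That badness implies a union of local events does not make badness itself local.

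\emph{The window condition.} This is the point the paper stresses. Your two sentences only control the clusters that meet the column \emph{inside} $\{|x_2|\le X\}$. But you move into $\cR$ every hull meeting column $0$ anywhere, and a cluster meeting the column at a far row $n$ may reach the window. Finiteness of each cluster does not exclude infinitely many such clusters. The missing estimate is quantitative. A $*$-cluster that contains the column box of row $n$ and meets the window has at least $m_n=\max(0,|n|-X/L-1)$ boxes, which has probability at most $(p'')^{m_n}$. Since $\sum_n(p'')^{m_n}<\infty$, a.s.\ only finitely many such clusters exist, simultaneously for all integers $X$. A hull meets the column only if its cluster does, and it lies in the cluster's bounding box, so this controls $\cR$ in each window.

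Finally, no adjustment of the dependence range is needed. In sup-norm, the couplings relevant to $s$ lie within $L+\dd+5r+O(1)<\tfrac72L$ of its box, so sites at index distance larger than $7$ are already independent.
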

\noindent In this paper, $\cU$ is the union of the bad mesoboxes of first-order alloys, defined below, or of the bad unit boxes at criticality, defined in section \ref{sec:crit}. The proof is a coarse-graining argument of a standard type in percolation theory \cite{Grimmett99}.
\begin{proof}
Tile $\R^2$ by coarse boxes $B$ of side $\dd$, indexed by $\Z^2$, and call $B$ bad if some unit box within distance $2\dd$ of $B$ meets $\cU$. This event depends only on the couplings within distance $2\dd+2+2r\le3\dd$ of $B$, so that the events that coarse boxes at index distance larger than $7$ are bad are independent. Moreover $\PP(B\text{ bad})\le C\dd^2\eps^{p_0}= C\eps^{p_0-2\nu_\dd}=:p'\to0$. By the Peierls bound, for $\eps$ small, $p''=64(p')^{1/225}\le\frac12$, and a.s.\ all $*$-connected clusters of bad coarse boxes are finite. If $u$ is a unit box meeting $\cU$, every coarse box within distance $2\dd$ of $u$ is bad. These boxes cover the $2\dd$-neighbourhood of $u$ and are $*$-connected, so that the bad cluster $\mathcal C(u)$ containing them satisfies $\dist(u,\R^2\setminus\mathcal C(u))\ge2\dd$.

Let $\mathcal C_1,\mathcal C_2,\dots$ be the bad clusters that meet the strip $\{|x_1|\le\dd\}$. Set $\cR:=\{x_1\ge0\}\setminus\bigcup_i\mathcal C_i$ and $\cL=\R^2\setminus\cR$. Let $u$ be a unit box meeting $\cU$. If $\mathcal C(u)=\mathcal C_i$ for some $i$, then $u\subset\cL$ and $\dist(u,\cR)\ge\dist(u,\R^2\setminus\mathcal C_i)\ge2\dd$. Otherwise $\mathcal C(u)$ does not meet the strip, so that $u\subset\{|x_1|>\dd\}$, and $\dist(u,\mathcal C_i)\ge2\dd$ for every $i$, since the clusters $\mathcal C_i$ and $\mathcal C(u)$ are distinct. If $u\subset\{x_1>0\}$ then $u\subset\cR$ and $\dist(u,\cL)\ge\dd$; if $u\subset\{x_1<0\}$ then $u\subset\cL$ and $\dist(u,\cR)\ge\dd$. Since $3r\ll\dd$, the cut avoids $\cU_{3r}$ at distance $\dd/2$. It remains to show that only finitely many $\mathcal C_i$ meet a given window $\{|x_2|\le X\}$; since each $\mathcal C_i$ is bounded, $\cR$ then differs from $\{x_1\ge0\}$ there by a bounded set. Finiteness of the clusters is not enough, since the window is an infinite horizontal strip. The strip $\{|x_1|\le\dd\}$ meets at most four coarse boxes in each row of the coarse lattice. A $*$-connected cluster that contains a coarse box of the strip in row $n$ and meets the window contains at least $m_n:=\max(0,|n|-X/\dd-1)$ coarse boxes, which by the Peierls bound has probability at most $(p'')^{m_n}$. Hence the expected number of clusters meeting both the strip and the window is at most $4\sum_{n\in\Z}(p'')^{m_n}<\infty$, and, a.s., only finitely many of them do, simultaneously for all integers $X$.
\end{proof}

\paragraph{Reference operators by surgery.} Reference operators are obtained from $H^\eps$ by interpolating, in the vicinity of a set $\cU$ containing the rare regions, between the fluctuation and its homogenized limit. Let $\cU\subset\R^2$ and $r\ge4$. Fix a cutoff $\theta=\theta_{\cU,r}\in C^\infty(\R^2;[0,1])$ with
\[
\theta=0\ \text{on }\cU_{2r},\qquad\theta=1\ \text{outside }\cU_{3r},
\]
obtained by mollifying the indicator of the complement of $\cU_{5r/2}$ at scale $r/4$. Then $\|\partial^\alpha\theta\|_\infty\le C_\alpha r^{-|\alpha|}\le C_\alpha$ for every multi-index $\alpha$, with $C_\alpha$ independent of $\cU$ and $r$. Let $\tilde V$ be the fluctuation and $M_\star\in\Herm(N)$ constant. Define
\begin{equation}\label{eq:Hg}
H_{\circ}:=P+\rho\,\theta\,\tilde V+(1-\theta^2)\rho^2M_\star,\qquad V:=H^\eps-H_{\circ}=\rho(1-\theta)\tilde V-(1-\theta^2)\rho^2M_\star .
\end{equation}
Then $V$ is supported in $\cU_{3r}$, and $\|V\|_\infty\le\|\rho\|_\infty\|\tilde V\|_\infty+\|\rho\|_\infty^2|M_\star|$. Write $H_{\circ,B}$ for the same construction with $\rho\equiv1$.

\paragraph{Bad mesoboxes for first-order alloys.} We make precise the objects introduced in block (vi) of section \ref{sec:setting}, with $r$ and $\dd$ as in \eqref{eq:exponents}, $p_0=1$, and $g,\bar r$ the constants of [BG]. Let $\varpi=\eps^{1/4}$ and let $M_\omega$ be the local effective matrix \eqref{eq:Mloc}. A {\em mesobox} is a square $Q'$ of side $\eps\varpi^{-1}$ aligned with $\eps\varpi^{-1}\Zm^2$, with fast-variable image $\eps^{-1}Q'$. Let $M^{\rm tr}_{Q'}$ be the local effective matrix computed with the couplings $\omega_j$ at distance at most $\eps^{-1-\nu_r}$ from $\eps^{-1}Q'$, the other couplings being replaced by $0$. The mesobox is {\em bad} if $|M^{\rm tr}_{Q'}-M_{\rm Born}|>\bar r/4$ somewhere on $\eps^{-1}Q'$. Let $\cU_\omega$ be the union of the bad mesoboxes. Bad mesoboxes are rare because the local effective matrices concentrate around $M_{\rm Born}$ (Lemma~\ref{lem:mean}, which holds for every $k\ge1$ and every $\varpi$ small, with constants depending only on $(P,w,\kappa_+)$). We write $K_j(y)=K_\varpi(y-j)$, so that $\sum_jK_j(y)=1$, $\sup_jK_j(y)\le C\varpi^2$ and $\sum_jK_j(y)^{1/2}\le C\varpi^{-1}$, uniformly in $y$, and $k_0(m)=\int w(z)\chi_w(z-m)\,dz$ as in the proof of Lemma \ref{lem:meso}, so that $\sum_m(1+|m|)|k_0(m)|<\infty$ and $\sum_mk_0(m)e^{2\pi iq\cdot m}=\mathbb K(q)$. In particular $\mathbb K$ is continuous, $\int_{\T^2}\mathbb K=k_0(0)$ and $\mathbb K(0)=\sum_mk_0(m)$.
\begin{lemma}[Born mean and concentration; cf.~\cite{FOP82,Bal10}]\label{lem:mean}
(i) For every configuration and every $y$, $M_\omega(y)=\frac12(F(y)+F(y)^*)+O(\varpi)$, with
\[
F(y)=\sum_{j,l\in\Z^2}a_ja_l\,k_0(l-j),\qquad a_j=K_j(y)^{1/2}\omega_j .
\]
For i.i.d.\ couplings, $|\EE M_\omega(y)-M_{\rm Born}|\le C\varpi$.
\\[1mm]
(ii) For i.i.d.\ couplings, every $y$ and every $t\in(0,1]$, $\PP\bigl(|M_\omega(y)-\EE M_\omega(y)|>t+C\varpi\bigr)\le C\exp(-ct^2\varpi^{-2})$.
\end{lemma}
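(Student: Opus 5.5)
Part (i) is proved in three steps: an explicit lattice-sum formula for $M_\omega$, an error analysis of that formula, and a computation of the mean. Starting from the proof of Lemma~\ref{lem:meso}, we have $A(y)=B(y)+O(\varpi)$ with
\[
B(y)=\sum_{j,m}K_j(y)\,\omega_j\omega_{j+m}\,k_0(m).
\]
We then replace $K_j$ by the symmetric weight $K_j^{1/2}K_{j+m}^{1/2}$. The substitution costs
\[
\sum_{j,m}\bigl|K_j-K_j^{1/2}K_{j+m}^{1/2}\bigr|\,|k_0(m)|.
\]
Since $\sqrt{\tilde K}$ is Lipschitz, $\bigl|K_j^{1/2}-K_{j+m}^{1/2}\bigr|\le C\varpi^2|m|\,(\text{decaying envelope})$, and summing in $j$ gives $\le C\varpi|m|$. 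The summability $\sum_m|m||k_0(m)|<\infty$ then yields $O(\varpi)$. The symmetrized sum is exactly $F(y)=\sum_{j,l}a_ja_lk_0(l-j)$. Hermitizing gives $M_\omega=\tfrac12(F+F^*)+O(\varpi)$, using Lemma~\ref{lem:meso}(i) to pass from $A$ to $M_\omega$.

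For the mean, independence gives $\EE[\omega_j\omega_l]=\mathrm{Var}(\omega)\delta_{jl}+(\EE\omega)^2$, so
\[
\EE F=\mathrm{Var}(\omega)\sum_jK_j\,k_0(0)+(\EE\omega)^2\sum_{j,l}K_j^{1/2}K_l^{1/2}k_0(l-j).
\]
The first term equals $\mathrm{Var}(\omega)\int_{\T^2}\mathbb K$ because $\sum_jK_j=1$. In the second term we again replace $K_j^{1/2}K_{j+m}^{1/2}$ by $K_j$ at cost $O(\varpi)$, which leaves $\sum_mk_0(m)=\mathbb K(0)$. Since $\mathbb K$ is Hermitian, Hermitizing changes nothing, and $|\EE M_\omega(y)-M_{\rm Born}|\le C\varpi$ follows.

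For (ii) we treat $F$, for fixed $y$, as a quadratic form $\omega^T\mathsf A\omega$ in the bounded independent variables $\omega_j$, with matrix entries
\[
\mathsf A_{jl}=K_j^{1/2}K_l^{1/2}k_0(l-j),
\]
which is matrix-valued; we work entrywise. The plan is a Hanson--Wright-type bound for bounded variables, or a bounded-differences/martingale argument; we would cite the version collected in appendix~\ref{app:proba}. The required inputs are the two norms of $\mathsf A$:
\begin{itemize}
\item \emph{Operator norm.} By Schur's test, $\|\mathsf A\|_{op}\le\sup_jK_j\sum_m|k_0(m)|\le C\varpi^2$, using $K_l^{1/2}\le K_j^{1/2}$ up to the $|m|$-correction above.
\item \emph{Hilbert--Schmidt norm.} $\|\mathsf A\|_{HS}^2\le\sum_jK_j^2\sum_m|k_0(m)|^2\le C\varpi^2\sum_jK_j\le C\varpi^2$.
\end{itemize}
Hanson--Wright then gives a deviation probability of at most
\[
2\exp\bigl(-c\min(t^2/\|\mathsf A\|_{HS}^2,\ t/\|\mathsf A\|_{op})\bigr)\le C\exp(-ct^2\varpi^{-2})\qquad\text{for }t\le1.
\]
The $O(\varpi)$ error terms from (i) are deterministic and are absorbed into $t+C\varpi$. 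The infinite sum is handled by truncating to $|j-y|\le\varpi^{-1}\eps^{-\delta}$; the rapid decay of $\tilde K$ makes the tail negligible, far below $\varpi$.

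The main obstacle is the concentration step. Since $\EE\omega\ne0$ in general, we first centre the variables, $\omega_j=\EE\omega+\tilde\omega_j$. This splits $F$ into a quadratic part in $\tilde\omega$, a linear part and a constant. The linear part $2\EE\omega\sum_j\tilde\omega_j\bigl(\sum_l\mathsf A_{jl}\bigr)$ has coefficients of size $O(\varpi^2)$ with squared $\ell^2$-sum $O(\varpi^2)$, so Hoeffding's inequality gives the same $\exp(-ct^2\varpi^{-2})$ bound. The quadratic part has a diagonal piece $\sum_j\mathsf A_{jj}(\tilde\omega_j^2-\EE\tilde\omega^2)$, which also obeys Hoeffding with coefficients $O(\varpi^2)$. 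The remaining off-diagonal chaos needs decoupling or Hanson--Wright. If that tool is unavailable, the fallback is McDiarmid's bounded-differences inequality: changing a single $\omega_j$ moves $F$ by $O(K_j^{1/2}\varpi)$, and the squared sum $\sum_jK_j\varpi^2=\varpi^2$ again gives $\exp(-ct^2\varpi^{-2})$.
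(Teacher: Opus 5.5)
Your proposal is correct and follows the paper's proof. The representation comes from Steps~1--3 of Lemma~\ref{lem:meso} with the symmetric weights $(K_jK_{j+m})^{1/2}$; the mean comes from $\EE\omega_j\omega_l=\mathrm{Var}(\omega)\delta_{jl}+(\EE\omega)^2$ together with $\sum_j(K_jK_{j+m})^{1/2}=1+O(\varpi|m|)$; and the concentration comes from centring the couplings, Hanson--Wright for the quadratic part with $\|\mathsf A\|\le C\varpi^2$ and $\|\mathsf A\|_{HS}\le C\varpi$, Hoeffding for the linear part, and a union bound over the real entries.

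One cosmetic point: since $\mathsf A_{lj}=\mathsf A_{jl}^*$, the coefficient of $\tilde\omega_j$ in the linear part is $\EE\omega\sum_l(\mathsf A_{jl}+\mathsf A_{lj})$, not $2\EE\omega\sum_l\mathsf A_{jl}$; the bounds are unchanged. Your McDiarmid fallback, with bounded differences $O(\varpi K_j^{1/2})$, would in fact suffice on its own and would bypass the centring.
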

\begin{proof}
\emph{(i)} The representation is Steps 1--3 of the proof of Lemma \ref{lem:meso}. Since $\EE\omega_j\omega_l=v\delta_{jl}+m_1^2$, with $m_1=\EE\omega$ and $v={\rm Var}\,\omega$,
\[
\EE F(y)=v\,k_0(0)\sum_jK_j+m_1^2\sum_mk_0(m)\sum_j(K_jK_{j+m})^{1/2}.
\]
Since $\sqrt{K_\varpi}$ is Lipschitz with constant $C\varpi^2$, $|(K_jK_{j+m})^{1/2}-K_j|\le C\varpi^2|m|K_j^{1/2}$, so that $\sum_j(K_jK_{j+m})^{1/2}=1+O(\varpi|m|)$. Hence $\EE F=v\int_{\T^2}\mathbb K+m_1^2\mathbb K(0)+O(\varpi)=M_{\rm Born}+O(\varpi)$, and $M_{\rm Born}$ is Hermitian.

\emph{(ii)} By (i), it suffices to bound the deviations of $F(y)$. Write $\omega_j=m_1+\tilde\omega_j$, with $\tilde\omega_j$ i.i.d., centered and bounded by $2\kappa_+$. Then $F-\EE F=F_2+F_1$, with
\[
F_2=\sum_{j,l}(K_jK_l)^{1/2}k_0(l-j)\bigl(\tilde\omega_j\tilde\omega_l-v\delta_{jl}\bigr),\qquad F_1=m_1\sum_j\tilde\omega_j\sum_l(K_jK_l)^{1/2}\bigl(k_0(l-j)+k_0(j-l)\bigr).
\]
Fix one of the $2N^2$ real entries $\Re(\cdot)_{ab}$, $\Im(\cdot)_{ab}$. For $F_2$ it is $\ip{\tilde\omega}{\mathsf A\tilde\omega}-\EE\ip{\tilde\omega}{\mathsf A\tilde\omega}$ for a real matrix with $|\mathsf A_{jl}|\le(K_jK_l)^{1/2}|k_0(l-j)|$. By the Schur test, $\|\mathsf A\|\le\sup_jK_j\sum_m|k_0(m)|\le C\varpi^2$, and $\|\mathsf A\|_{HS}^2\le\sup_lK_l\sum_m|k_0(m)|^2\le C\varpi^2$. The Hanson--Wright inequality \cite[Theorem~1.1]{RV13}, applied to finite truncations, which converge in $L^2(\PP)$ since $\|\mathsf A\|_{HS}<\infty$ and $\sum_j|\mathsf A_{jj}|\le C\sum_jK_j<\infty$, gives the tail $2\exp(-c\min(t^2\varpi^{-2},t\varpi^{-2}))=2\exp(-ct^2\varpi^{-2})$ for $t\le1$. For $F_1$, the entry is $\sum_jb_j\tilde\omega_j$ with $|b_j|\le2|m_1|K_j^{1/2}\sup_lK_l^{1/2}\sum_m|k_0(m)|\le C\varpi K_j^{1/2}$, so that $\sum_jb_j^2\le C\varpi^2$, and Hoeffding's inequality \cite[Theorem~2.8]{BLM13} gives the same tail. A union bound over the entries concludes.
\end{proof}
\begin{lemma}[Bad mesoboxes]\label{lem:mesobad}
\textup{(a)} For every mesobox $Q'$ and every configuration, $|M_\omega-M^{\rm tr}_{Q'}|\le C\eps^{1+\nu_r}$ on $\eps^{-1}Q'$. \textup{(b)} A unit box meets a bad mesobox with probability at most $C\eps^{-3/2}\exp(-c\min(\bar r,1)^2\eps^{-1/2})\le\eps$ for $\eps$ small, and this event depends only on the couplings within distance $2\eps^{-\nu_r}$ of the box. \textup{(c)} On the fast-variable image of every good mesobox, $|M_\omega-M_{\rm Born}|\le\bar r/2$.
\end{lemma}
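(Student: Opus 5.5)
For (a), I will use the lattice-sum representation of $M_\omega$ from Steps 1--3 of the proof of Lemma~\ref{lem:meso}, but keep the exact expression before the $O(\varpi)$ approximations. Write
\[
A(y)=\sum_{j,l}\omega_j\omega_l\int K_\varpi(y-y')\,w(y'-j)\chi_w(y'-l)\,dy'.
\]
The truncated matrix $M^{\rm tr}_{Q'}$ is the same expression with $\omega_j$ set to $0$ whenever $\dist(j,\eps^{-1}Q')>\eps^{-1-\nu_r}$. Fix $y\in\eps^{-1}Q'$. The terms of the difference are those in which $j$ or $l$ is removed.

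If $j$ is removed, the factor $w(y'-j)$ forces $|y'-y|\ge\eps^{-1-\nu_r}-C$. The Schwartz decay of $K_\varpi$ then gives a factor $\varpi^2(\varpi\eps^{-1-\nu_r})^{-n}$. Since $\varpi\eps^{-1}=\eps^{-3/4}$, this is smaller than any power of $\eps$.

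If $j$ is kept but $l$ is removed, then $|y'-l|\gtrsim\eps^{-1-\nu_r}$ unless $y'$ is itself far from $y$, which was handled above. The decay $|\chi_w(z)|\le C\avg z^{-2-k}$ of Lemma~\ref{lem:latsum} then gives
\[
\sum_{|l-y'|\ge R}\avg{y'-l}^{-2-k}\le CR^{-k}\le C\eps^{(1+\nu_r)k},
\]
which is $\le C\eps^{1+\nu_r}$ because $k\ge1$. Integrating against $K_\varpi$, which has mass $\le C$, keeps this bound. Taking Hermitian parts does not change it. So (a) holds deterministically for every configuration.

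For (b), the dependence claim is structural. $M^{\rm tr}_{Q'}$ depends only on couplings within $\eps^{-1-\nu_r}$ of $\eps^{-1}Q'$ in the fast variable. A unit box meets mesoboxes of side $\eps\varpi^{-1}\ll\eps^{-\nu_r}$. I read the couplings as indexed in the scaled variable $\eps m$, consistent with Lemma~\ref{lem:cuts}, so these couplings lie within $\eps^{-\nu_r}+\eps\varpi^{-1}\le2\eps^{-\nu_r}$ of the box.

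For the probability, I combine (a) with Lemma~\ref{lem:mean}. Set $t=\bar r/8\wedge1$; for small $\eps$, $C\varpi+C\eps^{1+\nu_r}\le t$. A mesobox is then bad only if $|M_\omega(y)-\EE M_\omega(y)|>t+C\varpi$ for some $y\in\eps^{-1}Q'$.

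\textbf{Main obstacle: passing from a pointwise tail to a supremum over the continuum $y\in\eps^{-1}Q'$.} I plan to use a net.
\begin{itemize}
\item $M_\omega$ is Lipschitz in $y$ with constant $C\varpi$, since $\nabla K_\varpi=O(\varpi^3)$ pointwise, summed over $j$.
\item So a grid of spacing $\varpi^{-1}$ costs only $O(1)\cdot t/2$ after adjusting constants; a finer spacing $\sim t\varpi^{-1}$ suffices.
\item Each mesobox image has side $\varpi^{-1}$, so it needs $O(1)$ grid points.
\item A unit box contains $(\varpi/\eps)^2=\eps^{-3/2}$ mesoboxes.
\end{itemize}
A union bound with Lemma~\ref{lem:mean}(ii) at $\varpi=\eps^{1/4}$ then gives $C\eps^{-3/2}\exp(-c\min(\bar r,1)^2\eps^{-1/2})$, which is $\le\eps$ for small $\eps$. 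The bias $|\EE M_\omega-M_{\rm Born}|\le C\varpi$ is absorbed into $t$.

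For (c), on a good mesobox, $|M^{\rm tr}_{Q'}-M_{\rm Born}|\le\bar r/4$. Part (a) adds $C\eps^{1+\nu_r}\le\bar r/4$ for small $\eps$, which gives $|M_\omega-M_{\rm Born}|\le\bar r/2$.
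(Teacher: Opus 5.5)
Your proposal is correct and follows the paper's proof essentially step by step. For (a), you use the same near/far split of the mesoscale convolution, with Schwartz decay of $K_\varpi$ far from $y$ and the tail of $\sum_l|\chi_w(y'-l)|$ near $y$; your bound $CR^{-k}$ is even slightly sharper than the paper's $CL^{-1}$. For (b), you make the same reduction, via (a) and Lemma~\ref{lem:mean}(i), to a deviation of the untruncated $M_\omega$, then use a net of mesh $\sim t\varpi^{-1}$ based on $|\nabla M_\omega|\le C\varpi$, Lemma~\ref{lem:mean}(ii), and a union bound over the $\eps^{-3/2}$ mesoboxes. Part (c) is the same one-line perturbation.

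One point to tidy: the case ``$j$ kept, $l$ removed'' with $|y'-y|\ge L/2$ is not literally covered by your first case. It is handled by the same Schwartz-decay bound, since $\Xi$ and $\Xi^{\rm tr}$ are both bounded.
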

\begin{proof}
(a) Let $L=\eps^{-1-\nu_r}$ and $y\in\eps^{-1}Q'$. In $A(y)=\int K_\varpi(y-y')\Xi(y')dy'$, see \eqref{eq:Mloc}, the region $|y'-y|\ge L/2$ contributes $O((\varpi L)^{-n})$, since $\Xi$ and $\Xi^{\rm tr}$ are bounded and $\varpi L=\eps^{-3/4-\nu_r}\to\infty$. For $|y'-y|<L/2$, $W(y')$ is unchanged, and $|\chi(y')-\chi^{\rm tr}(y')|\le\kappa_+\sum_{|l-y'|\ge L/2}|\chi_w(y'-l)|\le CL^{-1}$, since $|\chi_w(z)|\le C\avg z^{-2-k}$ by Lemma \ref{lem:latsum}. Hence $|A-A^{\rm tr}|\le CL^{-1}$ on $\eps^{-1}Q'$, and likewise for $M_\omega$.

(b) By \eqref{eq:Mloc}, $\nabla M_\omega=\frac12\bigl((\nabla K_\varpi)*\Xi+((\nabla K_\varpi)*\Xi)^*\bigr)$, so that $|\nabla M_\omega|\le C\varpi$. Let $t=\min(\bar r,1)/32$, and take a net of $\eps^{-1}Q'$, which has side $\varpi^{-1}$, with mesh $t/(C\varpi)$. It has at most $Ct^{-2}$ points. If $Q'$ is bad, then by (a), for $\eps$ small, $|M_\omega-M_{\rm Born}|>\bar r/8$ somewhere on $\eps^{-1}Q'$, hence $|M_\omega-M_{\rm Born}|>\bar r/8-t\ge3t$ at a point of the net, and by Lemma \ref{lem:mean}(i), $|M_\omega-\EE M_\omega|>3t-C\varpi\ge t+C\varpi$ there. Lemma \ref{lem:mean}(ii) and a union bound over the net bound the probability by $Ct^{-2}\exp(-ct^2\varpi^{-2})$. A unit box meets at most $C\eps^{-3/2}$ mesoboxes. The dependence statement follows from the definition of $M^{\rm tr}_{Q'}$, since the fast distance $\eps^{-1-\nu_r}$ is the macroscopic distance $\eps^{-\nu_r}$.

(c) On a good mesobox, $|M^{\rm tr}_{Q'}-M_{\rm Born}|\le\bar r/4$, and (a) adds at most $C\eps^{1+\nu_r}\le\bar r/4$.
\end{proof}

\begin{proof}[Proof of Theorem~\ref{thm:mainB}]
We use the exponents \eqref{eq:exponents}, and the reference operator $H_{\circ}$ defined by \eqref{eq:Hg} with $\tilde V=\eps^{-\frac12}W_\omega(x/\eps)$ and $M_\star=M_{\rm Born}$. Constants depend on $(P,w,\kappa_+,g,\bar r,\rho,\varphi)$.

\emph{Step 1: bad set and cuts.} By Lemma \ref{lem:mesobad}(b), a unit box meets a bad mesobox with probability at most $\eps$, and this event depends only on the couplings within distance $2\eps^{-\nu_r}=2r$ of the box. Lemma \ref{lem:cuts}, with $p_0=1$, gives a.s., for $\eps\le\eps_0$, an admissible cut.

\emph{Step 2: reference operator.} Let $\theta=\theta_{\cU_\omega,r}$, and let $H_{\circ}$ and $V=H^\eps-H_{\circ}$ be given by \eqref{eq:Hg} with $M_\star=M_{\rm Born}$. Since $\theta=0$ on $(\cU_\omega)_{2r}$, the mesoboxes meeting $\supp\theta$ are good, and $|M_\omega-M_{\rm Born}|\le\bar r/2$ on their fast-variable images by Lemma \ref{lem:mesobad}(c). Apply Theorem \ref{thm:unifhom}, in the form of Remark \ref{rem:Vs}, with the envelope $\tilde\rho=\theta\rho$ and the potential $V_{\rm s}=(1-\theta^2)\rho^2M_{\rm Born}$, whose derivatives are bounded independently of $\cU_\omega$, $r$ and $\eps$. Since $H_{\circ}=P+\eps^{-\frac12}\tilde\rho W_\omega(x/\eps)+V_{\rm s}$, this gives the resolvent estimate of Theorem \ref{thm:unifhom}, with rate $C\eps^{1/4}$, for $H_{\circ}$ against $P+V_{\rm s}+\tilde\rho^2M^\eps_\omega=P+\rho^2M_{\circ}$, with $M_{\circ}=\theta^2M^\eps_\omega+(1-\theta^2)M_{\rm Born}$, and for $H_{\circ,B}$ against $P+M_{\circ}$. Only the values of $M_\omega$ on the fast-variable image of $\supp\theta$ enter, so $|M_{\circ}-M_{\rm Born}|\le\bar r/2$ everywhere.

\emph{Step 3: gaps.} Let $M(x)$ be measurable with $|M(x)-M_{\rm Born}|\le\bar r/2$. Multiplication by $M-M_{\rm Born}$ has norm $\le\bar r/2$, so that $\dist(0,\spec(P+M))\ge g+\bar r/2$ by [BG]. Applying this to $M_{\circ}$, $H_{\circ,B}$ is gapped on $(-g/2,g/2)$ for $\eps$ small: for $E\in\spec H_{\circ,B}$ with $|E|<g/2$, $\dist(E,\spec(P+M_{\circ}))>g/2$, and Lemma~\ref{lem:speccomp} at $z=E+ig/8$, with $d=g/2$, contradicts the resolvent estimate of Step~2. The upper bulk $P$ is gapped on $(-g,g)$ by [BG].

\emph{Step 4: index of the reference operator.} We deform $M_{\circ}$ along $M_s=(1-s)M_{\circ}+sM_{\rm Born}$, $s\in[0,1]$. The ball $\{|M-M_{\rm Born}|\le\bar r/2\}$ is convex, so the lower bulk of $P+\rho^2M_s$ stays gapped on $(-g,g)$ by Step 3. By Lemma \ref{lem:fred}, $\ind T(P+\rho^2M_s)$ is constant, and at $s=1$ it equals $\BDI(P+M_{\rm Born},P)$ by Proposition~\ref{thm:QB}. By the resolvent estimate of Step~2 and the Helffer--Sj\"ostrand formula, $\|U(H_{\circ})-U(P+\rho^2M_{\circ})\|\le C\eps^{1/4}<1$ for $\eps$ small, so Lemma \ref{lem:fred} gives $\ind T(H_{\circ})=\BDI(P+M_{\rm Born},P)$.

\emph{Step 5: robustness.} [R1] holds for $H_{\circ}$ with $C_0=2e^4$ and $a=a_\gamma$, independent of $\eps$, by Lemma~\ref{lem:CT}(i), and [R2] holds by Lemma \ref{lem:fred}, since the bulks of $H_{\circ}$ are $P$ and $H_{\circ,B}$, both gapped on $(-g/2,g/2)\supset\supp S$. The defect $V$ is supported in $(\cU_\omega)_{3r}$ and $\|V\|_\infty\le K_V:=C\eps^{-1/2}$. An admissible cut avoids $(\cU_\omega)_{3r}\supset\supp V$ at distance $\dd/2$, so that Theorem~\ref{thm:robust} applies with $\dd/2$ in place of $\dd$. Since $\vartheta_1=C_nK_V^3(a_\gamma\dd/2)^{-n}\le C_n\eps^{-3/2+n\nu_\dd}<\frac1{16}$ for $n\nu_\dd>3/2$ and $\eps$ small, it gives $\ind\tilde T(H^\eps)=\ind T(H_{\circ})=\BDI(P+M_{\rm Born},P)$.
\end{proof}
\paragraph{Distant defects for periodic fluctuations.}
\begin{corollary}[Distant defects, periodic fluctuations]\label{cor:farV}
Assume the hypotheses of Theorem~\ref{thm:TAI} with $k=1$, and let $K_V\ge1$. There are $\eps_0>0$, and $\dd_0$ depending only on $(\gamma,g,\varphi,K_V)$, such that for every $\eps\in(0,\eps_0]$, every cut $(\cL,\cR)$ and every $V\in L^\infty(\R^2;\Herm(N))$ with $\|V\|_\infty\le K_V$ whose support the cut avoids at distance $\dd_0$, $\tilde T(H^\eps+V)$ is Fredholm and $\Ind\tilde T(H^\eps+V)=\BDI(P+M,P)$. In particular, this holds along the straight cut for every such $V$ supported in $\{|x_1|\ge\dd_0\}$.
\end{corollary}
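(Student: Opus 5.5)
The plan is to apply Theorem~\ref{thm:robust} with the homogenization-perturbed operator $H^\eps$ itself as the reference operator $H_{\circ}$, and with the defect $V$. First, by Theorem~\ref{thm:TAI}, there is $\eps_0>0$ such that for $\eps\le\eps_0$ the bulks of $H^\eps$ are $P$ and $P+\eps^{-1/2}W(x/\eps)$. Since $\rho=\rho(x_2)$ is constant for $|x_2|\ge1$, these are exactly the two bulks. For the lower one, Theorem~\ref{thm:hom} with $\rho\equiv1$ and Lemma~\ref{lem:speccomp} (as in the proof of Theorem~\ref{thm:phasepurek}(i)) show that it is gapped on $(-g/2,g/2)\supset\supp\varphi'$ after decreasing $\eps_0$. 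Here $H^\eps$ is $P$ plus a bounded potential, since $k=1$ (the remainder $\mathcal R^\eps$ is of order zero). Lemma~\ref{lem:fred} then gives [R2] for $H^\eps$, and $\Ind T(H^\eps)=\BDI(P+M,P)$ by Theorem~\ref{thm:TAI}.

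Next, I check [R1]. By Lemma~\ref{lem:CT}(i), for $P=\gamma(D)+\Gamma$ and any bounded Hermitian potential (absorbing $\Gamma$ and the fluctuation into $V$ there), [R1] holds with $C_0=2e^4$ and $a=a_\gamma$. This holds for all $z$ with $0<|\Im z|\le1$ and any $R_0$, so I fix $R_0=g$, which satisfies $\supp\varphi'\subset(-g/4,g/4)\subset(-R_0/2,R_0/2)$. The key point is that $C_0$ and $a$ are independent of $\eps$ and of the size $\eps^{-1/2}$ of the fluctuation. This is exactly why the statement is restricted to $k=1$: for $k=2$, Lemma~\ref{lem:CT}(ii) would require an $H^1$ resolvent bound uniform in $\eps$.

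Then I apply Theorem~\ref{thm:robust}. Fix $n=1$ and let $C_1$ be the constant depending on $(C_0,R_0,\varphi)$, hence only on $(g,\varphi)$. I choose $\dd_0$ with $C_1K_V^3(a_\gamma\dd_0)^{-1}<\frac1{16}$, so that $\dd_0$ depends only on $(\gamma,g,\varphi,K_V)$. For any cut avoiding $\supp V$ at distance $\dd\ge\dd_0$, the theorem yields that $\tilde T(H^\eps+V)$ is Fredholm with $\Ind\tilde T(H^\eps+V)=\Ind T(H^\eps)=\BDI(P+M,P)$.

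For the last claim, the straight cut $\cL=\{x_1<0\}$, $\cR=\{x_1\ge0\}$ is a cut. If $\supp V\subset\{|x_1|\ge\dd_0\}$, then $\supp V\cap\cR$ lies at distance at least $\dd_0$ from $\cL$, and symmetrically for $\supp V\cap\cL$. In that case $\tilde T=T$.

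The only delicate step is the uniformity of the lower-bulk gap and of the constants in $\eps$. Both follow from the cited results, since $\eps_0$ is allowed to depend on the data but $\dd_0$ is not.
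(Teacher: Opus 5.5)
Your proof is correct and is essentially the paper's argument. You take $H_\circ=H^\eps$ and obtain [R1] from Lemma~\ref{lem:CT}(i), with constants independent of $\eps$ and of the size of the potential. You obtain [R2] from the gap of the lower bulk via Theorem~\ref{thm:hom}, Lemma~\ref{lem:speccomp} and Lemma~\ref{lem:fred}, and conclude with Theorem~\ref{thm:robust} ($n=1$) and Theorem~\ref{thm:TAI}.

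One imprecision: the lower bulk also contains the order-zero remainder $\mathcal R^\eps$ with $\rho\equiv1$, which Theorem~\ref{thm:hom} already covers, so the argument is unaffected. Your choices $R_0=g$ and gap interval $(-g/2,g/2)$ work as well as the paper's $R_0=2g$ and $(-3g/4,3g/4)$.
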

\begin{proof}
Take $H_\circ=H^\eps$, which for $k=1$ is $\gamma(D)$ plus a bounded potential. By Lemma~\ref{lem:CT}(i), [R1] holds with $C_0=2e^4$, $a=a_\gamma$ and $R_0=2g$, uniformly in $\eps$. By Theorem~\ref{thm:hom} with $\rho\equiv1$ and Lemma~\ref{lem:speccomp}, the lower bulk of $H^\eps$ is gapped on $(-3g/4,3g/4)$ for $\eps$ small, and $P$ is gapped on $(-g,g)$ by [G], so that [R2] holds by Lemma~\ref{lem:fred}. Choose $\dd_0$ with $C_1K_V^3(a_\gamma\dd_0)^{-1}<\frac1{16}$. Theorems~\ref{thm:robust} and~\ref{thm:TAI} give $\Ind\tilde T(H^\eps+V)=\Ind T(H^\eps)=\BDI(P+M,P)$.
\end{proof}
\noindent The threshold $\dd_0$ does not depend on $\eps$, because the decay [R1] of first-order operators does not depend on the potential. For $k=2$, a matrix potential and $P$ as in [C1], the same argument applies with Lemma~\ref{lem:CT}(ii), but $\dd_0$ then depends on $\eps$ through the $H^1$ bound on the resolvent of $H^\eps$.

\section{Proofs of the results at criticality}
\label{sec:crit}

This section states and proves the critical homogenization Theorem~\ref{thm:det} and proves Theorem \ref{thm:mainA}. The probabilistic estimates are proved in appendix~\ref{app:proba}. The proof of Theorem \ref{thm:det} removes the singular part of the potential by an exponential transform, which leaves the renormalized effective matrix and a perturbation that is small in negative H\"older norms when the configuration is small. Theorem \ref{thm:mainA} then follows as Theorem \ref{thm:mainB}, with the surgery performed near the bad unit boxes.

In this section and in appendix~\ref{app:proba}, $\hat f(\xi)=\int f(x)e^{-ix\cdot\xi}dx$. We use the dyadic Littlewood--Paley decomposition $\{\Delta_j\}_{j\geq-1}$ of \cite[\S2.2]{BCD11}, with $\Delta_j=\phi_j(D)$ and $\phi_j=\phi(2^{-j}\cdot)$ supported in $\{\frac342^j\le|\xi|\le\frac832^j\}$ for $j\geq0$, and denote by $k_j$ the kernel of $\Delta_j$. The H\"older--Zygmund spaces $C^s=B^s_{\infty,\infty}(\R^2)$ have norms $\|f\|_{C^s}=\sup_j2^{js}\|\Delta_jf\|_{L^\infty}$. More generally, $B^s_{p,r}$ denotes the Besov space with norm $\|(2^{js}\|\Delta_jf\|_{L^p})_j\|_{\ell^r}$, and $H^s=B^s_{2,2}$, $s\in\R$, is the $L^2$ Sobolev space. For a unit box $Q=m+[0,1)^2$ we set $Q^*=m+[-1,2]^2$. We fix a partition of unity $\sum_{m\in\Z^2}\psi(\cdot-m)\equiv1$ with $\psi\in C^\infty_c((-1,2)^2;[0,1])$ and set $\psi_Q=\psi(\cdot-m)$. By the localization principle \cite[Theorem 2.4.7]{Triebel92},
\begin{equation}\label{eq:locprinc}
\|u\|_{C^s}\le C_s\sup_Q\|\psi_Qu\|_{C^s}\qquad(u\in C^s,\ s\in\R).
\end{equation}

\begin{lemma}\label{lem:prod}
Let $\kappa\in(0,\frac14)$. Then
\[
\|fg\|_{H^{-2\kappa}}\le C\|f\|_{C^{-\kappa}}\|g\|_{H^{1-2\kappa}},\qquad
\|fh\|_{C^{-\kappa}}\le C\|f\|_{C^{-\kappa}}\|h\|_{C^{1-\kappa}} .
\]
\end{lemma}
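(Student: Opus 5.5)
Both estimates are paraproduct bounds in the Bony decomposition \cite[\S2.8]{BCD11}, and the plan is to treat them in parallel. We write $fg=T_fg+T_gf+R(f,g)$, where $T_fg=\sum_jS_{j-1}f\,\Delta_jg$ is the paraproduct and $R(f,g)=\sum_{|j-j'|\le1}\Delta_jf\,\Delta_{j'}g$ is the resonant term. Each piece is then estimated with the standard continuity results for paraproducts and remainders on Besov spaces, using the regularity counts $-\kappa+(1-2\kappa)=1-3\kappa>0$ and $-\kappa+(1-\kappa)=1-2\kappa>0$. Both sums are positive because $\kappa<\frac14$ (indeed $\kappa<\frac13$ suffices), and this positivity is exactly what the resonant term requires.

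For the first estimate, with $f\in C^{-\kappa}=B^{-\kappa}_{\infty,\infty}$ and $g\in H^{1-2\kappa}=B^{1-2\kappa}_{2,2}$:
\begin{itemize}[nosep]
\item \emph{Paraproduct $T_fg$.} Since $f$ has negative regularity, $\|S_{j-1}f\|_{L^\infty}\le C2^{j\kappa}\|f\|_{C^{-\kappa}}$. Each block $S_{j-1}f\,\Delta_jg$ has spectrum in an annulus of size $2^j$, so $T_fg\in B^{1-3\kappa}_{2,2}\subset H^{-2\kappa}$, because $1-3\kappa\ge-2\kappa$.
\item \emph{Paraproduct $T_gf$.} We use $\|S_{j-1}g\|_{L^2}\le C\|g\|_{L^2}$ together with $\|\Delta_jf\|_{L^\infty}\le2^{j\kappa}\|f\|_{C^{-\kappa}}$. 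This places $T_gf$ in $B^{-\kappa}_{2,\infty}$. That space embeds in $H^{-2\kappa}$, since the loss $\kappa>0$ makes the $\ell^2$ sum of $2^{-2j\kappa}2^{j\kappa}$ converge.
\item \emph{Resonant term $R(f,g)$.} Its blocks have spectrum in balls of size $2^j$ and $L^2$ norm at most $C2^{-j(1-3\kappa)}c_j$ with $c_j\in\ell^2$. Because $1-3\kappa>0$, the standard ball lemma gives $R(f,g)\in B^{1-3\kappa}_{2,2}\subset H^{-2\kappa}$.
\end{itemize}
For the second estimate the same three terms are handled in $L^\infty$-based spaces:
\begin{itemize}[nosep]
\item $T_fh\in C^{1-2\kappa}$, by the bound on $\|S_{j-1}f\|_{L^\infty}$ above.
\item $T_hf\in C^{-\kappa}$, since $\|S_{j-1}h\|_{L^\infty}\le C\|h\|_{L^\infty}\le C\|h\|_{C^{1-\kappa}}$, using that $1-\kappa>0$.
\item $R(f,h)\in C^{1-2\kappa}$, because $1-2\kappa>0$.
\end{itemize}
Each of these spaces embeds in $C^{-\kappa}$.

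The only delicate point is the resonant term in both estimates. For blocks with spectrum in balls rather than annuli, summing over high frequencies to a positive regularity index needs precisely the positivity $1-3\kappa>0$, respectively $1-2\kappa>0$, checked above. We will simply invoke \cite[Thms.~2.82, 2.85]{BCD11}, which state these continuity results in the required generality.

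Two remarks complete the argument. First, the Besov-index bookkeeping ($p=2$ or $\infty$, $r=2$ or $\infty$) must be tracked carefully. This is routine: all mixed products of an $L^\infty$ factor and an $L^2$ factor land in $L^2$-based spaces. Second, no localization is needed here, as the statement is global on $\R^2$.
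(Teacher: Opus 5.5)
Your proposal is correct and follows the paper's proof essentially step for step. Both use the Bony decomposition, invoke the standard paraproduct and remainder continuity theorems for $T_fg$, $R(f,g)$, $T_fh$ and $R(f,h)$ (using $1-3\kappa>0$ and $1-2\kappa>0$), and estimate the terms where $f$ carries the high frequencies block by block, placing $T_gf$ in $B^{-\kappa}_{2,\infty}\subset H^{-2\kappa}$ and $T_hf$ in $C^{-\kappa}$.
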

\begin{proof}
Write $fg=T_fg+T_gf+R(f,g)$ (Bony's decomposition, \cite[Ch.~2]{BCD11}), with $T_fg=\sum_jS_{j-1}f\,\Delta_jg$. By \cite[Theorem~2.82]{BCD11}, applied with the low-frequency factor $f\in B^{-\kappa}_{\infty,\infty}$, $\|T_fg\|_{H^{1-3\kappa}}\le C\|f\|_{C^{-\kappa}}\|g\|_{H^{1-2\kappa}}$. By \cite[Theorem~2.85]{BCD11}, since $-\kappa+(1-2\kappa)>0$, $\|R(f,g)\|_{H^{1-3\kappa}}\le C\|f\|_{C^{-\kappa}}\|g\|_{H^{1-2\kappa}}$. In $T_gf$ the rough factor $f$ carries the high frequencies, and its summability index $\infty$ is inherited: since $\Delta_jT_gf=\sum_{|j'-j|\le4}\Delta_j(S_{j'-1}g\,\Delta_{j'}f)$,
\[
\|\Delta_jT_gf\|_{L^2}\le C\sum_{|j'-j|\le4}\|S_{j'-1}g\|_{L^2}\|\Delta_{j'}f\|_{L^\infty}\le C2^{j\kappa}\|g\|_{L^2}\|f\|_{C^{-\kappa}},
\]
so that $T_gf\in B^{-\kappa}_{2,\infty}\subset H^{-2\kappa}$, with norm $\le C\|g\|_{L^2}\|f\|_{C^{-\kappa}}$. Since $1-3\kappa>0$, this proves the first bound. For the second, $\|T_hf\|_{C^{-\kappa}}\le C\|h\|_{L^\infty}\|f\|_{C^{-\kappa}}$, while $T_fh$ and $R(f,h)$ lie in $C^{1-2\kappa}$, by \cite[Theorems~2.82 and~2.85]{BCD11}, since $1-2\kappa>0$.
\end{proof}

\paragraph{The local size.} We make precise the objects introduced in block (vii) of section \ref{sec:setting}. Let $\chi_0\in C^\infty_c(\Rm^2)$ be radial and equal to $1$ near $0$. With $\tilde V$ as in \eqref{eq:Hcrit}, we set $\tilde V_{\rm low}=\chi_0(D)\tilde V$, $\tilde V_{\rm high}=(1-\chi_0(D))\tilde V$ and $Y_0=-q(D)^{-1}\tilde V_{\rm high}$. With $A$, $B$ and $q$ as in [C1], $|v|_B^2:=v\cdot Bv$ and $Y_\eps=\eps[0,1)^2$, we define
\begin{equation}\label{eq:ceps}
  c_\eps:=\frac1{|Y_\eps|}\dint_{Y_\eps}\EE|\nabla Y_0(x)|_B^2\,dx,\qquad\cW:=|\nabla Y_0|_B^2-c_\eps,\qquad M_\eps:=-c_\eps A^{-1}.
\end{equation}
For $\kappa\in(0,\frac14)$ fixed throughout and a unit box $Q$, the local size of a configuration is
\begin{equation}\label{eq:deltaQ}
  \delta_Q:=\|Y_0\|_{L^\infty(Q^*)}+\|\tilde V_{\rm low}\|_{L^\infty(Q^*)}+\|\psi_Q\nabla Y_0\|_{C^{-\kappa}}+\|\psi_Q\cW\|_{C^{-\kappa}},\qquad\delta:=\sup_Q\delta_Q .
\end{equation}
\begin{theorem}[Critical homogenization]\label{thm:det}
Assume [C1]--[C3], and let $R\ge1$. Let $\theta\in C^\infty(\R^2;[0,1])$ have bounded derivatives up to order two, let
\[
H_\theta:=P+\rho\,\theta\,\tilde V+(1-\theta^2)\rho^2M_\eps ,
\]
and let $\delta_\theta:=\sup\{\delta_Q:\dist(Q,\supp\theta)\le7\}$. There are $\delta_*>0$ and $C$, depending only on $(A,B,c_j,P_0,\rho,R)$, on the law of the couplings and on the bounds on the derivatives of $\theta$, such that if $\delta_\theta\le\delta_*$, then for all $z$ with $|\Re z|\le R$ and $\Im z\ne0$,
\[
  \big\|(H_\theta-z)^{-1}-(P+\rho^2M_{\rm crit}-z)^{-1}\big\|\le C\bigl(1+|\Im z|^{-2}\bigr)\Big(\delta_\theta+\frac1{\log(1/\eps)}\Big).
\]
Moreover, with $N_\theta:=\sup\{\|\nabla Y_0\|_{L^\infty(Q^*)}:\dist(Q,\supp\theta)\le7\}$, $\|(H_\theta-z_0)^{-1}\|_{L^2\to H^1}\le C(1+N_\theta)$ for $|\Im z_0|=1$ and $|\Re z_0|\le R$. The same holds with $\rho\equiv1$ and $P+M_{\rm crit}$ in place of $P+\rho^2M_{\rm crit}$.
\end{theorem}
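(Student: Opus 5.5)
The plan is to remove the singular high-frequency part of the potential by an exponential (gauge-type) transform, and then compare with $P+\rho^2M_{\rm crit}$ through a weak formulation in negative Sobolev norms. Set $Y:=\rho\theta Y_0$, so that $q(D)Y\approx-\rho\theta\tilde V_{\rm high}$ up to commutator terms involving derivatives of $\rho\theta$ multiplying $Y_0$ and $\nabla Y_0$. These commutator terms are small, since $\delta_\theta$ controls $\|Y_0\|_{L^\infty}$ and $\|\nabla Y_0\|_{C^{-\kappa}}$ near $\supp\theta$. We conjugate $H_\theta$ by $e^{Y}$, or more precisely by $e^{A^{-1}\cdots}$ adapted to the matrix $A$; in the scalar-disorder case $Y$ is scalar and we use $e^{\pm Y}$. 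Expanding $e^{-Y}Aq(D)e^{Y}=Aq(D)-2A\,\nabla Y\cdot BD-A\,q(D)Y\cdot$ shifted by $-A|\nabla Y|_B^2$ (signs to be fixed), the term $-A\,q(D)Y$ cancels $\rho\theta\tilde V_{\rm high}$. The quadratic term becomes $A\rho^2\theta^2|\nabla Y_0|_B^2=A\rho^2\theta^2(c_\eps+\cW)$. Since $\lambda_\eps^2=\ell/\log(1/\eps)$, the mean $c_\eps$ equals $\ell s_0c_q+O(1/\log(1/\eps))$, by a dyadic-shell computation of $\EE|\nabla Y_0|^2$. This gives $-c_\eps A\cdot A^{-1}\ldots$, that is, the correction $\rho^2\theta^2 M_\eps$, which combines with $(1-\theta^2)\rho^2M_\eps$ to give $\rho^2M_\eps$. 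We then replace $M_\eps$ by $M_{\rm crit}$ at cost $O(1/\log(1/\eps))$. One point needs care: the conjugation by $e^Y$ should be arranged so that a sign error does not produce $+c_\eps A$; the claimed sign $M_{\rm crit}=-\ell s_0c_qA^{-1}$ suggests writing the transform on the resolvent equation as $u=e^{A^{-1}\ldots}$. I would first check this on the scalar case $A=1$.

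With the transform in place, the resulting operator is $P+\rho^2M_\eps$ perturbed by three pieces. The first is the first-order term $\nabla Y\cdot BD$, with $\nabla Y$ small in $C^{-\kappa}$. The second is $A\rho^2\theta^2\cW$, small in $C^{-\kappa}$. The third consists of $\rho\theta\tilde V_{\rm low}$ and the $P_1$ and $P_0$ cross terms, all small in $L^\infty$. By the localization principle \eqref{eq:locprinc}, these local bounds give global $C^{-\kappa}$ bounds by $C\delta_\theta$. We then estimate the perturbation $\mathcal E$ as a map from $H^{1-2\kappa}$ to $H^{-2\kappa}$ using Lemma~\ref{lem:prod}: $\|\nabla Y\cdot BDu\|_{H^{-2\kappa}}\le C\delta_\theta\|u\|_{H^{2-2\kappa}}$, and $\|\cW u\|_{H^{-2\kappa}}\le C\delta_\theta\|u\|_{H^{1-2\kappa}}$. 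Since $(P+\rho^2M_\eps-z)^{-1}$ maps $H^{-2\kappa}\to H^{2-2\kappa}$ with norm $\le C(1+|\Im z|^{-1})$ for $|\Re z|\le R$ by ellipticity, a Neumann series gives invertibility of the conjugated operator once $\delta_\theta\le\delta_*$. It also gives the resolvent difference bound $C(1+|\Im z|^{-2})\delta_\theta$, with the second resolvent power coming from the usual resolvent identity.

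To return to $H_\theta$, write $(H_\theta-z)^{-1}=e^{Y}\tilde R(z)e^{-Y}$, where $\tilde R$ denotes the resolvent of the conjugated operator. Because $\|e^{\pm Y}-1\|_\infty\le C\|Y_0\|_{L^\infty}\le C\delta_\theta$, replacing the conjugating factors by the identity costs $C\delta_\theta|\Im z|^{-1}$. The $H^1$ bound follows the same route. Here $\nabla e^{Y}$ involves $\nabla Y_0$ in $L^\infty$, which gives the factor $1+N_\theta$, while the $L^2\to H^1$ bound of the transformed resolvent is uniform. The case $\rho\equiv1$ is identical.

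I expect the main obstacle to be the treatment of the first-order term $\nabla Y\cdot BD$. First, $\nabla Y_0$ is only small in $C^{-\kappa}$, and is large (of size $\sqrt{\log}$) in $L^\infty$. Second, for non-scalar $A$ the transform does not commute with $A$, so the cancellation must be organized with $A^{-1}$ acting on $Y$. The first thing I would try is to keep the product of the $C^{-\kappa}$ function with $Du\in H^{1-2\kappa}$, which Lemma~\ref{lem:prod} permits. A second, separate task is to verify that $c_\eps\to\ell s_0c_q$ at rate $1/\log(1/\eps)$.
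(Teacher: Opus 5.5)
Your overall strategy is the paper's. You conjugate away $\tilde\rho\tilde V_{\rm high}$ by an exponential of the corrector and read off $M_\eps$ from the mean of the corrector energy. You bound the remainder from $H^{2-2\kappa}$ to $H^{-2\kappa}$ with Lemma~\ref{lem:prod} and the localization principle, and invert by a Neumann series. You then undo the conjugation at cost $\|e^{\pm X}-1\|_\infty\le C\delta_\theta$, and get the $H^1$ bound from $\|\nabla X\|_\infty\le C(\delta_\theta+N_\theta)$. Two steps, however, fail as you wrote them.

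\emph{The transform.} With a scalar $Y=\theta\rho Y_0$,
\[
e^{-Y}Aq(D)e^{Y}=A\bigl(q(D)-2(B\nabla Y)\cdot\nabla+(q(D)Y)-|\nabla Y|_B^2\bigr).
\]
Here the zeroth-order term $A\,q(D)Y\ni-\tilde\rho\tilde V_{\rm high}A$ cancels a potential $\tilde\rho\tilde V_{\rm high}A$, not the scalar potential $\tilde\rho\tilde V_{\rm high}$ of [C2], and the mean then produces $-c_\eps A$. This is not a sign issue, and testing $A=1$ cannot detect it. Cancelling a scalar potential forces $A\,q(D)X=-\tilde\rho\tilde V_{\rm high}$ up to commutators, i.e.\ $X=A^{-1}\theta\rho Y_0$. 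Its values are scalar multiples of $A^{-1}$, so they do commute with each other and with $A$, contrary to your worry, and
\[
e^{-X}Aq(D)e^{X}=Aq(D)-2(B\nabla Y)\cdot\nabla+q(D)Y-A^{-1}|\nabla Y|_B^2 ,
\]
whose mean part is $-A^{-1}\tilde\rho^2c_\eps=\theta^2\rho^2M_\eps$. The non-commutativity appears instead with the matrices $c_j$ and $P_0$. In particular the term $e^{-X}c_j(D_jX)e^{X}=(D_jY)\,e^{-X}c_jA^{-1}e^{X}$ is not small in $L^\infty$, as you assert for the $P_1$ cross terms. It is a $C^{-\kappa}$ scalar times a matrix function of $X$. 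To apply the second bound of Lemma~\ref{lem:prod} you need $\|e^{\pm X}\|_{C^{1-\kappa}}\le C$. This follows from $\|u\|_{C^{1-\kappa}}\le C(\|u\|_{L^\infty}+\|\nabla u\|_{C^{-\kappa}})$ applied to a cutoff of $Y_0$ near $\supp\theta$.

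\emph{The Neumann series.} It cannot be run at every $z$ with a $z$-independent $\delta_*$. Since $\|(P+\rho^2M_\eps-z)^{-1}\|_{H^{-2\kappa}\to H^{2-2\kappa}}$ grows like $|\Im z|^{-1}$ when $\Re z$ lies in the spectrum, your smallness condition reads $C\delta_\theta(1+|\Im z|^{-1})<1$, which fails as $\Im z\to0$. The paper inverts only on $|\Im z|=1$, $|\Re z|\le R$, where $\|\cA R_M(z)\|\le C_1'\delta_\theta$ uniformly, and proves the estimate there. It then transfers the estimate to all $z$ with $\Im z\ne0$ through \eqref{eq:shiftz}, with $z_0=\Re z+i\,\sgn(\Im z)$. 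The outer factors there are $(H-z_0)(H-z)^{-1}$ for the two self-adjoint operators, of norm $\le\max(1,|\Im z|^{-1})$, and this is exactly the source of $1+|\Im z|^{-2}$. Alternatively, you could run your series only for $|\Im z|\ge C\delta_\theta$ and use the trivial bound $2|\Im z|^{-1}\le C\delta_\theta|\Im z|^{-2}$ below. Finally, the identification $(H_\theta-z)^{-1}=e^{X}\tilde R e^{-X}$ requires checking that $u=e^X\tilde Re^{-X}f$ lies in $H^2=\mathrm{dom}(H_\theta)$. This follows from elliptic regularity for fixed $\eps$.
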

\noindent Theorem \ref{thm:det} is deterministic: its constants do not depend on $\eps$ or on the configuration, the law entering only through Lemma~\ref{lem:ceps}, and $A$ may be indefinite. For $\theta\equiv1$, $H_\theta=H^\eps$ and $\delta_\theta=\delta$. However, $\delta\geq1$ a.s.\ on $\R^2$ for $\eps$ small (Lemma~\ref{S-lem:globalsize} of the supplementary material). This is why Theorem \ref{thm:det} is stated with a cutoff $\theta$: with $\theta$ vanishing near the unit boxes of large local size, $H_\theta$ is the reference operator \eqref{eq:Hg} used in the proof of Theorem \ref{thm:mainA}.

\paragraph{The renormalization constant.}
\begin{lemma}\label{lem:ceps}
With $s_0$, $c_q$ and $M_{\rm crit}$ as in block (vii) of section \ref{sec:setting}, $c_\eps=\ell s_0c_q+O(1/\log(1/\eps))$, and hence $|M_\eps-M_{\rm crit}|\le C/\log(1/\eps)$.
\end{lemma}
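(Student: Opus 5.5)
The plan is to compute $c_\eps$ exactly in Fourier variables and reduce it to a logarithmic integral over the annulus $1\lesssim|\xi|\lesssim\eps^{-1}$. Since $Y_0=-q(D)^{-1}(1-\chi_0(D))\tilde V$ with $\tilde V=\lambda_\eps\eps^{-1}W(x/\eps)$, we have
\[
\nabla Y_0=\mathcal F^{-1}\Bigl[-i\xi\,q(\xi)^{-1}(1-\chi_0(\xi))\hat{\tilde V}(\xi)\Bigr],
\]
a linear function of the couplings. For i.i.d.\ centered couplings, $\EE\,\omega_j\omega_l=\EE\omega^2\,\delta_{jl}$, so
\[
\EE|\nabla Y_0(x)|_B^2=\lambda_\eps^2\eps^{-2}\,\EE\omega^2\sum_j|\nabla_B\Phi_\eps(x-\eps j)|^2,
\]
where $\Phi_\eps$ is the response to one rescaled site: $\hat\Phi_\eps(\xi)=-\eps^2q(\xi)^{-1}(1-\chi_0(\xi))\hat w(\eps\xi)$, with $|v|_B$ applied to the gradient.

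Averaging over $x\in Y_\eps$ turns the lattice sum into an integral over $\R^2$. By Plancherel,
\[
c_\eps=\lambda_\eps^2\eps^{-4}\EE\omega^2\int_{\R^2}\frac{|\xi|_B^2}{q(\xi)^2}(1-\chi_0(\xi))^2|\hat w(\eps\xi)|^2\,\frac{\eps^2\,d\xi}{(2\pi)^2}
\]
(up to bookkeeping of $\eps$ powers, which I would verify carefully). Since $|\xi|_B^2=q(\xi)$, the integrand is $q(\xi)^{-1}(1-\chi_0)^2|\hat w(\eps\xi)|^2$ times constants. The integral is therefore, after normalization,
\[
(2\pi)^{-2}\int q(\xi)^{-1}(1-\chi_0(\xi))^2|\hat w(\eps\xi)|^2\,d\xi .
\]

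In polar coordinates $\xi=te$, $q(\xi)^{-1}=t^{-2}q(e)^{-1}$, so the measure $d\xi/q(\xi)=\frac{dt}{t}\,\frac{de}{q(e)}$ is scale invariant. I would split the radial integral at $t\sim1$ and $t\sim\eps^{-1}$:
\begin{itemize}
\item On $[1,\eps^{-1}]$, replacing $|\hat w(\eps te)|^2$ by $|\hat w(0)|^2=(\int w)^2$ costs $\int_1^{1/\eps}\min(1,\eps t)\,dt/t=O(1)$, since $\hat w$ is smooth.
\item The cutoff region $t\lesssim1$ contributes $O(1)$.
\item The tail $t\gtrsim\eps^{-1}$ contributes $\int_{1}^\infty|\hat w(se)|^2ds/s=O(1)$, by the decay of $\hat w$.
\end{itemize}
This gives $(2\pi)^{-2}\int_{\Sm^1}q(e)^{-1}de\,(\int w)^2\log(1/\eps)+O(1)$. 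Multiplying by $\EE\omega^2\lambda_\eps^2=\EE\omega^2\,\ell/\log(1/\eps)$ yields
\[
c_\eps=\ell s_0c_q+O\bigl(1/\log(1/\eps)\bigr).
\]
The bound on $M_\eps$ is then immediate from $M_\eps-M_{\rm crit}=-(c_\eps-\ell s_0c_q)A^{-1}$.

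The main obstacle is making the averaging step exact. Averaging a $\eps\Z^2$-periodic function over one cell equals $\eps^{-2}\int_{\R^2}$ of the single-site quantity, by unfolding the lattice sum, so no Poisson-summation error arises; only the diagonal terms survive because the couplings are centered. Beyond that, the care is in tracking the $\eps$ scalings and the $(2\pi)$ conventions of this section, where $\hat f(\xi)=\int f e^{-ix\cdot\xi}$, so that the constant matches $c_q=(2\pi)^{-2}\int q(e)^{-1}de$. I would check this against the explicit BHZ value $M_{\rm crit}=\frac{\ell s_0}{2\pi}\frac{c-b\sigma_3}{b^2-c^2}$ with $q=|\xi|^2$, for which $c_q=(2\pi)^{-1}$.
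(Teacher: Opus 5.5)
Your argument is correct and is the paper's proof written out in more detail: unfolding the cell average over $Y_\eps$ is what the paper means by replacing the covariance with its cell average, which yields the spectral density $\EE\omega^2|\hat w(\eps\xi)|^2$, and both arguments then use $|\xi|_B^2=q(\xi)$, polar coordinates, and the $\log(1/\eps)$ count of the radial integral. The only slip is in your displayed Plancherel formula, where the factor should be $\eps^4$ (from $|\hat\Phi_\eps|^2$) rather than $\eps^2$, so that it cancels the prefactor $\eps^{-4}$ exactly; the normalized integral you go on to use is already correct.
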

\begin{proof}
Averaging over $Y_\eps$ replaces the covariance of $W$ by its cell average, with spectral density $s=\EE\omega^2|\hat w|^2$, so that $s(0)=s_0$ and
\[
c_\eps=(2\pi)^{-2}\int\frac{|\xi|_B^2}{q(\xi)^2}(1-\chi_0(\xi))^2\lambda_\eps^2s(\eps\xi)\,d\xi=(2\pi)^{-2}\lambda_\eps^2\int\frac{(1-\chi_0(\xi))^2s(\eps\xi)}{q(\xi)}\,d\xi ,
\]
since $|\xi|_B^2=q(\xi)$. In polar coordinates $\xi=te$, $|e|=1$, this is $(2\pi)^{-2}\lambda_\eps^2\int_{S^1}q(e)^{-1}\int_0^\infty(1-\chi_0(t))^2s(\eps te)\,\frac{dt}t\,de$. The function $s=\EE\omega^2|\hat w|^2$ is Schwartz. Hence the inner integral is $s_0\log(1/\eps)+O(1)$, uniformly in $e$, and $c_\eps=\lambda_\eps^2\log(1/\eps)\,s_0c_q+O(\lambda_\eps^2)=\ell s_0c_q+O(1/\log(1/\eps))$.
\end{proof}

\paragraph{The exponential transform.} Following \cite{HairerLabbe15}, who treated the scalar case $-\Delta$, we remove the singular part of the potential by conjugation. With $\theta$ and $H_\theta$ as in Theorem \ref{thm:det}, set
$\tilde\rho:=\theta\rho$, $Y:=\tilde\rho Y_0$ and $X:=A^{-1}Y$. The matrices $X(x)$ commute with one another and with $A$, since they are scalar multiples of $A^{-1}$.

\begin{lemma}[Exponential transform]\label{lem:conj}
\textup{(i)} For smooth $f$, $e^{-X}(H_\theta-z)e^{X}f=(P+\rho^2M_\eps-z)f+\cA f$, where $\cA=\sum_{i=1}^{6}\cA_i$, with
\begin{align*}
\cA_1&=-2\tilde\rho(B\nabla Y_0)\cdot\nabla-2Y_0(B\nabla\tilde\rho)\cdot\nabla,&
\cA_2&=\tilde\rho\tilde V_{\rm low}-2(B\nabla\tilde\rho)\cdot\nabla Y_0-(\nabla\cdot B\nabla\tilde\rho)Y_0,\\
\cA_3&=-A^{-1}\tilde\rho^2\cW,&
\cA_4&=-A^{-1}\bigl(2\tilde\rho Y_0(B\nabla\tilde\rho)\cdot\nabla Y_0+Y_0^2|\nabla\tilde\rho|_B^2\bigr),\\
\cA_5&=\sum_j\bigl(e^{-X}c_je^{X}-c_j\bigr)D_j+e^{-X}c_j(D_jX)e^{X},&
\cA_6&=e^{-X}P_0e^{X}-P_0 .
\end{align*}
\textup{(ii)} If $\delta_\theta\le1$, then $\|e^{\pm X}-1\|_{L^\infty}\le C_0'\delta_\theta$, $\|e^{\pm X}\|_{C^{1-\kappa}}\le C_0'$, and, for $u\in H^{2-2\kappa}$,
\[
\|\cA u\|_{H^{-2\kappa}}\le C_0'\,\delta_\theta\,\|u\|_{H^{2-2\kappa}},
\]
with $C_0'$ depending on the bounds on the derivatives of $\theta$.
\end{lemma}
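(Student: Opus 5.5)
The proof has two parts: \emph{(i)} is an exact commutation computation that uses only $AX=Y$ and $q(D)Y_0=-\tilde V_{\rm high}$; \emph{(ii)} localizes the bound $\delta_\theta$ into global norms and applies Lemma~\ref{lem:prod} term by term.

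For \emph{(i)}, write $q(D)=-\nabla\cdot B\nabla$, so that $H_\theta=A\,q(D)+P_1(D)+P_0+\tilde\rho(\tilde V_{\rm low}+\tilde V_{\rm high})+(1-\theta^2)\rho^2M_\eps$. Since $X=A^{-1}Y$ with $Y$ scalar, $e^{\pm X}$ commutes with $A$. The scalar Leibniz rule gives
\[
e^{-X}q(D)e^{X}f=q(D)f-2(B\nabla X)\cdot\nabla f+(q(D)X)f-|\nabla X|_B^2f .
\]
Multiplying by $A$, and using $AX=Y$ and $A|\nabla X|_B^2=A^{-1}|\nabla Y|_B^2$, produces three terms, which I would track separately.
\begin{itemize}
\item The gradient term is $-2(B\nabla Y)\cdot\nabla$, with $\nabla Y=\tilde\rho\nabla Y_0+Y_0\nabla\tilde\rho$. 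This is exactly $\cA_1$.
\item Next, $q(D)Y=-\tilde\rho\tilde V_{\rm high}-2(B\nabla\tilde\rho)\cdot\nabla Y_0-(\nabla\cdot B\nabla\tilde\rho)Y_0$. Its first term cancels the high-frequency part of $\tilde\rho\tilde V$, and the remaining terms together with $\tilde\rho\tilde V_{\rm low}$ make up $\cA_2$.
\item Finally, $-A^{-1}|\nabla Y|_B^2$ expands as $-A^{-1}\tilde\rho^2(c_\eps+\cW)$ plus the cross and $|\nabla\tilde\rho|_B^2$ terms, which form $\cA_4$. The piece $-A^{-1}\tilde\rho^2\cW$ is $\cA_3$. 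The piece $-A^{-1}\tilde\rho^2c_\eps=\tilde\rho^2M_\eps$ combines with $(1-\theta^2)\rho^2M_\eps$ to give exactly $\rho^2M_\eps$, because $\tilde\rho^2=\theta^2\rho^2$.
\end{itemize}
The first-order part $P_1$ and the zeroth-order part $P_0$ need not commute with $X$. Conjugating them directly gives $e^{-X}c_jD_j(e^Xf)=e^{-X}c_je^XD_jf+e^{-X}c_j(D_jX)e^Xf$, which is $\cA_5$, while the $P_0$ term is $\cA_6$.

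For \emph{(ii)}, the first task is to convert $\delta_\theta$, a supremum over boxes with $\dist(Q,\supp\theta)\le7$, into global bounds. Since $\tilde\rho=\theta\rho$ vanishes off $\supp\theta$, we have $\|X\|_\infty\le C\sup\|Y_0\|_{L^\infty(Q^*)}\le C\delta_\theta$. By the localization principle \eqref{eq:locprinc}, $\|\tilde\rho\nabla Y_0\|_{C^{-\kappa}}\le C\sup_Q\|\psi_Q\tilde\rho\nabla Y_0\|_{C^{-\kappa}}$. Only boxes $Q$ whose $\psi_Q$ meets $\supp\theta$ contribute, and for such $Q$ I would write $\psi_Q\tilde\rho=\tilde\rho\sum_{Q'}\psi_{Q'}\psi_Q$ over the finitely many neighbours $Q'$, all within distance $7$ of $\supp\theta$. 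Multiplication by smooth functions with bounded derivatives is bounded on $C^{-\kappa}$, so each summand is controlled by $\|\psi_{Q'}\nabla Y_0\|_{C^{-\kappa}}\le\delta_{Q'}$. This yields $\|\tilde\rho\nabla Y_0\|_{C^{-\kappa}}\le C\delta_\theta$, and the same argument gives $\|\tilde\rho^2\cW\|_{C^{-\kappa}}\le C\delta_\theta$. Combining the $L^\infty$ bound on $Y_0$ with the $C^{-\kappa}$ bound on its gradient gives $\|Y\|_{C^{1-\kappa}}\le C\delta_\theta$. Since $C^{1-\kappa}$ is an algebra, the exponential series yields $\|e^{\pm X}\|_{C^{1-\kappa}}\le C_0'$ and $\|e^{\pm X}-1\|_\infty\le C_0'\delta_\theta$.

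With these global bounds, each $\cA_i$ is estimated by Lemma~\ref{lem:prod}.
\begin{itemize}
\item For $\cA_1$, take $f=\tilde\rho\nabla Y_0\in C^{-\kappa}$ and $g=\nabla u\in H^{1-2\kappa}$ in the first bound; the term $Y_0\nabla\tilde\rho\cdot\nabla u$ is bounded in $L^2$ directly.
\item For $\cA_3$, take $f=\tilde\rho^2\cW$ and $g=u$.
\item $\cA_2$ is an $L^\infty$ function plus the product $\nabla\tilde\rho\cdot\nabla Y_0$ of a $C^{-\kappa}$ function and a smooth one, so the same bound applies.
\item In $\cA_4$, the second bound of Lemma~\ref{lem:prod} places $Y_0\nabla Y_0$, localized by the cutoff, in $C^{-\kappa}$ with norm $C\delta_\theta$, since $Y_0$ is locally in $C^{1-\kappa}$; the first bound then applies.
\item In $\cA_5$, the first term is $\|e^{-X}c_je^X-c_j\|_\infty\|\nabla u\|\le C\delta_\theta\|u\|_{H^1}$. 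For the second, $D_jX=A^{-1}D_jY$ lies in $C^{-\kappa}$ with norm $C\delta_\theta$. Multiplying it by the $C^{1-\kappa}$ matrix $e^{-X}c_j$, via the second bound of Lemma~\ref{lem:prod}, keeps it in $C^{-\kappa}$ with norm $C\delta_\theta$. The factor $e^Xu$ lies in $H^{1-2\kappa}$, because $C^{1-\kappa}$ multiplies $H^{1-2\kappa}$ when $1-\kappa>1-2\kappa$. The first bound of Lemma~\ref{lem:prod} then closes the estimate.
\item $\cA_6$ is bounded in $L^\infty$ by $C\delta_\theta$.
\end{itemize}
The embedding $L^2\subset H^{-2\kappa}$ handles all the $L^2$ bounds. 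The main obstacle is the localization step: one must check carefully that every factor $\psi_Q$ appearing after applying \eqref{eq:locprinc} to $\tilde\rho$-weighted quantities only sees boxes $Q'$ with $\dist(Q',\supp\theta)\le7$. One must also check that the constants depend on $\theta$ only through its derivatives up to order two, which enter via $\nabla\tilde\rho$, $\nabla\cdot B\nabla\tilde\rho$ and the multiplier bounds.
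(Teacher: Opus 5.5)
Your proposal is correct and follows the paper's proof essentially step by step. For (i) you use the same scalar Leibniz conjugation with $AX=Y$ and $q(D)Y_0=-\tilde V_{\rm high}$; you should state explicitly that $\tilde\rho\tilde V$ and $(1-\theta^2)\rho^2M_\eps$ pass through the conjugation unchanged, since they are scalar or multiples of $A^{-1}$. For (ii) you use the same localization via \eqref{eq:locprinc} over neighbouring boxes within distance $7$ of $\supp\theta$, followed by term-by-term use of Lemma~\ref{lem:prod}.

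The only difference is cosmetic and concerns the second part of $\cA_5$. You keep $e^X$ on $u$ and invoke that $C^{1-\kappa}$ multiplies $H^{1-2\kappa}$. The paper instead folds $e^X$ into the $C^{-\kappa}$ coefficient $(D_jY)\,e^{-X}c_jA^{-1}e^X$, which avoids that extra multiplier estimate.
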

\begin{proof}
\emph{(i)} Since the values of $X$ commute,
$
\partial_i\partial_j(e^Xf)=e^X\bigl(\partial_i\partial_jf+\partial_iX\partial_jf+\partial_jX\partial_if+(\partial_i\partial_jX)f+\partial_iX\partial_jXf\bigr)
$.
With $q(D)=-\sum B_{ij}\partial_i\partial_j$, this gives
\[
e^{-X}q(D)e^Xf=q(D)f-2(B\nabla X)\cdot\nabla f+(q(D)X)f-(\nabla X\cdot B\nabla X)f .
\]
Multiply by $A$, and use $X=A^{-1}Y$, the commutation of $A$ with $X$, and that $Y$ is scalar. The result is
\[
Aq(D)f-2(B\nabla Y)\cdot\nabla f+(q(D)Y)f-A^{-1}|\nabla Y|^2_Bf .
\]
Now $q(D)Y=\tilde\rho\,q(D)Y_0+[q(D),\tilde\rho]Y_0=-\tilde\rho\tilde V_{\rm high}-2(B\nabla\tilde\rho)\cdot\nabla Y_0-(\nabla\cdot B\nabla\tilde\rho)Y_0$. The term $-\tilde\rho\tilde V_{\rm high}$ cancels the singular part of the potential $\tilde\rho\tilde V=\tilde\rho\tilde V_{\rm high}+\tilde\rho\tilde V_{\rm low}$. Expand
\[
|\nabla Y|_B^2=\tilde\rho^2(c_\eps+\cW)+2\tilde\rho Y_0(B\nabla\tilde\rho)\cdot\nabla Y_0+Y_0^2|\nabla\tilde\rho|_B^2 ,
\]
and note that $-A^{-1}\tilde\rho^2c_\eps=\theta^2\rho^2M_\eps$. Together with the term $(1-\theta^2)\rho^2M_\eps$ of $H_\theta$, which commutes with $X$ since $M_\eps=-c_\eps A^{-1}$, this gives $\rho^2M_\eps$. Finally, for the first-order part, $e^{-X}c_jD_j(e^Xf)=e^{-X}c_je^XD_jf+e^{-X}c_j(D_jX)e^Xf$.

\emph{(ii)} Every coefficient of $\cA$ is a sum of terms $\chi f$ or $\chi fh$, where $f\in\{\nabla Y_0,\cW,Y_0,\tilde V_{\rm low}\}$, $\chi$ is a product of at most two factors among $\tilde\rho$, $\nabla\tilde\rho$, $\nabla^2\tilde\rho$, with $\nabla^2\tilde\rho$ occurring only when $f\in\{Y_0,\tilde V_{\rm low}\}$, and $h$ is $1$, $Y_0$, or a smooth function of $X$. We first show that such coefficients are controlled by $\delta_\theta$. Let $\chi'$ be a smooth function equal to $1$ on $\supp\chi$ and supported in its $1$-neighbourhood, with bounded derivatives. We use \eqref{eq:locprinc} with $s\in\{-\kappa,1-\kappa\}$. For $u=\chi f$, $\psi_Q\chi f$ vanishes unless $Q^*$ meets $\supp\chi$, and then $\psi_Q\chi f=\sum_{Q'}(\psi_Q\chi)(\psi_{Q'}f)$, a sum over the at most $25$ unit boxes $Q'$ with $Q'^*\cap Q^*\ne\emptyset$. Since $\supp\chi\subset\supp\theta$, each such $Q'$ satisfies $\dist(Q',\supp\theta)\le4\sqrt2$. The same argument with $\chi'$ in place of $\chi$, whose support lies in the $1$-neighbourhood of $\supp\theta$, gives $\dist(Q',\supp\theta)\le4\sqrt2+1\le7$. By Lemma~\ref{lem:prod} (second inequality, with $C^1\subset C^{1-\kappa}$), $\|(\psi_Q\chi)(\psi_{Q'}f)\|_{C^{-\kappa}}\le C\|\chi\|_{C^1}\|\psi_{Q'}f\|_{C^{-\kappa}}$, and $\|\psi_{Q'}f\|_{C^{-\kappa}}\le C\|f\|_{L^\infty(Q'^*)}$ when $f\in\{Y_0,\tilde V_{\rm low}\}$. Hence $\|\chi f\|_{C^{-\kappa}}\le C\delta_\theta$, and $\|\chi f\|_{L^\infty}\le C\delta_\theta$ for $f\in\{Y_0,\tilde V_{\rm low}\}$. The same argument with $s=1-\kappa$, together with $\|u\|_{C^{1-\kappa}}\le C(\|u\|_{L^\infty}+\|\nabla u\|_{C^{-\kappa}})$ \cite[Ch.~2]{BCD11}, gives $\|\chi'Y_0\|_{C^{1-\kappa}}\le C\delta_\theta$ and $\|e^{\pm X}\|_{C^{1-\kappa}}+\delta_\theta^{-1}\|e^{\pm X}-1\|_{L^\infty}\le Ce^{C\delta_\theta}$. This gives the first two bounds of (ii).

We now bound the terms one by one, using Lemma~\ref{lem:prod} with $g=\nabla u$ or $g=u$ in $H^{1-2\kappa}$, and write $\delta$ for $\delta_\theta\le1$, so that the factors $e^{C\delta}$ and $1+\delta$ below are bounded.
For $\cA_1$, the coefficient $\tilde\rho B\nabla Y_0$ lies in $C^{-\kappa}$ with norm $\le C\delta$, and $Y_0B\nabla\tilde\rho$ lies in $L^\infty$ with norm $\le C\delta$. For $\cA_2$, $\tilde\rho\tilde V_{\rm low}$ and $(\nabla\cdot B\nabla\tilde\rho)Y_0$ are bounded by $C\delta$ in $L^\infty$, and $(B\nabla\tilde\rho)\cdot\nabla Y_0$ by $C\delta$ in $C^{-\kappa}$. The coefficient of $\cA_3$ is bounded by $C\delta$ in $C^{-\kappa}$. In $\cA_4$, $Y_0\nabla Y_0$ is a product of a $C^{1-\kappa}$ and a $C^{-\kappa}$ function, hence bounded by $C\delta^2$ in $C^{-\kappa}$, while $Y_0^2|\nabla\tilde\rho|^2_B$ is bounded by $C\delta^2$ in $L^\infty$. The first part of $\cA_5$ has an $L^\infty$ coefficient bounded by $Ce^{C\delta}\delta$ acting on $D_ju\in H^{1-2\kappa}$. In the second part, $D_jX=A^{-1}D_jY$, so that the coefficient $(D_jY)\,e^{-X}c_jA^{-1}e^X$ is a $C^{-\kappa}$ scalar times a $C^{1-\kappa}$ matrix, of norm $\le Ce^{C\delta}(1+\delta)\delta$ in $C^{-\kappa}$. Finally, $\cA_6$ is bounded by $Ce^{C\delta}\delta$ in $L^\infty$.
It remains to note that multiplication by an $L^\infty$ function maps $H^{1-2\kappa}\subset L^2$ into $L^2\subset H^{-2\kappa}$, and $\|\nabla u\|_{H^{1-2\kappa}}\le\|u\|_{H^{2-2\kappa}}$.
\end{proof}

\begin{proof}[Proof of Theorem \ref{thm:det}]
The conjugation of Lemma~\ref{lem:conj} removes the singular term $\tilde\rho\tilde V_{\rm high}$ exactly, since $q(D)Y_0=-\tilde V_{\rm high}$, and the mean of the corrector energy produces the shift $\rho^2M_\eps$. What remains, $\cA$, is small in negative H\"older--Zygmund norms. Let $R_M(z)=(P+\rho^2M_\eps-z)^{-1}$. By the elliptic estimate for $P+\rho^2M_\eps$, whose potential is smooth and bounded uniformly in $\eps$ (Lemma~\ref{lem:ceps}), and by duality and interpolation, $\|R_M(z)\|_{H^{-2\kappa}\to H^{2-2\kappa}}\le C(1+|z|)^2(1+|\Im z|^{-1})$, uniformly in $\eps$. We first take $|\Re z|\le R$ and $|\Im z|=1$, so that $|z|\le R+1$, and let $\mathcal K=\cA R_M(z)$ on $H^{-2\kappa}$. Choose $\delta_*\le1$ with $C_1'\delta_*\le\frac12$, where $C_1'$, depending on $R$, is such that, by this bound and Lemma~\ref{lem:conj}(ii), $\|\mathcal K\|\le C_1'\delta_\theta$ whenever $\delta_\theta\le1$. Then $\tilde R:=R_M(1+\mathcal K)^{-1}:H^{-2\kappa}\to H^{2-2\kappa}$ is bounded and satisfies $(P+\rho^2M_\eps+\cA-z)\tilde R=I$.

For $f\in L^2$, put $v=\tilde Re^{-X}f\in H^{2-2\kappa}$ and $u=e^Xv$. For fixed $\eps$, $X$ is smooth with bounded derivatives, and Lemma~\ref{lem:conj}(i) extends by density to $v\in H^{2-2\kappa}$, in the sense of distributions. Hence $(H_\theta-z)u=e^X(P+\rho^2M_\eps+\cA-z)v=f$, with $u\in L^2$. By elliptic regularity ($H_\theta$ is $P$ plus a bounded smooth potential), $u\in H^2=\mathrm{dom}(H_\theta)$, so $u=(H_\theta-z)^{-1}f$. Therefore
\[
(H_\theta-z)^{-1}-R_M=(e^X-1)\tilde Re^{-X}+\tilde R(e^{-X}-1)+(\tilde R-R_M).
\]
The first two terms are $O(\delta_\theta)$ on $L^2$, since $\|e^{\pm X}-1\|_\infty\le C\delta_\theta$ by Lemma~\ref{lem:conj}(ii) and $\|\tilde R\|_{L^2\to L^2}\le C$. The third equals $-R_M\mathcal K(1+\mathcal K)^{-1}$, of norm $\le C\delta_\theta$ from $L^2\subset H^{-2\kappa}$ to $H^{2-2\kappa}\subset L^2$. Finally, the resolvent identity and Lemma~\ref{lem:ceps} give $\|R_M-(P+\rho^2M_{\rm crit}-z)^{-1}\|\le C/\log(1/\eps)$. This proves the bound for $|\Im z|=1$. For general $z$ with $|\Re z|\le R$ and $\Im z\ne0$, let $z_0=\Re z+i\,\sgn(\Im z)$. In \eqref{eq:shiftz}, written for $H_\theta$ and $P+\rho^2M_{\rm crit}$, the outer factors have norm $\le\sup_{\lambda\in\R}|\lambda-z_0|/|\lambda-z|\le\max(1,|\Im z|^{-1})$. This gives the factor $1+|\Im z|^{-2}$.

For the $H^1$ bound, $(H_\theta-z_0)^{-1}=e^X\tilde Re^{-X}$ with $\tilde R$ built at $z_0$, where $|\Im z_0|=1$ and $|\Re z_0|\le R$ as above, $\|\tilde R\|_{L^2\to H^{2-2\kappa}}\le C$ and $H^{2-2\kappa}\subset H^1$, and
\[
\|\nabla X\|_\infty\le C\bigl(\|\nabla\tilde\rho\|_\infty\|Y_0\|_{L^\infty(\supp\tilde\rho)}+\|\nabla Y_0\|_{L^\infty(\supp\tilde\rho)}\bigr)\le C(\delta_\theta+N_\theta) .
\]
The case $\rho\equiv1$ is identical.
\end{proof}

\paragraph{Bad boxes.} The bad unit boxes of block (vii) of section \ref{sec:setting} are defined in Step 1 of the proof of Theorem \ref{thm:mainA} through truncated versions of the local size $\delta_Q$ and of $\|\nabla Y_0\|_{L^\infty(Q^*)}$. By the next lemma, proved in appendix \ref{app:proba}, these have finite range and are rarely large.
\begin{lemma}\label{lem:probboxA}
There are $c,C_1>0$, depending only on $(q,w,\kappa_+,\EE\omega^2,\ell)$ and on the fixed choices of $\kappa$, $\psi$ and $\chi_0$, with the following property. Let $\delta_*\in(0,1]$, $\nu_r>0$ and $r=\eps^{-\nu_r}$, and set $p_0:=\min\big(c\,\delta_*^2/\ell,1\big)$. There are $C,\eps_1>0$, depending moreover on $\delta_*$ and $\nu_r$, and, for every unit box $Q$, random variables $\delta_Q^{\rm tr}$ and $N_Q^{\rm tr}$, measurable with respect to $\{\omega_m:|\eps m-c_Q|\le r\}$, such that for $\eps\le\eps_1$:
\textup{(i)} $|\delta_Q-\delta_Q^{\rm tr}|+\big|\|\nabla Y_0\|_{L^\infty(Q^*)}-N_Q^{\rm tr}\big|\le\eps^{10}$ deterministically; and
\textup{(ii)}
\[
  \PP\big(\delta_Q^{\rm tr}>\delta_*/2\ \mbox{ or }\ N_Q^{\rm tr}>\tfrac12C_1\sqrt{\log(1/\eps)}\big)\le C\eps^{p_0}.
\]
\end{lemma}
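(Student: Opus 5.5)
The plan is to truncate the random field spatially, control the truncation error deterministically by the decay of the Green kernels, and then prove the tail bounds by Gaussian-type concentration for linear and quadratic forms in the couplings.

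\emph{Truncation.} Write $Y_0$, $\nabla Y_0$ and $\tilde V_{\rm low}$ as lattice sums in the couplings. In the macroscopic variable,
\[
Y_0(x)=\lambda_\eps\eps^{-1}\sum_m\omega_m\,g_\eps(x-\eps m),
\]
where $g_\eps=-\,q(D)^{-1}(1-\chi_0(D))[w(\cdot/\eps)]$. Since $1-\chi_0$ removes low frequencies and $q^{-1}$ is smooth away from $0$, the kernel of $q(D)^{-1}(1-\chi_0(D))$ decays faster than any power at distances $\ge1$. Hence $g_\eps$, $\nabla g_\eps$ and the kernel of $\chi_0(D)$ are $O(\eps^{n}|x|^{-n})$ for $|x|\ge1$. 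Define $Y_0^{\rm tr}$, $\tilde V^{\rm tr}_{\rm low}$ and $\nabla Y_0^{\rm tr}$ by keeping only the couplings with $|\eps m-c_Q|\le r$. On $Q^*$ the discarded terms have size $\lesssim \eps^{-3}\sum_{|\eps m|\ge r/2}\avg{\eps m}^{-n}$, which is $\le\eps^{20}$ for $n$ large, since $r=\eps^{-\nu_r}$. The same holds for $\cW^{\rm tr}=|\nabla Y_0^{\rm tr}|_B^2-c_\eps$, because $\nabla Y_0$ is polynomially bounded in $\eps^{-1}$. The $C^{-\kappa}$ norms of $\psi_Q(\cdot)$ are bounded by $L^\infty$ norms on $\supp\psi_Q\subset Q^*$. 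This gives (i) with $\delta_Q^{\rm tr}$ and $N_Q^{\rm tr}$ the truncated quantities.

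\emph{Tails of the linear terms.} For fixed $x$, $Y_0^{\rm tr}(x)$ is $\sum_m b_m\omega_m$ with centered bounded independent $\omega_m$. Its variance is
\[
\lambda_\eps^2\eps^{-2}\sum_m|g_\eps(x-\eps m)|^2\approx\lambda_\eps^2\int|\widehat{g}|^2\,d\xi,
\]
and this is $O(\lambda_\eps^2)=O(\ell/\log(1/\eps))$ because the high-pass symbol is $|\xi|^{-4}$ outside the unit ball. Hoeffding gives $\PP(|Y_0(x)|>t)\le 2\exp(-ct^2\log(1/\eps)/\ell)=2\eps^{ct^2/\ell}$, and the same bound holds for $\tilde V_{\rm low}$. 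For the $C^{-\kappa}$ norm of $\psi_Q\nabla Y_0$, bound $\|\Delta_j(\psi_Q\nabla Y_0)\|_\infty$ for $2^j\lesssim\eps^{-1-\eta}$ on a net of $Q^*$ with mesh $2^{-j}\eps^{2}$. The variance of $\Delta_j\nabla Y_0(x)$ is $\lambda_\eps^2\int_{|\xi|\sim2^j}|\xi|^{-2}s(\eps\xi)\,d\xi\lesssim\lambda_\eps^2$. The factor $2^{-j\kappa}$ then supplies summability over $j$, and the number of $j$ is only $O(\log(1/\eps))$. Higher $j$ are negligible by smoothness of $w(\cdot/\eps)$. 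A union bound over $\eps^{-C}$ net points with Gaussian tails $\exp(-c t^22^{2j\kappa}/\lambda_\eps^2)$ gives probability $\lesssim\eps^{-C}\eps^{ct^2/\ell}$. Here I expect a polynomial entropy loss. I would absorb it by noting that only the first $O(1)$ dyadic levels have probability $\eps^{ct^2/\ell}$, while for higher $j$ the factor $2^{2j\kappa}$ makes the tail super-polynomially small. At the lowest levels only $O(1)$ net points are needed, since the functions are smooth at scale $2^{-j}$. For $N_Q^{\rm tr}$ the same net argument applies, with threshold $\tfrac12C_1\sqrt{\log(1/\eps)}$. Choosing $C_1$ large gives tail $\eps^{-C}\exp(-cC_1^2\log(1/\eps)/\ell\cdot\log(1/\eps))$, which is far smaller than needed; in fact even the crude variance bound $O(1)$ suffices with $C_1$ large.

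\emph{Tails of $\cW$ (main obstacle).} The term $\Delta_j(\psi_Q\cW)(x)$ is a centered quadratic form $\ip{\omega}{\mathsf A\omega}-\EE\ip{\omega}{\mathsf A\omega}$. The mean of the product is subtracted exactly, up to the cell-average difference $\EE|\nabla Y_0(x)|_B^2-c_\eps$, which is periodic in $x$ with mean zero. Its $C^{-\kappa}$ norm is small, since it oscillates at scale $\eps$ and the frequencies $2^j\lesssim\eps^{-1}$ see only its average, giving $O(\eps^{\kappa}\lambda_\eps^2)$. I then apply Hanson--Wright as in Lemma~\ref{lem:mean}. The Hilbert--Schmidt norm $\|\mathsf A\|_{HS}^2$ is the variance of $\Delta_j\cW$. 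This is a sum over pairs of dyadic frequency shells, and the localization by $\Delta_j$ with the factor $2^{-j\kappa}$ yields $\|\mathsf A\|_{HS}\lesssim\lambda_\eps^2\log(1/\eps)^{1/2}\lesssim\lambda_\eps$. The operator norm satisfies $\|\mathsf A\|\lesssim\lambda_\eps^2\sup|\hat k|\lesssim\lambda_\eps^2$. Hanson--Wright then gives $\exp(-c\min(t^2/\|\mathsf A\|_{HS}^2,t/\|\mathsf A\|))\le\exp(-ct^2\log(1/\eps)/\ell)$ for $t\le1$. The union bound over nets and over $j$ then proceeds as for the linear terms. Combining the four pieces with $t=\delta_*/8$ gives (ii) with $p_0=\min(c\delta_*^2/\ell,1)$, after shrinking $c$ to absorb the entropy factors. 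Verifying the $HS$ bound uniformly in $j$ is the step I expect to require the most care.
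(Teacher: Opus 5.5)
Your architecture matches the paper's. You truncate using the decay of the kernels of $\tau_0(D)$, $\tau_1(D)$ and $\chi_0(D)$. You use Hoeffding for the linear fields and Hanson--Wright for $\cW$ on Littlewood--Paley blocks. You discard the $\eps$-periodic deterministic part $\EE|\nabla Y_0|_B^2-c_\eps$, which has no frequencies below $2\pi/\eps$, and you take $C_1$ large for the gradient. The paper isolates the tail $\PP(\delta_Q\ge t)\le C\eps^{ct^2/\ell}$ as Lemma~\ref{lem:prob} and reads off (ii) with $t=\delta_*/4$.

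\textbf{The gap: passing from pointwise tails to suprema.} Two problems arise.

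\emph{The norm is global.} $\|\psi_Qf\|_{C^{-\kappa}}$ involves $\sup_{y\in\R^2}|\Delta_j(\psi_Qf)(y)|$, and $\Delta_j(\psi_Qf)$ is not localized in $Q^*$. At distance $O(1)$ from $Q$ it is controlled deterministically only by $\|f\|_{L^\infty(Q^*)}$, which is of order $\sqrt{\log(1/\eps)}$ for $f=\nabla Y_0$. Nets of $Q^*$ therefore do not control it.

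\emph{The entropy cannot be absorbed.} With $t\sim\delta_*$, the per-point tail at low dyadic levels is only $\eps^{ct^2/\ell}$, and this exponent may be arbitrarily small. So no fixed negative power of $\eps$ worth of net points is affordable, and your mesh $2^{-j}\eps^2$ is fatal. Shrinking $c$ does not help: $c$ may not depend on $\delta_*$, and $\eps^{-C}\eps^{ct^2/\ell}$ is not $O(\eps^{c't^2/\ell})$ for any $c'>0$ once $ct^2/\ell<C$.

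Your repair, ``$O(1)$ points at low levels since the functions are smooth at scale $2^{-j}$'', needs an inequality bounding a band-limited block between sample points by its samples. Bernstein bounds $\|\nabla u\|_\infty$ by $CR\|u\|_{L^\infty(\R^2)}$, a global quantity, which brings back the first problem. The only deterministic Lipschitz bounds available are $2^j\eps^{-1}$ for $\Delta_j(\psi_Q\nabla Y_0)$ and $2^j\eps^{-2}$ for $\Delta_j(\psi_Q\cW)$. These force polynomially many points.

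\textbf{The missing ingredients.} The paper closes this with two tools.
\begin{itemize}
\item[(a)] The self-improving weighted sampling inequality of Lemma~\ref{lem:sampling}. If $\supp\hat u\subset B(0,R)$ and the weighted supremum is finite, then $\sup_z|u(z)|\avg{z-c}^{-\iota}\le4\sup_{y\in(\delta_0/R)\Z^2}|u(y)|\avg{y-c}^{-\iota}$. No a priori bound on $u$ is needed.
\item[(b)] Spatial decay of the variance proxies of the localized functionals $\Delta_j(\psi_Qf)(y)=\ip{\Lambda_{j,y}}{f}$, with $\Lambda_{j,y}=\psi_Qk_j(y-\cdot)$. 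The decay is $\avg{y-c}^{-6}$ for $f=\nabla Y_0$, and $\avg{y-c}^{-3}$ for $\|\mathsf A_{j,y}\|$ and $\|\mathsf A_{j,y}\|_{HS}$ (Lemma~\ref{lem:blocks}). This decay comes from $\psi_Q$. The variance of $\Delta_j\nabla Y_0(x)$, which is what you compute, is $\eps$-periodic in $x$ and does not decay.
\end{itemize}
With (a) and (b), the union bound runs over the whole infinite lattice $\delta_02^{-j-2}\Z^2$. The thresholds are proportional to $2^{\kappa j}\avg{y-c}^{-1}$ for the $C^{-\kappa}$ terms, and grow like $\avg{y-c}$ (with $\iota=1$) for the $L^\infty(Q^*)$ terms. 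The cost is only a factor $2^{2j}$ at level $j$. Then $\sum_j2^{2j}e^{-a2^{\gamma_0j}}\le Ce^{-a/4}$, with $a\sim t^2/\lambda_\eps^2\ge1$, gives $C\eps^{ct^2/\ell}$ without loss. A chaining argument with sub-Gaussian increments and decaying variances would also work, but some such device is needed.

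\textbf{Smaller points.}
\begin{itemize}
\item The bound $\|\mathsf A\|\lesssim\lambda_\eps^2\sup|\hat k|$ is not justified, since $\|M_\Lambda\|=\|\Lambda\|_\infty\sim2^{2j}$. Use instead $\|\mathsf A_{j,y}\|\le\|\mathsf A_{j,y}\|_{HS}\le C\lambda_\eps^2\sqrt{2+j}$. You do not need uniformity in $j$: this growth is absorbed by the factor $2^{\kappa j}$ in the threshold.
\item The pointwise variance of $\nabla Y_0$ is of order $\ell$, not $\lambda_\eps^2$, as your last remark on $N^{\rm tr}_Q$ concedes. This is why the paper bounds frequencies above $\eps^{-1}$ deterministically and needs $C_1^2\ge3\ell/c$ to beat the $\eps^{-2}$ sampling points.
\end{itemize}
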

\begin{proof}[Proof of Theorem \ref{thm:mainA}]
\emph{Step 1: parameters, bad boxes and cuts.} We apply Theorem \ref{thm:det} with $R:=g+1$ and with $\theta$ ranging over the cutoffs $\theta_{\cU,r}$ of \eqref{eq:Hg}, whose derivatives are bounded independently of $\cU$ and $r$, and let $\delta_*\le1$ be its threshold, decreased if necessary for Steps 2 and 3. Thus $\delta_*$ depends only on $(A,B,c_j,P_0,\rho,g,\varphi)$ and on the law of the couplings, not on $\eps$ or on the configuration. Lemma \ref{lem:probboxA} with this $\delta_*$ fixes $p_0$ and $C_1$, independently of $\nu_r$. We then fix $\nu_r,\nu_\dd$ as in \eqref{eq:exponents} and $n>3/\nu_\dd$. A unit box $Q$ is {\em bad} if $\delta_Q^{\rm tr}>\delta_*/2$ or $N_Q^{\rm tr}>\frac12C_1\sqrt{\log(1/\eps)}$, with the truncated variables of Lemma \ref{lem:probboxA}, and $\cU_\omega$ is the union of the bad unit boxes. By that lemma, $Q$ meets $\cU_\omega$ only if one of the at most $9$ unit boxes adjacent to or equal to $Q$ is bad, which has probability at most $C\eps^{p_0}$ and depends only on the couplings with $|\eps m-c_Q|\le r+2\le2r$, and every good unit box satisfies $\delta_Q\le\delta_*$ and $\|\nabla Y_0\|_{L^\infty(Q^*)}\le C_1\sqrt{\log(1/\eps)}$. Lemma \ref{lem:cuts} then gives admissible cuts a.s.\ for $\eps\le\eps_0$.

\emph{Step 2: reference operator and bulk gap.} We set $\theta=\theta_{\cU_\omega,r}$ and build $H_{\circ}$, $H_{\circ,B}$ by \eqref{eq:Hg} with $M_\star=M_\eps$, so that $H_\circ$ is the operator $H_\theta$ of Theorem \ref{thm:det}. Since $\theta=0$ on $(\cU_\omega)_{2r}$ and $2r\ge8>7$, every unit box $Q$ with $\dist(Q,\supp\theta)\le7$ is good, so that $\delta_\theta\le\delta_*$ and $N_\theta\le C_1\sqrt{\log(1/\eps)}$. By Theorem \ref{thm:det}, for $|\Re z|\le R$ and $\Im z\ne0$,
\[
\|(H_{\circ}-z)^{-1}-(P+\rho^2M_{\rm crit}-z)^{-1}\|+\|(H_{\circ,B}-z)^{-1}-(P+M_{\rm crit}-z)^{-1}\|\le C\bigl(1+|\Im z|^{-2}\bigr)\Bigl(\delta_*+\frac1{\log(1/\eps)}\Bigr).
\]
Since $P+M_{\rm crit}$ is gapped on $(-g,g)$, Lemma~\ref{lem:speccomp} at $z=E+i$, with $|E|<g/2$ and $d=g/2$, shows that $H_{\circ,B}$ is gapped on $(-g/2,g/2)$ once $2C(\delta_*+1/\log(1/\eps))<1-(1+g^2/4)^{-1/2}$, which holds for $\delta_*$ and $\eps$ small.

\emph{Step 3: index and robustness.} We follow Steps 4--5 of the proof of Theorem~\ref{thm:mainB}. Since $M_{\rm crit}$ is constant, no deformation is needed: the bound of Step 2, which holds on $\supp\dbar\tilde S\subset\{|\Re z|\le g/2\}$ and is integrable against $|\dbar\tilde S(z)|\le C|\Im z|^2$, and the Helffer--Sj\"ostrand formula give $\|U(H_{\circ})-U(P+\rho^2M_{\rm crit})\|\le C(\delta_*+1/\log(1/\eps))<1$, so that $\ind T(H_{\circ})=\BDI(P+M_{\rm crit},P)$ by Lemma \ref{lem:fred} and Proposition~\ref{thm:QB}. Condition [R2] holds as there, by Step 2. For the second-order operator $H_\circ$, [R1] follows from the $H^1$ bound of Theorem \ref{thm:det}, with $N_\theta\le C_1\sqrt{\log(1/\eps)}$, and Lemma~\ref{lem:CT}(ii) with $\eta=1$: $C_0=2e^4$, $R_0=g+1$ and $a=c/\sqrt{\log(1/\eps)}$. Now $K_V=C\lambda_\eps\eps^{-1}$, so that $\vartheta_1\le C_n\eps^{-3}\bigl(\eps^{\nu_\dd}\sqrt{\log(1/\eps)}\bigr)^{n}<\frac1{16}$ for $n\nu_\dd>3$ and $\eps$ small, and Theorem~\ref{thm:robust} gives $\ind\tilde T(H^\eps)=\ind T(H_{\circ})=\BDI(P+M_{\rm crit},P)$.
\end{proof}

\appendix
\section{Probabilistic estimates}\label{app:proba}

This appendix proves Lemma \ref{lem:probboxA}. Its main ingredient is the following bound on the probability that a unit box has a large local size \eqref{eq:deltaQ}.
\begin{lemma}[Good configurations]\label{lem:prob}
There are $c,C>0$, and for every $t\in(0,1]$ there is $\eps_t>0$, such that for every unit box $Q$ and every $\eps\le\eps_t$, $\PP(\delta_Q\geq t)\le C\exp(-ct^2/\lambda_\eps^2)=C\,\eps^{ct^2/\ell}$.
\end{lemma}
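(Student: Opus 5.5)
The plan is to bound each of the four terms of $\delta_Q$ in \eqref{eq:deltaQ} separately by $t/4$ and take a union bound. Each term is a function of the i.i.d.\ bounded couplings $\omega_j$ that is either linear ($Y_0$, $\tilde V_{\rm low}$, $\nabla Y_0$) or a centered quadratic form ($\cW$), multiplied by the small factor $\lambda_\eps$ or $\lambda_\eps^2$. Linear terms will be handled by Hoeffding's inequality, as in the proof of Lemma~\ref{lem:mean}(ii). Quadratic terms will be handled by the Hanson--Wright inequality \cite{RV13}, in both cases after reducing the relevant norm to a supremum over finitely many (polynomially many in $1/\eps$) scalar linear or quadratic functionals.

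\emph{Step 1: linear terms.} For the linear terms we write
\[
Y_0(x)=\lambda_\eps\sum_j\omega_j\,g_\eps(x-\eps j),\qquad g_\eps=-\eps^{-1}q(D)^{-1}(1-\chi_0(D))\,w(\cdot/\eps).
\]
In Fourier, $\hat g_\eps(\xi)=\eps\,\hat w(\eps\xi)(1-\chi_0(\xi))/q(\xi)$, and the variance of $Y_0(x)$ is $\lambda_\eps^2\EE\omega^2$ times $\sum_j|g_\eps(x-\eps j)|^2\approx\eps^{-2}\|g_\eps\|_{L^2}^2$. Since $\hat w(0)\neq0$ is allowed, this $L^2$ integral is $\int|\hat w(\eps\xi)|^2(1-\chi_0)^2q^{-2}\,d\xi=O(1)$, with no logarithm because $q^{-2}$ is integrable at infinity in $d=2$. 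Hence the variance is $O(\lambda_\eps^2)$, and Hoeffding gives $\PP(|Y_0(x)|>t/8)\le2\exp(-ct^2/\lambda_\eps^2)$. The term $\tilde V_{\rm low}$ is treated the same way, with $\hat w(\eps\xi)\chi_0(\xi)$ and variance $O(\lambda_\eps^2)$.

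To pass to $L^\infty(Q^*)$, I will use that these fields are deterministically Lipschitz with constant $\le C\eps^{-2}$, so a net of $O(\eps^{-4}t^{-2})$ points suffices. The polynomial prefactor is absorbed in $\exp(-ct^2\log(1/\eps)/\ell)$ only if the constant beats the polynomial, which fails for small $t$. I will therefore prefer a Littlewood--Paley route. For each dyadic block $\Delta_jY_0$ the Lipschitz constant is $2^j$ and the variance is $\lambda_\eps^2\times$ a summable weight, so a net of size $2^{2j}$ per block suffices. Summing over $j$ with per-block thresholds $t2^{-\kappa' |j-j_\eps|}$ keeps the prefactor $O(1)$ up to constants absorbed by $\eps\le\eps_t$.

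\emph{Step 2: negative Hölder terms.} For $\|\psi_Q\nabla Y_0\|_{C^{-\kappa}}$ I will bound $2^{-j\kappa}\|\Delta_j(\psi_Q\nabla Y_0)\|_\infty$ for each $j$. Each dyadic shell $|\xi|\sim2^j\le\eps^{-1}$ contributes variance $O(\lambda_\eps^2)$; this is the logarithmic divergence spread over shells. The weight $2^{-2j\kappa}$ makes the shell contributions summable with room to spare, and gives tails $\exp(-ct^22^{2j\kappa}/\lambda_\eps^2)$ that dominate the nets of size $2^{2j}$. For $j\gtrsim\log(1/\eps)$, the decay of $\hat w(\eps\xi)$ gives super-polynomially small terms.

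\emph{Step 3: the quadratic term $\cW$, the main obstacle.} The field $\psi_Q\cW$ is a centered quadratic form $\lambda_\eps^2\ip{\omega}{\mathsf A_x\omega}-\EE$. Two points need care. First, $c_\eps$ is the cell-averaged mean, so $\EE\cW$ is only periodic-mean-zero rather than zero. Its oscillating part has frequencies $\sim\eps^{-1}$, and after $\Delta_j$ and the $2^{-j\kappa}$ weight it contributes $O(\eps^{\kappa})$ deterministically. Second, Hanson--Wright requires bounds on $\|\mathsf A\|_{HS}$ and $\|\mathsf A\|_{\rm op}$ for the kernel of $\Delta_j(\psi_Q|\nabla Y_0|_B^2)$ at a point. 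I expect $\lambda_\eps^2\|\mathsf A\|_{HS}\lesssim\lambda_\eps^2\,2^{j\kappa''}$ for small $\kappa''<\kappa$: the fluctuations of $|\nabla Y_0|^2$ are, unlike the mean, not log-divergent at negative regularity, because products of shells at different scales decorrelate. I also expect $\|\mathsf A\|_{\rm op}\lesssim\lambda_\eps^2$ after an operator-norm Schur bound. These give tails $\exp(-c\min(t^2/\lambda_\eps^2,t/\lambda_\eps^2))=\exp(-ct^2/\lambda_\eps^2)$ for $t\le1$. Verifying the HS estimate via the covariance of the two-shell products is the delicate computation. Combining Steps 1--3 with the union bound gives $\PP(\delta_Q\ge t)\le C\exp(-ct^2/\lambda_\eps^2)=C\eps^{ct^2/\ell}$.
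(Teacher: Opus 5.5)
Your architecture is the paper's. You bound the four terms of $\delta_Q$ separately. Littlewood--Paley blocks are sampled on nets of mesh $\sim2^{-j}$, whose cardinality $2^{2j}$ is beaten by tails $e^{-ct^22^{j}/\lambda_\eps^2}$ or $e^{-ct^22^{2\kappa j}/\lambda_\eps^2}$, with $ct^2/\lambda_\eps^2\ge1$ enforced by $\eps\le\eps_t$. Hoeffding handles the linear fields and Hanson--Wright the random part of $\cW$, while the deterministic part $\EE|\nabla Y_0|_B^2-c_\eps$, being $\eps$-periodic with mean zero, is $O(\eps^\kappa)$ in $C^{-\kappa}$. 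However, two steps are missing or would fail as written.

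\textbf{Passing from net values to suprema.} The claim that ``a net of size $2^{2j}$ per block suffices'' is not correct. Bernstein controls the Lipschitz constant of $\Delta_jf$ by $2^j\|\Delta_jf\|_{L^\infty(\R^2)}$, a supremum over the whole plane, so values on a finite net in $Q^*$ control nothing without information elsewhere. Moreover, $\|\psi_Q\nabla Y_0\|_{C^{-\kappa}}$ and $\|\psi_Q\cW\|_{C^{-\kappa}}$ are themselves suprema of $\Delta_j(\psi_Q\,\cdot)$ over all of $\R^2$, which no finite net covers. The paper resolves this with the weighted sampling inequality of Lemma~\ref{lem:sampling}, $\sup_z|u(z)|\avg{z-c}^{-\iota}\le4\sup_{y\in(\delta_0/R)\Z^2}|u(y)|\avg{y-c}^{-\iota}$. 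This is a self-improving Bernstein bound on an infinite lattice, used together with $\sum_{y\in h\Z^2}e^{-a\avg{y-c}}\le Ch^{-2}e^{-a/2}$.
\begin{itemize}
\item For $Y_0$ and $\tilde V_{\rm low}$ ($\iota=1$), the thresholds $s_j\avg{y-c}$ grow with the distance to $Q$.
\item For $\psi_QZ$ and $\psi_Q\cW$ ($\iota=-1$), the union bound converges only because the variance proxies, $\le C\lambda_\eps^2\avg{y-c}^{-6}$, and the Hanson--Wright norms, with a factor $\avg{y-c}^{-3}$, decay away from $Q$ (Lemma~\ref{lem:blocks}).
\end{itemize}
Your proposal contains no far-field estimate of this kind. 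Separately, your Step~1 thresholds $t2^{-\kappa'|j-j_\eps|}$ are mis-centered. The variance of $\Delta_jY_0$ is $\lesssim\lambda_\eps^22^{-2j}$, concentrated at $j=O(1)$. If $j_\eps\sim\log_2(1/\eps)$, your threshold there is $\sim t\eps^{\kappa'}\ll\lambda_\eps$, and the tail bound is lost. You need thresholds of size $\sim t$ at low $j$, e.g.\ $s_j\propto t2^{-j/2}$ as in the paper.

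\textbf{The Step~3 estimates.} This is where the proof actually lives, and you only conjecture it. Lemma~\ref{lem:blocks}(b) proves, near $Q$, that $\|\mathsf A_{j,y}\|\le\|\mathsf A_{j,y}\|_{HS}\le C\lambda_\eps^2\sqrt{2+j}$. The argument is $\|\mathcal X_{j,y}\|_{HS}^2\le C\int\!\!\int\avg\xi^{-2}\avg{\xi'}^{-2}\Psi_j(\xi-\xi')^2\,d\xi\,d\xi'$ combined with the two-dimensional convolution bound $\int\avg\xi^{-2}\avg{\xi-\eta}^{-2}d\xi\le C\avg\eta^{-2}\log(2+|\eta|)$. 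This is the precise form of your ``shells decorrelate''. The $\sqrt{2+j}$ loss is absorbed by $2^{\kappa j}$, so no separate Schur bound on the operator norm is needed.

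Away from $Q$, the argument only gives $\|\mathsf A_{j,y}\|_{HS}\le C\lambda_\eps\sqrt\ell\,\avg{y-c}^{-8}$, because $\int|\tau_1|^2(1+\eps^2|\xi|^2)^{-2}d\xi\sim\log(1/\eps)$ brings the logarithm back. This still yields the exponent $t^2/\lambda_\eps^2$, since the Hilbert--Schmidt norm enters Hanson--Wright squared. But the uniform bound $\lambda_\eps^2\,2^{j\kappa''}$ you anticipate is not what the computation produces, and the far field must be treated separately.
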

The tools are standard: Hoeffding's and the Hanson--Wright inequalities for linear and quadratic functionals of the couplings, and Bernstein's inequality for band-limited functions. For Gaussian noise, bounds of this type for the renormalized square of $\nabla Y_0$ in $C^{-\kappa}$ are classical \cite{HairerLabbe15}. Here the couplings are bounded and i.i.d., so that Gaussian moment bounds are replaced by Hoeffding and Hanson--Wright inequalities, and we need tails that are exponential in $\lambda_\eps^{-2}$, uniformly in $\eps$, for the percolation argument of section \ref{sec:pseudo}.

Throughout this appendix, $t\in(0,1]$ is the threshold of Lemma~\ref{lem:prob} and $\eps\le\eps_t$, where $\eps_t$ is small enough that $\lambda_\eps\le1$ and that, in each application of the second part of Lemma~\ref{lem:sampling} below, the parameter $a$ is $\ge1$. Constants $C,c$ depend only on $(q,w,\kappa_+,\EE\omega^2,\ell,\kappa,\psi,\chi_0)$, and not on $t$, which appears explicitly in all bounds. Only $\eps_t$ depends on $t$.

Suprema of band-limited functions are controlled by their values on a lattice, by Bernstein's inequality \cite[Lemma 2.1]{BCD11}. We use the following weighted version.
\begin{lemma}\label{lem:sampling}
There are universal constants $\delta_0,C$ such that the following holds. Let $R\ge1$, $c\in\R^2$, $\iota\in\{-1,1\}$, and let $u\in L^\infty(\R^2)$ have $\supp\hat u\subset B(0,R)$ and $\sup_z|u(z)|\avg{z-c}^{-\iota}<\infty$. Then
\[
\sup_{z\in\R^2}|u(z)|\avg{z-c}^{-\iota}\le4\sup_{y\in(\delta_0/R)\Z^2}|u(y)|\avg{y-c}^{-\iota} .
\]
Moreover, $\sum_{y\in h\Z^2}e^{-a\avg{y-c}}\le Ch^{-2}e^{-a/2}$ for $h\in(0,1]$ and $a\ge1$.
\end{lemma}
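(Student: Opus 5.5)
The sampling bound is a weighted Plancherel--Pólya inequality, and I would prove it in three steps: a local Bernstein-type oscillation estimate, a near-maximizer argument, and a separate count for the lattice sum.

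\emph{Local oscillation estimate.} Write $N(u):=\sup_z|u(z)|\avg{z-c}^{-\iota}$, which is finite by assumption. Since $\supp\hat u\subset B(0,R)$, fix a Schwartz function $\phi$ with $\hat\phi=1$ on $B(0,1)$ and set $\phi_R=R^2\phi(R\cdot)$. Then $u=\phi_R*u$ and $\nabla u=(\nabla\phi_R)*u$. For $|z'-z|\le1$ we have $\avg{z'-c}\le2\avg{z-c}$, and more generally $\avg{z'-c}^{\pm1}\le\sqrt2\,\avg{z-z'}\avg{z-c}^{\pm1}$ (Peetre's inequality). This gives
\[
|\nabla u(z)|\le N(u)\int|\nabla\phi_R(z-z')|\,\avg{z'-c}^{\iota}dz'\le C_\phi R\,N(u)\avg{z-c}^{\iota},
\]
because $\int|\nabla\phi_R(x)|\avg x\,dx\le CR$ for $R\ge1$. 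This is the weighted Bernstein inequality.

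\emph{Near-maximizer argument.} Choose $z_*$ with $|u(z_*)|\avg{z_*-c}^{-\iota}\ge\frac34N(u)$, and let $y\in(\delta_0/R)\Z^2$ be a lattice point with $|y-z_*|\le\delta_0/R\le\delta_0$. The mean value theorem and the gradient bound, using $\avg{\cdot-c}^\iota\le2\avg{z_*-c}^\iota$ on the segment, give
\[
|u(y)|\ge|u(z_*)|-2C_\phi\delta_0N(u)\avg{z_*-c}^{\iota}.
\]
Multiplying by $\avg{y-c}^{-\iota}\ge\frac12\avg{z_*-c}^{-\iota}$ yields $|u(y)|\avg{y-c}^{-\iota}\ge\frac12\bigl(\frac34-2C_\phi\delta_0\bigr)N(u)$. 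Choosing $\delta_0$ small makes the bracket at least $\frac12$, hence the right-hand side is at least $\frac14N(u)$, which is the claimed factor $4$. The main point requiring care is that the constant $C_\phi$ is universal, which holds because $\phi$ is fixed and the weights only use $|z-z'|\le1$.

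\emph{Lattice sum.} Bound the sum by an integral. Each cell of $h\Z^2$ has area $h^2$, and $\avg{\cdot-c}$ varies by at most $\sqrt2h\le\sqrt2$ on a cell, so
\[
\sum_{y\in h\Z^2}e^{-a\avg{y-c}}\le h^{-2}e^{\sqrt2a}\int e^{-a\avg x}dx.
\]
This constant is too lossy in $a$, so instead split the sum into $|y-c|\le1$ and its complement. The near part contains at most $Ch^{-2}$ points, each bounded by $e^{-a}$. In the far part, $\avg{y-c}\ge\frac12\avg{x-c}+\frac12$ for $x$ in the cell of $y$, since $\avg{x-c}\le\avg{y-c}+\sqrt2h$ and $\avg{y-c}\ge\sqrt2$. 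Hence the far part is at most $h^{-2}e^{-a/2}\int e^{-\avg x/2}dx$, using $a\ge1$. Adding the two parts gives $Ch^{-2}e^{-a/2}$.
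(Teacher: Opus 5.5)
Your proof of the sampling inequality is the paper's argument. You use the same reproducing formula $u=\phi_R*u$, the same weighted Bernstein bound $|\nabla u(z)|\le CR\,N(u)\avg{z-c}^{\iota}$ obtained from Peetre's inequality, and the same comparison of weights at distance at most $1$. The paper applies the mean value estimate at every $z$ and absorbs $2C_\Theta\delta_0G\le G/2$. You work at a near-maximizer instead, which is equivalent and gives the same factor $4$.

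The lattice sum is where you differ from the paper, and there is a slip there. The inequality $\avg{y-c}\ge\frac12\avg{x-c}+\frac12$ does not follow from $\avg{x-c}\le\avg{y-c}+\sqrt2h$ and $\avg{y-c}\ge\sqrt2$; it would need $\avg{y-c}\ge1+\sqrt2h$. It is in fact false for $h=1$ and $|y-c|$ slightly above $1$. The conclusion survives with a different constant, and the near/far split is unnecessary. Since $\avg{y-c}\ge1$ and $a\ge1$, you have $a\avg{y-c}\ge\frac a2+\frac12\avg{y-c}$, hence $e^{-a\avg{y-c}}\le e^{-a/2}e^{-\avg{y-c}/2}$. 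This is the paper's one-line bound, with $|y-c|$ in place of $\avg{y-c}$. Your cell comparison, together with the fact that $\avg{\cdot}$ is $1$-Lipschitz, then gives $\sum_{y\in h\Z^2}e^{-\avg{y-c}/2}\le e^{1/\sqrt2}h^{-2}\int_{\R^2}e^{-\avg{x}/2}\,dx$, which completes the proof.
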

\begin{proof}
Let $G$ be the left-hand side. Let $\Theta\in\mathcal S(\R^2)$ with $\hat\Theta=1$ on $B(0,1)$, and $\Theta_R=R^2\Theta(R\cdot)$. Then $u=\Theta_R*u$, so, using $\avg{w-c}^\iota\le\avg{z-c}^\iota(1+|z-w|)\le\avg{z-c}^\iota(1+R|z-w|)$ for $\iota=\pm1$,
\[
|\nabla u(z)|\le\int R^3|\nabla\Theta(R(z-w))|\,|u(w)|\,dw\le C_\Theta R\,G\,\avg{z-c}^\iota ,\qquad C_\Theta:=\int|\nabla\Theta|(1+|v|)\,dv .
\]
Let $\delta_0=\min(1,1/(4C_\Theta))$. Every $z$ is within $\delta_0/R$ of a lattice point $y$, and $\avg{\cdot-c}^\iota\le2\avg{z-c}^\iota$ on the segment $[y,z]$, while $\avg{y-c}^\iota\le2\avg{z-c}^\iota$. Hence $|u(z)|\le|u(y)|+2C_\Theta\delta_0G\avg{z-c}^\iota$, and dividing by $\avg{z-c}^\iota$, $G\le2\sup_y|u(y)|\avg{y-c}^{-\iota}+2C_\Theta\delta_0G\le2\sup_y|u(y)|\avg{y-c}^{-\iota}+G/2$. For the sum, $e^{-a\avg{y-c}}\le e^{-a/2}e^{-|y-c|/2}$ since $a\ge1$ and $\avg{y-c}\ge\max(1,|y-c|)$.
\end{proof}

\paragraph{Representation of the fields.} For finitely supported $u:\Z^2\to\R$, let $\mathsf Su:=\eps^{-1}\sum_mu_m\,w(\cdot/\eps-m)$. Then $\tilde V=\lambda_\eps \mathsf S\omega$, a locally finite sum. With the Fourier multipliers
\[
\tau_0(\xi)=-\frac{1-\chi_0(\xi)}{q(\xi)},\qquad\tau_1(\xi)=i\xi\,\tau_0(\xi),\qquad\tau_{\rm low}=\chi_0 ,
\]
we have $Y_0=\lambda_\eps\tau_0(D)\mathsf S\omega$, $Z:=\nabla Y_0=\lambda_\eps\tau_1(D)\mathsf S\omega$ and $\tilde V_{\rm low}=\lambda_\eps\chi_0(D)\mathsf S\omega$. These multipliers map real functions to real functions, $|\tau_0|\le C\avg\xi^{-2}$ and $|\tau_1|\le C\avg\xi^{-1}$, and $\tau_0,\tau_1$ vanish for $|\xi|\le c_\chi$, where $\chi_0=1$ on $B(0,c_\chi)$. Their kernels $K_0,K_1,K_{\rm low}$ are smooth away from $0$ and satisfy $|K(z)|\le C_n|z|^{-n}$ for $|z|\ge1$ \cite[Ch.~VI, \S4]{Stein93}. Near $0$, $|K_0(z)|\le C\log(2/|z|)$, $|K_1(z)|\le C|z|^{-1}$, and $K_{\rm low}$ is Schwartz. In particular $K_0,K_1\in L^1$. Since $\sum_m|w(\cdot/\eps-m)|\le C$ and $|\omega_m|\le\kappa_+$, we have the deterministic bounds $\|\mathsf S\omega\|_{L^\infty}\le C\eps^{-1}$ and $\|Z\|_{L^\infty}\le C\eps^{-1}$.

The synthesis operator $\mathsf S$ satisfies
\begin{equation}\label{eq:synth}
\|\mathsf Su\|_{L^2}\le C\|u\|_{\ell^2},\qquad \mathsf S=(1+\eps^2|D|^2)^{-n}\mathsf S_n,\quad\|\mathsf S_n\|\le C_n,
\end{equation}
where $\mathsf S_n$ is the synthesis operator built on $w_n=(1-\Delta)^nw\in C_c^\infty$ in place of $w$. Indeed, every point lies in at most $C$ of the supports of $w(\cdot/\eps-m)$, so that $|\mathsf Su|^2\le C\eps^{-2}\sum_m|u_m|^2|w(\cdot/\eps-m)|^2$ pointwise, and integration gives the first bound. The factorization follows from $(1+\eps^2|D|^2)^n[w(\cdot/\eps-m)]=w_n(\cdot/\eps-m)$, and the first bound applies to $w_n$.

\begin{lemma}[Linear and quadratic functionals]\label{lem:conc}
\textup{(a)} \emph{Linear functionals.} Let $\tau$ be one of the multipliers $\tau_0,\tau_1,\tau_{\rm low}$, or a product of one of them with a bounded multiplier, and let $\Lambda$ be a real finite measure with $\int|\tau|^2(1+\eps^2|\xi|^2)^{-2n}|\hat\Lambda|^2d\xi<\infty$ for some $n\ge0$. Then $\ip{\Lambda}{\lambda_\eps\tau(D)\mathsf S\omega}=\sum_ma_m\omega_m$, with an absolutely convergent series and
\[
\sum_ma_m^2\le C_n\lambda_\eps^2\int_{\R^2}|\tau(\xi)|^2(1+\eps^2|\xi|^2)^{-2n}|\hat\Lambda(\xi)|^2\,d\xi .
\]
We call any $\sigma^2\ge\sum_ma_m^2$ a {\em variance proxy} of the functional. The right-hand side above is one. Since the couplings are independent, centered and bounded by $\kappa_+$, every variance proxy satisfies $\PP(|\ip{\Lambda}{\lambda_\eps\tau(D)\mathsf S\omega}|\ge s)\le2\exp(-s^2/(2\kappa_+^2\sigma^2))$ for $s>0$.

\noindent\textup{(b)} \emph{Quadratic functionals.} Let $\Lambda$ be real, bounded and compactly supported. Then $\ip{\Lambda}{|Z|_B^2}=\ip{\omega}{\mathsf A\omega}$, where
\[
\mathsf A=\lambda_\eps^2\mathsf S^*\mathcal X\mathsf S,\qquad\mathcal X=\sum_{a,b}B_{ab}\,\tau_{1,a}(D)^*M_\Lambda\,\tau_{1,b}(D),
\]
$M_\Lambda$ is multiplication by $\Lambda$, and $\tau_{1,a}$ are the components of $\tau_1$. The operator $\mathsf A$ is real, symmetric and trace-class, $\|\mathsf A\|_{HS}\le C\lambda_\eps^2\|\mathcal X\|_{HS}$ and $\|\mathsf A\|\le C\lambda_\eps^2\|\mathcal X\|$, and for every $s>0$
\[
\PP\bigl(|\ip\omega{\mathsf A\omega}-\EE\ip\omega{\mathsf A\omega}|\ge s\bigr)\le2\exp\bigl(-\mathcal E_{\rm HW}(\mathsf A,s)\bigr),\qquad \mathcal E_{\rm HW}(\mathsf A,s):=c\min\Bigl(\frac{s^2}{\kappa_+^4\|\mathsf A\|_{HS}^2},\frac{s}{\kappa_+^2\|\mathsf A\|}\Bigr).
\]
We call $\mathcal E_{\rm HW}$ the {\em Hanson--Wright exponent}.
\end{lemma}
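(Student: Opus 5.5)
The plan is to reduce both parts to explicit representations in the couplings and then quote the standard concentration inequalities. Everything rests on the synthesis bound \eqref{eq:synth} and its adjoint.

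\emph{Part (a).} Write $\ip{\Lambda}{\lambda_\eps\tau(D)\mathsf S\omega}=\lambda_\eps\ip{\mathsf S^*\tau(D)^*\Lambda}{\omega}$, so that $a_m=\lambda_\eps(\mathsf S^*\tau(D)^*\Lambda)_m$.
\begin{itemize}
\item \emph{Explicit coefficients.} Concretely, $a_m=\lambda_\eps\eps^{-1}\int w(x/\eps-m)\,(\tau(D)^*\Lambda)(x)\,dx$.
\item \emph{Absolute convergence.} This holds because $\tau(D)^*\Lambda$ is a locally integrable function whose kernel decays rapidly ($K\in L^1$ with rapid decay at infinity), and $\Lambda$ is finite.
\item \emph{Variance proxy.} Use the factorization $\mathsf S=(1+\eps^2|D|^2)^{-n}\mathsf S_n$. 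Then $\mathsf S^*\tau(D)^*=\mathsf S_n^*(1+\eps^2|D|^2)^{-n}\tau(D)^*$. The bound $\|\mathsf S_n^*\|=\|\mathsf S_n\|\le C_n$ and Plancherel give
\[
\sum_ma_m^2\le C_n\lambda_\eps^2\big\|(1+\eps^2|\xi|^2)^{-n}\tau\hat\Lambda\big\|_{L^2}^2 ,
\]
which is the stated proxy. When this integral is finite, $\tau(D)^*\Lambda\in H^{-2n}$ and the pairing is justified by density. One truncates to finitely many $m$ and passes to the limit in $L^2(\PP)$.
\item \emph{Tail bound.} Hoeffding's inequality applied to $\sum_ma_m\omega_m$, with independent centered $|\omega_m|\le\kappa_+$, gives $2\exp(-s^2/(2\kappa_+^2\sum a_m^2))$. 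The bound applies to finite truncations and passes to the limit by convergence in probability.
\end{itemize}

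\emph{Part (b), the operator $\mathsf A$.} Write $|Z|_B^2=\sum_{a,b}B_{ab}Z_aZ_b$ with $Z_a=\lambda_\eps\tau_{1,a}(D)\mathsf S\omega$, which is real. Then
\[
\ip{\Lambda}{|Z|_B^2}=\lambda_\eps^2\sum_{a,b}B_{ab}\ip{\tau_{1,a}(D)\mathsf S\omega}{M_\Lambda\tau_{1,b}(D)\mathsf S\omega}=\ip{\omega}{\mathsf A\omega}.
\]
\begin{itemize}
\item \emph{Real and symmetric.} Realness follows since the multipliers preserve real functions and $\Lambda$ is real. Symmetry follows from $B_{ab}=B_{ba}$ and $M_\Lambda^*=M_\Lambda$.
\item \emph{Norm bounds.} Since $\|\mathsf S\|\le C$, the operator bound $\|\mathsf A\|\le\lambda_\eps^2\|\mathsf S\|^2\|\mathcal X\|$ is immediate. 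The Hilbert--Schmidt bound $\|\mathsf A\|_{HS}\le\|\mathsf S^*\|\,\|\mathcal X\|_{HS}\,\|\mathsf S\|$ follows from the ideal property.
\item \emph{Hilbert--Schmidt property of $\mathcal X$.} $M_\Lambda\tau_{1,b}(D)$ is Hilbert--Schmidt once one inserts the factorization $\mathsf S=(1+\eps^2|D|^2)^{-n}\mathsf S_n$. Then $M_\Lambda\tau_{1,b}(D)(1+\eps^2|D|^2)^{-n}$ has Hilbert--Schmidt norm $\|\Lambda\|_{L^2}\|\tau_1(1+\eps^2|\xi|^2)^{-n}\|_{L^2}<\infty$ for $n\ge1$, because $\Lambda$ is bounded and compactly supported and $|\tau_1|\le C\avg\xi^{-1}$.
\item \emph{Trace class.} $\mathsf A$ is then trace class as a sum of products of two Hilbert--Schmidt operators $\lambda_\eps\mathsf S_n^*(\cdots)^*$ and $(\cdots)\mathsf S_n$, with $M_\Lambda=M_{|\Lambda|^{1/2}}M_{\sgn\Lambda|\Lambda|^{1/2}}$ split between them.
\end{itemize}

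\emph{Part (b), the tail bound.} Apply the Hanson--Wright inequality \cite[Theorem~1.1]{RV13} to the finite truncations $\mathsf A^{(L)}=P_L\mathsf AP_L$, with $P_L$ the projection onto $\{|m|\le L\}$. Their Hilbert--Schmidt and operator norms are bounded by those of $\mathsf A$, and the bounded centered couplings have sub-Gaussian norm $\lesssim\kappa_+$.

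The main obstacle is justifying the passage to the limit $L\to\infty$. The trace-class property gives $\sum_m|\mathsf A_{mm}|<\infty$, and $\|\mathsf A-\mathsf A^{(L)}\|_{HS}\to0$. A variance computation then shows that $\ip\omega{\mathsf A^{(L)}\omega}-\EE\ip\omega{\mathsf A^{(L)}\omega}$ converges in $L^2(\PP)$ to the centered quadratic form. The variance of a centered quadratic form in independent bounded variables is $\le C\kappa_+^4\|\cdot\|_{HS}^2$. The tail bound then survives the limit through convergence in probability, with $s$ replaced by $s-\eta$ and $\eta\to0$.

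A secondary point is that the identification $\ip{\Lambda}{|Z|_B^2}=\ip\omega{\mathsf A\omega}$ for infinitely many couplings also needs this limit. It holds since $Z$ is locally bounded ($\|Z\|_\infty\le C\eps^{-1}$) and $\Lambda$ is compactly supported, so both sides are limits of their truncations.
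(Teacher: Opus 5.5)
Your proposal is correct and follows essentially the paper's argument. In (a), the factorization $\mathsf S=(1+\eps^2|D|^2)^{-n}\mathsf S_n$ from \eqref{eq:synth} is combined with Plancherel and with Hoeffding's inequality on truncations. In (b), $M_\Lambda$ is split through $|\Lambda|^{1/2}$ with $n=1$ to get trace class, the norm bounds come from \eqref{eq:synth}, and Hanson--Wright on truncations is passed to the limit using the $L^2(\PP)$ variance bound and the summable diagonal.

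One cosmetic point: your bullet on the ``Hilbert--Schmidt property of $\mathcal X$'' actually proves that $M_\Lambda\tau_{1,b}(D)(1+\eps^2|D|^2)^{-n}$ is Hilbert--Schmidt, not $\mathcal X$ itself. That is all that is needed, since the stated bound $\|\mathsf A\|_{HS}\le C\lambda_\eps^2\|\mathcal X\|_{HS}$ is trivial when its right side is infinite.
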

\begin{proof}
(a) We have $a=\lambda_\eps \mathsf S^*\bar\tau(D)\Lambda=\lambda_\eps \mathsf S_n^*(1+\eps^2|D|^2)^{-n}\bar\tau(D)\Lambda$, and the bound follows from \eqref{eq:synth} and Plancherel. Absolute convergence follows from the decay of the kernels. Hoeffding's inequality \cite[Theorem~2.8]{BLM13} for the finitely many terms with $|m|\le M$, followed by $M\to\infty$, gives the tail bound.

(b) The identity is a rewriting of $\int\Lambda\,Z\cdot BZ$. Writing $M_\Lambda=M_{|\Lambda|^{1/2}}M_{\sgn\Lambda}M_{|\Lambda|^{1/2}}$ and using \eqref{eq:synth} with $n=1$, the trace norm of $\mathsf A$ is at most $C\lambda_\eps^2\|\Lambda\|_{L^1}\int|\tau_1|^2(1+\eps^2|\xi|^2)^{-2}d\xi<\infty$. The norm bounds follow from \eqref{eq:synth}. The tail bound is the Hanson--Wright inequality \cite[Theorem~1.1]{RV13} for the independent centered variables $\omega_m$, whose sub-Gaussian norms are $\le C\kappa_+$, applied to the truncations $\mathsf A_M=\one_{|m|\le M}\mathsf A\one_{|m|\le M}$. Since $\mathsf A_M\to \mathsf A$ in trace norm, the centered forms converge in $L^2(\PP)$, their variance being $\le C\kappa_+^4\|\mathsf A-\mathsf A_M\|_{HS}^2$, and the means converge since $\sum_m|(\mathsf A-\mathsf A_M)_{mm}|\le\|\mathsf A-\mathsf A_M\|_{\rm tr}$. The bound passes to the limit.
\end{proof}

\begin{lemma}\label{lem:blocks}
For a unit box $Q$ with center $c$, $j\ge-1$ and $y\in\R^2$, let $\Lambda_{j,y}(x)=\psi_Q(x)k_j(y-x)$, so that $\Delta_j(\psi_Qf)(y)=\ip{\Lambda_{j,y}}{f}$. Then, uniformly in $Q$, $j$ and $y$, \textup{(a)} $\int|\tau_1|^2|\hat\Lambda_{j,y}|^2\,d\xi\le C\avg{y-c}^{-6}$; and \textup{(b)} the operator $\mathsf A_{j,y}$ of Lemma~\ref{lem:conc}(b) with $\Lambda=\Lambda_{j,y}$ satisfies
\[
\|\mathsf A_{j,y}\|\le C\lambda_\eps^2\sqrt{2+j}\,\avg{y-c}^{-3},\qquad\|\mathsf A_{j,y}\|_{HS}\le C\lambda_\eps\sqrt{2+j}\,\avg{y-c}^{-3}.
\]
\end{lemma}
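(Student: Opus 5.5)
Both bounds rest on one decay estimate for $\Lambda_{j,y}$ away from $\supp\psi_Q$, combined with Plancherel. Write $\Lambda=\Lambda_{j,y}$. Its Fourier transform is $\hat\Lambda(\xi)=\int\hat\psi_Q(\xi-\zeta)\phi_j(\zeta)e^{-iy\cdot\zeta}\,d\zeta$ up to a constant factor. Equivalently, $\Lambda$ is the product of the bump $\psi_Q$ with the translated Littlewood--Paley kernel $k_j(y-\cdot)=2^{2j}\check\phi(2^j(y-\cdot))$.

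\textbf{The decay estimate.} For $x\in\supp\psi_Q\subset Q^*$ we have $|y-x|\ge|y-c|-3$. The kernel decays rapidly:
\[
|k_j(z)|\le C_n2^{2j}(1+2^j|z|)^{-n}.
\]
Hence on $\supp\psi_Q$, with $j\ge-1$,
\[
|\Lambda(x)|\le C_n2^{2j}(1+2^j\avg{y-c})^{-n}\le C_n\avg{y-c}^{-n}2^{j(2-n)}
\]
whenever $|y-c|\ge6$. Derivatives obey the same bounds, with extra factors of $2^j$ absorbed by taking $n$ larger. When $|y-c|\le6$, the weight $\avg{y-c}$ is harmless.

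\textbf{Part (a).} Since $|\tau_1|\le C\avg\xi^{-1}\le C$, Plancherel gives
\[
\int|\tau_1|^2|\hat\Lambda|^2\,d\xi\le C\|\Lambda\|_{L^2}^2 .
\]
Two cases follow. If $|y-c|\ge6$, the decay estimate gives $\|\Lambda\|_{L^2}^2\le C\avg{y-c}^{-6}$ uniformly in $j$, after choosing $n\ge4$. If $|y-c|\le6$, the crude $L^2$ bound $\|\Lambda\|_{L^2}\le\|k_j\|_{L^2}\sim2^j$ fails for large $j$. I therefore use the $\avg\xi^{-1}$ decay of $\tau_1$ instead. The function $\hat\Lambda$ is essentially $\phi_j$ convolved with a Schwartz bump of unit width, so it is concentrated on $|\xi|\sim2^j$ with $|\hat\Lambda|\le C$ there. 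This gives
\[
\int\avg\xi^{-2}|\hat\Lambda|^2\,d\xi\le C\,2^{-2j}\cdot2^{2j}=C,
\]
uniformly in $j$, with the rapidly decaying tails of $\hat\Lambda$ being harmless.

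\textbf{Part (b).} By Lemma~\ref{lem:conc}(b) it suffices to bound $\|\mathcal X\|$ and $\|\mathcal X\|_{HS}$, where $\mathcal X=\sum B_{ab}\tau_{1,a}(D)^*M_\Lambda\tau_{1,b}(D)$. For the operator norm,
\[
\|\mathcal X\|\le C\|\Lambda\|_{L^\infty}.
\]
Far from $Q$ this gives $C\avg{y-c}^{-3}$ by the decay estimate. Near $Q$, however, $\|\Lambda\|_{L^\infty}\sim2^{2j}$, which is far too large. This is the step I expect to be the main obstacle. What should be exploited is that $\Lambda$ oscillates at frequency $2^j$ while $\tau_1$ is smoothing. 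Splitting $\mathcal X$ into its frequency-space kernel $\tau_{1,a}(\xi)^*\hat\Lambda(\xi-\eta)\tau_{1,b}(\eta)$ and applying the Schur test, I will sum dyadically over the frequency shells of $\xi$ and $\eta$. The pairs $(\xi,\eta)$ with $|\xi-\eta|\sim2^j$ contribute at most $\min(1,2^{j}\avg\xi^{-1})\cdot\min(1,2^j\avg\eta^{-1})$ weighted by $\|\hat\Lambda\|_{L^1}$ at scale $2^j$. The resulting sum over shells should be logarithmic in the number of relevant scales, which produces the factor $\sqrt{2+j}$. The alternative is to use
\[
\|\mathcal X\|\le\|\mathcal X\|_{HS}
\]
and then deduce the operator bound, with its extra factor $\lambda_\eps$, from the Hilbert--Schmidt bound. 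I would try that route first.

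For the Hilbert--Schmidt norm of $\mathsf A=\lambda_\eps^2\mathsf S^*\mathcal X\mathsf S$, I keep the smoothing of $\mathsf S$ via \eqref{eq:synth}, $\mathsf S=(1+\eps^2|D|^2)^{-1}\mathsf S_1$. Then
\[
\|\mathsf A\|_{HS}^2\le C\lambda_\eps^4\iint|\tau_1(\xi)|^2|\hat\Lambda(\xi-\eta)|^2|\tau_1(\eta)|^2(1+\eps^2|\xi|^2)^{-2}(1+\eps^2|\eta|^2)^{-2}\,d\xi\,d\eta .
\]
Integrating first in $\eta$ over shells gives a logarithmic factor in $\eps$ of order $\log(1/\eps)\sim\lambda_\eps^{-2}$, coming from $\int\avg\eta^{-2}(1+\eps^2|\eta|^2)^{-2}\,d\eta$. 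The remaining $\xi$-integral is controlled as in part (a), which supplies the $\sqrt{2+j}$ growth and the $\avg{y-c}^{-6}$ decay. The product is $C\lambda_\eps^2(2+j)\avg{y-c}^{-6}$, which is the stated Hilbert--Schmidt bound. I would verify carefully that only a single logarithm in $\eps$ arises, since any extra logarithm would spoil the claimed power of $\lambda_\eps$.
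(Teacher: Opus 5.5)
Your part (a), and both bounds of part (b) in the far field $|y-c|\ge6$, are sound and essentially follow the paper. The gap is the near-field operator-norm bound $\|\mathsf A_{j,y}\|\le C\lambda_\eps^2\sqrt{2+j}$ for $|y-c|<6$. You flag it as the main obstacle but do not establish it.

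The only Hilbert--Schmidt estimate you carry out keeps the smoothing of $\mathsf S$. In the near field it gives $\|\mathsf A_{j,y}\|_{HS}^2\le C\lambda_\eps^4(2+j)\log(1/\eps)=C\ell\lambda_\eps^2(2+j)$. The factor $2+j$ comes from $\sup_\eta\int\avg\xi^{-2}|\hat\Lambda(\xi-\eta)|^2\,d\xi$, where the shifted annulus passes through the origin when $|\eta|\sim2^j$; it does not come from part (a), which is uniform in $j$.

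Via $\|\mathsf A\|\le\|\mathsf A\|_{HS}$, this only yields $\|\mathsf A_{j,y}\|\le C\lambda_\eps\sqrt{2+j}$, one power of $\lambda_\eps$ short. That power is not cosmetic. With it, the linear term of the Hanson--Wright exponent in Lemma~\ref{lem:prob} is of order $t/\lambda_\eps^2$, giving tails $\eps^{ct/\ell}$. Without it the term is of order $t/\lambda_\eps$, and $\exp(-ct/\lambda_\eps)=\exp(-ct\sqrt{\log(1/\eps)/\ell})$ is not $O(\eps^{p})$ for any $p>0$.

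Nor can your computation be rerun without the smoothing to bound $\|\mathcal X_{j,y}\|_{HS}$ and then use $\|\mathcal X\|\le\|\mathcal X\|_{HS}$. It integrates $|\tau_1(\eta)|^2$ on its own, which diverges logarithmically. The Schur-test sketch is not carried out either, and its natural versions lose: unweighted it gives $O(2^j)$ (take $|\xi|\lesssim1$), and with weight $\avg\xi^{-1}$ it gives $O(j)$.

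The missing step is to bound $\|\mathcal X_{j,y}\|_{HS}$ directly, uniformly in $\eps$, using the decay in both frequency variables jointly rather than integrating one of them alone. Use $|\hat\Lambda_{j,y}|\le\Psi_j=(2\pi)^{-2}\phi_j*|\hat\psi|$ and the two-dimensional convolution bound $\int\avg\xi^{-2}\avg{\xi-\eta}^{-2}\,d\xi\le C\avg\eta^{-2}\log(2+|\eta|)$. Then
\[
\|\mathcal X_{j,y}\|_{HS}^2\le C\iint\avg\xi^{-2}\avg{\xi'}^{-2}\Psi_j(\xi-\xi')^2\,d\xi\,d\xi'\le C\int\Psi_j(\eta)^2\avg\eta^{-2}\log(2+|\eta|)\,d\eta\le C(2+j).
\]
The last step holds because $\Psi_j$ is bounded on the annulus $|\eta|\sim2^j$, of measure $C2^{2j}$, and decays rapidly away from it. The logarithm evaluated at $|\eta|\sim2^j$ is the source of $\sqrt{2+j}$.

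It follows that $\|\mathsf A_{j,y}\|\le\|\mathsf A_{j,y}\|_{HS}\le C\lambda_\eps^2\|\mathcal X_{j,y}\|_{HS}\le C\lambda_\eps^2\sqrt{2+j}$. This gives both near-field bounds at once, since $\lambda_\eps\le1$ and $\avg{y-c}\le7$. This is how the paper proceeds; it uses the $\eps$-smoothing only in the far field, where your argument is already correct.
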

\begin{proof}
\emph{Near field, $|y-c|<6$.} Here $\avg{y-c}\le7$, and decay in $y$ is not needed. The Fourier transform of $x\mapsto k_j(y-x)$ has modulus $\phi_j$, and the transform of a product is $(2\pi)^{-2}$ times the convolution of the transforms, so that $|\hat\Lambda_{j,y}|\le\Psi_j:=(2\pi)^{-2}\phi_j*|\hat\psi|$. If $\dist(\xi,\supp\phi_j)=d$, then $\Psi_j(\xi)\le(2\pi)^{-2}\int_{|\zeta'|\ge d}|\hat\psi(\zeta')|\,d\zeta'\le C_n(1+d)^{-n}$, since $0\le\phi_j\le1$ and $\hat\psi$ is Schwartz. For $j\ge0$, the set $\mathcal N_j=\{\xi:\dist(\xi,\supp\phi_j)\le2^{j-2}\}$ is contained in $\{2^{j-1}\le|\xi|\le3\cdot2^j\}$, of measure $\le C2^{2j}$, where $\Psi_j\le C$ and $\avg\xi^{-2}\le C2^{-2j}$. Outside $\mathcal N_j$, $\Psi_j(\xi)\le C_n(2^j+|\xi|)^{-n}$. Hence $\int\avg\xi^{-2}\Psi_j^2\,d\xi\le C$, also for $j=-1$, where $\Psi_{-1}\le C_n\avg\xi^{-n}$. This gives (a), since $|\tau_1|\le C\avg\xi^{-1}$. For (b), the kernel of $\tau_{1,a}(D)^*M_\Lambda\tau_{1,b}(D)$ in the Fourier variables is $(2\pi)^{-2}\overline{\tau_{1,a}(\xi)}\hat\Lambda(\xi-\xi')\tau_{1,b}(\xi')$, so that, with the same splitting,
\[
\|\mathcal X_{j,y}\|_{HS}^2\le C\int\!\!\int\avg\xi^{-2}\avg{\xi'}^{-2}\Psi_j(\xi-\xi')^2\,d\xi\,d\xi'\le C\int\Psi_j(\eta)^2\avg\eta^{-2}\log(2+|\eta|)\,d\eta\le C(2+j),
\]
using $\int\avg\xi^{-2}\avg{\xi-\eta}^{-2}d\xi\le C\avg\eta^{-2}\log(2+|\eta|)$. Lemma~\ref{lem:conc}(b) gives $\|\mathsf A_{j,y}\|\le\|\mathsf A_{j,y}\|_{HS}\le C\lambda_\eps^2\sqrt{2+j}$, and $\lambda_\eps^2\le\lambda_\eps$.

\emph{Far field, $|y-c|\ge6$.} For $x\in\supp\psi_Q\subset B(c,3)$, $|y-x|\ge|y-c|/2\ge\avg{y-c}/3$, so that $|k_j(z)|\le C2^{2j}(1+2^j|z|)^{-8}$ gives $\|\Lambda_{j,y}\|_{L^1}+\|\Lambda_{j,y}\|_{L^\infty}\le C\avg{y-c}^{-8}$. Since $|\tau_1|\le C$, Plancherel gives $\int|\tau_1|^2|\hat\Lambda_{j,y}|^2\le C\|\Lambda_{j,y}\|_{L^2}^2\le C\|\Lambda_{j,y}\|_{L^1}\|\Lambda_{j,y}\|_{L^\infty}$, hence (a). For (b), $\|\mathsf A_{j,y}\|\le C\lambda_\eps^2\|\mathcal X_{j,y}\|\le C\lambda_\eps^2\|\Lambda_{j,y}\|_{L^\infty}$. For the Hilbert--Schmidt norm, write $M_\Lambda=M_{|\Lambda|^{1/2}}M_{\sgn\Lambda\,|\Lambda|^{1/2}}$ and $\mathsf S=(1+\eps^2|D|^2)^{-1}\mathsf S_1$ as in \eqref{eq:synth}: the factor $M_{|\Lambda|^{1/2}}\tau_{1,b}(D)(1+\eps^2|D|^2)^{-1}$ has Hilbert--Schmidt norm squared $\le C\|\Lambda\|_{L^1}\int|\tau_1|^2(1+\eps^2|\xi|^2)^{-2}d\xi\le C\|\Lambda\|_{L^1}\log(1/\eps)$, and the remaining factors have norm $\le C\|\Lambda\|_{L^\infty}^{1/2}$. Hence $\|\mathsf A_{j,y}\|_{HS}\le C\lambda_\eps^2(\log(1/\eps)\|\Lambda\|_{L^1}\|\Lambda\|_{L^\infty})^{1/2}\le C\lambda_\eps\sqrt\ell\,\avg{y-c}^{-8}$, since $\lambda_\eps^2\log(1/\eps)=\ell$.
\end{proof}

\begin{proof}[Proof of Lemma \ref{lem:prob}]
Fix $t\in(0,1]$ and a unit box $Q$ with center $c=c_Q$, and note that $\avg{z-c}\le3$ on $Q^*$. Set $\Gamma_j=\delta_02^{-j-2}\Z^2$. The Fourier transform of $\Delta_jf$ is supported in $B(0,2^{j+2})$, so Lemma~\ref{lem:sampling} applies to it with $R=2^{j+2}$. In each step, a union bound over $\Gamma_j$ uses the second part of Lemma~\ref{lem:sampling}, and the sum over $j\ge-1$ uses $\sum_{j\ge-1}2^{2j}e^{-a2^{\gamma_0j}}\le C_{\gamma_0}e^{-a/4}$ for $a\ge1$ and fixed $\gamma_0\in(0,2]$ ($\gamma_0=1$, $2\kappa$, $\kappa/2$ in Steps 1, 3, 4).

\emph{Step 1: $Y_0$.} We write $Y_0=\sum_{j\ge-1}g_j$ with $g_j=\Delta_jY_0$. The series converges uniformly since $Y_0\in C^1_b$ for fixed $\eps$. By Lemma~\ref{lem:conc}(a) with $\Lambda=k_j(y-\cdot)$, whose transform has modulus $\phi_j$, $g_j(y)$ has variance proxy $\sigma_j^2\le C\lambda_\eps^2\int|\tau_0|^2\phi_j^2\le C\lambda_\eps^22^{-2j}$. Let $s_j=c_1t2^{-(j+1)/2}$ with $c_1$ such that $12\sum_js_j\le t/4$. Then
\[
\PP\bigl(\exists y\in\Gamma_j:|g_j(y)|>s_j\avg{y-c}\bigr)\le\sum_{y\in\Gamma_j}2e^{-cs_j^2\avg{y-c}/\sigma_j^2}\le C2^{2j}e^{-ct^22^{j}/\lambda_\eps^2}.
\]
On the complement of these events, Lemma~\ref{lem:sampling} with $\iota=1$ gives $\|g_j\|_{L^\infty(Q^*)}\le12s_j$, hence $\|Y_0\|_{L^\infty(Q^*)}\le t/4$. The total probability of the exceptional events is $\le C\exp(-ct^2/\lambda_\eps^2)$.

\emph{Step 2: $\tilde V_{\rm low}$.} Its Fourier transform is supported in $\supp\chi_0$, and $\tilde V_{\rm low}(y)$ has variance proxy $\le C\lambda_\eps^2\int|\chi_0|^2$. The argument of Step~1, applied to the single band-limited function $\tilde V_{\rm low}$ in place of the blocks $g_j$, with $R\ge1$ such that $\supp\chi_0\subset B(0,R)$, gives $\|\tilde V_{\rm low}\|_{L^\infty(Q^*)}\le t/4$ outside an event of probability $\le C\exp(-ct^2/\lambda_\eps^2)$.

\emph{Step 3: $\psi_QZ$.} Let $g_j=\Delta_j(\psi_QZ)$, so that $g_j(y)=\ip{\Lambda_{j,y}}{Z}$ with $\Lambda_{j,y}$ as in Lemma~\ref{lem:blocks}. Since $Z$ is bounded for fixed $\eps$ and $k_j$ is a Schwartz function, $\avg{z-c}g_j(z)$ is bounded. By Lemmas~\ref{lem:conc}(a) and~\ref{lem:blocks}(a), each component of $g_j(y)$ has variance proxy $\le C\lambda_\eps^2\avg{y-c}^{-6}$. With $s'_j=t2^{\kappa j}/32$,
\[
\PP\bigl(\exists y\in\Gamma_j:|g_j(y)|>s'_j\avg{y-c}^{-1}\bigr)\le\sum_{y\in\Gamma_j}4e^{-ct^22^{2\kappa j}\avg{y-c}^4/\lambda_\eps^2}\le C2^{2j}\exp\bigl(-c\,t^2\,2^{2\kappa j}/\lambda_\eps^2\bigr),
\]
and on the complement Lemma~\ref{lem:sampling} with $\iota=-1$ gives $\|g_j\|_{L^\infty}\le\sup_z\avg{z-c}|g_j(z)|\le4s'_j=\frac t82^{\kappa j}$. Hence, outside an event of probability $\le C\exp(-ct^2/\lambda_\eps^2)$, $\|\psi_QZ\|_{C^{-\kappa}}\le t/8$.

\emph{Step 4: $\psi_Q\cW$.} We write $\cW=\cW_{\rm r}+\cW_{\rm d}$ with $\cW_{\rm r}=|Z|_B^2-\EE|Z|_B^2$ and $\cW_{\rm d}=\EE|Z|_B^2-c_\eps$. The law of $\omega$ is invariant under shifts of $\Z^2$, so $\cW_{\rm d}$ is deterministic and $\eps$-periodic, with mean zero over $Y_\eps$ by definition of $c_\eps$. Each component $Z_a(x)=\sum_ma_m\omega_m$ satisfies $\EE Z_a(x)^2=\EE\omega^2\sum_ma_m^2$, which by Lemma~\ref{lem:conc}(a) with $\Lambda=\delta_x$ and $n=1$ is $\le C\lambda_\eps^2\log(1/\eps)=C\ell$. Since also $c_\eps\le C\ell$, $\|\cW_{\rm d}\|_{L^\infty}\le C\ell$. For a bounded, mean-zero, $1$-periodic function $f$, the Fourier transform of $f(\cdot/\eps)$ is supported in $(2\pi/\eps)\Zm^2\setminus\{0\}$, so that $\Delta_jf(\cdot/\eps)=0$ unless $\frac832^j\ge2\pi/\eps$, and $\|f(\cdot/\eps)\|_{C^{-\kappa}}\le C\eps^\kappa\|f\|_{L^\infty}$. Hence, by Lemma \ref{lem:prod}, $\|\psi_Q\cW_{\rm d}\|_{C^{-\kappa}}\le C\eps^\kappa\ell\le t/16$ for $\eps\le\eps_t$.

For $\cW_{\rm r}$, let $g_j=\Delta_j(\psi_Q\cW_{\rm r})$. As in Step 3, $\avg{z-c}g_j(z)$ is bounded. For $y\in\Gamma_j$, $g_j(y)=\ip\omega{\mathsf A_{j,y}\omega}-\EE\ip\omega{\mathsf A_{j,y}\omega}$, with the bounds on $\|\mathsf A_{j,y}\|$ and $\|\mathsf A_{j,y}\|_{HS}$ of Lemma~\ref{lem:blocks}(b). Let $s''_j=t2^{\kappa j}/64$. By Lemma~\ref{lem:conc}(b) with $s=s''_j\avg{y-c}^{-1}$, the event $|g_j(y)|>s''_j\avg{y-c}^{-1}$ has probability at most $2\exp(-\mathcal E_{\rm HW}(\mathsf A_{j,y},s''_j\avg{y-c}^{-1}))$. Using $t\le1$, $2^{2\kappa j}/(2+j)\ge c\,2^{\kappa j/2}$ and $2^{\kappa j}/\sqrt{2+j}\ge c\,2^{\kappa j/2}$ for $j\ge-1$, we obtain
\[
\mathcal E_{\rm HW}\bigl(\mathsf A_{j,y},s''_j\avg{y-c}^{-1}\bigr)\ge c\min\Bigl(\frac{t^22^{2\kappa j}\avg{y-c}^4}{\lambda_\eps^2(2+j)},\frac{t\,2^{\kappa j}\avg{y-c}^2}{\lambda_\eps^2\sqrt{2+j}}\Bigr)\ge c\avg{y-c}\frac{t^2\,2^{\kappa j/2}}{\lambda_\eps^2}.
\]
The union bound over $\Gamma_j$ and the sum over $j$ give an exceptional probability $\le C\exp(-ct^2/\lambda_\eps^2)$, outside of which Lemma~\ref{lem:sampling} with $\iota=-1$ gives $\|g_j\|_{L^\infty}\le4s''_j$, so that $\|\psi_Q\cW_{\rm r}\|_{C^{-\kappa}}\le t/16$.

\emph{Conclusion.} All the exceptional probabilities above are $\le C\exp(-ct^2/\lambda_\eps^2)$, with $C$ independent of $t$. Outside their union, the four terms of $\delta_Q$ are bounded by $t/4$, $t/4$, $t/8$ and $t/8$, so $\delta_Q<t$.
\end{proof}

\begin{proof}[Proof of Lemma \ref{lem:probboxA}]
\emph{(i) Truncation.} Each of the fields $Y_0$, $Z$ and $\tilde V_{\rm low}$ is of the form $F=\lambda_\eps\tau(D)\mathsf S\omega$, with $\mathsf S$ the synthesis operator \eqref{eq:synth} and $\tau\in\{\tau_0,\tau_1,\tau_{\rm low}\}$. Let $F^{\rm tr}$ be defined with $\omega_m$ replaced by $\omega_m\one_{\{|\eps m-c_Q|\le r\}}$. For $x\in Q^*$ and $|\eps m-c_Q|>r$, the support of $w(\cdot/\eps-m)$ lies at distance $\ge\frac12|\eps m-c_Q|$ from $x$ for $\eps$ small, so the bounds $|K(z)|\le C_n|z|^{-n}$, $|z|\ge1$, on the kernels $K_0,K_1,K_{\rm low}$ give
\[
|F(x)-F^{\rm tr}(x)|\le C\lambda_\eps\kappa_+\eps^{-1}\sum_{|\eps m-c_Q|>r}\eps^2|\eps m-c_Q|^{-n}\le C\eps^{-1}r^{2-n}\le\eps^{12}
\]
for $n$ large, depending on $\nu_r$. Moreover $\||Z|_B^2-|Z^{\rm tr}|_B^2\|_{L^\infty(Q^*)}\le C\eps^{12}\cdot\eps^{-1}=C\eps^{11}$, by the deterministic bound $\|Z\|_{L^\infty}\le C\eps^{-1}$, which also holds for $Z^{\rm tr}$. Define $\delta_Q^{\rm tr}$ and $N_Q^{\rm tr}$ from the truncated fields, with the same constant $c_\eps$. Since $\|\psi_Qf\|_{C^{-\kappa}}\le C\|f\|_{L^\infty(Q^*)}$, the differences are at most $C\eps^{11}\le\eps^{10}$ for $\eps$ small. This proves (i). By construction, the truncated fields, hence $\delta_Q^{\rm tr}$ and $N_Q^{\rm tr}$, depend only on the couplings $\omega_m$ with $|\eps m-c_Q|\le r$.

\emph{(ii) Probability.} Since $\eps^{10}\le\delta_*/4$ and $\eps^{10}\le C_1\sqrt{\log(1/\eps)}/4$ for $\eps$ small, (i) shows that the event in (ii) is contained in $\{\delta_Q>\delta_*/4\}\cup\{\|\nabla Y_0\|_{L^\infty(Q^*)}>C_1\sqrt{\log(1/\eps)}/4\}$. We bound the probabilities of these two events. The event $\{\delta_Q>\delta_*/4\}$ has probability $\le Ce^{-c\delta_*^2/\lambda_\eps^2}=C\eps^{c\delta_*^2/\ell}$ by Lemma \ref{lem:prob} with $t=\delta_*/4$, since $\lambda_\eps^2=\ell/\log(1/\eps)$. For the event $\{\|Z\|_{L^\infty(Q^*)}>C_1\sqrt{\log(1/\eps)}/4\}$, with $Z=\nabla Y_0$, let $J=\lceil\log_2(1/\eps)\rceil$ and split $Z=Z_{\le}+Z_>$ with $Z_\le=\sum_{j\le J}\Delta_jZ$. The high part is bounded deterministically: by \eqref{eq:synth} with $n=1$, $\Delta_jZ=\lambda_\eps\,\Delta_j\tau_1(D)(1+\eps^2|D|^2)^{-1}\mathsf S_1\omega$, where $\|\mathsf S_1\omega\|_{L^\infty}\le C\eps^{-1}$ as for $\mathsf S\omega$, and for $j>J$ the multiplier $\phi_j\tau_1(1+\eps^2|\xi|^2)^{-1}$ has a kernel of $L^1$ norm $\le C2^{-j}(\eps2^j)^{-2}$: after the change of variables $\xi=2^j\eta$, it becomes a function supported in $\supp\phi_0$ whose derivatives up to order $3$ are bounded by $C2^{-j}(\eps2^j)^{-2}$, and the $L^1$ norm of a kernel is invariant under this rescaling. Since $\eps2^J\ge1$, $\|Z_>\|_{L^\infty}\le C\eps^{-1}\sum_{j>J}2^{-j}(\eps2^j)^{-2}\le C\eps^{-1}2^{-J}\le C$. The Fourier transform of $Z_\le$ is supported in $B(0,2^{J+2})$ and, by Lemma~\ref{lem:conc}(a) with $\Lambda=\delta_y$ and the multiplier $\tau_1\sum_{j\le J}\phi_j$, $Z_\le(y)$ has variance proxy $C\lambda_\eps^2\int_{c_\chi\le|\xi|\le2^{J+2}}|\xi|^{-2}d\xi\le C\ell$. Lemma~\ref{lem:sampling} with $\iota=1$, mesh $\delta_02^{-J-2}\ge c\eps$ and thresholds $s\avg{y-c}$, $s=C_1\sqrt{\log(1/\eps)}/96$, gives $\|Z_\le\|_{L^\infty(Q^*)}\le C_1\sqrt{\log(1/\eps)}/8$ outside an event of probability $\le C\eps^{-2}\eps^{cC_1^2/\ell}\le C\eps$ for $C_1^2\ge3\ell/c$. Hence $\|Z\|_{L^\infty(Q^*)}\le C_1\sqrt{\log(1/\eps)}/4$ outside this event, for $\eps$ small. Since $p_0\le1$, both probabilities are $\le C\eps^{p_0}$, which proves (ii). The constant $c$ in $p_0$ is that of Lemma~\ref{lem:prob} divided by $16$, and $C_1$ is fixed by the condition $C_1^2\ge3\ell/c$ above. Neither depends on $\delta_*$ or $\nu_r$.
\end{proof}

\section*{Acknowledgments}
This work was supported in part by NSF grant DMS-2606421 and ONR grant N00014-26-1-2017.

\section*{Use of generative AI}
Generative AI (Claude Opus 5.5) was used in the preparation of this work: to carry out the numerical evaluations and prepare the figures, and to write the scripts reproducing them; to help draft and revise the text and the proofs; to audit the manuscript for internal consistency and gaps in its arguments and proofs; and to help with bibliographical data. All such output was checked by the author, who is alone accountable for the correctness of all statements and proofs.

\clearpage
\setcounter{section}{0}\setcounter{subsection}{0}\setcounter{equation}{0}\setcounter{figure}{0}\setcounter{table}{0}
\renewcommand{\thesection}{S\arabic{section}}\renewcommand{\theHsection}{S\arabic{section}}
\numberwithin{equation}{section}
\renewcommand{\theequation}{S\arabic{section}.\arabic{equation}}\renewcommand{\theHequation}{S\arabic{section}.\arabic{equation}}
\renewcommand{\thefigure}{S\arabic{figure}}\renewcommand{\theHfigure}{S\arabic{figure}}
\renewcommand{\thetable}{S\arabic{table}}\renewcommand{\theHtable}{S\arabic{table}}
\begin{center}
{\Large\bf Supplementary material}\\[6pt]
{\large Homogenization theory of topological Anderson insulators: genuine gaps and pseudo-gaps}\\[6pt]
Guillaume Bal
\end{center}
\phantomsection\addcontentsline{toc}{section}{Supplementary material}

This supplementary material recalls the setting and the main formulas of the paper (section~\ref{sec:Ssetting}), collects the tools used in the proofs together with concise outlines of the main proofs (section~\ref{sec:Stools}), and contains examples and numerical illustrations (section~\ref{sec:Sexamples}), extensions and refinements, including the quantization of first-order bulk invariants (section~\ref{sec:Sremarks}), and possible generalizations (section~\ref{sec:Sdiscussion}). Section~\ref{sec:Snotation} is an index of the notation of the main paper and of this supplement, including the symbols used with more than one meaning. Sections~\ref{sec:Ssetting} and~\ref{sec:Stools} primarily summarize background material in the fields the paper draws on: homogenization, index theory, random operators, percolation, concentration inequalities, and paraproduct calculus.

\section{Setting and results from the main paper}\label{sec:Ssetting}

\paragraph{Overview.} A topological insulator is gapped in its bulk but carries robust currents along interfaces between materials in different topological phases. A topological Anderson insulator is one whose non-trivial phase is created by disorder. In the effective medium picture \cite{S:GWATB09}, disorder renormalizes the mass of a Dirac-type Hamiltonian, and the phase changes with the sign of the renormalized mass. The main paper has three layers. \emph{Homogenization}: a rapidly oscillating periodic fluctuation of a constant-coefficient elliptic operator $P(D)$, with amplitudes chosen so that all its terms act at the same scale, does not average out, and the perturbed operator converges in norm-resolvent sense to $P+M$, with $M$ a constant matrix given by a second-order (Born) formula. \emph{Topology}: in two dimensions, a fluctuation confined to a half-plane creates an interface between the phases $P$ and $P+M$, whose index is transferred from the limit to the perturbed operator. In particular, electric potentials and magnetic fields of zero mean, usually regarded as topologically trivial, generate corrections $M$ that change the phase. \emph{Randomness}: for random fluctuations, rare regions typically fill the homogenized gap, which becomes a pseudo-gap. The index is then defined along cuts deformed around the rare regions, and equals the bulk-difference invariant of the operator homogenized in the Born approximation, both for first-order alloys with vanishing low moments and for a class of second-order operators with scalar disorder at criticality, which includes the BHZ model with on-site disorder. We note two technical difficulties. The principal symbol may be indefinite, which is the topologically interesting case and rules out coercivity, energy estimates and maximum principles, so that all estimates are spectral or Fourier-analytic. And at criticality the Born correction diverges logarithmically, so that the amplitude must be renormalized and the analysis carried out in spaces of negative regularity.

The rest of this section recalls, without proofs, the statements and formulas used in the rest of the supplement.
Notation follows section~\ref{M-sec:setting} of the main paper, and all symbols are listed in section~\ref{sec:Snotation}.

\paragraph{Operators and correction.} The unperturbed operator is $P(D)=\sum_{|\alpha|\le k}A_\alpha D^\alpha$, with $A_\alpha\in\C^{N\times N}$ and principal symbol $P_k(\xi)=\sum_{|\alpha|=k}A_\alpha\xi^\alpha$. It satisfies \textup{[P1]} $P(\xi)\in\Herm(N)$ for all $\xi\in\R^d$, and \textup{[P2]} $|P_k(\xi)v|\ge c_P|\xi|^k|v|$ for some $c_P>0$. For $d\ge2$ the signature of $P_k(e)$, $|e|=1$, is constant, and $P_k$ is called definite or indefinite accordingly. It is indefinite when $k$ is odd. The basic fluctuation is a matrix-valued potential,
\begin{equation}\label{eq:SHeps}
H^\eps=P(D)+\eps^{-k/2}\rho(x)\,W(x/\eps),\qquad W\in C^{0,\alpha_0}(\T^d;\Herm(N)),\quad\avg W=0,
\end{equation}
with a smooth bounded envelope $\rho$. The homogenized operator and the correction are
\begin{equation}\label{eq:SM}
H^0=P(D)+\rho^2M,\qquad M=-\avg{W\,P_k(D_y)^{-1}W}=\sum_{q\in\Z^d\setminus0}\hat W_q^*\,G(q)\,\hat W_q,\qquad G(q)=-P_k(2\pi q)^{-1},
\end{equation}
where $P_k(D_y)^{-1}$ acts on mean-zero periodic functions. Pairing the modes $q$ and $-q$, with $\Lambda_+$ containing one element of each pair and $G_1(e)=-P_k(e)^{-1}$ for $|e|=1$,
\begin{equation}\label{eq:Spair}
M=\sum_{q\in\Lambda_+}(2\pi|q|)^{-k}\Bigl(\hat W_q^*G_1(\hat q)\hat W_q+(-1)^k\,\hat W_qG_1(\hat q)\hat W_q^*\Bigr).
\end{equation}
We use repeatedly the elementary identity, valid for Hermitian $A_a$ and $G$ and complex $x_a$,
\begin{equation}\label{eq:Sbilin}
X=\sum_ax_aA_a\quad\Longrightarrow\quad X^*GX-XGX^*=2i\sum_{a<b}\Im(\bar x_ax_b)\,\bigl(A_aGA_b-A_bGA_a\bigr).
\end{equation}
The main paper also treats fluctuations of all orders below $k$. We only use the electromagnetic case (Peierls substitution): for $k\le2$,
\begin{equation}\label{eq:Speierls}
H^\eps=P\bigl(D+\eps^{k/2-1}\rho A(\tfrac x\eps)\bigr)+\eps^{-k/2}\rho V(\tfrac x\eps),
\end{equation}
with $A\in C^{1,\alpha_0}(\T^d;\R^d)$, $V\in C^{0,\alpha_0}(\T^d;\R)$, $\avg A=0$ and $\avg V=0$.
For $k=1$, this is \eqref{eq:SHeps} with $W=V+A\cdot\gamma$ when $P=\gamma(D)+\Gamma$. For $k=2$ and $P_2(\xi)=\sum_{i,j}a_{ij}\xi_i\xi_j$, the correction is (Lemma~\ref{M-lem:peierls} of the main paper)
\begin{equation}\label{eq:SMpeierls}
M=\Bigl\langle\sum_{i,j}a_{ij}A_iA_j\Bigr\rangle-\avg{W^*P_2(D_y)^{-1}W},\qquad W=V-i\sum_{i,j}a_{ij}\,\partial_{y_i}A_j .
\end{equation}
Theorem~\ref{M-thm:hom} of the main paper states that, for all $z\notin\R$ and $\eps\in(0,1]$,
\begin{equation}\label{eq:Shom}
\big\|(H^\eps-z)^{-1}-(H^0-z)^{-1}\big\|_{L^2\to L^2}\le C\eps^{r_k}(1+|\Re z|)^2(1+|\Im z|^{-2}),\qquad r_1=\tfrac12,\quad r_k=1\ (k\ge2).
\end{equation}
By Theorem~\ref{M-thm:univ} of the main paper, for $d,N\ge2$, the set of corrections $M(W)$ over Hermitian mean-zero trigonometric polynomials $W$ is $\Herm(N)$ when $P_k$ is indefinite, and the cone of negative (resp.\ positive) semidefinite matrices when $P_k$ is positive (resp.\ negative) definite. In the definite case, the spectrum of every constant-coefficient bulk $P+\Gamma$ is a half-line, so that a gap at $0$ lies outside the spectrum, both bulk invariants vanish, and $\BDI(P+M,P)=0$ for every $M$. Topology requires an indefinite principal symbol.

\paragraph{Interface index.} Let $d=2$, and let $\rho=\rho(x_2)$ be smooth, with $\rho=0$ for $x_2\ge1$ and $\rho=1$ for $x_2\le-1$. The two bulk operators of $H^0$ are $H_+=P(D)$ and $H_-=P(D)+M$, and we assume the common-gap condition
\begin{equation}\label{eq:Sgap}
\textup{[G]}\qquad(-g,g)\cap\bigl(\spec H_+\cup\spec H_-\bigr)=\emptyset\quad\text{for some }g>0 .
\end{equation}
With a smooth non-decreasing switch $\varphi$, equal to $0$ on $(-\infty,-g/4]$ and to $1$ on $[g/4,\infty)$, $g$ being the gap parameter of the statement at hand, and $\Pi=\one_{\{x_1\ge0\}}$, the Toeplitz operator is
\begin{equation}\label{eq:ST}
T(H)=\Pi\,U(H)\,\Pi+I-\Pi,\qquad U(H)=e^{2\pi i\varphi(H)} .
\end{equation}
For a constant Hermitian symbol $h(\xi)$ gapped at $0$, with negative spectral projector $\Pi_h(\xi)$, let
\begin{equation}\label{eq:Sc}
c(h)=\frac1{2\pi}\int_{\R^2}\Omega_h,\qquad\Omega_h=i\,\tr\bigl(\Pi_h\,d\Pi_h\wedge d\Pi_h\bigr).
\end{equation}
For $\Gamma_\pm\in\Herm(N)$ with $P+\Gamma_\pm$ gapped at $0$, the bulk-difference invariant is defined by
\begin{equation}\label{eq:SBDI}
\BDI(P+\Gamma_-,P+\Gamma_+)=c(P+\Gamma_-)-c(P+\Gamma_+),
\end{equation}
as in \cite{S:Bal2022}, and $c$ is locally constant along gapped deformations of $\Gamma$. By the bulk-edge correspondence (Proposition~\ref{M-thm:QB} of the main paper), $\Ind T(H)=\BDI(P+V_-,P+V_+)$ for interface operators $H=P(D)+V(x)$ with $V=V_\pm$ constant for $\pm x_2\ge1$. Theorem~\ref{M-thm:TAI} of the main paper states that, under [P1]--[P2] and [G], $T(H^\eps)$ is Fredholm with $\Ind T(H^\eps)=\BDI(P+M,P)$ for all $\eps$ small enough. For two-band symbols $h=d\cdot\sigma+f\sigma_0$, $d=(d_1,d_2,d_3)$, gapped at $0$, that is with $|f|<|d|$, the negative projector is $\frac12(1-n\cdot\sigma)$ with $n=d/|d|$, independently of $f$, and
\begin{equation}\label{eq:Stwoband}
c(h)=\frac1{4\pi}\int_{\R^2}n\cdot(\partial_1n\times\partial_2n)\,d\xi .
\end{equation}
If $n$ extends continuously to $\R^2\cup\{\infty\}$ and $(d_1,d_2)$ vanishes at a single point $\xi_0$, with $\det\partial_\xi(d_1,d_2)(\xi_0)>0$, then
\begin{equation}\label{eq:Sdegree}
c(h)=\tfrac12\bigl(\sgn d_3(\xi_0)-\sgn d_3(\infty)\bigr).
\end{equation}
For the Dirac operator, $c(\xi\cdot\sigma+m\sigma_3)=\frac12\sgn m$, since $n$ covers the hemisphere $\{\sgn(m)\,n_3>0\}$ once, with orientation $\sgn m$.

\paragraph{Formulas for Dirac-type operators.} For $P=\gamma(D)+\Gamma$ of first order in $d=2$, $\gamma(\xi)=\xi_1\gamma_1+\xi_2\gamma_2$, and $W=V\,I+A_1\gamma_1+A_2\gamma_2$, we write $A=\nabla^\perp\psi+\nabla\zeta$ with $\psi,\zeta$ periodic and mean zero. The function $\psi$ is the stream function of $A$. Proposition~\ref{M-prop:EM} of the main paper gives, with $\mathcal Q(e)=i[\gamma(e)^{-1},\gamma(e^\perp)]$,
\begin{equation}\label{eq:SEM}
M=-\sum_{q\ne0}\Re\bigl(\overline{\hat V_q}\,\hat\psi_q\bigr)\,\mathcal Q(\hat q),\qquad M=2\avg{V\psi}\,\sigma_3\quad\text{for }\gamma=\sigma .
\end{equation}
For the regularized Dirac operator $P=D\cdot\sigma+(m+\beta\Delta)\sigma_3$, $\beta\ne0$, and $W=\sum_{a=0}^3V_a\sigma_a$, let $-\Delta\phi_a=V_a$ with $\phi_a$ periodic and mean zero, and $\mathcal C_{ab}=\int_{\T^2}\nabla\phi_a\cdot\nabla\phi_b\,dy$. Proposition~\ref{M-prop:regDirac} of the main paper gives
\begin{equation}\label{eq:SregDirac}
M=\beta^{-1}\bigl[(\mathcal C_{00}+\mathcal C_{33}-\mathcal C_{11}-\mathcal C_{22})\sigma_3+2\mathcal C_{03}\sigma_0+2\mathcal C_{31}\sigma_1+2\mathcal C_{32}\sigma_2\bigr],
\end{equation}
and, for the electromagnetic fluctuation \eqref{eq:Speierls} with $k=2$, $A=\nabla^\perp\psi$ and $V_0=V$, $M=\bigl(\mathcal C_{00}/\beta-\beta\int_{\T^2}|\nabla\psi|^2\bigr)\sigma_3$. For the BHZ-type operator with symbol $\xi\cdot\sigma+(m+b|\xi|^2)\sigma_3+c|\xi|^2$, $|b|\ne|c|$, the same proposition gives
\begin{equation}\label{eq:SBHZ}
\begin{gathered}
M=\mathcal C_{00}\,\frac{c\,\sigma_0-b\,\sigma_3}{b^2-c^2}\quad(W=V\sigma_0),\\
M=\mathcal C_{00}\,\frac{c\,\sigma_0-b\,\sigma_3}{b^2-c^2}+\Bigl(\int_{\T^2}|\nabla\psi|^2\Bigr)(b\sigma_3+c)\quad\text{(electromagnetic)} .
\end{gathered}
\end{equation}
Finally, for the regularized Dirac operator with $m\beta\ne0$ and $M=M_0\sigma_0+M_\perp\cdot\sigma+M_3\sigma_3$, Proposition~\ref{M-prop:regIndex} of the main paper states that if $|M_0|<\inf_\xi|d(\xi)|$, with $d(\xi)=(\xi+M_\perp,\,m+M_3-\beta|\xi|^2)$, then [G] holds and
\begin{equation}\label{eq:Smeff}
\BDI(P+M,P)=\tfrac12\bigl(\sgn m_{\rm eff}-\sgn m\bigr),\qquad m_{\rm eff}=m+M_3-\beta|M_\perp|^2\ne0 .
\end{equation}

\paragraph{Gaps from gradings.} Several gaps in the paper must hold for every configuration of the disorder, or for every field $M(x)$ with values in a set $\cK$, although the fluctuations are not small. Norm perturbation theory cannot give such gaps. They come instead from a grading, a unitary involution $J$ that splits the operator into a part anticommuting with $J$, which cannot close a gap however large it is, and a part commuting with $J$ that has a definite sign relative to $J$, which opens it. We recall Lemma~\ref{M-lem:grade} of the main paper: if $J$ is a unitary involution preserving $\mathrm{dom}(H)$ and $H=H_{\rm o}+H_{\rm e}$ with $H_{\rm o},H_{\rm e}$ symmetric, $JH_{\rm o}=-H_{\rm o}J$, $JH_{\rm e}=H_{\rm e}J$ and
\begin{equation}\label{eq:Sgrade}
\ip{JH_{\rm e}\psi}{\psi}\ge g\|\psi\|^2\quad\text{for all }\psi\in\mathrm{dom}(H),\qquad\text{then}\qquad\spec H\cap(-g,g)=\emptyset .
\end{equation}
The proof takes one line. For $\psi\in\mathrm{dom}(H)$, the anticommutation makes $\ip{H_{\rm o}\psi}{J\psi}$ purely imaginary, so that, for $|E|<g$, $\|(H-E)\psi\|\,\|\psi\|\ge\Re\ip{(H-E)\psi}{J\psi}\ge(g-|E|)\|\psi\|^2$. The bound involves neither the size of $H_{\rm o}$ nor the spatial variations of $H_{\rm e}$, only the sign condition. For instance, for $H=(D+a)\cdot\sigma+\mu(x)\sigma_3+V(x)$ with $a$, $\mu$, $V$ real, bounded and measurable, the grading $J=\pm\sigma_3$ makes the kinetic part and every vector potential $a$ odd, and the mass and the electric potential even. The gap $(-g,g)$ then persists whenever $\pm\mu-|V|\ge g$ almost everywhere, however large or rough $a$, $\mu$ and $V$ are.

The lemma is used in three ways. First, it proves Theorem~\ref{M-thm:magno}: vector potentials enter only $J$-odd terms or, in class (b), a $J$-even term with the sign of the mass, so that purely magnetic fluctuations cannot close the gap and leave the index trivial. Second, under condition (A) of Theorem~\ref{M-thm:phasepurek}, it gives the gap of $P+M(x)$ for every field $M(x)$ with values near $\cK$, which is how the random operator keeps a genuine gap in every configuration; this covers Corollary~\ref{M-cor:bernoulli} and Proposition~\ref{prop:highex}(a). Third, it gives a simple sufficient condition for the Born-gap condition [BG] of Theorem~\ref{M-thm:mainB}, with the gradings $s_+J$ and $s_-J$ for $P$ and $P+M_{\rm Born}$. In each case, the topological change comes from the sign alone: when the fluctuation inverts the $J$-sign of the even part, $s_-\ne s_+$, the mass is inverted, while the $J$-odd part of the fluctuation, however large, plays no role.

\paragraph{Alloys.} In $d=2$, an alloy is $W_\omega=\sum_{j\in\Z^2}\omega_jw(\cdot-j)$, with real couplings $|\omega_j|\le\kappa_+$ and a single-site profile $w\in C_c^\infty(\R^2;\Herm(N))$ satisfying the moment condition
\begin{equation}\label{eq:Smoments}
\int_{\R^2}y^\alpha w(y)\,dy=0\qquad\text{for all }|\alpha|\le2k-1 .
\end{equation}
With $\hat w$ the Fourier transform of $w$ on $\R^2$, define, for $q\in\T^2$ and $0\le\kappa_-\le\kappa_+$,
\begin{equation}\label{eq:SK}
\mathbb K(q)=\sum_{\ell\in\Z^2,\,q+\ell\ne0}\hat w(q+\ell)^*G(q+\ell)\hat w(q+\ell),\qquad\cK=[\kappa_-^2,\kappa_+^2]\cdot\overline\conv\{\mathbb K(q):q\in\T^2\}.
\end{equation}
For $L$-periodic configurations, $M(\omega)=\int_{\T^2}\mathbb K\,d\mathfrak s_\omega$, where the structure factor $\mathfrak s_\omega$ is a nonnegative measure whose mass is the mean of $\omega_j^2$ over a period. Hence $M(\omega)\in\cK$ when $\kappa_-\le|\omega_j|\le\kappa_+$ (Lemma~\ref{M-lem:Kset} of the main paper). For i.i.d.\ couplings and $k=1$, the typical local effective matrix is the Born matrix
\begin{equation}\label{eq:SMborn}
M_{\rm Born}=\mathrm{Var}(\omega)\int_{\T^2}\mathbb K(q)\,dq+(\EE\omega)^2\,\mathbb K(0).
\end{equation} For a unitary involution $J$, let $M^{\rm e}=\frac12(M+JMJ)$. Theorem~\ref{M-thm:phasepurek} of the main paper assumes one of the following conditions:
\begin{enumerate}[label=\textup{(\Alph*)},leftmargin=*,nosep]
\item $P(\xi)=P^{\rm o}(\xi)+\Gamma$ with $JP^{\rm o}(\xi)=-P^{\rm o}(\xi)J$ and $J\Gamma=\Gamma J$, and there are signs $s_\pm$ with $s_+J\Gamma\ge g$ and $s_-J(\Gamma+M^{\rm e})\ge g$ for every $M\in\cK$;
\item there are $\bar M\in\cK$ and $\bar r\ge0$ with $|M-\bar M|\le\bar r$ for all $M\in\cK$, $\dist(0,\spec P)\ge g$ and $\dist(0,\spec(P+\bar M))\ge g+\bar r$.
\end{enumerate}
It concludes that for $\eps$ small and every configuration with $\kappa_-\le|\omega_j|\le\kappa_+$, the bulk operator $P+\eps^{-k/2}W_\omega(x/\eps)$ has no spectrum in $(-g/2,g/2)$, and the interface operator $H^\eps_\omega=P+\eps^{-k/2}\rho W_\omega(x/\eps)$ satisfies $\Ind T(H^\eps_\omega)=\BDI(P+\bar M,P)$, with $\bar M$ any element of $\cK$ in case (A).

\paragraph{Critical scalar disorder.} Theorem~\ref{M-thm:mainA} of the main paper concerns $d=k=2$ and $P=A\,q(D)+P_1(D)+P_0$, with $A\in\Herm(N)$ invertible, $q(\xi)=\xi\cdot B\xi$ for a real positive definite $B$, $P_1(\xi)=\sum_jc_j\xi_j$ with $c_j\in\Herm(N)$, and $P_0\in\Herm(N)$. The disorder is a scalar alloy $W=\sum_j\omega_jw(\cdot-j)$, $w\in C_c^\infty(\R^2;\R)$, with i.i.d.\ centered couplings, and
\begin{equation}\label{eq:Scrit}
H^\eps=P+\rho\,\lambda_\eps\eps^{-1}W(x/\eps),\quad\lambda_\eps^2=\frac\ell{\log(1/\eps)},\quad s_0=\EE\omega^2\Bigl(\int w\Bigr)^2>0,\quad M_{\rm crit}=-\ell s_0c_qA^{-1},
\end{equation}
with $c_q=(2\pi)^{-2}\int_{\Sm^1}q(e)^{-1}de$. If $P$ and $P+M_{\rm crit}$ are gapped on $(-g,g)$, then for $\eps$ small, almost surely (a.s.), the Toeplitz operators along admissible deformed cuts are Fredholm with index $\BDI(P+M_{\rm crit},P)$.

\section{Tools and outlines of the proofs}\label{sec:Stools}

This section collects the tools used in the main paper, with short proofs or sketches, and concise outlines of the proofs of the main results. It follows the logic of the paper: periodic homogenization (section~\ref{sec:Stools-hom}), the spectral tools and the interface index (section~\ref{sec:Stools-fred}), random homogenization (section~\ref{sec:Stools-random}), the analysis at criticality (section~\ref{sec:Stools-crit}), the probabilistic estimates (section~\ref{sec:Stools-prob}), and the deformed cuts that combine them in Theorems~\ref{M-thm:mainB} and~\ref{M-thm:mainA} (section~\ref{sec:Stools-perc}). Standard references are \cite{S:ReedSimon1,S:Kato95} for spectral theory, \cite{S:BLP78} for homogenization, \cite{S:Kirsch08} for random operators, \cite{S:BCD11} for Littlewood--Paley theory, and \cite{S:Grimmett99} for percolation.

\subsection{Elliptic systems and periodic homogenization}\label{sec:Stools-hom}\label{sec:Stools-ell}

The Fourier transform conjugates $P(D)$ to multiplication by the Hermitian matrix $P(\xi)$, and ellipticity identifies the maximal domain of this multiplication operator with $\mathcal FH^k$, since $|P(\xi)v|\ge\frac{c_P}2|\xi|^k|v|$ for $|\xi|$ large. Hence $P(D)$ is self-adjoint on $H^k$ \cite[\S VIII.3]{S:ReedSimon1}, and $\spec P(D)$ is the closure of the union over $\xi$ of the eigenvalues of $P(\xi)$. The same lower bound gives $\|u\|_{H^k}\le C(\|P(D)u\|+\|u\|)$, and interpolation absorbs bounded potentials and differential perturbations of order $<k$, so that, for such perturbations $H$ of $P(D)$,
\begin{equation}\label{eq:Sellres}
\|(H-z)^{-1}\|_{L^2\to H^k}\le C(1+|z|)(1+|\Im z|^{-1}).
\end{equation}
The same interpolation shows, by the Kato--Rellich theorem \cite[Theorem~X.12]{S:ReedSimon2}, that the perturbed operators are self-adjoint on $H^k$. Since $P_k(e)$ is invertible for $e$ on the connected sphere $S^{d-1}$, $d\ge2$, its signature is constant. For odd $k$, $P_k(-e)=-P_k(e)$ forces it to be balanced, so that $P_k$ is indefinite. For instance, the regularized Dirac operator $D\cdot\sigma+(m+\beta\Delta)\sigma_3$ has the indefinite principal symbol $-\beta|\xi|^2\sigma_3$, and the BHZ operator has principal symbol $|\xi|^2(b\sigma_3+c)$, indefinite if and only if $|c|<|b|$.

\paragraph{The scaling.} In the fast variable $y=x/\eps$, $P_k(D_x)=\eps^{-k}P_k(D_y)$. The first-order effect of a mean-zero potential of size $\eps^{-a}$ averages out, and the second-order effect is $\eps^{-a}(\eps^{-k}P_k(D_y))^{-1}\eps^{-a}=\eps^{k-2a}WP_k(D_y)^{-1}W$, of order one exactly for $a=k/2$. Potentials of this size are classical in homogenization \cite[Ch.~1, \S12]{S:BLP78}, \cite{S:ACPSV04}. A derivative falling on the fast variable costs $\eps^{-1}$, which explains the amplitudes $\eps^{|\alpha|-k/2}$ of the terms of order $|\alpha|$ in \eqref{M-eq:Heps}.

\paragraph{The two-scale computation for $k=1$.} For $H^\eps=L(D)+\Gamma+\eps^{-1/2}\rho W(x/\eps)$ with $L(\xi)=\gamma_1\xi_1+\gamma_2\xi_2$, the ansatz $\psi_{\rm app}=\psi^0+\eps^{1/2}\rho\chi^\eps\psi^0+\eps\rho^2\chi_2^\eps\psi^0$ and the identity $L(D)[b^\eps u]=\eps^{-1}(L(D_y)b)^\eps u+b^\eps L(D)u$ give, at order $\eps^{-1/2}$, the first cell problem $L(D_y)\chi=-W$, solvable since $\avg W=0$. At order $1$, they give $(L(D)+\Gamma-z)\psi^0+\rho^2(W\chi+L(D_y)\chi_2)^\eps\psi^0$, which equals $(L(D)+\Gamma+\rho^2M-z)\psi^0=f$ once $M=\avg{W\chi}$ and $L(D_y)\chi_2=-(W\chi-M)$. The remainder is of order $\eps^{1/2}$. This is the classical two-scale expansion \cite{S:BLP78,S:JKO94}. In Fourier series, $M=-\avg{WL(D_y)^{-1}W}$, which is \eqref{eq:SM}.

\paragraph{From approximate solutions to resolvent estimates.} If $\psi_{\rm app}\in H^k=\mathrm{dom}(H^\eps)$ satisfies $\|(H^\eps-z)\psi_{\rm app}-f\|\le r$ and $\|\psi_{\rm app}-\psi^0\|\le r'$, with $\psi^0=(H^0-z)^{-1}f$, then $\|(H^\eps-z)^{-1}f-\psi^0\|\le|\Im z|^{-1}r+r'$, since $H^\eps$ is self-adjoint. A norm-resolvent estimate requires this for every $f\in L^2$, with $r$ and $r'$ proportional to $\|f\|$. For such $f$, the only control on $\psi^0$ is the elliptic estimate \eqref{eq:Sellres}, $\|\psi^0\|_{H^k}\le C(1+|z|)(1+|\Im z|^{-1})\|f\|$. Thus, the function $\psi^0$ has $k$ derivatives in $L^2$, and in general no more. Residual bounds of the form $r,r'\le C\eps^{r_k}(1+|z|)\|\psi^0\|_{H^k}$ therefore give the norm-resolvent bound \eqref{eq:Shom}, in the spirit of the operator estimates of \cite{S:BirmanSuslina04,S:ZhikovPastukhova05}, while bounds involving higher norms of $\psi^0$ would not.

\paragraph{Outline of the proof of Theorem~\ref{M-thm:hom}.} For $k=1$, we have just given the proof. Its remainder involves $\chi$, $\chi_2$ and the first derivatives of $\rho\psi^0$ only, so that $\psi^0\in H^1$ suffices, and the rate is $\eps^{1/2}$. For $k\ge2$, the correctors are monomials $\eps^ab^\eps Q\psi^0$, products of a periodic coefficient $b$ and of a differential operator $Q$ of order $q$ acting on $\psi^0$. The hierarchy described below produces monomials with $q>k$, which $\|\psi^0\|_{H^k}$ does not control. The correctors are therefore built on the smoothed function $v=S_\eps\psi^0$, where $S_\eps$ is the Fourier multiplier by $\one_{\{|\xi|\le1/\eps\}}$, as in \cite{S:ZhikovPastukhova05}. The two bounds \eqref{M-eq:smooth},
\[
\|(I-S_\eps)u\|_{H^s}\le\eps^{k-s}\|u\|_{H^k}\quad(0\le s\le k),\qquad\|S_\eps u\|_{H^t}\le C_t\,\eps^{-(t-k)_+}\|u\|_{H^k},
\]
express the trade-off. Each derivative of $v$ beyond the $k$ derivatives of $\psi^0$ costs a factor $\eps^{-1}$. This is also the cost of a derivative falling on a coefficient that oscillates at scale $\eps$, so that the cutoff at the oscillation frequency $1/\eps$ does not alter the counting of powers of $\eps$. The counting is recorded by the weight $a-q$ of a monomial, and \eqref{M-eq:mono} bounds the monomial by $\eps^{\min(a,\,a-q+k)}\|\psi^0\|_{H^k}$. Replacing $\psi^0$ by $v$ costs only two terms, $\mathcal V^\eps(I-S_\eps)\psi^0$ and $\rho^2M(I-S_\eps)\psi^0$, of norms $O(\eps^{k/2})$ and $O(\eps^k)$ relative to $\|\psi^0\|_{H^k}$, by the first bound.

The smoothing acts on the solution only. The homogenized operator $H^0=P+\rho^2M$, which generates the limiting dynamics, is not modified. The correctors are algebraic: they enter $\psi_{\rm app}$ but no evolution equation, so that no truncated operator is ever inverted or exponentiated. A sharp cutoff suffices, since only Sobolev norms of $v$ are used, and no commutator or composition with $S_\eps$ is needed. The same device is used for alloys in Theorem~\ref{M-thm:unifhom}.

The expansion is organized as a hierarchy. Each term produced by $H^\eps-z$ acting on a corrector is classified by its order, the power of $\eps$ it carries, and by its weight. The terms of order at most $\frac12$, which all have mean-zero periodic coefficients except those producing $M$, are cancelled by new correctors obtained by solving cell problems. The orders of the terms to be corrected increase by at least $1$ from one generation to the next, from $1-k/2$ to at most $\frac12$, so that the hierarchy terminates after at most $\lceil k/2\rceil$ generations beyond the first corrector, a bound attained in general: two generations of correctors in total for $k=1,2$, three for $k=3,4$, and $1+\lceil k/2\rceil$ for general $k$. The remaining monomials have order at least $1$ and weight at least $-k/2$, hence norm $O(\eps)$ relative to $\|\psi^0\|_{H^k}$ by \eqref{M-eq:mono}, and we obtain the rate $\eps^{r_k}$. For $k=2$, e.g., the first corrector $\eps\chi^\eps\rho v$, with $P_2(D_y)\chi=-W$, produces terms of order $0$. Their mean is $M$, and the rest is cancelled by correctors of order $2$, after which the hierarchy stops. Only the invertibility of $P_k(2\pi q)$ for $q\ne0$, the estimate \eqref{eq:Sellres} and the bound $\|(H^\eps-z)^{-1}\|\le|\Im z|^{-1}$ are used. None requires a sign, which is why indefinite principal symbols are allowed.

\paragraph{Which fluctuations invert the mass.} The topological change requires the correction $M$ to invert the mass of $P$, and its structure is constrained by the symmetries of $P$ and of the fluctuation. Without such constraints, every correction is attainable (Theorem~\ref{M-thm:univ}): by \eqref{eq:Spair}, $M$ is a sum of terms $X^*GX\pm XGX^*$ with $G=G_1(\hat q)$ and $X=\hat W_q$ arbitrary, and for invertible indefinite $G$ the null vectors $v^*Gv=0$ span the hyperplane $\{H:\tr(GH)=0\}$ (Lemma~\ref{M-lem:null}), so that rank-one fluctuations in one lattice direction produce this hyperplane, and two directions produce every Hermitian matrix when $k$ is odd. For the Dirac operator, a single direction $\hat q$ produces exactly the matrices orthogonal to $\hat q\cdot\sigma$, so that a cosine potential in $y_1$ can produce a mass but no $\sigma_1$ component. Symmetries of $P$ preserved by the fluctuation are obviously inherited by the correction: If $\mathsf U$ is unitary and $R$ is an orthogonal map preserving $\Z^d$ with $\mathsf U P_k(\xi)\mathsf U^*=P_k(R\xi)$ and $\mathsf U W(y)\mathsf U^*=W(Ry)$, in law if $W$ is random, then $\mathsf U\hat W_q\mathsf U^*=\hat W_{Rq}$ and $\mathsf U G(q)\mathsf U^*=G(Rq)$, so that $\mathsf U M\mathsf U^*=M$ by \eqref{eq:SM}. The same holds for an antiunitary $\Theta$ with $\Theta P_k(\xi)\Theta^{-1}=P_k(-\xi)$ and $\Theta W\Theta^{-1}=W$. In $d=2$, invariance of $P$ and $W$ under time reversal or under a reflection ($\det R=-1$) thus forces the invariants \eqref{eq:Sc} of both bulks to vanish, so that $\BDI(P+M,P)=0$: fluctuations can change the index only if they break every such symmetry of $P$. Gradings are more informative, and complement the graded gap \eqref{eq:Sgrade}: there, the part of the operator anticommuting with $J$ cannot close the gap; here, the parities of the fluctuation fix the parity, and the $J$-sign, of the correction.
\begin{lemma}[Gradings]\label{lem:grading}
Let $J$ be a unitary involution, and write $W=W_{\rm e}+W_{\rm o}$ with $W_{\rm e}$ commuting and $W_{\rm o}$ anticommuting with $J$. By \eqref{eq:SM}, $M=M_{\rm ee}+M_{\rm oo}+M_{\rm eo}$, with
\begin{gather*}
M_{\rm ee}=\sum_{q\ne0}\hat W_{{\rm e},q}^*G(q)\hat W_{{\rm e},q},\qquad M_{\rm oo}=\sum_{q\ne0}\hat W_{{\rm o},q}^*G(q)\hat W_{{\rm o},q},\\
M_{\rm eo}=\sum_{q\ne0}\bigl(\hat W_{{\rm e},q}^*G(q)\hat W_{{\rm o},q}+\hat W_{{\rm o},q}^*G(q)\hat W_{{\rm e},q}\bigr).
\end{gather*}
\textup{(i)} If $JP_k=-P_kJ$, then $M_{\rm ee}$ and $M_{\rm oo}$ anticommute with $J$, and $M_{\rm eo}$ commutes with $J$. The part of $M$ commuting with $J$ thus comes only from the cross term $M_{\rm eo}$.
\textup{(ii)} If $JP_k=P_kJ$ and $sJP_k\ge0$ for some $s\in\{\pm1\}$, then $M_{\rm ee}$ and $M_{\rm oo}$ commute with $J$, with $sJM_{\rm ee}\le0\le sJM_{\rm oo}$, and $M_{\rm eo}$ anticommutes with $J$. The part of $M$ commuting with $J$ is thus $M_{\rm ee}+M_{\rm oo}$, a sum of two terms of opposite $J$-signs.
\end{lemma}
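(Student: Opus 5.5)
The plan is to work mode by mode in \eqref{eq:SM}. Since $J$ is a constant matrix and the splitting $W=W_{\rm e}+W_{\rm o}$ is pointwise, it commutes with Fourier transform: $J\hat W_{{\rm e},q}=\hat W_{{\rm e},q}J$ and $J\hat W_{{\rm o},q}=-\hat W_{{\rm o},q}J$ for every $q$. The same holds for adjoints, because $J$ is a unitary involution, hence Hermitian, so commuting (or anticommuting) with $X$ passes to $X^*$. Expanding $\hat W_q^*G(q)\hat W_q$ bilinearly gives the decomposition $M=M_{\rm ee}+M_{\rm oo}+M_{\rm eo}$ immediately. The whole proof then reduces to tracking parities: a product of matrices each commuting or anticommuting with $J$ commutes with $J$ if the number of anticommuting factors is even, and anticommutes if it is odd.

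For (i), $JP_k=-P_kJ$ implies $JG(q)=-G(q)J$, by inverting $P_k(2\pi q)=-JP_k(2\pi q)J$. In each term of $M_{\rm ee}$ and $M_{\rm oo}$ the number of odd factors is odd: it is $1$ (from $G$) and $3$ respectively. These terms therefore anticommute with $J$. The cross terms have exactly two odd factors, so they commute with $J$. Since the part of $M$ commuting with $J$ is $\frac12(M+JMJ)$, it equals $M_{\rm eo}$.

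For (ii), $G(q)$ commutes with $J$. The terms of $M_{\rm ee}$ and $M_{\rm oo}$ have $0$ and $2$ odd factors, so they commute with $J$, and $M_{\rm eo}$ anticommutes. For the signs, $sJ$ is a unitary involution commuting with $P_k$, and $sJP_k\ge0$ together with invertibility gives $sJP_k>0$. Hence $sJG(q)=-(sJP_k(2\pi q))^{-1}\le0$, where we use that $J$ commutes with $P_k$, so $(JP_k)^{-1}=P_k^{-1}J=JP_k^{-1}$. Then $sJ\,\hat W_{\rm e}^*G\hat W_{\rm e}=\hat W_{\rm e}^*(sJG)\hat W_{\rm e}\le0$, which is a congruence of a negative semidefinite matrix. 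For the odd part, $sJ\,\hat W_{\rm o}^*G\hat W_{\rm o}=-\hat W_{\rm o}^*(sJG)\hat W_{\rm o}\ge0$, since moving $J$ past $\hat W_{\rm o}^*$ produces a sign. Summing over $q$ gives $sJM_{\rm ee}\le0\le sJM_{\rm oo}$. These products are Hermitian because $sJ$ commutes with the Hermitian matrices $M_{\rm ee}$ and $M_{\rm oo}$.

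The only point needing care, which I expect to be the mild obstacle, is this sign bookkeeping when $J$ is moved through the odd factors, and checking that $sJ$ times a commuting Hermitian matrix is Hermitian, so that the semidefinite inequalities make sense. Everything else is a routine parity count.
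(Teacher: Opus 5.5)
Your proposal is correct and follows the paper's own proof: $G(q)$ inherits the (anti)commutation of $P_k$ with $J$, a product $X^*GY$ of factors of definite parity has the product parity, and the signs in (ii) come from $sJG=-(sJP_k)^{-1}\le0$ followed by a congruence, with an extra sign when $J$ is moved past $\hat W_{\rm o}^*$. You additionally spell out details the paper leaves implicit: that the graded splitting passes to Fourier coefficients and adjoints, and that $sJM_{\rm ee}$ and $sJM_{\rm oo}$ are Hermitian.
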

\begin{proof}
$G(q)=-P_k(2\pi q)^{-1}$ anticommutes, respectively commutes, with $J$. A product $X^*GY$ with $X,Y$ of definite parity has parity given by the product of the parities. In case (ii), $sJ$ commutes with $P_k$, so $sJG=-(sJP_k(2\pi q))^{-1}\le0$. Then $sJ\hat W_{\rm e}^*G\hat W_{\rm e}=\hat W_{\rm e}^*(sJG)\hat W_{\rm e}\le0$, while $sJ\hat W_{\rm o}^*G\hat W_{\rm o}=-\hat W_{\rm o}^*(sJG)\hat W_{\rm o}\ge0$.
\end{proof}
\noindent For the Dirac operator ($J=\sigma_3$, case (i)), $W_{\rm e}$ and $W_{\rm o}$ are the $\{\sigma_0,\sigma_3\}$ and $\{\sigma_1,\sigma_2\}$ components of the fluctuation. The terms $M_{\rm ee}$ and $M_{\rm oo}$ contribute only to the vector potential $M_\perp$, while the mass and energy shifts, which commute with $J$, come only from the correlations between the two groups of components in $M_{\rm eo}$, as in \eqref{eq:SEM} and \eqref{eq:SMgen}. For the regularized Dirac operator ($J=\sigma_3$, $s=-\sgn\beta$, case (ii)), the roles are exchanged. The electric and mass components, through $M_{\rm ee}$, push the mass toward $\sgn\beta$, i.e., out of the trivial phase $m\beta<0$. The off-diagonal components, through $M_{\rm oo}$, push it the other way, as does the diamagnetic term $\avg{\sum a_{ij}A_iA_j}$ of \eqref{eq:SMpeierls}, and the correlations between the two groups of components, in $M_{\rm eo}$, produce only the vector potential $M_\perp$. Compare \eqref{eq:SregDirac}.

\subsection{Spectral tools and interface indices}\label{sec:Stools-fred}

\paragraph{Spectral comparison and quasimodes.} For self-adjoint $H$, $\|(H-z)^{-1}\|=1/\dist(z,\spec H)$. Lemma~\ref{M-lem:speccomp} turns this into a comparison principle: if two resolvents are close at $z=E+i\eta$ and $E$ lies in the spectrum of one operator, then the other has spectrum near $E$. It is used both to transfer gaps and spectrum. If $\|(H-\lambda)\psi\|\le\delta\|\psi\|$ for some $\psi\ne0$, then $\dist(\lambda,\spec H)\le\delta$. Conversely, such quasimodes exist near every point of the spectrum and may be taken compactly supported, by the spectral theorem and since $C_c^\infty$ is a core \cite[\S VIII.3]{S:ReedSimon1}. Lemma~\ref{M-lem:speccomp} is a quantitative form of the fact that the spectrum cannot suddenly expand under norm-resolvent convergence: if $H_n\to H$ in norm-resolvent sense and $(a',b')\cap\spec H=\emptyset$, then $[a'',b'']\cap\spec H_n=\emptyset$ for $a'<a''<b''<b'$ and $n$ large \cite[Thm.~VIII.23]{S:ReedSimon1}. This qualitative form is used, with a compactness argument, to obtain the uniform gap in the proof of Theorem~\ref{M-thm:gapclose}.

\paragraph{Helffer--Sj\"ostrand formula \cite{S:HelfferSjostrand89}.} For $f\in C_c^\infty(\R)$ and an almost analytic extension $\tilde f$, with $|\dbar\tilde f(z)|\le C_n|\Im z|^n$, we have $f(H)=\frac1\pi\int\dbar\tilde f(z)(H-z)^{-1}d^2z$. Resolvent bounds with powers of $|\Im z|^{-1}$ therefore become bounds on $f(H)$. In particular (Lemma~\ref{M-lem:HSdecay}), exponential off-diagonal decay of the resolvent at a rate proportional to $|\Im z|$ becomes polynomial off-diagonal decay of $f(H)$. The formula is applied only to the compactly supported functions $S=e^{2\pi i\varphi}-1$ and $\varphi'$, and this is essential in two ways. First, norm-resolvent convergence $H_n\to H$ implies $\|f(H_n)-f(H)\|\to0$ for $f\in C_0(\R)$ \cite[Thm.~VIII.20]{S:ReedSimon1}, and more generally for continuous $f$ with equal limits at $\pm\infty$, but not for a switch $\varphi$, whose limits differ. For instance, if $H$ and $H_n$ are multiplication by $\xi$ on $L^2(\R)$, except that $H_n$ is multiplication by $-\xi$ for $|\xi|\ge n$, then the resolvents converge in norm while $\|\varphi(H_n)-\varphi(H)\|=1$. This is why the Toeplitz operator is built from $U(H)=I+S(H)$ rather than from $\varphi(H)$. Second, the resolvent estimates of the main paper, such as \eqref{eq:Shom} and the Combes--Thomas bounds, degrade as $|\Re z|$ grows, and they are only needed on $\supp\dbar\tilde S$, which is bounded.

\paragraph{Combes--Thomas estimates.} We obtain off-diagonal decay of resolvents by conjugation with $e^{b\phi}$ \cite{S:CombesThomas73,S:GerminetKlein03}, $\phi$ a smoothed distance function. For first-order operators $\gamma(D)+V$, the conjugation adds the bounded term $ib\gamma(\nabla\phi)$, and a Neumann series gives decay at rate $a|\Im z|$ for every bounded $V$ (Lemma~\ref{M-lem:CT}(i)). For second-order operators, the conjugation adds a first-order term, and the argument requires an $H^1$ bound on the resolvent, on which the rate then depends (Lemma~\ref{M-lem:CT}(ii)).

\paragraph{Fredholm operators and the Toeplitz operator.} A bounded operator is Fredholm if and only if it is invertible modulo compact operators (Atkinson's theorem \cite{S:Douglas98}). Fredholm operators form an open set on which the index is locally constant and invariant under compact perturbations \cite[Ch.~IV, \S5]{S:Kato95}. Quantitatively, if $T$ is Fredholm and $\|T'-T\|<\|\mathfrak q(T)^{-1}\|^{-1}$, the norm being that of the Calkin algebra, then $\mathfrak q(T'_s)=\mathfrak q(T)\bigl(1+s\,\mathfrak q(T)^{-1}(\mathfrak q(T')-\mathfrak q(T))\bigr)$, with $T'_s=T+s(T'-T)$, is invertible for $s\in[0,1]$ by the Neumann series. Each $T'_s$ is then Fredholm by Atkinson's theorem, and $\Ind T'=\Ind T$ since the index is constant along norm-continuous paths of Fredholm operators. For the Toeplitz operator $T=\Pi U\Pi+I-\Pi$ with $U=U(H)$ unitary, $\mathfrak q(T)$ is unitary because $TT^*-I$ and $T^*T-I$ are compact. Indeed, $TT^*-I=-\Pi U(I-\Pi)U^*\Pi$ and $T^*T-I=-\Pi U^*(I-\Pi)U\Pi$, while $\Pi U(I-\Pi)=[\Pi,U](I-\Pi)$ and $\Pi U^*(I-\Pi)=[\Pi,U^*](I-\Pi)$ are compact, since $[U(H),\Pi]=[S(H),\Pi]$ is compact by Lemma~\ref{M-lem:fred}(i). Then $\|\mathfrak q(T)^{-1}\|=\|\mathfrak q(T)^*\|=1$, and the radius is $1$, which is the statement of Lemma~\ref{M-lem:fred}(i). It is sharp. For $T=I$, the operator $T'=0$ lies at distance $1$. The same holds with $T'$ replaced by $T'+K$, $K$ compact. In the proofs, only small perturbations are needed: the differences $\|U(H^\eps)-U(H^0)\|$ tend to $0$, and the robustness argument of Theorem~\ref{M-thm:robust} uses explicit bounds $\|\mathfrak q(\cdot)^{-1}\|\le2$ and $4$ instead. The unilateral shift, the compression of the bilateral shift to a half-line, has index $-1$ \cite{S:Douglas98}. The index of $T(H)$ in \eqref{eq:ST} is of this type and counts the net number of states transported across the cut \cite{S:Bal2019,S:ProdanSchulzBaldes16}. If $H$ has no spectrum on $\supp(e^{2\pi i\varphi}-1)$, then $T(H)=I$. A non-zero index requires spectrum in the gap, carried by the interface. Lemma~\ref{M-lem:fred} shows that $T(H)$ is Fredholm with $\mathfrak q(T(H))$ unitary when $[U(H),\Pi]$ is compact, which follows from the decay of $U(H)-I$ along the cut and away from the interface, where $H$ is close to a gapped bulk operator.

\paragraph{The bulk-difference invariant.} For bulks with constant coefficients, the index depends only on the two bulks and equals the invariant \eqref{eq:SBDI} \cite{S:Bal2022,S:QuinnBal}. Lemma~\ref{M-lem:BDIc} recalls the proof: an integral formula for the interface current \cite{S:QuinnBal}, in which the spectral variable is integrated out by residues \cite{S:Bal2022}. For two-band symbols, \eqref{eq:Stwoband} identifies $c(h)$ with the pull-back of the area form of the sphere by $n=d/|d|$, and \eqref{eq:Sdegree} counts preimages of a pole. For the Dirac domain wall $D\cdot\sigma+m(x_2)\sigma_3$ with $m=\pm m_0$ for $\pm x_2>1$, the index is $c(-m_0\sigma_3)-c(m_0\sigma_3)=-1$, and the interface carries one chiral mode \cite{S:JackiwRebbi76}. Note that $c(h)$ is invariant under translations of $\xi$, but the corresponding gauge transformation, applied to an interface operator, shifts both bulks.

\paragraph{Outline of the proof of Theorem~\ref{M-thm:gapclose}.} We obtain Theorem~\ref{M-thm:TAI} directly from the Helffer--Sj\"ostrand formula, \eqref{eq:Shom} and Proposition~\ref{M-thm:QB}; for definite $P_k$, the spectrum of every constant-coefficient bulk $P+\Gamma$ is a half-line, so that a gap at $0$ forces both bulk invariants to vanish. Theorem~\ref{M-thm:magno} is the first application of the graded gap \eqref{eq:Sgrade}. For Theorem~\ref{M-thm:gapclose}, the family $\eps\mapsto H^\eps$ is not norm-resolvent continuous, but it becomes so after the dilation $x=\eps y$. Along a continuous family with gapped bulks the index cannot change. For $\eps$ large, $H^\eps$ is close to $P$, whose Toeplitz operator is the identity, so that a non-zero index for small $\eps$ forces a gap closing at an intermediate $\eps$ (the case of regularized Dirac operators with periodic fluctuations is \cite{S:BalDang24}).

\subsection{Random homogenization}\label{sec:Stools-random}

\paragraph{Random operators.} For a stationary ergodic field, the spectrum of $P+W(\cdot,\omega)$ is a.s.\ a deterministic set $\Sigma$ \cite{S:Pastur80,S:Kirsch08}, since translations act by unitary conjugation and events such as $\{\spec H_\omega\cap(a,b)\ne\emptyset\}$ are invariant. For alloys, $\Sigma$ is the closure of the union of the spectra over all supported configurations \cite{S:KirschMartinelli82}: a compactly supported quasimode of a supported configuration is, with positive probability, a quasimode of $H_\omega$ after translation, and by independence this happens a.s.\ somewhere. Similarly, if the law charges a periodic pattern, regions where the field resembles the pattern exist a.s., by the second Borel--Cantelli lemma, however small their probability.

\paragraph{The random Born term and criticality.} For a stationary field with spectral density $s$, the correction becomes $M=\int G(q)\diamond s(q)\,dq$, the Born term of \cite{S:FOP82}. Since $|G(q)|\sim|q|^{-k}$, the integral converges at $q=0$ if and only if $\int_{|q|<1}|q|^{-k}|s(q)|dq<\infty$. When $s(0)\ne0$, this holds if and only if $d>k$. For $d=k$, each dyadic shell of frequencies between $\eps$ and $1$ contributes equally, and the integral diverges logarithmically. The moment condition \eqref{eq:Smoments} makes $\hat w(q)=O(|q|^{2k})$ and removes the divergence, while scalar second-order disorder in $d=2$ with $\int w\ne0$ is critical.

\paragraph{Lattice sums and mesoscale averages.} A Fourier multiplier that vanishes to order $s>0$ at the origin and decays rapidly has a kernel decaying like $\avg y^{-2-s}$ (Lemma~\ref{M-lem:latsum}, see \cite[Ch.~VI]{S:Stein93}). Under \eqref{eq:Smoments}, the corrector profile decays fast enough for the corrector of an alloy to be an absolutely convergent lattice sum, for every configuration. Since non-periodic configurations have no global mean, the effective matrix is replaced by a local average over a disk of radius $\varpi^{-1}$ in the fast variable. The radius depends on $\eps$ as follows. In Theorem~\ref{M-thm:unifhom}, $\varpi=\eps^{1/4}$ for $k=1$ and $\varpi=\eps^{1/3}$ for $k\ge2$, together with a low-frequency cutoff $\varpi_0=\eps^{1/2}$ in the quadratic corrector: these choices balance the windowing error $\varpi+(\varpi_0/\varpi)^2$ against the size of that corrector, and give the rate $\delta_k(\eps)$. Theorem~\ref{M-thm:mainB}, with $k=1$, uses $\varpi=\eps^{1/4}$, so that its mesoboxes have side $\eps\varpi^{-1}=\eps^{3/4}$ in the macroscopic variable (Figure~\ref{fig:scales}(a)). At criticality, no mesoscale average is used, and bad regions are defined on unit boxes by the local size $\delta_Q$. Lemma~\ref{M-lem:meso} shows that this local effective matrix is, up to $O(\varpi)$, the Born matrix computed with the local structure factor of the couplings, hence lies near the set $\cK$ of \eqref{eq:SK}. Figure~\ref{fig:alloy} shows this field for one realization, at fixed values of $\varpi$.

\begin{figure}[ht]
\centering\includegraphics[width=\textwidth]{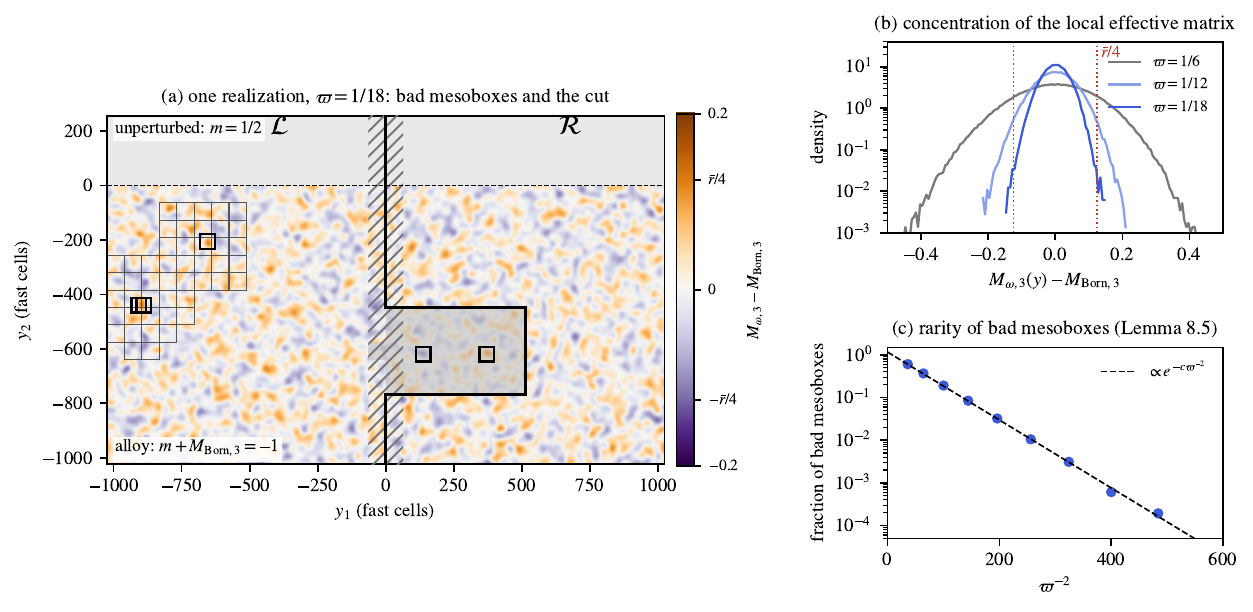}
\caption{Local effective matrix of a first-order alloy, for the Dirac operator $P=D\cdot\sigma+m\sigma_3$, $m=\frac12$, the single site of Proposition~\ref{prop:uniform}, couplings i.i.d.\ and uniform on $[-1,1]$, and $\tau$ chosen so that $m+M_{{\rm Born},3}=-1$, with $g=\bar r=\frac12$. For this single site, Lemma~\ref{M-lem:meso} gives $M_\omega=2\tau\|v\|^2\mu\,\sigma_3+O(\varpi)$ with $\mu=\sum_jK_\varpi(\cdot-j)\omega_j^2$. A direct computation from \eqref{M-eq:Mloc} shows that the error is about $5\cdot10^{-3}$ for $\varpi=\frac14$ and below $10^{-3}$ for $\varpi\le\frac1{12}$, far below $\bar r/4$. (a) A realization with few bad mesoboxes, chosen for legibility (compare panel (c)): deviation $M_{\omega,3}-M_{{\rm Born},3}$ in the lower half-plane, where $\rho$ is taken equal to $1$. Lengths are in fast cells, and the scales are compressed. Bad mesoboxes, where the deviation exceeds $\bar r/4$, are framed (enlarged threefold). Coarse boxes of side $\dd=64$ within distance $2\dd$ of them are outlined. The cluster meeting the strip $|y_1|\le\dd$ (hatched) is removed with its layer (grey), as in Lemma~\ref{M-lem:cuts}, and the thick line is the boundary of $\cR$. (b) Distribution of the deviation at three mesoscales. Its standard deviation is $0.64\,\varpi$ (Lemma~\ref{M-lem:mean}(ii)). (c) Fraction of bad mesoboxes over twelve realizations of $2048^2$ cells, against $\varpi^{-2}$, with the exponential law of Lemma~\ref{M-lem:mesobad}.}\label{fig:alloy}
\end{figure}

\paragraph{Outline of the proof of Theorem~\ref{M-thm:unifhom}.} We repeat the proof of Theorem~\ref{M-thm:hom} with cell problems solved by lattice sums and the mean replaced by the local effective matrix. The corrector hierarchy is linear in the couplings, except for one quadratic corrector that compensates the deviation of the local product $W\chi$ from its mesoscale average. This corrector involves the inverse principal symbol at low frequencies, which is cut off at a frequency $\varpi_0$. The rate $\delta_k(\eps)$ results from balancing the error of the mesoscale average, of order $\varpi+(\varpi_0/\varpi)^2$, against the size of the quadratic corrector, which grows as $\varpi$ and $\varpi_0$ decrease. All constants depend only on a bound on the couplings, so that the estimate is uniform over configurations.

\paragraph{Outline of the proof of Theorem~\ref{M-thm:pseudogap}.} Under condition (A), Theorem~\ref{M-thm:phasepurek} is the second application of the graded gap \eqref{eq:Sgrade}; under (B), the gap follows by norm perturbation. The gap is transferred to the random operator by Theorem~\ref{M-thm:unifhom}, and the index by a deformation of the local effective matrix to a constant. In Theorem~\ref{M-thm:pseudogap}, (i) a periodic operator with a gapless pattern has spectrum near $0$ by Theorem~\ref{M-thm:hom}, and a compactly supported quasimode can be transplanted to a rare region where the random field is close to the pattern; (ii) along a path of charged patterns joining two phases, the gap must close somewhere, since the index cannot change along a gapped path; (iii) in regions where all couplings are nearly equal and small, the operator is nearly the unperturbed one shifted by a constant, because a periodic mean-zero potential of bounded amplitude oscillating at scale $\eps$ is negligible in resolvent sense. Such shifts reach every energy, since the spectrum of $P$ contains two half-lines.

\subsection{Analysis at criticality}\label{sec:Stools-crit}

\paragraph{Spaces of negative regularity.} With the Littlewood--Paley decomposition \cite[Ch.~2]{S:BCD11} $f=\sum_{j\ge-1}\Delta_jf$, the H\"older--Zygmund norm $\|f\|_{C^s}=\sup_j2^{js}\|\Delta_jf\|_{L^\infty}$ measures distributions for $s<0$. A function oscillating at frequency $1/\eps$ with amplitude $1$ has $C^{-\kappa}$ norm of order $\eps^\kappa$: negative norms see oscillations as small. This is why the corrector gradient $\nabla Y_0$, of order $\sqrt{\log(1/\eps)}$ in $L^\infty$, can be small in $C^{-\kappa}$. Products of a distribution in $C^{-\kappa}$ with a function of positive regularity are handled by Bony's paraproduct decomposition \cite[Ch.~2]{S:BCD11}. The resonant part requires the sum of the regularities to be positive, and the part in which the rough factor carries the high frequencies forces a small loss of regularity, which is why section~\ref{M-sec:crit} works on the scale $H^{-2\kappa}\to H^{2-2\kappa}$ (Lemma~\ref{M-lem:prod}).

\paragraph{Logarithmic renormalization.} The corrector $Y_0=-q(D)^{-1}\tilde V_{\rm high}$ of the high-frequency part of the critical potential has energy $\EE|\nabla Y_0|_B^2$ proportional to $\lambda_\eps^2\int s(\eps\xi)/q(\xi)\,d\xi$, over frequencies $|\xi|\gtrsim1$. In polar coordinates this is $\lambda_\eps^2s(0)\log(1/\eps)$ times an angular factor, up to $O(\lambda_\eps^2)$, since $s(\eps\xi)\approx s(0)$ for $|\xi|\lesssim1/\eps$. The amplitude $\lambda_\eps^2=\ell/\log(1/\eps)$ exactly compensates the logarithm (Lemma~\ref{M-lem:ceps}). This is the analogue of the Wick renormalization of $|\nabla Y|^2$ in singular SPDEs, and in particular in the construction of the continuum Anderson Hamiltonian \cite{S:AllezChouk15,S:HairerLabbe15,S:Labbe19}. Figure~\ref{fig:renorm} shows the equal contribution of the dyadic shells and the slow, $O(1/\log(1/\eps))$, convergence of $c_\eps$.
\begin{figure}[ht]
\centering\includegraphics[width=\textwidth]{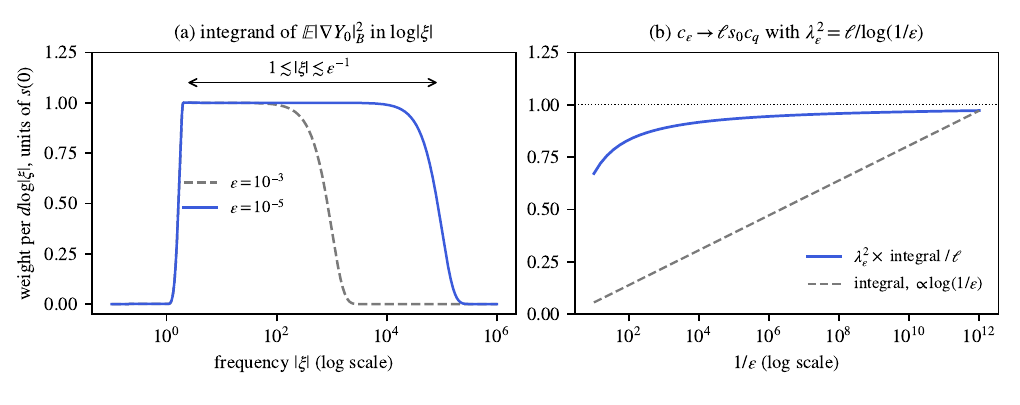}
\caption{Logarithmic renormalization (Lemma~\ref{M-lem:ceps}). (a) Integrand of $\EE|\nabla Y_0|_B^2$ per unit of $\log|\xi|$, in units of $s(0)$, for $\eps=10^{-3}$ and $10^{-5}$: the cutoff $1-\chi_0$ removes $|\xi|\lesssim1$, the spectral density $s(\eps\xi)$ removes $|\xi|\gtrsim\eps^{-1}$, and every dyadic shell in between contributes equally. (b) The integral (dashed, rescaled) grows like $\log(1/\eps)$, while $\lambda_\eps^2$ times the integral, divided by $\ell$ (solid), tends to $1$, that is $c_\eps\to\ell s_0c_q$, with an error $O(1/\log(1/\eps))$, still visible at $\eps=10^{-12}$.}\label{fig:renorm}
\end{figure}
\paragraph{The exponential transform.} At criticality, the corrector expansion of Theorems~\ref{M-thm:hom} and~\ref{M-thm:unifhom} does not close. The natural first corrector $Y_0=-q(D)^{-1}\tilde V_{\rm high}$ has a gradient of order $\sqrt{\log(1/\eps)}$ in $L^\infty$, and its energy $|\nabla Y_0|_B^2$ has a mean $c_\eps$ that collects all frequencies between $1$ and $1/\eps$. A hierarchy of correctors would require $\nabla Y_0$ to be small pointwise. Following \cite{S:HairerLabbe15}, the first corrector is instead placed in an exponent. With $Y=\tilde\rho Y_0$, $\tilde\rho=\theta\rho$, and $X=A^{-1}Y$, whose values commute with each other and with $A$,
\[
e^{-X}Aq(D)e^{X}=Aq(D)-2(B\nabla Y)\cdot\nabla+q(D)Y-A^{-1}|\nabla Y|_B^2 .
\]
This identity has three effects (Lemma~\ref{M-lem:conj}(i)). The term $q(D)Y$ contains $-\tilde\rho\tilde V_{\rm high}$, and cancels the singular part of the potential exactly. The quadratic term contains $-A^{-1}\tilde\rho^2c_\eps=\tilde\rho^2M_\eps$, which, with the background $(1-\theta^2)\rho^2M_\eps$, gives the homogenized operator $P+\rho^2M_\eps$, and $M_\eps\to M_{\rm crit}$ (Lemma~\ref{M-lem:ceps}). What remains is a first-order operator $\cA$ whose coefficients involve $\nabla Y_0$ and the fluctuation $\cW=|\nabla Y_0|_B^2-c_\eps$, which are small in $C^{-\kappa}$ but not in $L^\infty$, and $Y_0$, $\tilde V_{\rm low}$ and $e^{\pm X}-1$, which are small in $L^\infty$. All of them are controlled by the local size $\delta_Q$. The choice of $X$ is forced. Only the term $A\,(q(D)X)$ of the conjugation can absorb a potential, so that the cancellation of $\tilde\rho\tilde V_{\rm high}$ requires $A\,q(D)X=-\tilde\rho\tilde V_{\rm high}$ up to the commutator with $\tilde\rho$, and we obtain $X=A^{-1}\tilde\rho Y_0$, the solution of the cell problem of the principal part with the singular potential as source, i.e., the critical analogue of the first corrector $\chi=-P_k(D_y)^{-1}W$, with $\chi\leftrightarrow A^{-1}Y_0$. Since the potential is scalar, $X$ must be a scalar multiple of $A^{-1}$. Its values then commute with each other and with $A$, which is why [C1] requires the form $A\,q(D)$ and scalar disorder, and for matrix-valued disorder no such $X$ exists (section~\ref{sec:Sdiscussion}). To first order, $e^X\approx1+X$, so that the transform is this corrector resummed into an exponent. The linear ansatz $\psi^0+X\psi^0$ would leave the product $\tilde V_{\rm high}X\psi^0$, too singular to control. The exponential instead collects all quadratic terms into $-A^{-1}|\nabla Y|_B^2$, whose mean is the logarithmically divergent Born term, made finite by $\lambda_\eps^2$, and whose fluctuation $\cW$ is small in $C^{-\kappa}$, consistently with $\tilde V_{\rm high}Y_0=-(q(D)Y_0)Y_0$, equal to $|\nabla Y_0|_B^2$ up to a divergence. In the periodic and subcritical regimes, where the corrector has bounded derivatives, the linear ansatz suffices. Only the high frequencies are removed, since $q(D)^{-1}$ is unbounded at $\xi=0$ while $\tilde V_{\rm low}$ is bounded, and the cutoff $\tilde\rho=\theta\rho$ confines the transform to good boxes, where $Y_0$ is small and $e^{\pm X}$ is close to the identity. It requires no positivity, and the direction $-A^{-1}$ of the shift is set by the principal symbol alone. This is the mechanism behind the topological transition in the BHZ model.

\paragraph{Outline of the proof of Theorem~\ref{M-thm:det}.} By the paraproduct bounds of Lemma~\ref{M-lem:prod}, $\cA$ is small, of order $\delta_\theta$, as a map from $H^{2-2\kappa}$ to $H^{-2\kappa}$ (Lemma~\ref{M-lem:conj}(ii)), while $P+\rho^2M_\eps$ is invertible between these spaces, by the elliptic estimate, duality and interpolation. A Neumann series inverts the conjugated operator, and undoing the conjugation costs only $e^{\pm X}=1+O(\delta_\theta)$ in $L^\infty$. The proof thus never needs $\nabla Y_0$ to be small pointwise, only in negative norms, a smallness that holds with high probability on each unit box (Lemma~\ref{M-lem:prob}). The renormalization constant then replaces $M_\eps$ by $M_{\rm crit}$ up to $O(1/\log(1/\eps))$. Since the local size is large somewhere a.s.\ (Lemma~\ref{lem:globalsize}), the theorem is applied to $H_\theta$, with $\theta$ vanishing near the bad unit boxes, and the robustness theorem handles the rest.

\paragraph{Large local size on $\Rm^2$.} With $\theta\equiv1$, Theorem~\ref{M-thm:det} of the main paper requires $\delta=\sup_Q\delta_Q\le\delta_*$, with $\delta_Q$ the local size \eqref{M-eq:deltaQ}, a condition on all unit boxes. The following lemma shows that it never holds on the whole plane, which is why rare regions must be removed in Theorem~\ref{M-thm:mainA}.
\begin{lemma}\label{lem:globalsize}
Assume [C1]--[C3] of the main paper. For $\eps$ small, a.s., $\delta\ge1$. More precisely, $\delta_Q\ge1$ for infinitely many unit boxes $Q$.
\end{lemma}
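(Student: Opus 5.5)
We will show that, with probability one, infinitely many unit boxes $Q$ satisfy $\|Y_0\|_{L^\infty(Q^*)}\ge1$. This term alone gives $\delta_Q\ge1$.

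The plan is a second Borel--Cantelli argument over a sparse sublattice of boxes. Fix a spacing $L\gg r$, for instance $L=\eps^{-1}$ in macroscopic units. Take the boxes $Q_n$ with centers $c_n=nLe_1$, $n\in\N$. Consider the events $E_n=\{\|Y_0^{\rm tr}\|_{L^\infty(Q_n^*)}\ge2\}$, where $Y_0^{\rm tr}$ is the truncation of Lemma~\ref{M-lem:probboxA}(i), built from the couplings with $|\eps m-c_n|\le r$. By Lemma~\ref{M-lem:probboxA}(i), $Y_0$ and $Y_0^{\rm tr}$ differ by at most $\eps^{10}$ on $Q_n^*$. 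So $E_n$ implies $\delta_{Q_n}\ge1$. For $L>2r$ the events $E_n$ depend on disjoint sets of couplings, so they are independent. By stationarity under $\eps\Z^2$-shifts, up to a harmless offset of the lattice of boxes, they all have a common probability $p_\eps$. If $p_\eps>0$, then $\sum_n\PP(E_n)=\infty$, and Borel--Cantelli gives that infinitely many $E_n$ occur a.s.

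The key step, and the main obstacle, is to show $p_\eps>0$ for $\eps$ small. This needs a lower bound, not the upper tail bounds of the appendix. I would first try a direct construction from supported configurations. Since $\EE\omega^2>0$ and the law is centered, $\mu_0$ charges some $t_0\neq0$: there is $t_0$ with $\PP(|\omega-t_0|<\eta)>0$ for every $\eta>0$.

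Take the configuration $\omega_m=t_0$ for all $m$ with $|\eps m-c|\le r$. Then, on $Q^*$, up to boundary terms that are negligible because of the kernel decay of $K_0$, the sum $\sum_m\omega_mw(\cdot/\eps-m)$ equals $t_0\int w$ plus a mean-zero periodic remainder. Now $\tilde V_{\rm high}$ removes the constant part, because $\chi_0=1$ near $0$, so the constant configuration gives $Y_0=O(\lambda_\eps\eps)$, which is small. A different configuration is therefore needed. I would use one producing a macroscopic bump: take $\omega_m=t_0\,\beta(\eps m)$ with $\beta$ a smooth macroscopic profile. This is only possible if the support of $\mu_0$ contains an interval, which need not hold.

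A more robust route is to use only two support points. Choose $\omega_m\in\{t_0,t_1\}$, with $t_0\neq t_1$ both in the support of $\mu_0$, both values exist since the law is centered and nondegenerate. Let the local density of $t_0$ follow a macroscopic profile $\beta$ at mesoscale, so that $\tilde V\approx\lambda_\eps\eps^{-1}(\int w)(t_1+(t_0-t_1)\beta(x))$ plus rapid fluctuations. Then $Y_0\approx-\lambda_\eps\eps^{-1}(\int w)(t_0-t_1)\,q(D)^{-1}(1-\chi_0(D))\beta$. This is of size $\lambda_\eps\eps^{-1}\gg1$ on $Q^*$ if $\beta$ is a fixed bump with frequencies in $\{1-\chi_0\neq0\}$. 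The rapid fluctuations contribute $O(\lambda_\eps)$ after applying $q(D)^{-1}$. Finally, open neighbourhoods of such a finite configuration (finitely many couplings within distance $r$) have positive probability by independence. Continuity of $Y_0^{\rm tr}$ in the couplings, in $L^\infty(Q^*)$, then gives $p_\eps>0$.

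Checking that the mesoscale density profile actually yields a macroscopic bump of size $\gg1$, with uniformly controlled errors, is the step I expect to require the most care. If it fails, one could instead use the $\|\psi_Q\cW\|_{C^{-\kappa}}$ term of $\delta_Q$ with a similar construction.
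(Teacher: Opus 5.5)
Your proof is correct in outline and follows the paper's probabilistic skeleton: independent events on boxes far apart, positive probability of a neighbourhood of a prescribed finite configuration, and the second Borel--Cantelli lemma. Where you differ is in which term of $\delta_Q$ you make large, and you discarded the easy choice too quickly. The constant part that you rightly see removed from $\tilde V_{\rm high}$ is exactly what lands in $\tilde V_{\rm low}$, and $\|\tilde V_{\rm low}\|_{L^\infty(Q^*)}$ is itself one of the four terms of $\delta_Q$.

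The paper takes a single support point $a$ with $a\int w>0$, using $\int w\ne0$ from [C3]. On the event that all couplings with $|\eps m-c_Q|\le R_1$ lie within $\eta_1$ of $a$, it computes
$\tilde V_{\rm low}(c_Q)=\lambda_\eps\eps^{-1}\bigl(a\int w+O(\eta_1+\kappa_+\int_{|z|\ge R_1/2}|K_{\rm low}|)\bigr)$.
This uses $\int K_{\rm low}=\chi_0(0)=1$ and the fact that $K_{\rm low}*W_{\rm per}(\cdot/\eps)=0$ for $\eps$ small, by Fourier support. The result exceeds $1$ because $\lambda_\eps\eps^{-1}\to\infty$.

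Your two-value construction through $Y_0$ also goes through, and the step you flagged is fine. Take $\beta=\one_B$ with $B$ a disk of radius of order one.
\begin{itemize}
\item $\tau_0(D)$ kills the constant $t_1\int w$.
\item The periodic remainder and the $O(\eps)$ boundary layer along $\partial B$ contribute only $O(\lambda_\eps)$, since $K_0$ has only a logarithmic singularity.
\item $\tau_0(D)\one_B\not\equiv0$ because $\widehat{\one_B}$ is analytic.
\end{itemize}
Hence, once $B$ is placed suitably relative to $Q$, $|Y_0|\gtrsim\lambda_\eps\eps^{-1}|(t_0-t_1)\int w|$ at some point of $Q^*$.

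Each route buys something different. Yours gives a stronger statement: the corrector $Y_0$ itself, and hence $e^{\pm X}-1$, fails to be small on infinitely many boxes. So the obstruction is not an artifact of putting $\tilde V_{\rm low}$ into $\delta_Q$. The paper's route reduces the positivity step to a one-line computation that needs only one support point.

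Two small repairs are needed. First, state explicitly that you use $\int w\neq0$; your lower bound on $Y_0$ requires it. Second, $nLe_1$ need not lie in $\eps\Z^2$, so replace the common probability $p_\eps$ by a bound uniform in $n$. For instance, $\PP(E_n)\ge\pi_\eta^{C(r/\eps)^2}$ with $\pi_\eta=\min_{i=0,1}\PP(|\omega-t_i|<\eta)>0$, which is all that $\sum_n\PP(E_n)=\infty$ requires.
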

\begin{proof}
Recall that $\int w\ne0$, since $s_0>0$. The law is centered and non-degenerate, so its support contains a point $a\ne0$ with $a\int w>0$. Let $\eta_1>0$ and $R_1\ge1$, and consider the event $\mathcal E_Q$ that $|\omega_m-a|\le\eta_1$ for all $m$ with $|\eps m-c_Q|\le R_1$. Since $\sum_mw(\cdot/\eps-m)=\int w+W_{\rm per}(\cdot/\eps)$ with $W_{\rm per}$ periodic and mean zero, and the kernel $K_{\rm low}$ of $\chi_0(D)$ is a Schwartz function with $\int K_{\rm low}=\chi_0(0)=1$,
\[
\tilde V_{\rm low}(c_Q)=\lambda_\eps\eps^{-1}\Bigl(a\int w+(K_{\rm low}*W_{\rm per}(\cdot/\eps))(c_Q)+O\bigl(\eta_1+\kappa_+\textstyle\int_{|z|\ge R_1/2}|K_{\rm low}|\bigr)\Bigr)
\]
on $\mathcal E_Q$. The convolution term vanishes for $\eps$ small, because $\hat K_{\rm low}=\chi_0$ has compact support and $W_{\rm per}(\cdot/\eps)$ has no Fourier modes below $2\pi/\eps$. Choosing $\eta_1$ small and $R_1$ large, $\tilde V_{\rm low}(c_Q)\ge\frac12\lambda_\eps\eps^{-1}a\int w\ge1$ on $\mathcal E_Q$ for $\eps$ small. The probability of $\mathcal E_Q$ is positive and bounded below uniformly in $Q$, and the events $\mathcal E_Q$ for boxes at mutual distance $>2R_1$ are independent. By the second Borel--Cantelli lemma, a.s.\ $\mathcal E_Q$ occurs for infinitely many $Q$, and then $\delta\ge\delta_Q\ge|\tilde V_{\rm low}(c_Q)|\ge1$.
\end{proof}

\subsection{Probabilistic estimates}\label{sec:Stools-prob}

All probabilistic arguments of the paper serve one purpose: to show that the regions where the homogenized description fails are rare and sparse, so that deformed cuts can avoid them.

\paragraph{Concentration.} Linear functionals of independent bounded couplings satisfy Hoeffding's inequality \cite[Theorem~2.8]{S:BLM13}, and quadratic forms the Hanson--Wright inequality \cite[Theorem~1.1]{S:RV13}, with Gaussian tails governed by the Hilbert--Schmidt norm of the form for moderate deviations and exponential tails governed by its operator norm for large ones. For the local effective matrix at mesoscale $\varpi$, the operator norm of the form is $O(\varpi^2)$ and its Hilbert--Schmidt norm $O(\varpi)$, and deviations of size $t$ have probability $\le Ce^{-ct^2\varpi^{-2}}$ (Lemma~\ref{M-lem:mean}(ii)). Suprema over boxes are controlled by union bounds over nets.

\paragraph{Estimates at criticality.} Lemma~\ref{M-lem:cuts} needs bad unit boxes of probability $O(\eps^{p_0})$ with $p_0>0$. Since $\lambda_\eps^2=\ell/\log(1/\eps)$, this amounts to tails $\exp(-ct^2/\lambda_\eps^2)=\eps^{ct^2/\ell}$ for $\{\delta_Q\ge t\}$, uniformly in $\eps$: a deviation of fixed size of a field of amplitude $\lambda_\eps$ has a cost that is Gaussian in $\lambda_\eps$, hence only polynomial in $\eps$, which is why bad regions are most frequent at criticality. With $t=\delta_*/4$, this gives $p_0=\min(c\delta_*^2/\ell,1)$ (Lemma~\ref{M-lem:probboxA}). Such tails, rather than the moment bounds that suffice for Gaussian noise \cite{S:HairerLabbe15}, are proved in Lemma~\ref{M-lem:prob}. Three features make this delicate. First, $\delta_Q$ involves $C^{-\kappa}$ norms, i.e., suprema over all dyadic scales $j$ and all points of the plane, of fields that depend on all couplings. Each block $\Delta_j(\psi_Q\cdot)$ is band-limited, so its supremum is controlled by its values on a lattice of mesh of order $2^{-j}$ (Lemma~\ref{M-lem:sampling}). The localization by $\psi_Q$ makes the variances decay away from $Q$, which makes the union bound over the infinite lattice summable, and the factor $2^{\kappa j}$ in the norm makes the sum over $j$ converge. Second, the renormalized square $\cW$ is quadratic in the couplings, and its blocks are controlled by the Hanson--Wright inequality. Their Hilbert--Schmidt norms grow like $\lambda_\eps^2\sqrt j$, through the logarithmic convolution of $\avg\xi^{-2}$ with itself in $d=2$, and this growth is again absorbed by $2^{\kappa j}$ (Lemma~\ref{M-lem:blocks}). The deterministic part $\EE|\nabla Y_0|_B^2-c_\eps$ is periodic at scale $\eps$ with mean zero, hence of size $\eps^\kappa$ in $C^{-\kappa}$. Third, percolation needs finite range: the fields are truncated to the couplings within distance $r$ of $Q$, at a deterministic cost $\eps^{10}$, and the gradient bound $\|\nabla Y_0\|_{L^\infty}\le C_1\sqrt{\log(1/\eps)}$ is proved by sampling at frequencies below $1/\eps$, the higher frequencies being bounded deterministically (Lemma~\ref{M-lem:probboxA}).

\paragraph{Rare regions.} The probability that a given macroscopic unit region is bad depends on the mechanism generating the rare regions. It is of order $\exp(-c\eps^{-1/2})$ for mesoscale deviations of the local effective matrices from the Born matrix \eqref{eq:SMborn} in the subcritical first-order case, of order $\exp(-c\eps^{-1})$ for macroscopic mean fields when $k=1$ and $\int w\not=0$, of order $\exp(-c\eps^{-2})$ for gapless periodic patterns of macroscopic size, and only of order $\eps^{p_0}$ at criticality. In all cases, bad regions occur a.s.\ on $\Rm^2$ while bad boxes are so rare that their clusters are a.s.\ finite, which is all the deformed-cut construction requires.

\paragraph{Percolation.} If sites are bad with small probability, independently beyond a fixed range, then a given set of $n$ sites contains a fixed fraction of mutually distant sites, and is entirely bad with probability exponentially small in $n$. Since the number of $*$-connected sets of $n$ sites containing the origin is at most $64^n$ \cite{S:Grimmett99}, bad clusters are then finite a.s., with exponentially small probability of being large. Percolation needs bad events that depend on finitely many couplings. The fields that define bad regions depend on all couplings through decaying kernels, and are therefore replaced by truncated versions that differ from them by a negligible deterministic amount.

\subsection{Deformed cuts and the proofs of Theorems~\ref{M-thm:mainB} and~\ref{M-thm:mainA}}\label{sec:Stools-perc}

\paragraph{Why deformed cuts.} In a genuine gap, we compute the index along the straight cut $\{x_1=0\}$, as for periodic fluctuations. In a pseudo-gap, rare regions in which the local effective matrix lies in another phase, such as puddles of the phase $P$ inside the region of phase $P+M$, are bounded by interfaces that carry states at energies in the gap, and a straight cut meets infinitely many of them almost surely. The robustness argument, which needs all bad regions to stay far from the cut, then fails, and the straight-cut index is no longer controlled. The cut is therefore chosen for each configuration, so as to avoid the bad regions at a large distance $\dd$ (Figure~\ref{M-fig:cuts} of the main paper and Figure~\ref{fig:scales}(b)). Bad regions occur a.s.\ but are sparse (section~\ref{sec:Stools-prob}). Two results make this work: the index along an admissible cut does not depend on the cut, nor on bounded, possibly large, perturbations supported away from it (Theorem~\ref{M-thm:robust} and Remark~\ref{rem:awaycut}), and admissible cuts exist almost surely (Lemma~\ref{M-lem:cuts}). Deformed cuts are needed in general, as Corollary~\ref{M-cor:puddles} of the main paper illustrates for the Dirac operator with puddles of the opposite mass.

\paragraph{Scales.} All lengths in Lemma~\ref{M-lem:cuts}, Theorem~\ref{M-thm:robust} and Theorems~\ref{M-thm:mainB}--\ref{M-thm:mainA} of the main paper are measured in the macroscopic variable $x$, in which the couplings sit at the points $\eps m$. The relevant scales are
\begin{center}
\begin{tabular}{lcc}
\toprule
object & size in $x$ & size in fast cells\\
\midrule
fast cell (one coupling $\omega_m$) & $\eps$ & $1$\\
bad mesobox (Theorem~\ref{M-thm:mainB}) & $\eps\varpi^{-1}=\eps^{3/4}$ & $\eps^{-1/4}$\\
unit box (bad box in Theorem~\ref{M-thm:mainA}) & $1$ & $\eps^{-1}$\\
surgery radius $r$, range $2r$ of bad events & $\eps^{-\nu_r}$ & $\eps^{-1-\nu_r}$\\
distance $\dd$ of the cut & $\eps^{-\nu_\dd}$ & $\eps^{-1-\nu_\dd}$\\
typical spacing of bad unit boxes & $\gtrsim\eps^{-p_0/2}$ & $\gtrsim\eps^{-1-p_0/2}$\\
\bottomrule
\end{tabular}
\end{center}
Here $\eps^{p_0}$ bounds the probability that a given unit box meets the bad set, i.e., the density of bad unit boxes per unit area in $x$, so that their typical spacing is $\eps^{-p_0/2}$; $p_0=1$ in Theorem~\ref{M-thm:mainB}, and $p_0=\min(c\delta_*^2/\ell,1)$ in Theorem~\ref{M-thm:mainA} (Lemma~\ref{M-lem:probboxA}). Bad sets have size at most $1$ in $x$, while $r$ and $\dd$ tend to infinity. The surgery and the cut thus act on super-macroscopic scales, around bad regions that are sparse, since $\nu_\dd<p_0/2$. The distance $\dd$ must be large in $x$ because the decay [R1] of the resolvent, with rate of order one, is measured in $x$, and the robustness error $C_nK_V^3(a\dd/2)^{-n}$ of Theorem~\ref{M-thm:robust} must absorb $K_V^3$, with $K_V\sim\eps^{-1/2}$ in Theorem~\ref{M-thm:mainB} and $K_V\sim\eps^{-1}$ in Theorem~\ref{M-thm:mainA}. Here $n\ge1$ is arbitrary, with a constant $C_n$ depending on $n$; outside Theorems~\ref{M-thm:mainB} and~\ref{M-thm:mainA}, $n=1$ suffices. The resolvent of the reference operator decays across the cut like $e^{-a|\Im z|\dd}$, at a rate that vanishes on the real axis, and the Helffer--Sj\"ostrand formula turns this decay into the polynomial decay $(a\dd)^{-n}$ of $S(H)$, for every $n$, because $\dbar\tilde S$ vanishes to every order on the real axis (Lemma~\ref{M-lem:HSdecay}). The factor $K_V^3$ comes from the resolvent expansion in Step~2 of the proof of Theorem~\ref{M-thm:robust}, whose remainder contains the defect three times. Since $\dd=\eps^{-\nu_\dd}$, it suffices to fix $n$ with $n\nu_\dd>\frac32$ in Theorem~\ref{M-thm:mainB}, and $n\nu_\dd>3$ in Theorem~\ref{M-thm:mainA}, where $a=c/\sqrt{\log(1/\eps)}$ only adds a logarithmic factor. Figure~\ref{fig:scales} shows these scales and the construction of the cut.

\begin{figure}[ht]
\centering\includegraphics[width=\textwidth]{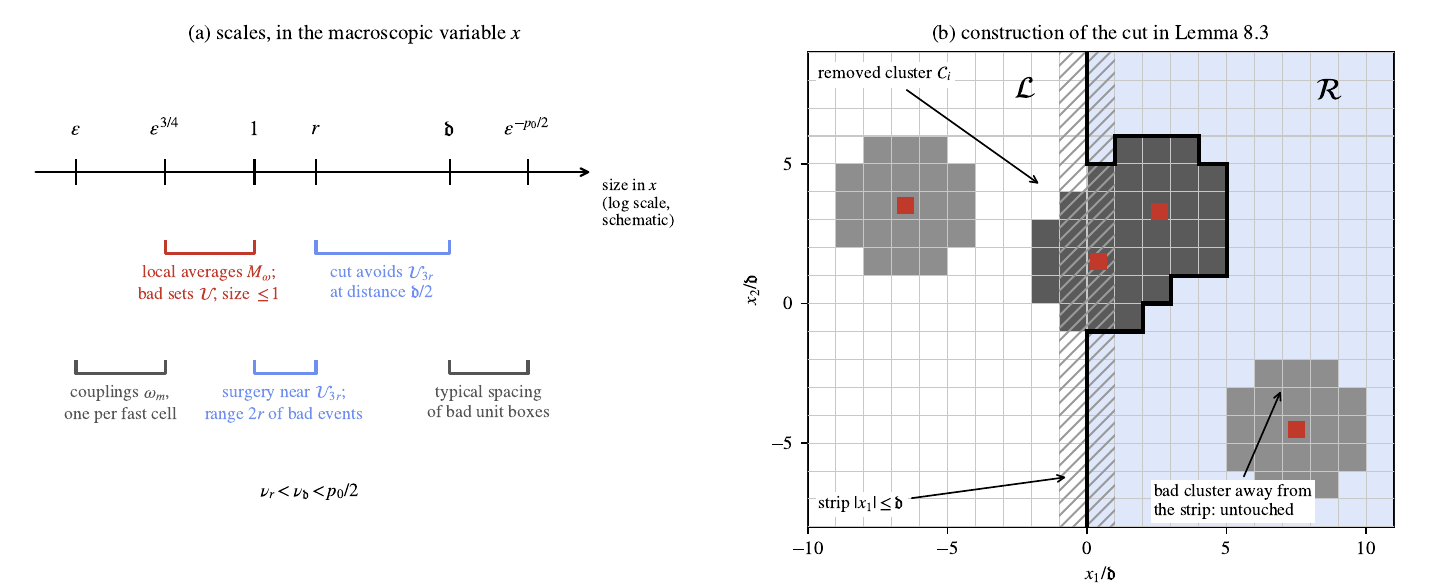}
\caption{(a) Scales in the macroscopic variable $x$ (schematic logarithmic axis). (b) Construction of the cut in Lemma~\ref{M-lem:cuts}, in units of the coarse box side $\dd$. Red: bad unit boxes, enlarged. A coarse box is bad if a unit box within distance $2\dd$ of it meets the bad set. Bad coarse boxes are dark grey if their $*$-cluster meets the strip $|x_1|\le\dd$ (hatched) and medium grey otherwise. The cluster meeting the strip is removed from $\cR$. Each bad unit box lies at distance at least $2\dd$ from the complement of its cluster. The thick line is the boundary of $\cR$ (blue): the straight cut, except for a bounded detour around the removed cluster. Bad clusters away from the strip are left on their side.}\label{fig:scales}
\end{figure}

\paragraph{Outline of the proof of Lemma~\ref{M-lem:cuts}.} The argument is a coarse-graining of a standard type \cite{S:Grimmett99}. We tile the plane by coarse boxes of side $\dd$, much larger than the range of dependence, and a coarse box is called bad if a bad region lies nearby. Bad coarse boxes are rare and finitely dependent, so that their clusters are finite. Starting from the half-plane $\{x_1\ge0\}$, the clusters that meet a vertical strip around the straight cut are removed. A bad region makes every coarse box within distance $2\dd$ of it bad, so that it lies deep inside its cluster. Every bad region is therefore far from the other side of the cut, and the exponential tails of the cluster sizes show that only finitely many clusters are removed in each horizontal strip.

\paragraph{Outline of the proof of Theorem~\ref{M-thm:robust}.} We compare the resolvent of a perturbation $V$ supported on both sides of the cut, away from it, with the resolvents of the perturbations supported on one side only. The parts of $V$ on the two sides interact only through the resolvent of the reference operator across a distance at least $\dd$, which is small by the decay [R1]. Each one-sided perturbation almost commutes with the projection on one side of the cut. As a result, the Toeplitz operator of the perturbed operator factors, up to compact and small operators, as the Toeplitz operator of the reference operator times an operator of index zero. Only the decay of the resolvent of the reference operator is used. That of the perturbed operator may be poorly behaved, since the perturbation is large.

\paragraph{Outline of the proof of Theorem~\ref{M-thm:mainB}.} Mesoscopic boxes in which the local effective matrix deviates from the Born matrix are bad. By concentration, a unit box meets one with probability $O(\eps)$, and admissible cuts exist by Lemma~\ref{M-lem:cuts} (Figure~\ref{fig:alloy}). A reference operator is built by surgery: near the bad regions, the fluctuation is replaced by its homogenized limit $M_{\rm Born}$, without modifying the configuration (Figure~\ref{fig:surgery}). Theorem~\ref{M-thm:unifhom} applies to the reference operator, whose local effective matrices are all close to $M_{\rm Born}$. It is therefore gapped, and its index is $\BDI(P+M_{\rm Born},P)$. The difference between $H^\eps$ and the reference operator is a large potential supported near the bad regions, away from the cut, and Theorem~\ref{M-thm:robust} transfers the index.

\begin{figure}[ht]
\centering\includegraphics[width=0.8\textwidth]{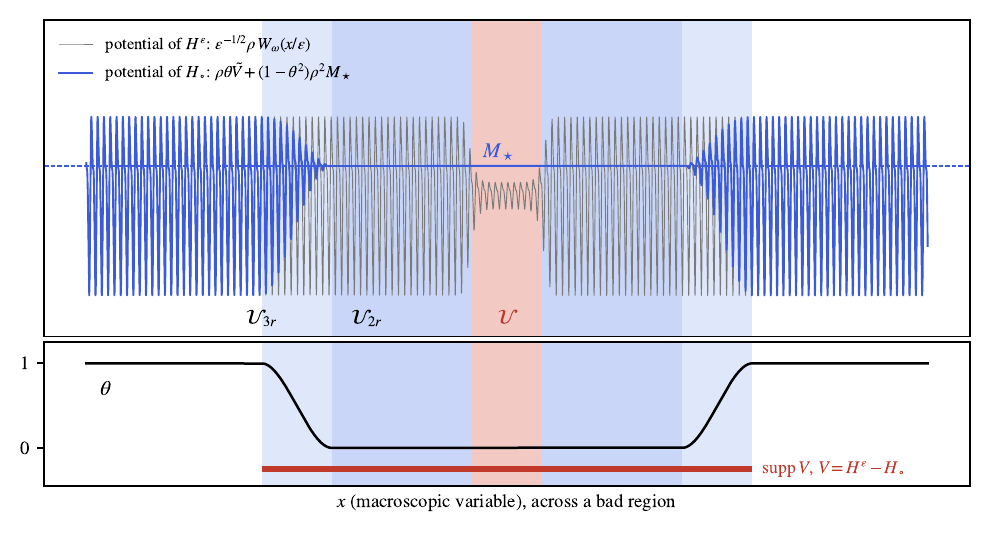}
\caption{Surgery near a bad region, along a cross-section in the macroscopic variable $x$ (schematic). The bad set $\mathcal U$ (red) is here a puddle where the couplings are small. Outside $\mathcal U_{3r}$ the potentials of $H^\eps$ (grey) and of the reference operator $H_\circ$ (blue) coincide. Inside $\mathcal U_{2r}$ the fluctuation is replaced by its homogenized limit $M_\star$ ($M_{\rm Born}$ in Theorem~\ref{M-thm:mainB}, $M_\eps$ in Theorem~\ref{M-thm:mainA}), and the cutoff $\theta$ interpolates in between. The configuration is not modified. The defect $V=H^\eps-H_\circ$ is supported in $\mathcal U_{3r}$. It is large, but it is kept at distance $\dd/2$ from the other side of the cut.}\label{fig:surgery}
\end{figure}

\paragraph{Outline of the proof of Theorem~\ref{M-thm:mainA}.} We follow the structure of Theorem~\ref{M-thm:mainB}, with bad unit boxes, defined by a large local size or a large gradient $\nabla Y_0$, in place of bad mesoscopic boxes, and Theorem~\ref{M-thm:det} in place of Theorem~\ref{M-thm:unifhom}. Theorem~\ref{M-thm:det}, like Theorem~\ref{M-thm:unifhom}, holds up to the real axis with a factor $1+|\Im z|^{-2}$, which the almost-analytic extension absorbs, so that the index of the reference operator follows directly from the Helffer--Sj\"ostrand formula, as in Theorem~\ref{M-thm:TAI}. One point differs: the decay [R1] of the reference operator, a second-order operator, follows from an $H^1$ resolvent bound of order $\sqrt{\log(1/\eps)}$, which degenerates slowly enough to be compensated by the polynomial smallness of the bad regions. The constants are fixed in the following order: the smallness threshold $\delta_*$ of the local size, which determines the probability exponent $p_0$ of bad boxes, then the exponents of the cuts, and finally $\eps_0$.

\section{Examples and numerical illustrations}\label{sec:Sexamples}

This section illustrates the formulas of section~\ref{sec:Ssetting} on first-order Dirac operators, regularized Dirac and BHZ operators with electromagnetic and matrix-valued fluctuations, gap-preserving alloys, and disordered superconductors, and compares the convergence observed numerically with \eqref{eq:Shom}. Unless stated otherwise, the periodic profiles depend on $y_1$ only.

\paragraph{First-order Dirac operators.} Let $P=D\cdot\sigma+m\sigma_3$ and $W=V\sigma_0+A_1\sigma_1+A_2\sigma_2$, with stream function $\psi$ of $A$, so that $M=2\avg{V\psi}\sigma_3$ by \eqref{eq:SEM}. The lower bulk $D\cdot\sigma+(m+2\avg{V\psi})\sigma_3$ is gapped when $m+2\avg{V\psi}\ne0$, and \eqref{eq:SBDI} with the Dirac value of $c$ gives, for $\eps$ small,
\begin{equation}\label{eq:SindDirac}
\Ind T(H^\eps)=\tfrac12\bigl(\sgn(m+2\avg{V\psi})-\sgn m\bigr).
\end{equation}
For $V(y)=a\cos(2\pi y_1)$ and $A(y)=(0,\,b\cos(2\pi y_1+\theta_0))$, the stream function is $\psi=b\sin(2\pi y_1+\theta_0)/(2\pi)$, and
\begin{equation}\label{eq:SMex1}
M=\frac{ab\sin\theta_0}{2\pi}\,\sigma_3 .
\end{equation}
Electric and magnetic fluctuations in phase produce no mass. For $m>0$, the index is $-1$ when $ab\sin\theta_0<-2\pi m$ and $0$ otherwise. Figure~\ref{fig:first}(b) shows this index map for $m=\frac12$. Figure~\ref{fig:first}(a) shows the interface spectrum for $a=\frac32$, $b=2\pi$ and $\theta_0=-\frac\pi2$, for which $M=-\frac32\sigma_3$, $m+M_3=-1$ and the index is $-1$. A single branch crosses the common gap $(-\frac12,\frac12)$, for $H^0$ and for $H^\eps$ with $\eps=\frac1{16}$. The computation uses a box periodic in $x_2$, which contains a second, fictitious interface. The states localized near it are discarded.

\begin{figure}[ht]
\centering\includegraphics[width=\textwidth]{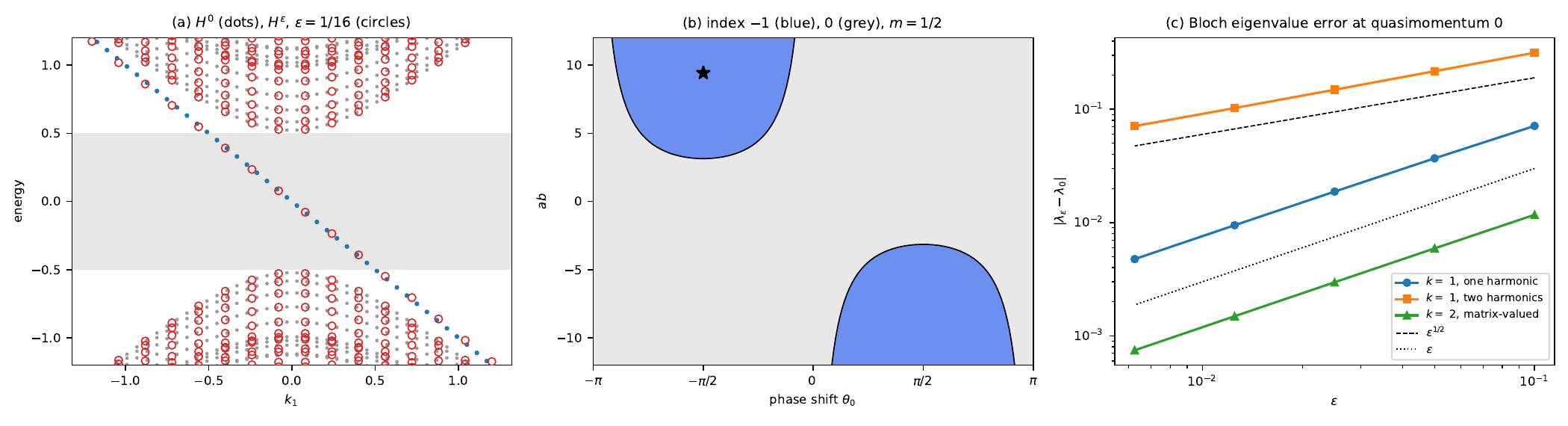}
\caption{First-order Dirac operator. (a) Interface spectrum as a function of the momentum $k_1$ along the interface: $H^0$ (dots, with interface states in blue) and $H^\eps$, $\eps=\frac1{16}$ (circles). The shaded band is the common gap. (b) Interface index \eqref{eq:SindDirac} as a function of the phase shift $\theta_0$ and of $ab$, for $m=\frac12$. The star marks the example of panel (a). (c) Error of the bulk Bloch eigenvalue closest to $0$, at quasimomentum $0$, for the examples of the paragraph \emph{Convergence of Bloch eigenvalues} of this section. This is not the norm-resolvent error of \eqref{eq:Shom}. See Remark~\ref{rem:sharp}.}\label{fig:first}
\end{figure}

\emph{Comparison with the Haldane model.} The mass $2\avg{V\psi}$ is odd under time reversal, which maps $A\mapsto-A$, hence $\psi\mapsto-\psi$, and fixes $V$. Moreover, the magnetic field $\partial_1A_2-\partial_2A_1=\Delta\psi$ has zero mean, so the fluctuation carries no net flux. This is the mechanism of the Haldane model \cite{S:Haldane88}, in which a staggered magnetic flux with zero average opens a time-reversal-breaking gap at a Dirac point. Here the flux pattern acts through its correlation with the electric potential.

\emph{General zeroth-order fluctuations.} Add a mass fluctuation, $W=V\sigma_0+A\cdot\sigma+V_3\sigma_3$ with $A=\nabla^\perp\psi+\nabla\zeta$, and let $-\Delta\phi_3=V_3$. Then
\begin{equation}\label{eq:SMgen}
M=2\avg{V\psi}\,\sigma_3+2\avg{V_3\psi}\,\sigma_0+M_\perp\cdot\sigma,\qquad M_\perp=2\avg{V\,\nabla^\perp\phi_3}.
\end{equation}
Indeed, fix $q$, write $\gamma_q=\hat q\cdot\sigma$, $\gamma_\perp=\hat q^\perp\cdot\sigma$, and $\hat W_q=v+s\gamma_q+t\gamma_\perp+v_3\sigma_3$, with $v=\hat V_q$, $v_3=(\hat V_3)_q$, $t=2\pi i|q|\hat\psi_q$, and $s=2\pi i|q|\hat\zeta_q$ the coefficient of the gauge part $\nabla\zeta$ of $A$. Here $G_1(\hat q)=-\gamma_q$, and by \eqref{eq:Spair} the pair $\pm q$ contributes $-(2\pi|q|)^{-1}(X^*\gamma_qX-X\gamma_qX^*)$ with $X=\hat W_q$. In \eqref{eq:Sbilin} with $(A_a)=(\sigma_0,\gamma_q,\gamma_\perp,\sigma_3)$ and $G=\gamma_q$, the only non-zero brackets are $[\gamma_q,\gamma_\perp]=2i\sigma_3$, $\gamma_q\sigma_3-\sigma_3\gamma_q=-2i\gamma_\perp$ and $\gamma_\perp\gamma_q\sigma_3-\sigma_3\gamma_q\gamma_\perp=-2i$, so that $s$ drops out and the pair contributes
\[
4\Re(\bar v\hat\psi_q)\,\sigma_3+4\Re(\overline{\hat\psi_q}v_3)\,\sigma_0-\frac{2}{\pi|q|}\Im(\bar vv_3)\,\gamma_\perp .
\]
Unfolding to all $q\ne0$ and using Parseval gives \eqref{eq:SMgen}, since $(\nabla^\perp\phi_3)^\wedge_q=iq^\perp v_3/(2\pi|q|^2)$. For the Dirac operator, $M_\perp$ is removed in the lower bulk by conjugation with the multiplication operator $e^{iM_\perp\cdot x}$, and $c$ in \eqref{eq:Sc} is invariant under the corresponding translation of $\xi$ by $M_\perp$. Hence only electric-magnetic correlations change the mass, and hence the topology. Mass-magnetic correlations shift the energy by $2\avg{V_3\psi}$, and close the gap at energy $0$ once $|2\avg{V_3\psi}|\ge|m+2\avg{V\psi}|$. Electric-mass correlations produce a pure gauge. Mass fluctuations alone give $M=0$ (Remark~\ref{M-rem:scalar} of the main paper). Indeed, for $W=f(y)B$ with $f$ scalar, both terms of \eqref{eq:Spair} equal $|\hat f_q|^2BG_1(\hat q)B$ and cancel for odd $k$.

\begin{remark}[Purely magnetic fluctuations]\label{rem:AC}
For periodic $a=\nabla^\perp\Psi+\nabla X$ of mean zero, the Aharonov--Casher zero modes $(e^{\Psi-iX},0)$ and $(0,e^{-\Psi-iX})$ of $(D+a)\cdot\sigma$ \cite{S:AharonovCasher79} are periodic Bloch eigenfunctions, at quasimomentum $0$, of $(D+a)\cdot\sigma+m\sigma_3$, with eigenvalues $\pm m$. Hence $\pm m$ belong to the spectrum at every amplitude, and the mass is pinned, in agreement with $M=0$ for $V=0$ in \eqref{eq:SEM} and with Theorem~\ref{M-thm:magno} of the main paper.
\end{remark}

\paragraph{Regularized Dirac and BHZ operators.} For the regularized Dirac operator and the electromagnetic fluctuation, \eqref{eq:SregDirac} gives $M=M_3\sigma_3$ with $M_3=\mathcal C_{00}/\beta-\beta\int|\nabla\psi|^2$, so that $M_\perp=M_0=0$ and, by \eqref{eq:Smeff}, the index is $\frac12(\sgn(m+M_3)-\sgn m)$. For $\beta>0$, the index is therefore $1$ when $m<0$ and $\mathcal C_{00}/\beta-\beta\int|\nabla\psi|^2>|m|$, and $-1$ when $m>0$ and $\beta\int|\nabla\psi|^2-\mathcal C_{00}/\beta>m$: electric fluctuations invert a trivial mass, and magnetic fluctuations a topological one. For the profiles $V$ and $A$ of \eqref{eq:SMex1}, $\mathcal C_{00}=a^2/(8\pi^2)$ and $\int|\nabla\psi|^2=b^2/2$. The phase $\theta_0$ plays no role.

\emph{BHZ operator: comparison with \cite{S:GWATB09}.} For the BHZ symbol $\xi\cdot\sigma+(m+b|\xi|^2)\sigma_3+c|\xi|^2$ with $|c|<|b|$ and scalar $W=V\sigma_0$, \eqref{eq:SBHZ} shows that the homogenized bulk has mass $\bar m=m-\mathcal E b/(b^2-c^2)$ and is shifted in energy by $\mathcal E c/(b^2-c^2)$, with $\mathcal E=\mathcal C_{00}=\int_{\T^2}|\nabla\phi_0|^2$. Scalar terms do not change spectral projectors as long as the energy $E$ lies in the gap. Hence, by \eqref{eq:SBDI} and \eqref{eq:Sdegree}, with $(d_1,d_2)=\xi$ and $\sgn d_3(\infty)=\sgn b$, the interface index at an energy $E$ in the common gap of $P$ and $P+M$ is $\frac12(\sgn\bar m-\sgn m)$.

Figure~\ref{fig:third}(a) shows the resulting phase diagram for $m=0.1$, $b=1$ and $c=\frac34$. The ratio $c/b$ is close to the ratio $\gamma/\beta=512/686$ of the HgTe parameters of \cite{S:GWATB09}. It has the structure of their phase diagram: line A marks the inversion of the mass, at $\mathcal E=m(b^2-c^2)/b$ (their curve A). The solid and dashed lines are the band edges of $P+M$ (their curve B) and $P$, which move with the energy shift. The topological region lies beyond A, between these band edges. Two differences should be noted. Here $\mathcal E$ is a finite periodic quantity, whereas in \cite{S:GWATB09} it is replaced by a logarithmically divergent lattice integral. In the main paper, the logarithm is absorbed in the renormalization $\lambda_\eps$ of \eqref{eq:Scrit}. Moreover, the band edges here are edges of a genuine spectral gap. Random on-site disorder is critical, and for it they become pseudo-gap edges. The corresponding index is defined along deformed cuts and given by the same formula, with the renormalized mass $\bar m=m-\ell s_0b/(2\pi(b^2-c^2))$ of Corollary~\ref{M-cor:BHZ} of the main paper.

With a magnetic field, \eqref{eq:SBHZ} adds $(b\sigma_3+c)\int|\nabla\psi|^2$ to $M$. This raises the mass by $b\int|\nabla\psi|^2$ and shifts the energy by $c\int|\nabla\psi|^2$. The mass-inversion line A moves to $\mathcal E=(m+b\int|\nabla\psi|^2)(b^2-c^2)/b$, and the topological region shrinks, as shown in Figure~\ref{fig:third}(b).

\begin{figure}[ht]
\centering\includegraphics[width=\textwidth]{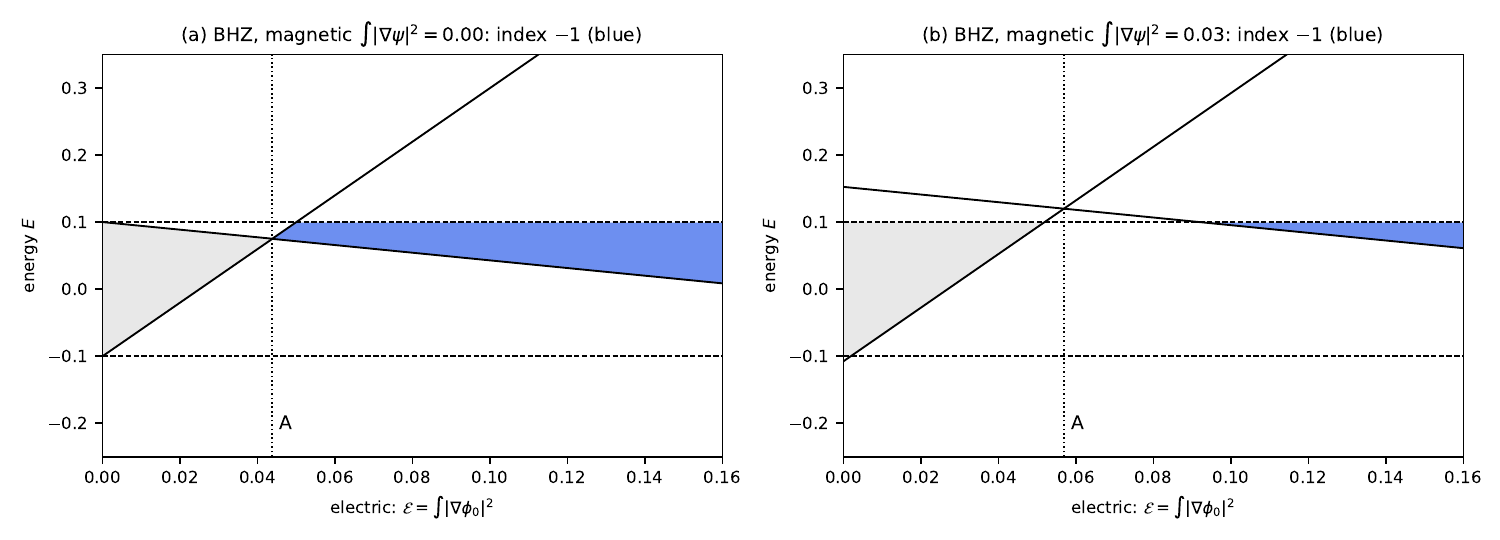}
\caption{BHZ operator with electromagnetic fluctuations, $m=0.1$, $b=1$, $c=\frac34$, (a) without and (b) with a magnetic field: index $-1$ (blue) or $0$ (grey) where the energy $E$ lies in the common gap. It is undefined (white) elsewhere. A: mass inversion. Solid: band edges of $P+M$. Dashed: band edges of $P$.}\label{fig:third}
\end{figure}

\paragraph{Matrix-valued zeroth-order potentials.} Let $P=D\cdot\sigma+(m+\beta\Delta)\sigma_3$, and let $W=\lambda v\,(c_0\sigma_0+c_1\sigma_1+c_3\sigma_3)$ with $c_0^2+c_1^2+c_3^2=1$ and $v$ a fixed real mean-zero profile, $-\Delta\phi_v=v$. Set $s=\lambda^2\int_{\T^2}|\nabla\phi_v|^2$. Then $\mathcal C_{ab}=s\,c_ac_b$, and \eqref{eq:SregDirac} gives
\[
M=\frac s\beta\Bigl[(c_0^2+c_3^2-c_1^2)\,\sigma_3+2c_0c_3\,\sigma_0+2c_1c_3\,\sigma_1\Bigr].
\]
By \eqref{eq:Smeff}, the index is $\frac12(\sgn m_{\rm eff}-\sgn m)$, with
\[
m_{\rm eff}=m+\frac s\beta(c_0^2+c_3^2-c_1^2)-\frac{4s^2}{\beta}\,c_1^2c_3^2,
\]
provided the energy shift satisfies $|2sc_0c_3/\beta|<\inf_\xi|d(\xi)|$. Three mechanisms are visible. Electric and mass components raise the mass by $s(c_0^2+c_3^2)/\beta$, while in-plane components lower it by $sc_1^2/\beta$. This is the dichotomy of Examples 1 and 2 of \cite{S:BalDang24}. Correlations between mass and in-plane components produce a vector potential $M_\perp=(2sc_1c_3/\beta,0)$, which, because of the regularization, lowers $m_{\rm eff}$ by a term quadratic in $s$. Correlations between electric and mass components shift the energy by $2sc_0c_3/\beta$. They leave $m_{\rm eff}$ unchanged, but they can close the gap at energy $0$.

Figure~\ref{fig:second} shows two slices for $m=-0.3$ and $\beta=1$. In the slice $c_0=0$, $(c_3,c_1)=(\cos\alpha,\sin\alpha)$, panel (a), $m_{\rm eff}=m+s\cos2\alpha-s^2\sin^22\alpha$ and the topological region is bounded. For $0<\alpha<\pi/4$ the transition is re-entrant in $s$, because the vector-potential term is quadratic in $s$ while the mass term is linear. The dashed curve is $m+M_3=0$. Between it and the blue region, $m+M_3>0$ while $m_{\rm eff}<0$, so that the vector potential $M_\perp$ alone keeps the interface trivial. At $\alpha=\pi/4$, $m+M_3=m$ and $m_{\rm eff}=m-s^2<0$. With $m>0$ instead, this is the fluctuation of Example~\ref{M-ex:counter} of the main paper, for which the topology changes through $M_\perp$ alone. In the slice $c_1=0$, $(c_0,c_3)=(\cos\alpha,\sin\alpha)$, panel (b), $m_{\rm eff}=m+s/\beta$ does not depend on $\alpha$, but the energy shift $s\sin2\alpha/\beta$ closes the gap at $0$ away from pure electric or pure mass fluctuations.

Compared with the electromagnetic case, matrix-valued potentials have two features that gauge-covariant fields lack. Off-diagonal components lower the mass linearly in $s$ and produce the vector-potential term in $m_{\rm eff}$. Correlations with the mass component produce energy shifts that close the gap.

\begin{figure}[ht]
\centering\includegraphics[width=\textwidth]{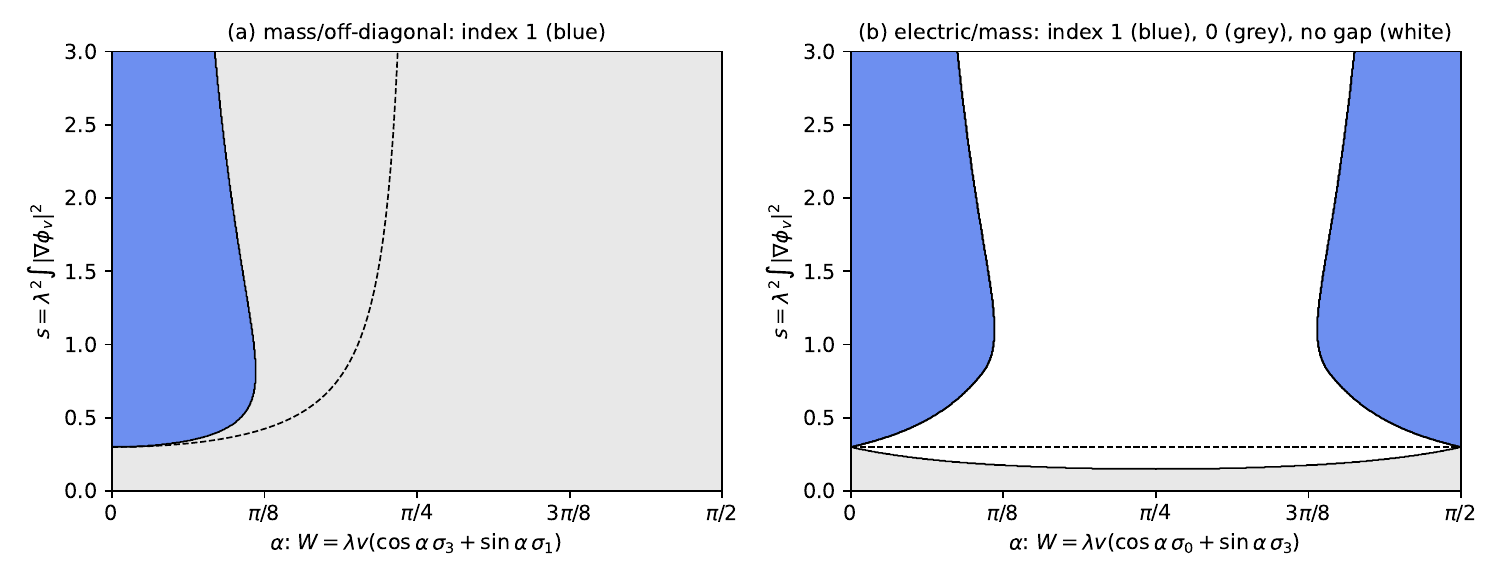}
\caption{Regularized Dirac operator with matrix-valued zeroth-order potentials (not gauge covariant), $m=-0.3$, $\beta=1$: interface index as a function of the direction $\alpha$ and of the strength $s$, at energy $0$. Index $1$ (blue), $0$ (grey), undefined because the gap closes (white). (a) $W=\lambda v(\cos\alpha\,\sigma_3+\sin\alpha\,\sigma_1)$, solid: $m_{\rm eff}=0$. (b) $W=\lambda v(\cos\alpha\,\sigma_0+\sin\alpha\,\sigma_3)$, solid: gap closing. Dashed: $m+M_3=0$.}\label{fig:second}
\end{figure}

\paragraph{Convergence of Bloch eigenvalues.} Figure~\ref{fig:first}(c) compares bulk Bloch eigenvalues of $H^\eps$ and $H^0$ at quasimomentum $0$ for three examples. The errors are of order $\eps$ for the Dirac example \eqref{eq:SMex1} and for the regularized Dirac operator, and of order $\eps^{1/2}$ once a harmonic is added that makes the third-order term $\avg{WL^{-1}WL^{-1}W}$, $L=P_1(D_y)$, non-zero, consistent with the rates of Theorem~\ref{M-thm:hom} and with Remark~\ref{rem:sharp}.

\paragraph{Gap-preserving random fluctuations.} We give explicit examples of the gap-preserving laws of Theorem~\ref{M-thm:phasepurek} of the main paper, recalled in section~\ref{sec:Ssetting} with conditions (A) and (B). For Theorem~\ref{thm:phasepure} and Proposition~\ref{prop:bernoulli}, let $k=1$, let $w$ satisfy \eqref{eq:Smoments}, that is $\int w=0$ and $\int y\,w=0$, let $P=D\cdot\sigma+m\sigma_3$ with $m>0$, let $\kappa_->0$, and assume the \emph{mass-dominance} condition
\begin{equation}\label{eq:massdom}
-(m+M_3)-|M_0|\ge g>0\qquad\text{for all }M=M_0\sigma_0+M_\perp\cdot\sigma+M_3\sigma_3\in\cK .
\end{equation}

\begin{theorem}[Gap-preserving fluctuations: a genuine gap]\label{thm:phasepure}
Under \eqref{eq:Smoments} and \eqref{eq:massdom}, there is $\eps_0>0$ such that for every $\eps\in(0,\eps_0]$ and \emph{every} configuration with $\kappa_-\le|\omega_j|\le\kappa_+$: \textup{(i)} $\spec\bigl(P+\eps^{-1/2}W_\omega(x/\eps)\bigr)\cap(-g/2,g/2)=\emptyset$; and \textup{(ii)} $T(H^\eps_\omega)$ is Fredholm with $\Ind T(H^\eps_\omega)=-1$, where $H^\eps_\omega=P+\eps^{-1/2}\rho\,W_\omega(x/\eps)$.
In particular, for i.i.d.\ couplings with law supported in $\{\kappa_-\le|t|\le\kappa_+\}$, these conclusions hold a.s.
\end{theorem}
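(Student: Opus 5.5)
This is Theorem~\ref{M-thm:phasepurek} of the main paper in case (A), with $J=\sigma_3$, $k=1$ and $P=D\cdot\sigma+m\sigma_3$, followed by the Dirac value of the bulk invariant. The plan is to verify the hypotheses of (A), invoke the theorem, and then evaluate $\BDI(P+\bar M,P)$.

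\emph{Verification of (A).} The operator $P$ splits as $P^{\rm o}(\xi)+\Gamma$ with $P^{\rm o}(\xi)=\xi\cdot\sigma$, which anticommutes with $\sigma_3$, and $\Gamma=m\sigma_3$, which commutes with it.
\begin{itemize}
\item With $s_+=1$ we have $s_+J\Gamma=m\ge g$. This needs $g\le m$. I will check below that this already follows from \eqref{eq:massdom}.
\item Every $M=M_0\sigma_0+M_\perp\cdot\sigma+M_3\sigma_3\in\cK$ has commuting part $M^{\rm e}=M_0\sigma_0+M_3\sigma_3$.
\item Take $s_-=-1$. Then
\[
s_-J(\Gamma+M^{\rm e})=-\sigma_3\bigl(M_0+(m+M_3)\sigma_3\bigr)=-(m+M_3)-M_0\sigma_3 .
\]
Its eigenvalues are $-(m+M_3)\mp M_0$, and both are $\ge -(m+M_3)-|M_0|\ge g$ by \eqref{eq:massdom}.
\end{itemize}

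\emph{Why $g\le m$.} Since $\kappa_->0$, the set $\cK$ contains elements $\mu K$ with $\mu\in[\kappa_-^2,\kappa_+^2]$ and $K\in\overline\conv\{\mathbb K(q)\}$. Condition \eqref{eq:massdom} forces $m+\mu K_3\le -g<0$ for all such $\mu$. This does not directly give $m\ge g$. If it turns out to be needed, I would rather assume $g\le m$ separately (shrinking $g$ to $\min(g,m)$ is harmless for the conclusion), as is done in Corollary~\ref{M-cor:bernoulli}. The moment condition \eqref{eq:Smoments} with $k=1$ is a hypothesis.

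\emph{Application of the theorem.} Theorem~\ref{M-thm:phasepurek}(i) now gives the gap $(-g/2,g/2)$ for every configuration, which is (i). Part (ii) of that theorem gives $\Ind T(H^\eps_\omega)=\BDI(P+\bar M,P)$ for any $\bar M\in\cK$.

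\emph{Evaluation of the invariant.} The lower bulk $P+\bar M$ has symbol
\[
(\xi+\bar M_\perp)\cdot\sigma+(m+\bar M_3)\sigma_3+\bar M_0 .
\]
The scalar shift satisfies $|\bar M_0|<|m+\bar M_3|\le|d(\xi)|$, so by Lemma~\ref{M-lem:BDIc}(ii) the invariant is unaffected by $\bar M_0$, and translation invariance in $\xi$ removes $\bar M_\perp$. Equivalently, one can read this off \eqref{eq:Sdegree} with $\sgn d_3(\infty)=\sgn(m+\bar M_3)$ taken at the single zero. Either way $c(P+\bar M)=\tfrac12\sgn(m+\bar M_3)=-\tfrac12$. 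Since $c(P)=\tfrac12$, the index is $-1$.

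\emph{The almost-sure statement.} For i.i.d.\ couplings with law supported in $\{\kappa_-\le|t|\le\kappa_+\}$, almost every configuration satisfies the bounds on the couplings, so the conclusions hold a.s.

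\emph{Main obstacle.} The only delicate point is the two-band computation for the first-order Dirac symbol with a shifted center. Here $n=d/|d|$ does not extend continuously at infinity, since it tends to the equator. The value $\tfrac12\sgn$ therefore has to be obtained from the direct hemisphere-covering argument of Lemma~\ref{M-lem:BDIc}, after the translation $\xi\mapsto\xi-\bar M_\perp$, rather than from \eqref{eq:degree}.
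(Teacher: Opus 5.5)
Your approach is the paper's: verify (A) with $J=\sigma_3$, $s_+=1$, $s_-=-1$, invoke Theorem~\ref{thm:phasepurek}, and evaluate $\BDI(P+\bar M,P)$ by discarding $\bar M_0$, translating away $\bar M_\perp$, and using the Dirac value. One step fails as written, however. The hypotheses allow $g>m$, since nothing in \eqref{eq:massdom} bounds $g$ by $m$. For instance, Proposition~\ref{prop:bernoulli} with $2|\tau|c_{\min}\ge m+g$ and $|\tau|$ large gives such examples. In that case (A) fails with $g$, because $s_+J\Gamma=m<g$.

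Replacing $g$ by $\min(g,m)$ is harmless for (ii). Its content is an index, which does not depend on the choice of switch within the common gap (Lemma~\ref{lem:fred}(ii)). It is not harmless for (i). Applied as a black box with $\min(g,m)$, the theorem only gives $\spec\bigl(P+\eps^{-1/2}W_\omega(x/\eps)\bigr)\cap\bigl(-\tfrac12\min(g,m),\tfrac12\min(g,m)\bigr)=\emptyset$, which is strictly weaker than the claimed gap. Assuming $g\le m$ instead adds a hypothesis the statement does not make.

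The missing observation, which the paper uses, is that part (i) of Theorem~\ref{thm:phasepurek} needs only the $s_-$ condition. Its proof applies the graded-gap Lemma~\ref{lem:grade}, with grading $s_-J$, to $P+M(x)$ for fields $M$ within $\delta_\eps$ of $\cK$. It then transfers this gap to $H^\eps_{\omega,B}$ by the uniform alloy estimate (Theorem~\ref{thm:unifhom}) and spectral comparison (Lemma~\ref{lem:speccomp}). The bound $s_+J\Gamma\ge g$ only serves to gap the upper bulk $P$ in part (ii). Hence (i) holds with $g$ itself, and (ii) follows from the theorem applied with $\min(g,m)$.

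A smaller point: drop the sentence that reads the invariant off \eqref{eq:Sdegree}. Here $d_3\equiv m+\bar M_3$ is constant, so that formula would formally return $0$. It does not apply anyway, since $n$ tends to the equator at infinity, as you correctly note in your last paragraph. Your main line, the hemisphere-covering Dirac value after the translation $\xi\mapsto\xi-\bar M_\perp$, is the right argument.
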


\begin{proof}
This is case (A) with $J=\sigma_3$, $s_+=1$ and $s_-=-1$. For $M=M_0+M_\perp\cdot\sigma+M_3\sigma_3$ we have $M^{\rm e}=M_0+M_3\sigma_3$, and
\[
-\sigma_3(m\sigma_3+M^{\rm e})=-(m+M_3)-M_0\sigma_3\ge g
\]
by \eqref{eq:massdom}, while $\sigma_3(m\sigma_3)=m$. Apply Theorem~\ref{M-thm:phasepurek} of the main paper with $g$ replaced by $\min(g,m)$. Part (i) holds with $g$ itself, since its proof uses only the condition on $s_-$. For the index, take $\bar M\in\cK$. By \eqref{eq:SBDI}, $\BDI(P+\bar M,P)=c(h_-)-c(h_+)$. The symbol of $P+\bar M$ is $(\xi+\bar M_\perp)\cdot\sigma+(m+\bar M_3)\sigma_3+\bar M_0$, and since $|\bar M_0|<|m+\bar M_3|$, its negative spectral projector is that of $(\xi+\bar M_\perp)\cdot\sigma+(m+\bar M_3)\sigma_3$. The integral \eqref{eq:Sc} is invariant under the translation $\xi\mapsto\xi-\bar M_\perp$, so that $c(h_-)=\frac12\sgn(m+\bar M_3)=-\frac12$ by the Dirac value, while $c(h_+)=\frac12\sgn m=\frac12$. Hence $\BDI(P+\bar M,P)=-1$.
\end{proof}

\begin{proposition}[A Bernoulli example]\label{prop:bernoulli}
Let $\omega_j=\pm1$ and $w=v\sigma_0+\tau\nabla^\perp v\cdot\sigma$, with $v\in C_c^\infty(\R^2;\R)$, $\int v=0$ and $\int y\,v=0$ (e.g., $v=\Delta b$), and write $V=\sum_j\omega_jv(\cdot-j)$. Then $\mathbb K(q)$ has vanishing $\sigma_0$-component, $\sigma_3$-component $2\tau\sum_\ell|\hat v(q+\ell)|^2$, and a $(\sigma_1,\sigma_2)$-component that is odd in $q$, namely $-\sum_\ell|\hat v(p)|^2(1-4\pi^2\tau^2|p|^2)\,p/(2\pi|p|^2)$ with $p=q+\ell$. For every periodic configuration
\[
M(\omega)=2\tau\avg{V^2}\sigma_3,\qquad c_{\min}\le\avg{V^2}\le c_{\max},
\]
where $c_{\min}$ and $c_{\max}$ are the minimum and maximum over $q\in\T^2$ of $\sum_{\ell\in\Z^2}|\hat v(q+\ell)|^2$. If $c_{\min}>0$, $\tau<0$ and $2|\tau|c_{\min}\ge m+g$, then \eqref{eq:massdom} holds, so that Theorem~\ref{thm:phasepure} applies. The gap $(-g/2,g/2)$ and the index $-1$ then hold for every configuration.
\end{proposition}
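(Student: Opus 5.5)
The plan has three steps. First compute $\mathbb K(q)$ from a single Fourier symbol. Then obtain $M(\omega)$ for periodic configurations via Lemma~\ref{M-lem:Kset}. Finally, check \eqref{eq:massdom} on the closed convex hull defining $\cK$.

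\emph{The symbol.} With $\hat w(p)=\hat v(p)(I+2\pi i\tau\,p^\perp\cdot\sigma)$ and $G(p)=-(2\pi)^{-1}(p\cdot\sigma)/|p|^2$, I will expand $\hat w^*G\hat w$ using $(p\cdot\sigma)(p^\perp\cdot\sigma)=i|p|^2\sigma_3$, $(p^\perp\cdot\sigma)(p\cdot\sigma)=-i|p|^2\sigma_3$ and $(p^\perp\cdot\sigma)(p\cdot\sigma)(p^\perp\cdot\sigma)=-|p|^2\,p\cdot\sigma$. This reproduces the identity already displayed in the proof of Corollary~\ref{M-cor:bernoulli}:
\[
\hat w(p)^*G(p)\hat w(p)=2\tau|\hat v(p)|^2\sigma_3-|\hat v(p)|^2\bigl(1-4\pi^2\tau^2|p|^2\bigr)\frac{p\cdot\sigma}{2\pi|p|^2}.
\]
There is no $\sigma_0$ term. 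Summing over $p=q+\ell$ gives the three claimed components of $\mathbb K(q)$. Since $v$ is real, $|\hat v|$ is even, so the in-plane part is odd under $q\mapsto-q$. The hypotheses $\int v=0=\int y\,v$ give the moment condition \eqref{eq:Smoments} for $k=1$, because $\int\nabla^\perp v=0$ and $\int y_i\partial_jv=-\delta_{ij}\int v=0$. Hence $\mathbb K$ is bounded and the sums converge.

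\emph{Periodic configurations.} By Lemma~\ref{M-lem:Kset}, $M(\omega)=\int\mathbb K\,d\mathfrak s_\omega=\sum_q|\hat\omega_q|^2\mathbb K(q)$. The structure factor of a real configuration is symmetric, $|\hat\omega_{-q}|=|\hat\omega_q|$, so the odd in-plane part cancels. This leaves $M(\omega)=2\tau\sum_q|\hat\omega_q|^2\sum_\ell|\hat v(q+\ell)|^2\,\sigma_3$. Since $\hat V_{q+\ell}=\hat\omega_q\hat v(q+\ell)$ on the supercell, Parseval identifies this coefficient with $\avg{V^2}$. Because $\omega_j^2=1$, the measure $\mathfrak s_\omega$ has mass $1$, so $\avg{V^2}$ is an average of $\sum_\ell|\hat v(q+\ell)|^2$ against a probability measure. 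It therefore lies in $[c_{\min},c_{\max}]$.

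\emph{Mass dominance.} Here $\kappa_\pm=1$, so $\cK=\overline\conv\{\mathbb K(q)\}$. Every $M\in\cK$ has $M_0=0$ and $M_3\in 2\tau[c_{\min},c_{\max}]$, since these linear components have that form on the generators $\mathbb K(q)$ and the constraints pass to convex combinations and limits. With $\tau<0$ we get $-(m+M_3)-|M_0|\ge 2|\tau|c_{\min}-m\ge g$, which is \eqref{eq:massdom}. Theorem~\ref{thm:phasepure} then gives the gap and index $-1$ for every configuration.

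The main obstacle is only bookkeeping: confirming that the noncommuting Pauli expansion has no $\sigma_0$ term and the right sign on $\sigma_3$. The continuity of $q\mapsto\sum_\ell|\hat v(q+\ell)|^2$, needed so that $c_{\min}$ and $c_{\max}$ are attained, follows from the rapid decay of $\hat v$.
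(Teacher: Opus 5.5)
Your proof is correct. It follows the paper's structure: the moment check, the direct Pauli computation of $\hat w^*G\hat w$, Parseval on the supercell with $\mathfrak s_\omega$ a probability measure, and the convex-hull check of mass dominance on $\cK=\overline{\mathrm{conv}}\{\mathbb K(q)\}$. The one step you handle differently is how you obtain $M(\omega)=2\tau\avg{V^2}\sigma_3$ for periodic configurations.

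The paper observes that a periodic configuration gives $W_\omega=V\sigma_0+\tau\nabla^\perp V\cdot\sigma$. This is an electromagnetic fluctuation whose vector potential has stream function $\psi=\tau V$. The first-order Dirac formula $M=2\avg{V\psi}\sigma_3$ then gives the result directly. In that route, $M_0=M_\perp=0$ follows from the electric--magnetic structure, because $W_\omega$ has no $\sigma_3$ component.

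You instead start from the structure-factor representation $M(\omega)=\sum_q|\hat\omega_q|^2\mathbb K(q)$. You cancel the in-plane part using its oddness in $q$ together with $|\hat\omega_{-q}|=|\hat\omega_q|$. Note that the self-paired points $2q\in\mathbb Z^2$ are harmless, since an odd function vanishes there. Parseval then identifies the $\sigma_3$ coefficient with $\avg{V^2}$.

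Your route is self-contained: it uses only the symbol identity you need anyway for the $\cK$ step. The paper's route is shorter and makes the physics visible: the mass is the correlation of the electric potential $V$ with the magnetic stream function $\tau V$, which explains why $\tau<0$ inverts it.
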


\begin{proof}
The single site satisfies \eqref{eq:Smoments} with $k=1$, since $\int\nabla v=0$ and $\int y_i\partial_jv=-\delta_{ij}\int v=0$. The magnetic potential $A=\tau\nabla^\perp V$ has stream function $\psi=\tau V$, and \eqref{eq:SEM} gives $M=2\avg{V\psi}\sigma_3=2\tau\avg{V^2}\sigma_3$, with $M_0=M_\perp=0$, since $w$ has no $\sigma_3$ component. Parseval on the supercell, as in Lemma~\ref{M-lem:Kset} of the main paper, gives $\avg{V^2}=\int \mathfrak s_\omega(dq)\sum_\ell|\hat v(q+\ell)|^2$, where $\avg{V^2}$ is the supercell average and $\mathfrak s_\omega$ is a probability measure because $\omega_j^2=1$. In particular $\avg{V^2}\in[c_{\min},c_{\max}]$. For \eqref{eq:massdom}, which only involves $M_0$ and $M_3$, compute $\hat w(p)^*G(p)\hat w(p)$ directly, with $G(p)=-(2\pi)^{-1}(p\cdot\sigma)/|p|^2$ and $\hat w(p)=\hat v(p)(I+2\pi i\tau\,p^\perp\cdot\sigma)$. Using $(p\cdot\sigma)(p^\perp\cdot\sigma)=i|p|^2\sigma_3$, it equals $2\tau|\hat v(p)|^2\sigma_3-|\hat v(p)|^2(1-4\pi^2\tau^2|p|^2)(p\cdot\sigma)/(2\pi|p|^2)$. Hence every element of $\cK$ (here $\kappa_\pm=1$) has $M_0=0$ and $M_3\le2\tau c_{\min}$, so $-(m+M_3)-|M_0|\ge2|\tau|c_{\min}-m\ge g$.
\end{proof}

The same single site also illustrates the Born gap condition [BG] of Theorem~\ref{M-thm:mainB} of the main paper, for laws of couplings that charge small values and therefore produce pseudo-gaps. Recall that, for i.i.d.\ couplings with mean $m_1$ and variance $s^2$, the Born matrix is
\[
M_{\rm Born}=s^2\int_{\T^2}\mathbb K(q)\,dq+m_1^2\,\mathbb K(0),
\]
and that [BG] requires $\dist(0,\spec P)\ge g$ and $\dist(0,\spec(P+M_{\rm Born}))\ge g+\bar r$ for some $g>0$ and $\bar r\in(0,g]$. A sufficient condition is the graded criterion stated after Theorem~\ref{M-thm:mainB}: a unitary involution $J$ anticommuting with $\gamma_1,\gamma_2$ and commuting with $\Gamma$, and signs $s_\pm$ with $s_+J\Gamma\ge g$ and $s_-J(\Gamma+M_{\rm Born}^{\rm e})\ge g$, give [BG] with $g/2$ in place of $g$ and $\bar r=g/2$.

\begin{lemma}[Flat single sites]\label{lem:flat}
Let $v\in C_c^\infty(\R^2;\R)$ have vanishing moments of order $\le3$, and let $v_\delta=v(\cdot/\delta)$. Set
\[
\varphi_\delta(q)=\sum_{\ell}\frac{|\hat v_\delta(q+\ell)|^2}{4\pi^2|q+\ell|^2},\qquad \bar\varphi_\delta=\int_{\T^2}\varphi_\delta=\|(-\Delta)^{-1/2}v_\delta\|_{L^2}^2 .
\]
Then $\bar\varphi_\delta=\delta^4\bar\varphi_1$, and $\sup_q|\varphi_\delta(q)-\bar\varphi_\delta|\le C\delta^9$ for $\delta\le1$.
\end{lemma}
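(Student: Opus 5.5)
The plan is to read $\varphi_\delta$ as the periodization of the function $f_\delta(p)=|\hat v_\delta(p)|^2/(4\pi^2|p|^2)$ on $\R^2$, and to estimate the deviation from its mean by Poisson summation.

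First, the scaling of the mean. We have $\hat v_\delta(p)=\delta^2\hat v(\delta p)$. Periodizing and integrating over $\T^2$ unfolds to an integral over $\R^2$:
\[
\bar\varphi_\delta=\int_{\R^2}f_\delta=\delta^4\int\frac{|\hat v(\delta p)|^2}{4\pi^2|p|^2}\,dp .
\]
The substitution $p'=\delta p$ multiplies this by $\delta^{-2}\cdot\delta^{2}$, so the integral equals $\bar\varphi_1$ and $\bar\varphi_\delta=\delta^4\bar\varphi_1$. The identification with $\|(-\Delta)^{-1/2}v_\delta\|_{L^2}^2$ is Plancherel.

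Second, the deviation. Poisson summation gives
\[
\varphi_\delta(q)=\sum_{m\in\Z^2}\check f_\delta(m)\,e^{2\pi iq\cdot m},\qquad \check f_\delta=\mathcal F^{-1}f_\delta ,
\]
and the $m=0$ term is $\bar\varphi_\delta$. Hence
\[
\sup_q|\varphi_\delta(q)-\bar\varphi_\delta|\le\sum_{m\ne0}|\check f_\delta(m)| .
\]
By scaling, $f_\delta(p)=\delta^4 f_1(\delta p)$, so $\check f_\delta(y)=\delta^2\check f_1(y/\delta)$.

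So everything reduces to the decay of $\check f_1$. Since $v$ has vanishing moments up to order $3$, $\hat v(p)=O(|p|^4)$ near $0$, and $\hat v$ is Schwartz. Thus $f_1\in\mathcal S^{6}$ in the notation of the lattice-sum class before Lemma~\ref{lem:latsum}: $|\hat v|^2$ is smooth, vanishes to order $8$, and dividing by $|p|^2$ gives homogeneity-type symbol bounds of degree $6$. Lemma~\ref{lem:latsum} then gives $|\check f_1(y)|\le C\avg y^{-2-6}=C\avg y^{-8}$. Consequently
\[
\sum_{m\ne0}|\check f_\delta(m)|\le C\delta^2\sum_{m\ne0}(|m|/\delta)^{-8}\le C\delta^{10},
\]
which is even better than the claimed $C\delta^9$, so the statement follows.

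The main obstacle is checking the symbol estimates for $f_1$ near $0$: one must show $|\partial^\alpha f_1(p)|\le C|p|^{6-|\alpha|}$. Taylor expansion of $\hat v$ at $0$ gives $\hat v=O(|p|^4)$ with $\partial^\beta\hat v=O(|p|^{4-|\beta|})$, and $|p|^{-2}$ is homogeneous of degree $-2$, so these bounds should follow from the Leibniz rule. If the sharp exponent were delicate, a cruder count, for instance using only $\mathcal S^{5}$, would still give $\delta^{2}\cdot\delta^{7}=\delta^9$, which is the stated bound.
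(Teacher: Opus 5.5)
Your proof follows the paper's route: periodize, apply Poisson summation, rescale in $\delta$, and get kernel decay from Lemma~\ref{lem:latsum}. Your $\check f_1$ is exactly the autocorrelation $a$ of $u=(-\Delta)^{-1/2}v$ used there, since $f_1=|\hat u|^2$.

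The paper applies the lattice-sum lemma to $\hat u\in\mathcal S^3$ and then bounds the autocorrelation by a convolution estimate, which only gives $|a(x)|\le C\avg x^{-5}$. You apply it directly to $f_1\in\mathcal S^6$, which is cleaner and gives the sharper $\avg y^{-8}$. Your Taylor and Leibniz check of the symbol bounds is fine.

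One scaling step is wrong, though harmlessly. The factor $|p|^{-2}$ also rescales, so $f_\delta(p)=\delta^4|\hat v(\delta p)|^2/(4\pi^2|p|^2)=\delta^6 f_1(\delta p)$, not $\delta^4 f_1(\delta p)$. Consequently $\check f_\delta(y)=\delta^4\check f_1(y/\delta)$, not $\delta^2\check f_1(y/\delta)$. Your own first paragraph forces this: $\check f_\delta(0)=\bar\varphi_\delta=\delta^4\bar\varphi_1$, whereas your formula would give $\delta^2\bar\varphi_1$.

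The error goes in the favourable direction. With the correct factor, $\sum_{m\ne0}|\check f_\delta(m)|\le C\delta^4\cdot\delta^8=C\delta^{12}$. Your fallback remark also needs revising: any decay $\avg y^{-5}$, which is all the paper uses, already gives $\delta^4\cdot\delta^5=\delta^9$.
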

\begin{proof}
Let $u=(-\Delta)^{-1/2}v$. Then $\hat u=\hat v/(2\pi|p|)$ vanishes to order $3$ at $0$, and $|u(y)|\le C\avg y^{-5}$ by Lemma~\ref{M-lem:latsum} of the main paper (a symbol vanishing to order $s>0$ at the origin, smooth elsewhere and rapidly decaying, has a kernel bounded by $C\avg y^{-2-s}$). Its autocorrelation $a(x)=\int u(y)u(y+x)\,dy$ satisfies $|a(x)|\le C\avg x^{-5}$. By homogeneity, $(-\Delta)^{-1/2}v_\delta=\delta\,u(\cdot/\delta)$, whose autocorrelation is $a_\delta(x)=\delta^4a(x/\delta)$. Poisson summation gives $\varphi_\delta(q)=\sum_ma_\delta(m)e^{2\pi iq\cdot m}$. The term $m=0$ is $\bar\varphi_\delta=\delta^4\|u\|^2$, and $\sum_{m\ne0}|a_\delta(m)|\le C\delta^4\sum_{m\ne0}(|m|/\delta)^{-5}\le C\delta^9$.
\end{proof}

\begin{proposition}[Two second-order examples]\label{prop:highex}
Let $b\in C_c^\infty(\R^2;\R)$ with $b\ne0$, let $v=\Delta^2b$, and let $\omega_j=\pm1$.
\begin{enumerate}[label=\textup{(\alph*)},leftmargin=*,nosep]
\item \emph{A $J$-graded second-order operator.} Let $P=P_2(D)+m\sigma_2$ with $P_2(\xi)=-(\xi_1^2-\xi_2^2)\sigma_3-2\xi_1\xi_2\sigma_1$ and $m\ge g$. Let $w=v\sigma_2+\tau\bigl[(2\partial_1\partial_2v)\sigma_1+((\partial_1^2-\partial_2^2)v)\sigma_3\bigr]$, with $\tau>0$ and $\supp b\subset\{|y|<\frac12\}$. Then, for every $q$, $\mathbb K(q)$ has $\sigma_0$-component $0$ and $\sigma_2$-component $-2\tau\|v\|_{L^2}^2$. If $2\tau\|v\|^2_{L^2}\ge m+g$, then (A) holds with $J=\sigma_2$, $s_+=1$ and $s_-=-1$, and $\Ind T(H^\eps_\omega)=-2$ for every configuration and every $\eps\le\eps_0$.
\item \emph{The regularized Dirac operator, from its trivial phase.} Let $P=D\cdot\sigma+(m+\beta\Delta)\sigma_3$ with $m<0<\beta$, so that $c(P)=\frac12(\sgn m+\sgn\beta)=0$. Let $w=\lambda v_\delta\sigma_0$ with $v_\delta=v(\cdot/\delta)$, let $\bar\mu\in(0,\frac1{2\beta}]$, and choose $\lambda>0$ with $\lambda^2\|(-\Delta)^{-1/2}v_\delta\|^2_{L^2}=\beta(\bar\mu-m)$. There is $C$, independent of $\delta$, such that for $\delta$ small enough (B) holds with $\bar M=(\bar\mu-m)\sigma_3$, $\bar r=C\delta^5(\bar\mu-m)$ and $g=\min(|m|,\bar\mu-\bar r)$. Moreover $\Ind T(H^\eps_\omega)=1$ for every configuration and every $\eps\le\eps_0$.
\end{enumerate}
\end{proposition}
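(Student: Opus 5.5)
For part (a), the plan is to compute $\mathbb K$ explicitly, check condition (A) of Theorem~\ref{M-thm:phasepurek}, and evaluate the resulting bulk invariant. For part (b), the plan is to show that $\mathbb K$ is nearly constant using Lemma~\ref{lem:flat}, check condition (B), and read off the index from Proposition~\ref{M-prop:regIndex}. In both cases $v=\Delta^2b$ has vanishing moments of order $\le3=2k-1$ with $k=2$, and so does $v_\delta$. Hence \eqref{eq:Smoments} holds and Theorem~\ref{M-thm:phasepurek} is available.

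\emph{Part (a).} Derivatives become $2\pi ip_j$ in Fourier, so the $\tau$-part of $\hat w$ is $-4\pi^2\tau\hat v(p)\bigl[2p_1p_2\sigma_1+(p_1^2-p_2^2)\sigma_3\bigr]$. This gives
\[
\hat w(p)=\hat v(p)\,X_p,\qquad X_p=\sigma_2+aP_2(p),\qquad a=4\pi^2\tau ,
\]
with $X_p$ Hermitian. Since $P_2(p)^2=|p|^4$, we have $G(p)=-P_2(p)/(4\pi^2|p|^4)$. Using $\sigma_2P_2=-P_2\sigma_2$, expand
\[
X_pP_2X_p=-P_2+2a|p|^4\sigma_2+a^2|p|^4P_2 .
\]
Hence $\hat w^*G\hat w=|\hat v(p)|^2\bigl(-2\tau\sigma_2+c(p)P_2(p)\bigr)$ for some real scalar $c(p)$. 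The second term anticommutes with $J=\sigma_2$ and has no $\sigma_0$ part. Because the integer translates of $v$ have disjoint supports, Poisson summation gives $\sum_\ell|\hat v(q+\ell)|^2=\|v\|_{L^2}^2$, exactly as in Corollary~\ref{M-cor:bernoulli}. Therefore every $M\in\cK$ (here $\kappa_\pm=1$) has $M^{\rm e}=-2\tau\|v\|^2\sigma_2$. Now $P$ is $J$-graded, with $P^{\rm o}=P_2$ and $\Gamma=m\sigma_2$, and $\sigma_2\Gamma=m\ge g$. On the lower side,
\[
-\sigma_2(\Gamma+M^{\rm e})=2\tau\|v\|^2-m\ge g ,
\]
so (A) holds with $s_+=1$ and $s_-=-1$.

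For the index, Theorem~\ref{M-thm:phasepurek} allows any $\bar M\in\cK$, and the $J$-odd part of $\bar M$ can be switched off along a path that stays gapped by Lemma~\ref{M-lem:grade}. It therefore suffices to compute $c(P_2+m'\sigma_2)$ for $m'=m>0$ and for $m'=m-2\tau\|v\|^2<0$. This is a two-band symbol, and the in-plane components $\bigl(-2\xi_1\xi_2,\,-(\xi_1^2-\xi_2^2)\bigr)$ wind twice around $\xi=0$. By Lemma~\ref{M-lem:BDIc}(ii), the difference $c(m'<0)-c(m'>0)$ equals $\pm2$. The main obstacle is fixing this sign: one has to reorder the Pauli components to $(\sigma_1,\sigma_2,\sigma_3)$, track the orientation of the degree-$2$ map, and account for the fact that $d_2=m'$ is the mass component. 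A degree computation at the pole, as in \eqref{eq:degree} but with multiplicity two, should give $-2$.

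\emph{Part (b).} Here $w$ is scalar and $P_2(\xi)=-\beta|\xi|^2\sigma_3$, so
\[
G(q)=\frac{\sigma_3}{4\pi^2\beta|q|^2},\qquad \mathbb K(q)=\frac{\lambda^2}{\beta}\,\varphi_\delta(q)\,\sigma_3 .
\]
Lemma~\ref{lem:flat} gives $\varphi_\delta=\bar\varphi_\delta+O(\delta^9)$ with $\bar\varphi_\delta=\delta^4\bar\varphi_1$. By the choice of $\lambda$, $\lambda^2\bar\varphi_\delta/\beta=\bar\mu-m$. Hence
\[
\mathbb K(q)=(\bar\mu-m)\bigl(1+O(\delta^5)\bigr)\sigma_3 ,
\]
and every element of $\cK$ (with $\kappa_\pm=1$) lies within $\bar r=C\delta^5(\bar\mu-m)$ of $\bar M=(\bar\mu-m)\sigma_3\in\cK$.

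For the gaps, $P+\bar M$ is the regularized Dirac operator with mass $\bar\mu>0$. Its symbol satisfies $|d(\xi)|^2=|\xi|^2+(\bar\mu-\beta|\xi|^2)^2\ge\bar\mu^2$ because $2\beta\bar\mu\le1$, so $\dist(0,\spec(P+\bar M))\ge\bar\mu=g+\bar r$ when $g=\bar\mu-\bar r$. Similarly $(m-\beta|\xi|^2)^2\ge m^2$ since $m<0<\beta$, so $\dist(0,\spec P)\ge|m|\ge g$. This verifies (B). Proposition~\ref{M-prop:regIndex} with $M_0=0$, $M_\perp=0$ and $m_{\rm eff}=\bar\mu>0$ then gives
\[
\BDI(P+\bar M,P)=\tfrac12\bigl(\sgn\bar\mu-\sgn m\bigr)=1 ,
\]
and Theorem~\ref{M-thm:phasepurek}(ii) yields $\Ind T(H^\eps_\omega)=1$ for every configuration.
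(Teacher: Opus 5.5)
\textbf{Overall.} Part (b) is correct and follows the paper's proof essentially line by line. So does your verification of condition (A) in part (a). Two small checks you leave unstated are immediate: [P2] for $P_2$ follows from $P_2(\xi)^2=|\xi|^4I$, and $\bar M\in\cK$ because $\bar\varphi_\delta$ is the mean of $\varphi_\delta$ over $\T^2$.

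\textbf{The gap.} The genuine gap is the sign of the index in (a), which you leave at ``should give $-2$''. The value $-2$ is part of the statement, so it must be proved, and the route you suggest does not apply as stated. Formula \eqref{eq:degree} needs two things: a simple zero of the in-plane pair with positive Jacobian, and a unit vector $n$ that extends continuously to $\R^2\cup\{\infty\}$. Neither holds here. The in-plane pair $(-2\xi_1\xi_2,-(\xi_1^2-\xi_2^2))$ has a double zero at $\xi=0$, where its Jacobian vanishes. And as $|\xi|\to\infty$ along direction $\theta$, $n\to(-\sin2\theta,0,-\cos2\theta)$, so $n$ has no limit at infinity, exactly as for the Dirac operator.

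\textbf{How to close it.} Evaluate \eqref{eq:Stwoband} directly. For $d=(-2\xi_1\xi_2,\ \mu,\ -(\xi_1^2-\xi_2^2))$ one has $|d|^2=|\xi|^4+\mu^2$ and $n\cdot(\partial_1n\times\partial_2n)=|d|^{-3}\,d\cdot(\partial_1d\times\partial_2d)$. Moreover $\partial_1d\times\partial_2d=(0,4|\xi|^2,0)$, so
\[
c(P_2+\mu\sigma_2)=\frac1{4\pi}\int_{\R^2}\frac{4\mu|\xi|^2}{(|\xi|^4+\mu^2)^{3/2}}\,d\xi=\frac12\int_0^\infty\frac{\mu\,du}{(u+\mu^2)^{3/2}}=\sgn\mu ,
\]
using $u=|\xi|^4$. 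Hence $\BDI(P+\bar M,P)=\sgn(m-2\tau\|v\|_{L^2}^2)-\sgn m=-1-1=-2$.

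The paper reaches the same value more conceptually. Since $(\sigma_3,\sigma_1,\sigma_2)$ is a cyclic permutation of $(\sigma_1,\sigma_2,\sigma_3)$, $c$ may be computed with $(d_3,d_1,d_2)=(\Re f,\Im f,\mu)$, where $f(\xi_1+i\xi_2)=-(\xi_1+i\xi_2)^2$. This is the Dirac unit vector composed with a holomorphic, hence orientation-preserving, map of degree $2$, giving $2\cdot\frac12\sgn\mu$.

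In your ordering, the mass is $d_2$ and the in-plane pair is $(d_1,d_3)=(\Im f,\Re f)$. Both the odd permutation $(\sigma_1,\sigma_3,\sigma_2)$ and the map $\xi\mapsto(\Im f,\Re f)$ reverse orientation. The two signs cancel, which is the bookkeeping your proposal flagged but did not carry out.
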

\noindent In (b) the disorder is scalar, as in \cite{S:GWATB09}. The unperturbed operator is trivial, the homogenized bulk $P+\bar M=D\cdot\sigma+(\bar\mu+\beta\Delta)\sigma_3$ is in the inverted (topological) phase, and the disordered bulk keeps a genuine spectral gap for every sign pattern. The relevant property is not the distribution of the couplings but the flatness of the periodized single-site kernel $\mathbb K$ (Lemma~\ref{lem:flat}), which makes all local effective matrices nearly equal.

\begin{proof}
Both single sites satisfy \eqref{eq:Smoments} with $k=2$, since $v=\Delta^2b$ has vanishing moments of order $\le3$, and so do its derivatives. With $\kappa_\pm=1$, $\cK=\overline\conv\{\mathbb K(q)\}$.

(a) The operator $P_2(D)$ satisfies [P2], since $P_2(\xi)^2=|\xi|^4I$, and $P_2$ is indefinite. With the convention $\widehat{\partial_jf}=2\pi ip_j\hat f$, we have $\hat w(p)=\hat v(p)X(p)$ with $X=\sigma_2+4\pi^2\tau P_2(p)$, and $G(p)=-P_2(p)/(4\pi^2|p|^4)$. The terms $\sigma_2P_2\sigma_2=-P_2$ and $P_2^3=|p|^4P_2$ anticommute with $\sigma_2$, while the cross terms give $X^*P_2X$ the $\sigma_2$-component $8\pi^2\tau|p|^4$ and no $\sigma_0$-component. Hence $\hat w^*G\hat w$ has $\sigma_2$-component $-2\tau|\hat v(p)|^2$ and $\sigma_0$-component $0$. Since $\supp v\subset\{|y|<\frac12\}$, the integer translates of $v$ have disjoint supports, and Poisson summation gives $\sum_\ell|\hat v(q+\ell)|^2=\|v\|_{L^2}^2$. This proves the formula for $\mathbb K$.

For $M\in\cK$, $M^{\rm e}=-2\tau\|v\|^2\sigma_2$, so $-\sigma_2(\Gamma+M^{\rm e})=2\tau\|v\|^2-m\ge g$ with $\Gamma=m\sigma_2$, while $\sigma_2\Gamma=m\ge g$. This gives (A). For the index, deform the $\sigma_2$-anticommuting part of $\bar M$ to $0$. By \eqref{eq:Sgrade} this keeps the gap. By \eqref{eq:SBDI}, $\BDI(P+\bar M,P)=c(P_2+\mu_-\sigma_2)-c(P_2+m\sigma_2)$ with $\mu_-=m-2\tau\|v\|^2<0$. The symbol $P_2(\xi)+\mu\sigma_2$ is $d\cdot\sigma$ with $d_2=\mu$ and $(d_3,d_1)$ the planar map $\xi_1+i\xi_2\mapsto-(\xi_1+i\xi_2)^2$, of degree $2$. Since $(\sigma_3,\sigma_1,\sigma_2)$ is a cyclic permutation of $(\sigma_1,\sigma_2,\sigma_3)$, \eqref{eq:Stwoband} gives $c=2\cdot\frac12\sgn\mu=\sgn\mu$, as for the Dirac value. Hence the index is $-1-1=-2$.

(b) Here $P_2(\xi)=-\beta|\xi|^2\sigma_3$ and $G(p)=\sigma_3/(4\pi^2\beta|p|^2)$, so $\mathbb K(q)=\beta^{-1}\lambda^2\varphi_\delta(q)\sigma_3$, and $\cK$ is the segment $\{t\sigma_3:\ \beta^{-1}\lambda^2\min\varphi_\delta\le t\le\beta^{-1}\lambda^2\max\varphi_\delta\}$. It contains $\bar M=\beta^{-1}\lambda^2\bar\varphi_\delta\sigma_3=(\bar\mu-m)\sigma_3$. By Lemma~\ref{lem:flat}, every point of $\cK$ lies within $\beta^{-1}\lambda^2C\delta^9=C'\delta^5(\bar\mu-m)=\bar r$ of $\bar M$, since $\lambda^2=\beta(\bar\mu-m)\delta^{-4}/\bar\varphi_1$.

The symbol of $P+\bar M$ has eigenvalues $\pm(t+(\bar\mu-\beta t)^2)^{1/2}$, with $t=|\xi|^2$. Since $1-2\beta(\bar\mu-\beta t)\ge1-2\beta\bar\mu\ge0$, the minimum is at $t=0$, and $\dist(0,\spec(P+\bar M))=\bar\mu\ge g+\bar r$. Similarly, $\dist(0,\spec P)\ge|m|\ge g$, because $m-\beta t\le m<0$. This gives (B). The index is $\frac12(\sgn\bar\mu-\sgn m)=1$ by \eqref{eq:Smeff}, with $M_\perp=0$, $M_0=0$ and $m_{\rm eff}=\bar\mu$.
\end{proof}

\paragraph{A pseudo-gap.} The next example illustrates Theorem~\ref{M-thm:mainB}: the Born-gap condition holds and the index along admissible cuts is non-trivial, while rare regions fill the gap with spectrum.
\begin{proposition}[A Born gap inside a pseudo-gap]\label{prop:uniform}
Let $P=D\cdot\sigma+m\sigma_3$ with $m>0$, and let $w=v\sigma_0+\tau\nabla^\perp v\cdot\sigma$ with $v=\Delta b$, $b\in C_c^\infty(\R^2;\R)$, $b\ne0$, $\supp b\subset\{|y|<\frac12\}$. Let the couplings be i.i.d.\ with mean $m_1$ and variance $s^2$. Then
\[
M_{\rm Born}=2\tau(s^2+m_1^2)\|v\|_{L^2}^2\,\sigma_3,\qquad \mathbb K(0)=2\tau\|v\|_{L^2}^2\,\sigma_3 .
\]
\begin{enumerate}[label=\textup{(\roman*)},leftmargin=*,nosep]
\item If $\tau<0$ and $2|\tau|(s^2+m_1^2)\|v\|_{L^2}^2\ge m+g$ with $0<g\le m$, the graded criterion holds with $J=\sigma_3$, $s_+=1$ and $s_-=-1$. Hence [BG] holds, and for $\eps$ small, a.s., $\ind\tilde T(H^\eps)=-1$ along every admissible cut.
\item If the couplings are uniformly distributed on $[-1,1]$, so that $m_1=0$ and $s^2=\frac13$, and $2|\tau|\|v\|_{L^2}^2\ge3(m+g)$, then, in addition, the hypotheses of Corollary~\ref{M-cor:pseudoB} of the main paper hold with $\kappa_0=1$. Thus, for every $g'>0$ and $\eps$ small, $\spec(P+\eps^{-1/2}W_\omega(x/\eps))\cap(-g',g')\ne\emptyset$ a.s.
\end{enumerate}
\end{proposition}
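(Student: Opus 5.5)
The plan is to reduce everything to the single-site computation already carried out in the proof of Corollary~\ref{M-cor:bernoulli} and Proposition~\ref{prop:bernoulli}, and then to invoke Theorem~\ref{M-thm:mainB} and Corollary~\ref{M-cor:pseudoB}.

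\emph{Reduction to the single site.} First, the single site satisfies \eqref{eq:Smoments} with $k=1$, exactly as in Proposition~\ref{prop:bernoulli}, since $v=\Delta b$ has vanishing moments of order $\le1$. Next, recall the pointwise identity
\[
\hat w(p)^*G(p)\hat w(p)=2\tau|\hat v(p)|^2\sigma_3-|\hat v(p)|^2\bigl(1-4\pi^2\tau^2|p|^2\bigr)\frac{p\cdot\sigma}{2\pi|p|^2}.
\]
Summing over $p=q+\ell$ gives $\mathbb K(q)=2\tau\bigl(\sum_\ell|\hat v(q+\ell)|^2\bigr)\sigma_3+K_\perp(q)\cdot\sigma$, where $K_\perp(-q)=-K_\perp(q)$ because $|\hat v|$ is even. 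Since $\supp v\subset\{|y|<\frac12\}$, the integer translates of $v$ are disjoint, and Poisson summation gives $\sum_\ell|\hat v(q+\ell)|^2=\|v\|_{L^2}^2$ for every $q$.

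\emph{The two matrices.} At $q=0$ the term $\ell=0$ is excluded, but $\hat v(0)=0$, so the $\sigma_3$-coefficient is still $2\tau\|v\|^2$. The in-plane sum at $q=0$ runs over $\ell\ne0$ of an odd function of $\ell$, and the pairs $\pm\ell$ cancel. Hence $\mathbb K(0)=2\tau\|v\|_{L^2}^2\sigma_3$. Integrating over $\T^2$, the odd part vanishes under $q\mapsto-q$, so $\int_{\T^2}\mathbb K=2\tau\|v\|^2\sigma_3$. Inserting both into \eqref{eq:SMborn} gives
\[
M_{\rm Born}=2\tau(s^2+m_1^2)\|v\|_{L^2}^2\sigma_3 .
\]
The one point needing care is this bookkeeping of the excluded mode $q+\ell=0$ and the convergence of the sums; the rest is substitution.

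\emph{Part (i).} Take $J=\sigma_3$, which anticommutes with $\sigma_1,\sigma_2$ and commutes with $\Gamma=m\sigma_3$. Then $s_+J\Gamma=m\ge g$. Moreover $M_{\rm Born}^{\rm e}=M_{\rm Born}$, so
\[
-\sigma_3(\Gamma+M_{\rm Born})=-m+2|\tau|(s^2+m_1^2)\|v\|^2\ge g .
\]
By the graded criterion stated after Theorem~\ref{M-thm:mainB}, [BG] holds with $(g/2,\bar r=g/2)$. Theorem~\ref{M-thm:mainB} then gives, a.s.\ for $\eps$ small and along every admissible cut,
\[
\ind\tilde T(H^\eps)=\BDI(P+M_{\rm Born},P)=\tfrac12\bigl(\sgn(m+M_{{\rm Born},3})-\sgn m\bigr)=-1,
\]
where the last equality uses the Dirac value of Lemma~\ref{M-lem:BDIc}.

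\emph{Part (ii).} For couplings uniform on $[-1,1]$ we have $m_1=0$ and $s^2=\frac13$, so the hypothesis is exactly that of (i), and (i) applies. For Corollary~\ref{M-cor:pseudoB}, take $\kappa_0=1$, so that $\supp\mu_0\supset[0,1]$. The matrix $P+\mathbb K(0)=D\cdot\sigma+(m+2\tau\|v\|^2)\sigma_3$ is a Dirac operator with mass
\[
m-2|\tau|\|v\|^2\le m-3(m+g)<0,
\]
so it is gapped at $0$. By Lemma~\ref{M-lem:BDIc}, $\BDI(P+\mathbb K(0),P)=-1\ne0$. Corollary~\ref{M-cor:pseudoB} then yields the spectrum in $(-g',g')$ a.s.
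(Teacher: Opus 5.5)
Your proposal is correct and follows the paper's proof essentially step for step. It reuses the single-site formula from Proposition~\ref{prop:bernoulli}, applies Poisson summation with disjoint integer translates, uses oddness of the in-plane part to remove it from $\mathbb K(0)$ and from $\int_{\T^2}\mathbb K$, and then invokes the graded criterion with $J=\sigma_3$, Theorem~\ref{thm:mainB} with the Dirac value of $c$, and Corollary~\ref{cor:pseudoB} with $\kappa_0=1$. The only cosmetic difference is how you evaluate $\int_{\T^2}\mathbb K$: you use the pointwise identity $\sum_\ell|\hat v(q+\ell)|^2=\|v\|_{L^2}^2$, whereas the paper unfolds the integral to $\int_{\R^2}\hat w^*G\hat w\,dp$ and applies Plancherel.
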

\noindent In (ii), the interval $(-g/2,g/2)$ is a pseudo-gap, filled by rare puddles where all couplings are small, while the index along admissible cuts is $-1$. This is a first-order analogue of the topological Anderson insulator of \cite{S:GWATB09}. For Bernoulli couplings $\pm1$, $M_{\rm Born}=2\tau\|v\|_{L^2}^2\sigma_3$ has the same form, with $\EE\omega^2=1$ in place of $\frac13$, but the law does not charge small couplings, and Proposition~\ref{prop:bernoulli} gives a genuine gap instead.

\begin{proof}
The single site satisfies \eqref{eq:Smoments} with $k=1$, as in the proof of Proposition~\ref{prop:bernoulli}, where $\hat w(p)^*G(p)\hat w(p)$ was shown to equal $2\tau|\hat v(p)|^2\sigma_3-|\hat v(p)|^2(1-4\pi^2\tau^2|p|^2)(p\cdot\sigma)/(2\pi|p|^2)$. Since $v$ is real, $|\hat v|$ is even, and the in-plane part is odd in $p$. Hence $\int_{\T^2}\mathbb K=\int_{\R^2}\hat w^*G\hat w\,dp=2\tau\|v\|_{L^2}^2\sigma_3$ by Plancherel, and the in-plane part of $\mathbb K(0)=\sum_{\ell\ne0}\hat w(\ell)^*G(\ell)\hat w(\ell)$ vanishes by the symmetry $\ell\mapsto-\ell$. Its $\sigma_3$ component is $2\tau\sum_{\ell}|\hat v(\ell)|^2$, since $\hat v(0)=0$, and Poisson summation gives $\sum_\ell|\hat v(\ell)|^2=\|v\|_{L^2}^2$, because the integer translates of $v$ have disjoint supports. This gives both formulas.

(i) $J=\sigma_3$ anticommutes with $\sigma_1,\sigma_2$ and commutes with $\Gamma=m\sigma_3$, and $\sigma_3\Gamma=m\ge g$. Since $M_{\rm Born}$ commutes with $\sigma_3$, $M^{\rm e}_{\rm Born}=M_{\rm Born}$ and $-\sigma_3(m\sigma_3+M_{\rm Born})=-m-2\tau(s^2+m_1^2)\|v\|^2\ge g$. Theorem~\ref{M-thm:mainB} of the main paper and the Dirac value of $c$ give $\ind\tilde T(H^\eps)=\BDI(P+M_{\rm Born},P)=\frac12(\sgn(m+2\tau(s^2+m_1^2)\|v\|^2)-\sgn m)=-1$.

(ii) The support of the law contains $[0,1]$. The mass of $P+\mathbb K(0)$ is $m+2\tau\|v\|^2\le m-3(m+g)<0$, so $P+\mathbb K(0)$ is gapped at $0$ with $\BDI(P+\mathbb K(0),P)=-1\ne0$, and Corollary~\ref{M-cor:pseudoB} applies.
\end{proof}

The second example uses single sites that are nearly flat in the following sense.
\paragraph{Disordered superconductors.} For scalar disorder at criticality, recalled in \eqref{eq:Scrit}, the correction is $M_{\rm crit}=-\ell s_0c_qA^{-1}$, where $A$ is the matrix of the principal symbol. Scalar disorder thus always shifts the zeroth-order part of $P$ in a direction fixed by the principal symbol alone. The topology can change only if the path $t\mapsto P+tM_{\rm crit}$, $t\in[0,1]$, closes the gap, and it changes when, moreover, $\BDI(P+M_{\rm crit},P)\not=0$. Where the phase boundaries lie depends on the operator. We illustrate this on a spinless $p$-wave superconductor, whose Bogoliubov-de Gennes operator is
\begin{equation}\label{eq:Spwave}
P=(|D|^2-\mu)\tau_3+\Delta_0(D_1\tau_1+D_2\tau_2),\qquad\Delta_0\in\R\setminus\{0\},
\end{equation}
with Pauli matrices $\tau_j$ acting in particle-hole space. On-site disorder enters as $\tilde V\tau_3$, which preserves the particle-hole symmetry. This is not scalar disorder, but it is of the form $\tilde VA$ with $A=\tau_3$, covered by Remark~\ref{rem:Adisorder} below. Here $q=|\xi|^2$, $A=\tau_3$ is indefinite, $A^2=I$ and $c_q=1/(2\pi)$, so that, by that remark, $M_{\rm crit}=-\ell s_0c_qA=-\frac{\ell s_0}{2\pi}\tau_3$, and the disorder shifts the chemical potential to $\bar\mu=\mu+\frac{\ell s_0}{2\pi}$.
\begin{corollary}[Disordered $p$-wave superconductors]\label{cor:pwave}
Let $\mu\not=0$ and $\bar\mu\not=0$, and let the disorder be the scalar alloy of \eqref{eq:Scrit}, multiplied by $\tau_3$. Then the conclusions of Theorem~\ref{M-thm:mainA} of the main paper hold, with index $-\frac12(\sgn\bar\mu-\sgn\mu)$ along admissible cuts. In particular, a trivial superconductor with $\mu<0$ becomes topological when $\ell s_0>2\pi|\mu|$.
\end{corollary}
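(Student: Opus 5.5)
Corollary~\ref{cor:pwave} reduces to Theorem~\ref{M-thm:mainA} of the main paper in the extended form of Remark~\ref{rem:Adisorder}, which replaces scalar disorder $\tilde V$ by $\tilde V A$ with the same matrix $A$ as in the principal part. That extension is compatible with the exponential transform: $X=A^{-1}\tilde\rho Y_0$ is replaced by $X=\tilde\rho Y_0 A^{-1}A=\tilde\rho Y_0$ times a suitable multiple of $A^{-1}$. Concretely, the conjugation produces $A q(D)X\to -\tilde\rho\tilde V_{\rm high}A$ when $X=\tilde\rho Y_0\, I$ and $A^{-1}$ is replaced by the identity. The quadratic term then becomes $-A|\nabla Y|_B^2$ rather than $-A^{-1}|\nabla Y|_B^2$, whose mean gives $M_{\rm crit}=-\ell s_0c_qA$. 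I take this remark as given.

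\emph{Step 1: hypotheses [C1]--[C3].}
\begin{itemize}
\item [C1]: $P=\tau_3|D|^2+\Delta_0(D_1\tau_1+D_2\tau_2)-\mu\tau_3$. Here $A=\tau_3$ is invertible and indefinite, $q(\xi)=|\xi|^2$ with $B=I$, $c_j=\Delta_0\tau_j$ for $j=1,2$, and $P_0=-\mu\tau_3$.
\item [C2] and [C3] hold for the scalar alloy by assumption.
\item Since $c_q=(2\pi)^{-2}\int_{\Sm^1}de=1/(2\pi)$, Remark~\ref{rem:Adisorder} gives $M_{\rm crit}=-\frac{\ell s_0}{2\pi}\tau_3$. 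Hence $P+M_{\rm crit}$ is the operator \eqref{eq:Spwave} with $\mu$ replaced by $\bar\mu$.
\end{itemize}

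\emph{Step 2: bulk gaps.} The symbol is $d\cdot\tau$ with $d=(\Delta_0\xi_1,\Delta_0\xi_2,|\xi|^2-\mu')$, and its eigenvalues are $\pm|d|$. If $\mu'<0$, then $|d|\ge|\mu'|$. If $\mu'>0$, then $|d|^2=\Delta_0^2t+(t-\mu')^2$ with $t=|\xi|^2$, which vanishes only if $t=0$ and $\mu'=0$. In either case $\inf|d|>0$ whenever $\mu'\ne0$, since $|d|\to\infty$ as $|\xi|\to\infty$. So $P$ and $P+M_{\rm crit}$ are gapped on a common interval $(-g,g)$ with $g>0$, and Theorem~\ref{M-thm:mainA} applies.

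\emph{Step 3: the invariant.} I use \eqref{eq:Sdegree}. The planar part $(d_1,d_2)=\Delta_0\xi$ vanishes only at $\xi_0=0$, with Jacobian determinant $\Delta_0^2>0$. Moreover $d_3(0)=-\mu'$ and $\sgn d_3(\infty)=+1$. This gives
\[
c(h_{\mu'})=\tfrac12(\sgn(-\mu')-1)=-\tfrac12(\sgn\mu'+1).
\]
Therefore $\BDI(P+M_{\rm crit},P)=c(h_{\bar\mu})-c(h_\mu)=-\tfrac12(\sgn\bar\mu-\sgn\mu)$.

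For $\mu<0$, the index is nonzero exactly when $\bar\mu=\mu+\ell s_0/(2\pi)>0$, that is, when $\ell s_0>2\pi|\mu|$.

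\emph{Main obstacle.} The only delicate point is checking that Remark~\ref{rem:Adisorder} applies: the proof of Theorem~\ref{M-thm:det} has to go through with the disorder $\tilde V\tau_3$. Since $\tau_3^2=I$, every matrix that appears is a polynomial in $\tau_3$, and these commute with $A$. The conjugation identity of Lemma~\ref{M-lem:conj} therefore holds verbatim, with $-A^{-1}$ replaced by $-A$ in the quadratic term. All remaining estimates are unchanged.
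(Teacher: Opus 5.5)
Your proof is correct and follows the paper's own argument: you check \textup{[C1]} with $A=\tau_3$, $B=I$, $c_j=\Delta_0\tau_j$, $P_0=-\mu\tau_3$, invoke Remark~\ref{rem:Adisorder} to get $M_{\rm crit}=-\frac{\ell s_0}{2\pi}\tau_3$, and compute $c(h)=-\frac12(\sgn\mu'+1)$ from \eqref{eq:Sdegree}; your gap argument, which uses $|d|\to\infty$ to obtain a positive infimum, is slightly more explicit than the paper's.

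One slip in your closing paragraph, which does not affect the proof because you take the remark as given. The matrices $c_j=\Delta_0\tau_j$, $j=1,2$, are not polynomials in $\tau_3$ and do not commute with $A$. What makes the remark work is that $X=\tilde\rho Y_0\,I$ is scalar, so $e^{-X}c_je^{X}=c_j$.
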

\begin{proof}
The bulk symbols $d(\xi)\cdot\tau$ with $d=(\Delta_0\xi_1,\Delta_0\xi_2,|\xi|^2-\mu')$, $\mu'\in\{\mu,\bar\mu\}$, have eigenvalues $\pm|d(\xi)|$, which vanish nowhere when $\mu'\not=0$, so that both bulks are gapped. Since $(d_1,d_2)=\Delta_0\xi$ vanishes only at $0$, with $\det\partial_\xi(d_1,d_2)=\Delta_0^2>0$, and $d_3(\infty)>0$, \eqref{eq:Sdegree} gives $c(h)=-\frac12(\sgn\mu'+1)$, and \eqref{eq:SBDI} gives the index. Theorem~\ref{M-thm:mainA} of the main paper, in the form of Remark~\ref{rem:Adisorder} for disorder $\tilde V A$, applies since $P$ has the form recalled before \eqref{eq:Scrit}, with $A=\tau_3$, $B=I$, $c_j=\Delta_0\tau_j$ and $P_0=-\mu\tau_3$.
\end{proof}
\begin{remark}[Disorder proportional to the principal matrix]\label{rem:Adisorder}
Theorems~\ref{M-thm:det} and~\ref{M-thm:mainA} of the main paper remain valid if the scalar fluctuation $\rho\tilde V$ in \eqref{M-eq:Hcrit} of the main paper is replaced by $\rho\tilde V A$, with $A$ the matrix of [C1], provided $M_\eps$ and $M_{\rm crit}$ are replaced by $-c_\eps A$ and $-\ell s_0c_qA$. Indeed, with $\tilde\rho=\theta\rho$ and $Y=\tilde\rho Y_0$ as in section~\ref{M-sec:crit} of the main paper, set $X:=Y\,I$, which is scalar. The computation of Lemma~\ref{M-lem:conj}(i) gives $e^{-X}Aq(D)e^Xf=Aq(D)f-2A(B\nabla Y)\cdot\nabla f+A(q(D)Y)f-A|\nabla Y|_B^2f$, the term $A\,q(D)Y$ cancels the singular part $\tilde\rho\tilde V_{\rm high}A$, and the mean of the corrector energy produces $-\tilde\rho^2c_\eps A$. In Lemma~\ref{M-lem:conj}, $\cA_1$ and $\cA_2$ acquire a left factor $A$, $A^{-1}$ is replaced by $A$ in $\cA_3$ and $\cA_4$, the shift $(1-\theta^2)\rho^2M_\eps$ still commutes with $X$, the first part of $\cA_5$ and $\cA_6$ vanish since $X$ is scalar, and the second part of $\cA_5$ is $c_j(D_jX)$. Lemma~\ref{M-lem:prod}, part (ii) of Lemma~\ref{M-lem:conj} and the proof of Theorem~\ref{M-thm:det} apply verbatim, and the local size, the bad boxes and the probabilistic estimates are unchanged since they involve $Y_0$ only. The Born correction is consistent: $-\avg{(\tilde VA)(Aq(D))^{-1}(\tilde VA)}=-A\avg{\tilde Vq(D)^{-1}\tilde V}$. When $A^2=I$, the shift $-\ell s_0c_qA$ coincides with $-\ell s_0c_qA^{-1}$.
\end{remark}

\noindent Disorder thus favors the topological, weak-pairing phase, and the interface between a disordered and a clean region of a trivial superconductor carries a chiral Majorana mode for $\eps$ small.

\section{Extensions and refinements}\label{sec:Sremarks}

This section collects remarks on the hypotheses, the rates and the scope of the main results. In the remarks on the proof of Theorem~\ref{M-thm:hom} of the main paper, $S_\eps$ denotes the Fourier multiplier by $\one_{\{|\xi|\le1/\eps\}}$, the correctors are built on $v=S_\eps\psi^0$ with $\psi^0=(H^0-z)^{-1}f$, and the first corrector is $c_1=\eps^{k/2}\chi^\eps\rho v$ with $\chi=-P_k(D_y)^{-1}W$.

\begin{remark}[No coercivity or multiplier]\label{rem:J}
The proof of \eqref{eq:Shom} requires neither a uniform $H^1$ bound nor a multiplier $J$ with $J^*P_2+P_2J\ge c|\xi|^2$. The latter was used in \cite{S:BalDang24} (with $J=\pm\sigma_3$) to obtain $H^1$ bounds, which are needed only for $H^1$ corrector estimates. Such a $J$ may fail to exist: for $P_2(\xi)=-|\xi|^2(\cos2\theta\,\sigma_3+\sin2\theta\,\sigma_1)$, where $\xi=|\xi|(\cos\theta,\sin\theta)$, we have $P_2(\theta+\pi/2)=-P_2(\theta)$, which rules it out.
\end{remark}

\begin{remark}[Sharpness of the rates]\label{rem:sharp}
For $k=1$, the norm-resolvent rate $\eps^{1/2}$ in \eqref{eq:Shom} cannot be improved when $W\ne0$. For smooth $f$, the classical expansion, with one further corrector, gives
\[
\psi^\eps=\psi^0+\eps^{1/2}\bigl(\rho\chi^\eps\psi^0+\psi_1\bigr)+O(\eps)\quad\text{in }L^2,\qquad\psi_1=-(H^0-z)^{-1}\rho^3\avg{W\chi_2}\psi^0,
\]
with $\chi_2=-P_1(D_y)^{-1}(W\chi-M)$. Since $\chi$ has mean zero, $\chi^\eps\psi^0$ is asymptotically orthogonal to the non-oscillatory $\psi_1$, and two-scale convergence gives $\liminf_{\eps\to0}\eps^{-1/2}\|\psi^\eps-\psi^0\|\ge\|\rho\,\chi\,\psi^0\|_{L^2(\R^d\times\T^d)}$, which is positive unless $\chi\psi^0\equiv0$ on $\supp\rho$. The oscillatory corrector, and not $\avg{W\chi_2}$, thus fixes the norm-resolvent rate. The term $\avg{W\chi_2}$ governs instead the $O(\eps^{1/2})$ shift of eigenvalues discussed in section~\ref{sec:Sexamples}. For $k\ge2$, the first corrector has $L^2$ norm of order $\eps^{k/2}$, and the rate $\eps$ comes from the residual. For odd $k\ge3$, the improvement from $\eps^{1/2}$ to $\eps$ rests on the absence of mean terms of order $\frac12$ in the corrector hierarchy: these would arise only from potential terms of correctors of order $\frac12+\frac k2$, and there are none.
\end{remark}

\begin{remark}[Higher-order terms]\label{rem:cubic}
For odd $k$ and scalar disorder $W=V\,I$, the second-order term $M$ vanishes, by \eqref{eq:Spair}, and the leading effect comes from the cubic term $\avg{VL^{-1}VL^{-1}V}$, $L=P_k(D_y)$, relevant at amplitude $\eps^{-2/3}$. For the $2\times2$ Dirac operator this term is a multiple of the identity, so that it shifts energies but produces no mass. Indeed, $L^{-1}$ has symbol $(2\pi)^{-1}(p\cdot\sigma)/|p|^2$, and $(p\cdot\sigma)(p'\cdot\sigma)=p\cdot p'+i(p\times p')\sigma_3$, so that the $\sigma_3$-component is $i$ times a sum that is real by the symmetry $(p,p')\mapsto(-p,-p')$ of the real field $V$, while the cubic term is Hermitian. In larger non-Clifford systems the cubic term is generically not scalar, and whether it can drive a transition is open.
\end{remark}

\begin{remark}[Several fluctuation profiles]\label{rem:multi}
Let $H^\eps=P(D)+\eps^{-k/2}\sum_{i=1}^n\rho_i(x)W_i(x/\eps)$, with each pair $(\rho_i,W_i)$ as in \eqref{eq:SHeps}. Then \eqref{eq:Shom} holds with
\[
H^0=P(D)+\sum_{i,l}\rho_i\rho_l\,M_{il},\qquad M_{il}:=-\avg{W_i\,P_k(D_y)^{-1}W_l}.
\]
Since $P_k(D_y)^{-1}$ is symmetric on $L^2(\T^d)$, $M_{il}^*=M_{li}$, and the correction is Hermitian. In the proof of Theorem~\ref{M-thm:hom} for general $k$ in the main paper, replace $c_1$ by $\sum_l\eps^{k/2}\chi_l^\eps\rho_lv$, with $\chi_l=-P_k(D_y)^{-1}W_l$. Three terms change. The leading terms of $c_1$ combine with $(H^\eps-z)\psi^0$ into $(P-z)\psi^0+\eps^{-k/2}\sum_l\rho_lW_l^\eps(I-S_\eps)\psi^0$. The potential term of $c_1$ is $\sum_{i,l}\rho_i\rho_l(W_i\chi_l)^\eps v=\sum_{i,l}\rho_i\rho_lM_{il}v+\sum_{i,l}\rho_i\rho_l(W_i\chi_l-M_{il})^\eps v$, and the second sum consists of monomials of order $0$ with mean-zero periodic coefficients, cancelled by correctors as in the main paper. The effective term satisfies $(P-z)\psi^0+\sum_{i,l}\rho_i\rho_lM_{il}v=f-\sum_{i,l}\rho_i\rho_lM_{il}(I-S_\eps)\psi^0$.
Every other term is produced by $P(D)$ or by a potential $\eps^{-k/2}\rho_iW_i^\eps$ acting on a corrector, exactly as before, so the order-and-weight bookkeeping of the main paper applies without change. If $\rho_i\rho_l=0$ for $i\ne l$, the cross terms vanish. This also covers locally periodic fluctuations that are trigonometric polynomials in the fast variable, $W(x,y)=\sum_{0<|q|\le Q}\hat W_q(x)e^{2\pi iq\cdot y}$ with $\hat W_q\in C_b^\infty$ and $\hat W_{-q}=\hat W_q^*$: expanding $\hat W_q$ in a Hermitian basis writes $W(x,x/\eps)$ as a finite sum of real envelopes times mean-zero periodic Hermitian profiles.
\end{remark}

\begin{remark}[Quantization for first-order systems]\label{rem:quant}
Let $\gamma(\xi)=\xi_1\gamma_1+\xi_2\gamma_2$ with $\gamma_1,\gamma_2\in\Herm(n)$ be elliptic, i.e., $\gamma(e)$ is invertible for $|e|=1$. Since $\gamma(-e)=-\gamma(e)$ reverses the signature of $\gamma(e)$, which is constant on the circle, $n=2p$ is even and $\gamma(e)$ has signature $(p,p)$. Let $\cG$ be the set of $\Gamma\in\Herm(n)$ such that $\gamma(\xi)+\Gamma$ is invertible for all $\xi\in\R^2$. For $\Gamma\in\cG$, the bulk invariant \eqref{eq:Sc} of $\gamma(D)+\Gamma$ satisfies
\[
c(\Gamma)\in\tfrac p2+\Z,\qquad c(-\Gamma)=-c(\Gamma),
\]
and $c$ is locally constant on $\cG$. First-order bulk invariants are thus half-integers when $p$ is odd, and only their differences are integers. The proof writes $e^{2\pi ic(\Gamma)}$ as the determinant of the Berry holonomy of the negative spectral projector of $\gamma(Re_\theta)+\Gamma$, $e_\theta=(\cos\theta,\sin\theta)$, around a large circle. As $R\to\infty$, this projector converges to that of $\gamma(e_\theta)$, which the antipodal map $\theta\mapsto\theta+\pi$ exchanges with its complement, and the holonomy of such antipodally flipping loops has determinant $(-1)^p$. Consequently, if $n\equiv2\pmod4$ and $P=\gamma(D)+\Gamma$ with $\Gamma\in\cG$, the attainable-correction result recalled after \eqref{eq:Shom} gives a Hermitian trigonometric polynomial $W$ with $M(W)=-2\Gamma$, and Theorem~\ref{M-thm:TAI} of the main paper gives $\Ind T(H^\eps)=c(-\Gamma)-c(\Gamma)=-2c(\Gamma)$ for $\eps$ small, an odd integer. Every such first-order system thus admits a fluctuation-induced topological phase.
\end{remark}

\begin{remark}[Interfaces between two fluctuation-induced phases]\label{rem:twophase}
Let $\rho_\pm$ be smooth with $\rho_+=1$ for $x_2\ge1$, $\rho_+=0$ for $x_2\le0$, $\rho_-(x_2)=\rho_+(-x_2)$, and let $H^\eps=P+\eps^{-k/2}(\rho_+W_+^\eps+\rho_-W_-^\eps)$. By Remark~\ref{rem:multi}, the homogenized operator is $P+\rho_+^2M(W_+)+\rho_-^2M(W_-)$, since $\rho_+\rho_-=0$. If $P+M(W_\pm)$ are gapped on a common interval $(-g,g)$, the proof of Theorem~\ref{M-thm:TAI} of the main paper gives $\Ind T(H^\eps)=\BDI(P+M(W_-),P+M(W_+))$ for $\eps$ small. When $P_k$ is indefinite, the attainable-correction result recalled after \eqref{eq:Shom} realizes any pair of gapped constant corrections, and hence any bulk-difference invariant between two fluctuation-induced phases.
\end{remark}

\begin{remark}[Scope of the gap-closing theorem]\label{rem:gapscope}
Theorem~\ref{M-thm:gapclose} of the main paper concerns $H^\eps_a=P(D)+a_\eps\rho(x_2)W(x/\eps)$ and its lower bulk $H^\eps_{a,B}=P(D)+a_\eps W(x/\eps)$, with $W\in L^\infty(\R^2;\Herm(N))$, a continuous amplitude $\eps\mapsto a_\eps>0$, and $0\notin\spec P$: if $0\notin\spec H^{\eps'}_{a,B}\cup\spec H^{\eps''}_{a,B}$ and $\Ind T(H^{\eps'}_a)\ne\Ind T(H^{\eps''}_a)$, then $0\in\spec H^\eps_{a,B}$ for some $\eps\in(\eps',\eps'')$. The principle is that an integer index which is locally constant along norm-resolvent continuous paths with gapped bulks can only change if a bulk gap closes. The non-trivial input is that the dilated family $\eps^k\,\mathcal U_\eps H^\eps_a\,\mathcal U_\eps^{-1}$, with $(\mathcal U_\eps u)(y)=\eps u(\eps y)$, is norm-resolvent continuous, although $\eps\mapsto H^\eps_a$ is not.
The argument applies to every order $k$, every continuous amplitude $a_\eps$, and every bounded $W$, periodic, quasi-periodic or random, configuration by configuration. It also allows coefficients of $P$ that depend on $x_2$ only and are constant for $|x_2|\ge X_0$, since these survive dilation by the same estimate as $\rho$. It fails for generic $x$-dependent coefficients, for which $\eps\mapsto c(\eps\,\cdot)$ is not continuous in $L^\infty$. The result does not locate the critical value of $\eps$ nor identify the mechanism: for periodic $W$, a Bloch band reaches $0$, while for random $W$ the spectrum at $0$ may be carried by rare regions (Theorem~\ref{M-thm:pseudogap} of the main paper). Gaps may also close without any change of the index. For random operators with a mobility gap, the analogous statement, that a change of Chern number along a disorder parameter forces delocalization, is proved in \cite{S:PRR26}. In a pseudo-gap, the hypothesis $0\notin\spec H^\eps_{a,B}$ fails at small $\eps$, and the relevant index is the deformed-cut index of the main paper. The corresponding statement along $\eps$ is open.
\end{remark}

\begin{remark}[Alloys on $\R^d$]\label{rem:dimalloy}
Theorem~\ref{M-thm:unifhom} of the main paper, which compares $P+\eps^{-k/2}\rho W_\omega(x/\eps)$ with $P+\rho^2M^\eps_\omega$ for a smooth field of local effective matrices $M^\eps_\omega$ within $C\delta_k(\eps)$ of $\cK$, with $\delta_1=\eps^{1/4}$, $\delta_2=\eps^{1/3}\log(1/\eps)$ and $\delta_k=\eps^{1/3}$ for $k\ge3$, holds with the same rates for $\Z^d$-alloys on $\R^d$, with \eqref{eq:Smoments} imposed on $\R^d$. The dimension enters the proof only through the decay $\avg y^{-d-s}$ of the lattice sums of Lemma~\ref{M-lem:latsum}, the mesoscale kernel $K_{\varpi}(y)=\varpi^d\tilde K(\varpi y)$ of Lemma~\ref{M-lem:meso}, and the normalization of Lemma~\ref{M-lem:Kset}. Lemma~\ref{M-lem:mult} is dimension-free.
\end{remark}

\begin{remark}[Single sites with only $\int w=0$]\label{rem:log}
For $k=1$ and single sites with only $\int w=0$, so that \eqref{eq:Smoments} fails at order one, the single-site corrector $\chi_w=\mathcal F^{-1}[G\hat w]$ decays like $\Omega(\hat y)|y|^{-2}$ with an angular profile $\Omega$ of mean zero. For signs $\omega_j$ aligned with $\Omega$, the correctors $\sum_j\omega_j\chi_w(\cdot-j)$ grow like the logarithm of the number of sites involved, so that uniformly bounded correctors need not exist and \eqref{eq:Smoments} cannot simply be dropped.
\end{remark}

\begin{remark}[Perturbations away from the cut]\label{rem:awaycut}
Let $H_\circ=\gamma(D)+V_\circ$ be a first-order reference operator, e.g., the Dirac domain wall $D\cdot\sigma+m(x_2)\sigma_3$ with $m=\pm m_0$ for $\pm x_2\ge1$, and let $V\in L^\infty(\R^2;\Herm(N))$ with $\|V\|_\infty\le K_V$. For the Dirac operator, $V$ may be an arbitrary bounded change $\mu(x)\sigma_3$ of the mass, an electric potential, or a vector potential $a\cdot\sigma$. It need not be small, and $H_\circ+V$ need not be gapped anywhere. Since [R1] holds with $a=a_\gamma$ independently of the potential (Lemma~\ref{M-lem:CT}(i)), Theorem~\ref{M-thm:robust} gives $\dd_0$, depending only on $(\gamma,R_0,\varphi,K_V)$, such that $\tilde T(H_\circ+V)$ is Fredholm with index $\Ind T(H_\circ)$ along every cut that avoids $\supp V$ at distance $\dd_0$. Corollary~\ref{M-cor:farV} of the main paper is the case $H_\circ=H^\eps$ of this statement, and Corollary~\ref{M-cor:puddles} is another instance. The index is therefore stable under local variations of the location of the cut, as long as the cut stays at distance $\dd_0$ from $\supp V$. For straight cuts $\Pi_{x_0}=\one_{\{x_1\ge x_0\}}$, it does not depend on $x_0$ within any vertical band of width larger than $2\dd_0$ free of $V$. The condition is needed along the whole cut, far into both bulks: a perturbation that reaches the cut changes the bulks it crosses, and puddles meeting the cut carry states across it (Figure~\ref{M-fig:cuts}(b)). For relatively compact perturbations $V_c$, the index is independent of $x_0$ without restriction: $T_{x_0}(H+V_c)-T_{x_0}(H)$ is compact (Corollary~\ref{M-cor:compact}), and $\Ind T_{x_0}(H)$ does not depend on $x_0$, by Step~1(b) of the proof of Theorem~\ref{M-thm:robust}, since $\Pi_{x_0}-\Pi_0$ is supported in a vertical strip, bounded in every horizontal window. For second-order operators, $\dd_0$ depends in addition on the $H^1$ resolvent bound of Lemma~\ref{M-lem:CT}(ii).
\end{remark}

\section{Perspectives}
\label{sec:Sdiscussion}

We conclude with three directions suggested by the results.

\paragraph{Interface current.} The index of the main paper is computed along a deformed cut that depends on the configuration, whereas the physical interface conductance $2\pi\sigma_I(H)={\rm Tr}\,2\pi i[H,\Pi_s]\varphi'(H)$ uses a fixed cut. In a pseudo-gap, this operator is not trace-class, since infinitely many rare states meet any fixed cut. Along the deformed cut, the contribution of each rare region is small, of order $(a\dd)^{-n}$, but these contributions are not summable in general, because the distance from the cut to the rare regions does not grow along the cut. Since individual eigenstates carry no current and a single rare region does not change the index, we expect regularized traces to converge to the index once the interactions between rare regions are summable, e.g., under exponential localization of the rare states. Without such an assumption, the deformed-cut index is the index-theoretic definition of the quantized conductance. A statement along a fixed cut would be a mixed spectral and mobility-gap bulk--edge correspondence, in the spirit of \cite{S:EGS05,S:CombesGerminet05,S:BSS23}.

\paragraph{Localization.} Localization in the pseudo-gap would identify it with a mobility gap, as expected for an insulator. For monotone critical alloys with couplings of bounded density, a Wegner estimate follows from spectral averaging \cite{S:CombesHislop94}, and good regions provide decay of the resolvent at real energies at the rate $c/\sqrt{\log(1/\eps)}$. Since bad unit boxes have probability $O(\eps^{p_0})$, the polynomial losses of a multiscale analysis \cite{S:GK01} should be absorbed by the decay across a single good box. Combined with \cite{S:PRR26}, localization would also show that delocalized states cross the Fermi level at some intermediate $\eps$, where the index jumps from $0$ to $\BDI$. This is the topological Anderson transition of \cite{S:GWATB09}.

\paragraph{Higher dimensions and other settings.} Our proofs of Theorems~\ref{M-thm:TAI} and~\ref{M-thm:gapclose} use only three properties of the index: it is defined for elliptic interface operators with gapped bulks, it is invariant under gapped deformations, geometry-preserving conjugations and dilations, and it is given by \eqref{eq:SBDI} for constant-coefficient bulks. In even dimension $d$, a Dirac phase in the $d-1$ interface variables replaces $\Pi$ \cite{S:ProdanSchulzBaldes16}, and the first two properties should carry over; what is missing is a bulk-difference formula. Among other extensions, subcritical second-order alloys only lack an $H^1$ resolvent bound for the reference operator; Gaussian fields require Gaussian tails and hypercontractivity \cite{S:Janson97} in place of Hoeffding's and the Hanson--Wright inequalities; matrix-valued disorder at criticality requires the renormalization of the commutators produced by the exponential transform, possibly by paracontrolled methods \cite{S:AllezChouk15,S:Labbe19}; and, in the supercritical regime $d<k$, the homogenized limit is itself a random operator \cite{S:Bal10,S:BalGu15}.

\section{Index of notation}\label{sec:Snotation}

This section lists the notation of the main paper and of this supplementary material, with the place where each symbol is defined or first used. References without a prefix S are to the main paper; ``\S2(iv)'' refers to block (iv) of section~\ref{M-sec:setting}. Symbols marked $^\dagger$ have other meanings, collected in Table~\ref{tab:reuse}. Bound variables, generic constants $C,c,C_n$, and objects used only inside one proof and introduced there are listed only when they are used later or clash with a global symbol.

{\small
\renewcommand{\arraystretch}{1.12}
\begin{longtable}{@{}p{3.3cm}p{8.9cm}p{3.3cm}@{}}
\caption{Notation, by topic.}\label{tab:notation}\\
\toprule
Symbol & Meaning & Defined in\\
\midrule
\endfirsthead
\toprule
Symbol & Meaning & Defined in\\
\midrule
\endhead
\bottomrule
\endfoot
\multicolumn{3}{@{}l}{\emph{General conventions}}\\[2pt]
$D=-i\nabla$, $D_y$, $D^\alpha$ & derivatives in $x$, in the fast variable $y$, multi-index powers & \S2, Notation\\
$x$, $y=x/\eps$ & physical and fast variables; $f^\eps(x)=f(x/\eps)$ & \S2, Notation\\
$\avg\xi$; $\avg f$; $\ip\cdot\cdot$ & $(1+|\xi|^2)^{1/2}$; $\int_{\T^d}f\,dy$; $L^2$ inner product & \S2, Notation\\
$\T^d$, $\Herm(N)$ & torus $\R^d/\Z^d$; Hermitian $N\times N$ matrices with $\tr(AB)$ & \S2, Notation\\
$\hat f_q$ & Fourier coefficient $\int_{\T^d}fe^{-2\pi iq\cdot y}dy$, $q\in\Z^d$ & \S2, Notation\\
$\hat a(p)$ & $\int_{\R^2}a(y)e^{-2\pi ip\cdot y}dy$ (single sites, multipliers) & \S2(iv), \S\ref{M-sec:random}\\
$\hat f(\xi)$ & $\int f(x)e^{-ix\cdot\xi}dx$ (only in \S\ref{M-sec:crit} and App.~\ref{M-app:proba}) & \S\ref{M-sec:crit}\\
$R(z)$ & resolvent $(H-z)^{-1}$, superscripts as for $H$ & \S2, Notation\\
$\sigma_0,\dots,\sigma_3$; $\sigma$ & Pauli matrices; $\sigma=(\sigma_1,\sigma_2)$ & \S2, Notation\\
$\one_\Omega$, $\cU_r$ & indicator; $r$-neighbourhood $\{x:\dist(x,\cU)\le r\}$ of $\cU$ & \S2, Notation\\
$Q$, $c_Q$, $Q^*$, $\psi_Q$ & unit box $m+[0,1)^2$, its center, $m+[-1,2]^2$, partition of unity & \S2, Notation; \S\ref{M-sec:crit}\\
$e^\perp$, $\nabla^\perp$, $\hat q$ & $(-e_2,e_1)$, $(-\partial_2,\partial_1)$, $q/|q|$ & \S\ref{M-sec:num}, \S\ref{M-sec:univ}\\
$H^s$, $C^{k,\alpha}$, $C^\infty_b$ & Sobolev, H\"older, smooth bounded functions & standard\\
$C^s$, $B^s_{p,r}$ & H\"older--Zygmund $B^s_{\infty,\infty}$ and Besov spaces & \S\ref{M-sec:crit}\\
$\Delta_j$, $\phi_j$, $k_j$ & Littlewood--Paley blocks, their symbols and kernels & \S\ref{M-sec:crit}\\
$T_fg$, $R(f,g)$, $S_{j-1}$ & Bony paraproduct, resonant term, low-frequency cutoff & Lemma~\ref{M-lem:prod}\\
$\spec$, $\Ind$, $\mathfrak q$ & spectrum, Fredholm index, quotient map onto the Calkin algebra & Lemma~\ref{M-lem:fred}\\[4pt]
\multicolumn{3}{@{}l}{\emph{Unperturbed operators, gradings and invariants}}\\[2pt]
$d$, $N$, $k$ & dimension, number of components, order of $P$ & \S2(i)\\
$P(D)$, $A_\alpha$, $P_k$, $P_j$ & operator, coefficients, principal symbol, homogeneous part of degree $j$ & \S2(i); proof of Thm~\ref{M-thm:gapclose}\\
\textup{[P1]}, \textup{[P2]}, $c_P$ & self-adjointness, ellipticity, ellipticity constant & \S2(i)\\
$\Gamma^\dagger$, $\Gamma_\pm$ & constant zeroth-order (mass) part of $P$; constant bulk potentials & \S2(i); \eqref{M-eq:ch}\\
$\gamma(\xi)$, $\gamma_j$, $\|\gamma\|$ & first-order symbol $\xi_1\gamma_1+\xi_2\gamma_2$, its matrices and norm & \S2(vi); Lemma~\ref{M-lem:CT}(i)\\
$J$, $s^\dagger$, $P^{\rm o}$, $\Upsilon$, $q(\xi)^\dagger$ & grading, sign, $J$-odd part, $J$-even curvature, $\sum b_{ij}\xi_i\xi_j$ & \S2(iii), classes (a)--(b)\\
$H_\pm$ & bulks $P$ and $P+M$ of $H^0$ & \S2(iii)\\
\textup{[G]}, $g^\dagger$ & common-gap condition, gap half-width & \eqref{M-eq:gap}\\
$\varphi^\dagger$, $U(H)$, $S(H)^\dagger$ & energy switch, $e^{2\pi i\varphi(H)}$, $U(H)-I$ & \S2(iii)\\
$\Pi$, $T(H)$ & $\one_{\{x_1\ge0\}}$, Toeplitz operator & \eqref{M-eq:T}\\
$\Pi_s$, $\sigma_I(H)$ & smooth spatial switch, interface current & \S2(iii)\\
$c(h)$, $\Omega_h$, $\Pi_h$ & bulk invariant, curvature form, negative spectral projector of $h$ & \eqref{M-eq:ch}\\
$\BDI$ & bulk-difference invariant $c(P+\Gamma_-)-c(P+\Gamma_+)$ & \eqref{M-eq:ch}\\
$d(\xi)^\dagger$, $f^\dagger$, $n^\dagger$ & two-band symbol $d\cdot\sigma+f\sigma_0$, unit vector $d/|d|$ & Lemma~\ref{M-lem:BDIc}(ii); \eqref{M-eq:degree}\\
$c(m\sigma_3)$ & $c(D\cdot\sigma+m\sigma_3)=\frac12\sgn m$ & Lemma~\ref{M-lem:BDIc}(ii)\\
$D^a=D+a$, $P(D^a)$ & Peierls substitution & \S2(iii); Lemma~\ref{M-lem:peierls}\\[4pt]
\multicolumn{3}{@{}l}{\emph{Periodic fluctuations and homogenization}}\\[2pt]
$\rho^\dagger$ & envelope; in interface statements $\rho=\rho(x_2)$, $=0$ for $x_2\ge1$, $=1$ for $x_2\le-1$ & \S2(i); \S2(iii)\\
$\rho_{\alpha,c}$, $\rho_{\alpha,d}$ & independent envelopes of the individual terms & \S2(i)\\
$\mathrm{Sym}[cD^\alpha]$ & $\frac12(cD^\alpha+D^\alpha c^*)$ & \S2(i)\\
$c_\alpha$, $d_\alpha$, $\alpha_0$ & fluctuation coefficients, H\"older exponent & \eqref{M-eq:Heps}\\
$\mathcal V^\eps$, $\mathcal R^\eps$, $e^\dagger$, $f^\dagger$ & fluctuation, remainder and its coefficients & \eqref{M-eq:Heps}\\
$H^\eps$ & perturbed operator (\eqref{M-eq:Heps}; for alloys \S2(iv); critical \eqref{M-eq:Hcrit}) & \eqref{M-eq:Heps}\\
$W^\dagger$, $d_0^{\rm h}$ & effective zeroth-order profile, Hermitian part of $d_0$ & \eqref{M-eq:Weff}\\
$M^\dagger$, $M(W)$, $G(q)$, $H^0$ & correction, as a function of $W$, $-P_k(2\pi q)^{-1}$, homogenized operator & \eqref{M-eq:M}\\
$G_1(e)$, $\Lambda_+$ & $-P_k(e)^{-1}$ for $|e|=1$, half-lattice & \eqref{M-eq:pair}\\
$r_k$ & rates $r_1=\frac12$, $r_k=1$ ($k\ge2$) & \S2(i)\\
$A^\dagger$, $V^\dagger$, $a_{ij}$ & vector and electric potentials; coefficients of $P_2$ & \eqref{M-eq:peierls}; Lemma~\ref{M-lem:peierls}\\
$S_\eps^\dagger$ & Fourier multiplier by $\one_{\{|\xi|\le1/\eps\}}$ & \eqref{M-eq:smooth}\\
$\psi^0$, $\psi_{\rm app}$, $v^\dagger$ & homogenized solution, approximate solution, $S_\eps\psi^0$ & \S\ref{M-sec:hom}\\
$\chi^\dagger$, $\chi_2^\dagger$, $L^\dagger$ & first and second cell correctors; $L=\gamma(D_y)$ & proof of Thm~\ref{M-thm:hom}\\
$\mathfrak m$, $a^\dagger$, $w(\mathfrak m)^\dagger$ & monomial $\eps^ab^\eps Qv$, its order and weight & proof of Thm~\ref{M-thm:hom}\\
$c_1^\dagger$, $c_{\mathfrak m}$, $\mathcal T_y$ & first corrector, corrector of $\mathfrak m$, $\sum\mathrm{Sym}[c_\alpha D_y^\alpha]$ & proof of Thm~\ref{M-thm:hom}\\
$G^\perp$, $\mathcal S^\mp_G$ & $\{H:\tr(GH)=0\}$, single-direction cones & \S\ref{M-sec:univ}\\[4pt]
\multicolumn{3}{@{}l}{\emph{Examples}}\\[2pt]
$\psi^\dagger$, $\zeta$ & stream function and gauge part, $A=\nabla^\perp\psi+\nabla\zeta$ & \S\ref{M-sec:num}\\
$\mathcal Q(e)$ & $i[\gamma(e)^{-1},\gamma(e^\perp)]$ & \S\ref{M-sec:num}\\
$v^\dagger$, $\phi_v$ & real mean-zero periodic profile, $-\Delta\phi_v=v$ & \S\ref{M-sec:num}\\
$a^\dagger$, $b^\dagger$, $\theta_0$ & amplitudes and phase of $V=a\cos(2\pi y_1)$, $A_2=b\cos(2\pi y_1+\theta_0)$ & after Prop.~\ref{M-prop:EM}; \eqref{eq:SMex1}\\
$m$, $\beta$; $b^\dagger$, $c^\dagger$ & regularized Dirac mass and regularization; BHZ coefficients & \S\ref{M-sec:num}\\
$V_a$, $\phi_a$, $\mathcal C_{ab}$, $\mathcal E$ & components of $W$, $-\Delta\phi_a=V_a$, gradient correlations, $\mathcal C_{00}$ & before Prop.~\ref{M-prop:regDirac}\\
$M_0$, $M_\perp$, $M_3$, $m_{\rm eff}$ & components of $M$; effective mass $m+M_3-\beta|M_\perp|^2$ & Prop.~\ref{M-prop:regIndex}\\
$\bar m$, $e_0$ & renormalized BHZ mass and energy shift & before Cor.~\ref{M-cor:BHZ}\\
$\tau^\dagger$ & magnetic coupling in $w=v\sigma_0+\tau\nabla^\perp v\cdot\sigma$ & Cor.~\ref{M-cor:bernoulli}\\
$H_\circ^\dagger$, $m(x_2)$, $m_0$, $\cP_n$, $E_{\frac12}$ & Dirac domain wall, puddles, energy with $\varphi=\frac12$ & before Cor.~\ref{M-cor:puddles}\\
$c_0,c_1^\dagger,c_3$, $s^\dagger$, $\alpha^\dagger$ & direction and strength of matrix-valued $W$; slice angle & \S\ref{sec:Sexamples}\\
$c_{\min}$, $c_{\max}$; $m_1$, $s^2$ & extrema of $\sum_\ell|\hat v(q+\ell)|^2$; mean and variance of couplings & Props.~\ref{prop:bernoulli},~\ref{prop:uniform}\\
$v_\delta$, $\varphi_\delta^\dagger$, $\bar\varphi_\delta$, $u$, $a(x)^\dagger$ & flat single sites and their kernels & Lemma~\ref{lem:flat}\\
$\bar\mu^\dagger$ & target mass in Prop.~\ref{prop:highex}(b); renormalized chemical potential in \eqref{eq:Spwave} & \S\ref{sec:Sexamples}\\
$\mu^\dagger$, $\Delta_0$, $\tau_j^\dagger$ & chemical potential, pairing, particle-hole Pauli matrices & \eqref{eq:Spwave}\\
$M_{il}$; $\rho_\pm$, $W_\pm$ & cross corrections; two-phase envelopes and profiles & Rems.~\ref{rem:multi},~\ref{rem:twophase}\\
$n=2p$, $\cG$ & first-order quantization & Rem.~\ref{rem:quant}\\
$\mathsf U$, $R^\dagger$, $\Theta^\dagger$; $W_{\rm e}$, $W_{\rm o}$, $M_{\rm ee}$ etc. & symmetries; graded parts of $W$ and $M$ & \S\ref{sec:Stools-hom}\\[4pt]
\multicolumn{3}{@{}l}{\emph{Gap closing, defects, cuts and robustness}}\\[2pt]
$a_\eps$, $H^\eps_a$, $H^\eps_{a,B}$ & amplitude, interface and lower-bulk operators & before Thm~\ref{M-thm:gapclose}\\
$\mathcal U_\eps^\dagger$, $\tilde H(\eps)$ & dilation $(\mathcal U_\eps u)(y)=\eps u(\eps y)$, dilated family & proof of Thm~\ref{M-thm:gapclose}\\
$V_c$ & compactly supported defect & Cor.~\ref{M-cor:compact}\\
$(\cL,\cR)$, $\tilde\Pi$, $\tilde T(H)$ & cut, $\one_\cR$, Toeplitz operator along the cut & Def.~\ref{M-def:cut}\\
$\dd$, $r^\dagger$, $\nu_\dd$, $\nu_r$ & distance of the cut, surgery radius, their exponents & \eqref{M-eq:exponents}\\
$p_0$ & exponent: unit box meets $\cU_\omega$ with probability $O(\eps^{p_0})$; $p_0=1$ in \S2(vi), $p_0=\min(c\delta_*^2/\ell,1)$ in \S\ref{M-sec:crit} & \eqref{M-eq:exponents}; Lemma~\ref{M-lem:probboxA}\\
$\cU^\dagger$, $\cU_\omega$ & bad set; union of bad mesoboxes (\S2(vi), \S\ref{M-sec:pseudo}) or bad unit boxes (\S2(vii), \S\ref{M-sec:crit}) & \S2(vi)--(vii)\\
admissible cut & cut avoiding $(\cU_\omega)_{3r}$ at distance $\dd/2$ & \S2(vi)\\
$H_\circ^\dagger$, $V_\circ$, \textup{[R1]}, \textup{[R2]} & reference operator, its potential, decay and interface conditions & before Thm~\ref{M-thm:robust}\\
$C_0$, $a^\dagger$, $R_0$ & constants in \textup{[R1]} & before Thm~\ref{M-thm:robust}\\
$V^\dagger$, $K_V$, $\cO_\cL$, $\cO_\cR$, $\vartheta_1$, $\vartheta$ & defect, its bound, its two parts, robustness errors & Thm~\ref{M-thm:robust} and proof\\
$U_\circ$, $S_\circ$, $U_{\rm rel}$, $\cJ$, $\dloc$ & objects of the factorization \eqref{M-eq:factor} & proof of Thm~\ref{M-thm:robust}\\
$\phi^\dagger$, $b^\dagger$, $B_b$, $a_\gamma$, $\Theta^\dagger$ & Combes--Thomas weight, exponent, conjugation term, rates; $H^1$ bound & Lemma~\ref{M-lem:fred}; Lemma~\ref{M-lem:CT}\\
$B^\dagger$, $\mathcal C_i$, $\mathcal C(u)$, $p'$, $p''$ & coarse boxes, removed clusters, cluster of a bad unit box, probabilities & before and in the proof of Lemma~\ref{M-lem:cuts}\\
$\theta^\dagger=\theta_{\cU,r}$, $M_\star$, $H_\circ$, $H_{\circ,B}$ & surgery cutoff, inserted matrix, reference operator, its bulk & \eqref{M-eq:Hg}\\
$\tilde\rho=\theta\rho$, $V_{\rm s}$, $M_\circ$ & cut-off envelope, background potential, glued matrix & proof of Thm~\ref{M-thm:mainB}; Rem.~\ref{M-rem:Vs}\\[4pt]
\multicolumn{3}{@{}l}{\emph{Random alloys and mesoscale averages}}\\[2pt]
$W_\omega$, $\omega_j$, $w^\dagger$ & alloy, couplings, single site & \S2(iv)\\
$\kappa_\pm$, $\mu_0$, $\kappa_0$ & coupling bounds, law of couplings, $[0,\kappa_0]\subset\supp\mu_0$ & \S2(iv); Thm~\ref{M-thm:pseudogap}\\
$H^\eps_\omega$, $H^\eps_{\omega,B}$ & alloy interface operator, its lower bulk & \S2(iv)\\
$\mathbb K(q)$ & periodized single-site Born kernel, $q\in\T^2$ & \eqref{M-eq:Kq}\\
$\hat\omega_q$, $\mathfrak s_\omega$ & discrete transform, structure factor & Lemma~\ref{M-lem:Kset}\\
$\cK[\mu_1,\mu_2]$, $\cK^\dagger$ & sets of local effective matrices, $\cK=\cK[\kappa_-^2,\kappa_+^2]$ & \S2(iv)\\
$\delta_k(\eps)$ & rates $\eps^{1/4}$, $\eps^{1/3}\log(1/\eps)$, $\eps^{1/3}$ & \S2(iv)\\
(A), (B), $s_\pm$, $M^{\rm e}$, $M^{\rm o}$, $\bar M$, $\bar r$ & gap-preserving conditions, signs, $J$-even/odd parts, reference matrix, margin & \S2(iv)\\
$M^\eps_\omega$, $M_\omega$ & local effective matrix in $x$ and in $y$, $M^\eps_\omega(x)=M_\omega(x/\eps)$ & Thm~\ref{M-thm:unifhom}; \S\ref{M-sec:random}\\
$\tilde K$, $K_\varpi$, $K_j$ & mesoscale kernel, $\varpi^2\tilde K(\varpi\cdot)$, $K_\varpi(\cdot-j)$ & \S\ref{M-sec:random}; \S\ref{M-sec:pseudo}\\
$\varpi$, $\varpi_0$ & mesoscale, low-frequency cutoff of the quadratic corrector & \S\ref{M-sec:random}; Lemma~\ref{M-lem:mult}\\
$\chi_w$, $\chi^\dagger$, $\Xi$, $A^\dagger$ & single-site corrector, alloy corrector, correction density $W\chi$, its mesoscale average $K_\varpi*\Xi$; $M_\omega=\frac12(A+A^*)$ & \eqref{M-eq:Mloc}\\
$\mu(y)^\dagger$, $k_0(m)$, $a_j^\dagger$ & local mass $\sum_jK_j\omega_j^2$, lattice kernel, $K_j^{1/2}\omega_j$ & Lemma~\ref{M-lem:meso} and proof\\
$\cS^s$, degree & symbol class vanishing to order $s$ at $0$; degree of a linear coefficient & \S\ref{M-sec:random}; proof of Thm~\ref{M-thm:unifhom}\\
$m_\nu$, $B_\nu^\dagger$ & multipliers with $D_y^\nu\chi_2=-m_\nu(D_y)\Xi$, and their kernel bounds & Lemma~\ref{M-lem:mult}\\
$\chi_2^\dagger$, $c_{\chi_2}$ & quadratic corrector and its contribution to $\psi_{\rm app}$ & proof of Thm~\ref{M-thm:unifhom}\\
$M_{\rm Born}$, \textup{[BG]} & Born matrix, Born-gap condition & \eqref{M-eq:Mborn}; \S2(vi)\\
$v^\dagger$, $m_1$, $\tilde\omega_j$, $F(y)^\dagger$ & ${\rm Var}\,\omega$, $\EE\omega$, centered couplings, quadratic form & Lemma~\ref{M-lem:mean}\\
$Q'$, $M^{\rm tr}_{Q'}$ & mesobox of side $\eps\varpi^{-1}$, truncated local effective matrix & \S\ref{M-sec:pseudo}, bad mesoboxes\\[4pt]
\multicolumn{3}{@{}l}{\emph{Random fields and pseudo-gaps}}\\[2pt]
$W(y,\omega)$, $H^\eps_B$, $\Sigma_\eps$ & stationary field, bulk operator, almost-sure spectrum & \S2(v)\\
$W_*$, $W_t$ & charged pattern, path of patterns & Thm~\ref{M-thm:pseudogap}\\
$H^0_B$ & deterministic reference bulk of a pseudo-gap & Def.~\ref{M-def:pseudo}\\
$s(q)^\dagger$, $G\diamond s$ & spectral density, contraction & \S\ref{M-sec:random}\\
$W_{\rm per}$, $H^\eps_{\lambda'}$, $\omega_*$ & periodic part $\sum_jw(\cdot-j)-\int w$ of the lattice sum, comparison operator, constant coupling & proof of Thm~\ref{M-thm:pseudogap}(iii); Lemma~\ref{lem:globalsize}\\[4pt]
\multicolumn{3}{@{}l}{\emph{Criticality}}\\[2pt]
\textup{[C1]--[C3]} & assumptions at criticality & \S2(vii)\\
$A^\dagger$, $q(\xi)^\dagger=\xi\cdot B\xi$, $B^\dagger$ & principal matrix, quadratic form, its matrix & \textup{[C1]}\\
$P_1$, $c_j^\dagger$, $P_0$ & first- and zeroth-order parts in \textup{[C1]} & \textup{[C1]}\\
$s_0$, $\ell^\dagger$, $\lambda_\eps$, $\tilde V$ & $\EE\omega^2(\int w)^2$, disorder strength, $(\ell/\log(1/\eps))^{1/2}$, critical potential & \eqref{M-eq:Hcrit}\\
$c_q$, $M_{\rm crit}$ & angular constant, renormalized correction $-\ell s_0c_qA^{-1}$ & \S2(vii)\\
$\chi_0^\dagger$, $\tilde V_{\rm low}$, $\tilde V_{\rm high}$, $Y_0$ & cutoff, low/high-frequency parts, corrector $-q(D)^{-1}\tilde V_{\rm high}$ & \S\ref{M-sec:crit}, local size\\
$|v|_B$, $Y_\eps$, $c_\eps$, $\cW^\dagger$, $M_\eps$ & $(v\cdot Bv)^{1/2}$, cell $\eps[0,1)^2$, renormalization constant, fluctuation of $|\nabla Y_0|_B^2$, $-c_\eps A^{-1}$ & \eqref{M-eq:ceps}\\
$\kappa^\dagger$, $\delta_Q$, $\delta$ & regularity loss, local size, its supremum & \eqref{M-eq:deltaQ}\\
$H_\theta$, $\delta_\theta$, $N_\theta$, $\delta_*$, $R^\dagger$ & cut-off operator, local sizes and gradient bound near $\supp\theta$, threshold, bound on $|\Re z|$ & Thm~\ref{M-thm:det}\\
$\tilde\rho$, $Y$, $X^\dagger$, $\cA_i$, $C_0'$ & $\theta\rho$, $\tilde\rho Y_0$, $A^{-1}Y$, remainder terms, their constant & Lemma~\ref{M-lem:conj}\\
$R_M$, $\mathcal K^\dagger$, $C_1'$, $\tilde R$ & $(P+\rho^2M_\eps-z)^{-1}$, $\cA R_M$, bound on $\|\mathcal K\|/\delta_\theta$, conjugated resolvent & proof of Thm~\ref{M-thm:det}\\
$\delta_Q^{\rm tr}$, $N_Q^{\rm tr}$, $C_1$ & truncated $\delta_Q$ and $\|\nabla Y_0\|_{L^\infty(Q^*)}$, gradient threshold & Lemma~\ref{M-lem:probboxA}\\[4pt]
\multicolumn{3}{@{}l}{\emph{Probabilistic estimates (appendix~\ref{M-app:proba})}}\\[2pt]
$\mathsf S$, $\mathsf S_n$, $w_n$ & synthesis operator, its smoothed versions & \eqref{M-eq:synth}\\
$\tau_0,\tau_1,\tau_{\rm low}^\dagger$, $K_0,K_1,K_{\rm low}$, $c_\chi$ & multipliers, their kernels, radius where $\tau_0,\tau_1$ vanish & App.~\ref{M-app:proba}\\
$Z$ & $\nabla Y_0$ & App.~\ref{M-app:proba}\\
$\sigma^{2\dagger}$, $\Lambda^\dagger$, $\mathcal X$, $M_\Lambda$, $\mathcal E_{\rm HW}$ & variance proxy, test measure, quadratic kernel, multiplication, Hanson--Wright exponent & Lemma~\ref{M-lem:conc}\\
$\mathsf A$, $\mathsf A_{j,y}$ & matrices of quadratic forms in the couplings (Hanson--Wright) & Lemmas~\ref{M-lem:mean}, \ref{M-lem:conc}, \ref{M-lem:blocks}\\
$\Lambda_{j,y}$, $\Psi_j$, $\mathcal N_j$ & localized dyadic test functions, their envelopes & Lemma~\ref{M-lem:blocks}\\
$\delta_0$, $C_\Theta$, $\iota$ & mesh constant, gradient constant, sign of the weight exponent & Lemma~\ref{M-lem:sampling}\\
$\Gamma_j^\dagger$, $\gamma_0$, $c_1$ & sampling lattices, exponent of the dyadic sums, threshold constant & proof of Lemma~\ref{M-lem:prob}\\
$\cW_{\rm r}$, $\cW_{\rm d}$, $g_j^\dagger$ & random and deterministic parts of $\cW$, dyadic blocks & proof of Lemma~\ref{M-lem:prob}\\
$J^\dagger$, $Z_\le$, $Z_>$; $F^{\rm tr}$ & $\lceil\log_2(1/\eps)\rceil$, low- and high-frequency parts of $Z$; truncated field & proof of Lemma~\ref{M-lem:probboxA}\\
\end{longtable}
}

\noindent Table~\ref{tab:reuse} lists the letters used with several meanings. Within a section the meaning is fixed, except where noted.

{\small
\renewcommand{\arraystretch}{1.12}
\begin{longtable}{@{}p{1.8cm}p{13.7cm}@{}}
\caption{Symbols with several meanings.}\label{tab:reuse}\\
\toprule
Symbol & Meanings (where)\\
\midrule
\endfirsthead
\toprule
Symbol & Meanings (where)\\
\midrule
\endhead
\bottomrule
\endfoot
$A$ & coefficients $A_\alpha$ (\S2(i)); vector potential (\S2(i), \S\ref{M-sec:num}, \S\ref{M-sec:univ}); principal matrix in \textup{[C1]} (\S2(vii), Cor.~\ref{M-cor:BHZ}, \S\ref{M-sec:crit}, App.~\ref{M-app:proba}); $K_\varpi*\Xi$ (\S\ref{M-sec:random}); matrices $A_a$ in the bilinear identity (\S\ref{M-sec:num}); bounded sets $A_\pm$, $A_0$ (\S\ref{M-sec:topo}, \S\ref{M-sec:pseudo})\\
$a$ & Peierls vector potential (\S2(iii)); decay rate in \textup{[R1]} (\S\ref{M-sec:pseudo}); order of a monomial (\S\ref{M-sec:hom}); symbol in $\cS^s$ (Lemma~\ref{M-lem:latsum}); amplitude of $V$ (\S\ref{M-sec:num}, \S\ref{sec:Sexamples}); point of $\supp\mu_0$ (Lemma~\ref{lem:globalsize}); exponent (Lemma~\ref{M-lem:sampling}); autocorrelation $a(x)$ (Lemma~\ref{lem:flat}); also $a_\eps$, $a_j$, $a_\gamma$, $a_{ij}$\\
$B$ & matrix of $q$ in \textup{[C1]}; Combes--Thomas term $B_b$ (Lemma~\ref{M-lem:fred}); $B(y)$ (proof of Lemma~\ref{M-lem:meso}); $B_0$ (proof of Thm~\ref{M-thm:pseudogap}(iii)); kernel bounds $B_\nu$ (Lemma~\ref{M-lem:mult}); coarse boxes and $B_u$ (Lemma~\ref{M-lem:cuts}); $B\in\{S_\circ,S_\circ^*\}$ (proof of Thm~\ref{M-thm:robust}); balls $B(0,R)$, $B_R$; Besov $B^s_{p,r}$\\
$b$ & $b_{ij}$ of $q(\xi)$ in class (b) (\S2(iii)); BHZ coefficient (\S\ref{M-sec:num}); amplitude of $A$ (\S\ref{M-sec:num}); profile $b$ in $w=\Delta^kb$ (\S2(iv)); Combes--Thomas exponent (\S\ref{M-sec:topo}, \S\ref{M-sec:pseudo}); coefficient of a monomial (\S\ref{M-sec:hom}, \S\ref{M-sec:random})\\
$c$, $\mathcal C$ & invariant $c(h)$ (eq.~\ref{M-eq:ch}); BHZ coefficient; coefficients $c_\alpha$, $c_j$; corrector $c_1$, $c_{\mathfrak m}$ (\S\ref{M-sec:hom}); $c_q$, $c_\eps$, $c_P$, $c_Q$, $c_\chi$; directions $c_0,c_1,c_3$ (\S\ref{sec:Sexamples}); generic constants; correlations $\mathcal C_{ab}$ (\S\ref{M-sec:num}); clusters $\mathcal C_i$ (Lemma~\ref{M-lem:cuts})\\
$d$ & dimension; coefficients $d_\alpha$; two-band vector $d(\xi)$ (\S\ref{M-sec:num}, \S\ref{M-sec:topo}); distances $d$, $d_0$ (Lemma~\ref{M-lem:speccomp}, proof of Thm~\ref{M-thm:pseudogap}); exterior derivative\\
$D$ & derivative $-i\nabla$ and $D^a$, $D_y$; $\mathfrak d$ is the distance of the cut\\
$e$ & unit vectors $e$, $e^\perp$, $e_\theta$, basis vectors; remainder coefficient $e$ (eq.~\ref{M-eq:Heps}); right-hand side $e$ (Lemma~\ref{M-lem:cell}); energy shift $e_0$ (Cor.~\ref{M-cor:BHZ})\\
$E$, $\mathcal E$ & energy; error terms $E$, $E_1$ (proofs of Lemma~\ref{M-lem:meso}, Thm~\ref{M-thm:robust}); $E_1$ (proof of Thm~\ref{M-thm:pseudogap}(iii)); $\mathcal E=\mathcal C_{00}$ (\S\ref{M-sec:num}, \S\ref{sec:Sexamples}); event $\mathcal E_Q$ (Lemma~\ref{lem:globalsize}); $\mathcal E_{\rm HW}$; projectors $\mathsf E_j$ (Lemma~\ref{M-lem:BDIc})\\
$F$, $f$ & $F(z)$ (Lemma~\ref{M-lem:HSdecay}); $F_\pm(\omega,\xi)$ (Lemma~\ref{M-lem:BDIc}); $F(y)$, $F_1$, $F_2$ (Lemma~\ref{M-lem:mean}); $F^{\rm tr}$ (App.~\ref{M-app:proba}); right-hand side $f$; remainder coefficient $f$ (eq.~\ref{M-eq:Heps}); scalar part $f$ of a two-band symbol\\
$G$, $\Gamma$ & $G(q)$, $G_1$ (eqs.~\ref{M-eq:M}, \ref{M-eq:pair}); generic Hermitian $G$ (\S\ref{M-sec:univ}, Lemma~\ref{lem:grading}); $G_\pm$ (Lemma~\ref{M-lem:BDIc}); $G$ (proof of Lemma~\ref{M-lem:sampling}); mass $\Gamma$; lattices $\Gamma_j$ (App.~\ref{M-app:proba})\\
$g$ & gap half-width; $g_j$ dyadic blocks (App.~\ref{M-app:proba}); generic function $g$ (Lemma~\ref{M-lem:prod})\\
$h$, $\eta$ & symbols $h$, $h_\pm$ (\S2(iii), Lemma~\ref{M-lem:BDIc}); function $h$ (Lemma~\ref{M-lem:prod}); lattice mesh $h$ (Lemma~\ref{M-lem:sampling}); $\eta$ is $|\Im z|$ in comparison arguments and in Lemma~\ref{M-lem:CT}(ii); $\eta_1$ (Lemma~\ref{lem:globalsize})\\
$H_\circ$ & reference operator in general (\S\ref{M-sec:pseudo}); the Dirac domain wall (Cor.~\ref{M-cor:puddles}) and the surgery operator \eqref{M-eq:Hg} are instances\\
$\iota$ & sign of the weight exponent (Lemma~\ref{M-lem:sampling})\\
$J$ & grading (\S2(iii)--(iv), \S\ref{M-sec:topo}); open interval (Lemma~\ref{M-lem:fred}); $\lceil\log_2(1/\eps)\rceil$ (proof of Lemma~\ref{M-lem:probboxA})\\
$K$, $\mathcal K$, $\mathbb K$ & $\mathbb K(q)$ (eq.~\ref{M-eq:Kq}); set $\cK$ (\S2(iv)); operator $\mathcal K=\cA R_M$ (proof of Thm~\ref{M-thm:det}); kernels $\tilde K$, $K_\varpi$, $K_j$ (\S\ref{M-sec:random}--\ref{M-sec:pseudo}); $K_0$, $K_1$, $K_{\rm low}$ (App.~\ref{M-app:proba}); $K_V$ (Thm~\ref{M-thm:robust})\\
$k$ & order of $P$; dyadic kernels $k_j$; lattice kernel $k_0(m)$; dependence range $k'$; momentum $k_1$ (Fig.~\ref{fig:first})\\
$L$, $\cL$ & $L=\gamma(D_y)$ (proof of Thm~\ref{M-thm:hom}), $L(\xi)=\gamma(\xi)$ (\S\ref{sec:Stools-hom}), $L=P_1(D_y)$ (\S\ref{sec:Sexamples}, Bloch eigenvalues), $L=P_k(D_y)$ (Rem.~\ref{rem:cubic}); period $L$ (Lemma~\ref{M-lem:Kset}); truncation length (Lemma~\ref{M-lem:mesobad}); factor $L^TL=(b_{ij})$ (proof of Thm~\ref{M-thm:magno}); side $\cL$ of a cut\\
$\ell$ & disorder strength (eq.~\ref{M-eq:Hcrit}); lattice index in \eqref{M-eq:Kq} and in Poisson sums\\
$\lambda$ & $\lambda_\eps$; amplitudes $\lambda$ (Ex.~\ref{M-ex:counter}, \S\ref{sec:Sexamples}); eigenvalues $\lambda_j$, $\lambda_0$, $\lambda_\eps$; $\lambda'$ (proof of Thm~\ref{M-thm:pseudogap}(iii)); dilation factor (Lemma~\ref{M-lem:fred}(iii))\\
$M$ & correction and its variants ($M(W)$, $M(\omega)$, $M_{\rm Born}$, $M_{\rm crit}$, $M_\eps$, $M_\omega$, $M^\eps_\omega$, $M^{\rm tr}_{Q'}$, $M_\circ$, $M_\star$, $M_s$, $\bar M$); components $M_0,M_\perp,M_3$; $M_{il}$, $M_{\rm ee}$, $\dots$ (\S\ref{sec:Stools-hom}, \S\ref{sec:Sremarks}); multiplication $M_\Lambda$ (App.~\ref{M-app:proba})\\
$m$ & mass; lattice index ($\omega_m$, $Q=m+[0,1)^2$); $m_{\rm eff}$, $\bar m$, $m_0$, $m(x_2)$, $m'$; mean $m_1$ (Lemma~\ref{M-lem:mean}, Prop.~\ref{prop:uniform}); multipliers $m_\nu$; cluster sizes $m_n$ (Lemma~\ref{M-lem:cuts}); monomials $\mathfrak m$\\
$\mu$ & law $\mu_0$; bounds $\mu_1,\mu_2$ in $\cK[\mu_1,\mu_2]$; local mass $\mu(y)$ (Lemma~\ref{M-lem:meso}); chemical potential $\mu$, $\mu'$ (eq.~\ref{eq:Spwave}); $\mu_-$, and $\bar\mu$ with two meanings (Prop.~\ref{prop:highex}(b), Cor.~\ref{cor:pwave})\\
$n$ & integer exponents ($C_n$); unit vector $d/|d|$; matrix size $n=2p$ (Rem.~\ref{rem:quant})\\
$\nu$ & multi-indices (\S\ref{M-sec:hom}, \S\ref{M-sec:random}); exponents $\nu_r$, $\nu_\dd$\\
$\omega$ & couplings $\omega_j$, configuration $\omega$; frequency variable in $F_\pm(\omega,\xi)$ (Lemma~\ref{M-lem:BDIc})\\
$\Omega$ & curvature $\Omega_h$; measurable sets in Combes--Thomas bounds; angular profile $\Omega(\hat y)$ (Rem.~\ref{rem:log})\\
$p$ & Fourier variable on $\R^2$ (\S2(iv), \S\ref{M-sec:random}, \S\ref{sec:Sexamples}); exponent $p_0$; probabilities $p'$, $p''$ (Lemma~\ref{M-lem:cuts}); half-size $n=2p$ (Rem.~\ref{rem:quant})\\
$\varphi$, $\phi$ & energy switch $\varphi$; flat-site kernel $\varphi_\delta$ (Lemma~\ref{lem:flat}); Combes--Thomas weight $\phi$ (\S\ref{M-sec:topo}, \S\ref{M-sec:pseudo}); Littlewood--Paley symbols $\phi_j$ (\S\ref{M-sec:crit}, App.~\ref{M-app:proba}); potentials $\phi_a$, $\phi_v$, $\phi_3$ (\S\ref{M-sec:num}, \S\ref{sec:Sexamples}); cell solution $\phi$ (Lemma~\ref{M-lem:cell})\\
$\psi$, $\Psi$ & stream function $\psi$; $\psi^0$, $\psi_{\rm app}$, $\psi_1$; quasimodes $\psi$, $\psi_j$ (\S\ref{M-sec:num}, \S\ref{M-sec:pseudo}); partition $\psi$, $\psi_Q$ (\S\ref{M-sec:crit}); $\Psi$ in Rem.~\ref{rem:AC}; envelopes $\Psi_j$ (App.~\ref{M-app:proba})\\
$\Pi$ & $\one_{\{x_1\ge0\}}$; $\tilde\Pi$, $\Pi_s$; spectral projectors $\Pi_h$\\
$q$ & frequencies $q\in\Z^d$ or $q\in\T^2$; quadratic forms $q(\xi)$ (class (b) in \S2(iii), \textup{[C1]}); order of the operator $Q$ in a monomial (\S\ref{M-sec:hom})\\
$Q$ & unit boxes; mesoboxes $Q'$; differential operator in a monomial (\S\ref{M-sec:hom}); rectangle (proof of Lemma~\ref{M-lem:fred}); $\mathcal Q(e)$ (\S\ref{M-sec:num})\\
$R$, $\cR$ & resolvents; radius $R$ (Thms~\ref{M-thm:pseudogap},~\ref{M-thm:det}; Lemma~\ref{M-lem:sampling}); $R_0$, $R_1$; $R(f,g)$; $R_M$, $\tilde R$; orthogonal map $R$ (\S\ref{sec:Stools-hom}); side $\cR$ of a cut; remainder $\mathcal R^\eps$\\
$r$ & rates $r_k$; surgery radius $r$, also the range $2r$ of bad events; margin $\bar r$\\
$\rho$ & envelope; $\tilde\rho=\theta\rho$; $\rho_{\alpha,c}$, $\rho_{\alpha,d}$, $\rho_i$, $\rho_\pm$\\
$S$, $\cS$, $\mathsf S$ & $S(H)$, $S_\circ$; smoothing $S_\eps$; paraproduct cutoff $S_{j-1}$; spheres $S^{d-1}$, $\Sm^1$; symbol classes $\cS^s$; cones $\cS^\mp_G$; synthesis $\mathsf S$\\
$s$ & signs $s$, $s_\pm$; spectral density $s(q)$, $s_0$; strength $s$ and variance $s^2$ (\S\ref{sec:Sexamples}); homotopy parameter $s$; regularity index $s$ ($\cS^s$, $C^s$, $H^s$); coefficients $s,t$ (proof of Prop.~\ref{M-prop:EM}); thresholds $s_j,s'_j,s''_j$ (App.~\ref{M-app:proba})\\
$\sigma$ & Pauli matrices; interface current $\sigma_I$; $\sgn\int w$ (proof of Thm~\ref{M-thm:pseudogap}(iii)); variance proxies $\sigma^2$, $\sigma_j^2$ (App.~\ref{M-app:proba})\\
$T$, $\cT$ & Toeplitz operators $T$, $\tilde T$; paraproduct $T_fg$; $\mathcal T_y$ (\S\ref{M-sec:hom}); constant $T_0$ (proof of Thm~\ref{M-thm:pseudogap}(iii))\\
$t$ & parameter of paths $W_t$; thresholds (Lemma~\ref{M-lem:mean}(ii), Lemma~\ref{M-lem:prob}); $t=|\xi|^2$ (\S\ref{sec:Sexamples}); Sobolev index in \eqref{M-eq:smooth}\\
$\tau$ & single-site magnetic coupling (Cor.~\ref{M-cor:bernoulli}, \S\ref{sec:Sexamples}); multipliers $\tau_0,\tau_1,\tau_{\rm low}$ (App.~\ref{M-app:proba}); Pauli matrices $\tau_j$ in particle-hole space (eq.~\ref{eq:Spwave})\\
$\theta$, $\Theta$ & surgery cutoff $\theta$; phase $\theta_0$; angle $\theta$ (Rem.~\ref{rem:J}, Rem.~\ref{rem:quant}); $H^1$ bound $\Theta$ (Lemma~\ref{M-lem:CT}(ii)); Schwartz function $\Theta$ and constant $C_\Theta$ (Lemma~\ref{M-lem:sampling}); antiunitary $\Theta$ (\S\ref{sec:Stools-hom})\\
$U$, $\cU$ & $U(H)$, $U_\circ$, relative unitary $U_{\rm rel}=U_\circ^*U(H)$ (Thm~\ref{M-thm:robust}); bad sets $\cU$, $\cU_\omega$ and neighbourhoods $\cU_r$; dilation $\mathcal U_\eps$ (proof of Thm~\ref{M-thm:gapclose}); unitary $\mathsf U$ (\S\ref{sec:Stools-hom})\\
$V$, $\mathcal V$ & generic potential (Prop.~\ref{M-thm:QB}); generic differential perturbations $\mathcal V$, $\mathcal V_\pm$ (Lemma~\ref{M-lem:fred}, Cor.~\ref{M-cor:compact}); electric potential (eq.~\ref{M-eq:peierls}); puddle potential (Cor.~\ref{M-cor:puddles}); defect (Thm~\ref{M-thm:robust}, eq.~\ref{M-eq:Hg}); scalar alloy $\sum_j\omega_jv(\cdot-j)$ (Prop.~\ref{prop:bernoulli}); also $V_\pm$, $V_c$, $V_\circ$, $V_{\rm s}$, $V_a$, $V_\cL$, $V_\cR$, $\tilde V$\\
$v$ & $S_\eps\psi^0$ and $S_\eps u$ (\S\ref{M-sec:hom}, \S\ref{M-sec:random}); periodic profile (\S\ref{M-sec:num}); compactly supported profile of a single site (Cor.~\ref{M-cor:bernoulli}, \S\ref{sec:Sexamples}); null vectors (Lemma~\ref{M-lem:null}); $\hat V_q$ (proof of Prop.~\ref{M-prop:EM}); ${\rm Var}\,\omega$ (Lemma~\ref{M-lem:mean})\\
$W$, $\cW$ & effective profile (eq.~\ref{M-eq:Weff}); alloy $W_\omega$; random field $W(y,\omega)$; patterns $W_*$, $W_t$; periodic part $W_{\rm per}$; $W_\pm$, $W_i$, $W_{\rm e}$, $W_{\rm o}$ (\S\ref{sec:Stools-hom}, \S\ref{sec:Sremarks}); fluctuation $\cW$ of the corrector energy (\S\ref{M-sec:crit})\\
$w$ & single site; weight $w(\mathfrak m)$ of a monomial (\S\ref{M-sec:hom}); $w_n$ (App.~\ref{M-app:proba})\\
$X$ & window heights $X$, $X_0$ (Lemmas~\ref{M-lem:fred},~\ref{M-lem:cuts}); $X=A^{-1}Y$ (\S\ref{M-sec:crit}); matrices $X$ (\S\ref{M-sec:univ}, \S\ref{M-sec:num}); gauge part $X$ (Rem.~\ref{rem:AC})\\
$Y$ & $Y_0$, $Y$, $Y_\eps$ (\S\ref{M-sec:crit}); width $Y$ (proof of Lemma~\ref{M-lem:fred})\\
$\chi$ & correctors $\chi$, $\chi_2$ (with different definitions in \S\ref{M-sec:hom} and \S\ref{M-sec:random}), $\chi_w$; cutoff $\chi_0$ (separately in \S\ref{M-sec:random} and \S\ref{M-sec:crit}); cutoffs $\chi_A$, $\chi'$ (\S\ref{M-sec:topo}, Lemma~\ref{M-lem:conj})\\
$\delta$ & $\delta_k(\eps)$, $\delta_\eps$; local sizes $\delta_Q$, $\delta$, $\delta_\theta$, $\delta_*$, $\delta_Q^{\rm tr}$; $\dloc$; tolerance $\delta$ in the charging condition (\S2(v)); flatness $\delta$ (Lemma~\ref{lem:flat}); $\delta'$; mesh constant $\delta_0$ (Lemma~\ref{M-lem:sampling}); Kronecker $\delta_{ij}$; Dirac masses $\delta_q$, $\delta_x$\\
$\kappa$ & coupling bounds $\kappa_\pm$, $\kappa_0$; regularity loss $\kappa\in(0,\frac14)$ (\S\ref{M-sec:crit})\\
$\Lambda$, $\Sigma$ & half-lattice $\Lambda_+$; distance parameter $\Lambda$ (Lemma~\ref{M-lem:HSdecay}, eq.~\ref{M-eq:HSa}); test measures $\Lambda$, $\Lambda_{j,y}$ (App.~\ref{M-app:proba}); a.s.\ spectrum $\Sigma_\eps$; parts $\Sigma_\cL$, $\Sigma_\cR$ of $S(H)-S_\circ$ (proof of Thm~\ref{M-thm:robust})\\
\end{longtable}
}

\end{document}